\theoremstyle{plain}
\newtheorem{lemma}{Lemma}[section]
\newtheorem{theorem}{Theorem}
\newtheorem{mproposition}{Proposition}
\newtheorem{proposition}{Proposition}[section]
\theoremstyle{remark}
\newtheorem{assumption}{Assumption}
\newtheorem{definition}{Definition}
\newtheorem{remark}{Remark}[section]
\newcommand{\h}[2]{H_{i_{#1}i_{#2}}}
\newcommand{\D}[2]{D_{i_{#1}i_{#2}}}
\newcommand{\tda}[2]{\tilde{A}_{i_{#1}i_{#2}}}
\newcommand{\di}[1]{d_{i_{#1}}}
\newcommand{\fii}[2]{f_{i_{#1}i_{#2}}}
\newcommand{\gi}[1]{g_{i_{#1}}}
\newcommand{\giD}[1]{g_{D,i_{#1}}}
\newcommand{\yi}[1]{y_{i_{#1}}}
\newcommand{\yiD}[1]{y_{D,i_{#1}}}
\newcommand{\Ti}[1]{T_{i_{#1}}}
\newcommand{\sumi}[2]{\sum_{[i_{#1}\ldots i_{#2}]}}
\newcommand{\tdm}{{n}_1^\prime}
\newcommand{\E}{\mathbb{E}}
\newtheorem{corollary}{Corollary}
\newcommand{\Prob}{\mathbb{P}}
\renewcommand{\Pr}{\mathbb{P}}
\newcommand{\cov}{\operatorname{cov}}
\newcommand{\var}{\operatorname{var}}
\newcommand{\op}{o_{\mathbb{P}}}
\newcommand{\Op}{O_{\mathbb{P}}}
\newcommand{\sumtwo}{\sum_{[i_1,i_2]}}
\newcommand{\tr}{\operatorname{tr}}
\newcommand{\asyeq}{\stackrel{\cdot}{\sim}}
\newcommand{\htauadj}{\hat{\tau}_{\text{adj}}}
\newcommand{\htaudb}{\hat{\tau}_{\text{db}}}
\newcommand{\diag}{\operatorname{diag}}
\newcommand{\cre}{\textrm{cre}}
\newcommand{\cf}{\textrm{cf}}
\newcommand{\adj}{\textrm{adj}}
\newcommand{\hd}{\textrm{hd}}
\newcommand{\cb}{\textrm{cb}}
\newcommand{\db}{\textrm{db}}
\newcommand{\lin}{\textrm{lin}}
\newcommand{\sca}{\operatorname{Scale}}
\newcommand{\tra}{\operatorname{Trans}}
\newcommand{\ub}{\textrm{ub}}
\newcommand{\cI}{\mathcal{I}}
\newcommand{\bR}{\mathbb{R}}
\newcommand{\ii}{\mathrm{i}}
\newcommand{\e}{\varepsilon}
\newcommand{\dd}{\mathrm{d}}
\newcommand{\bv}{\mathbf v}
\newcommand{\bP}{\mathbf P}
\newcommand{\cW}{\mathcal W}
\newcommand{\mW}{\mathbf W}
\newcommand{\bW}{\boldsymbol{\mathcal W}}
\newcommand{\bcH}{\boldsymbol{\mathcal H}}
\newcommand{\bD}{\mathbf{D}}
\newcommand{\wt}{\tilde}
\newcommand{\OO}{O}
\newcommand{\bX}{\boldsymbol{X}}
\newcommand{\bXX}{\boldsymbol{V}}
\newcommand{\CMO}{\textrm{CMO}}
\newcommand{\CMA}{\textrm{CMA}}
\renewcommand{\cal}{\mathcal}
\def\bs{\boldsymbol}
\def\unadj{\textrm{unadj}}
\def\rev{\color{black}}
\def\revone{\color{black}}
\newglossaryentry{Sa2}{name={\ensuremath{\bs{S}_{\bs{a}}^2}}, description={Finite population variance of $\bs{a}$}}
\newglossaryentry{Sab2}{name={\ensuremath{
\bs{S}_{\bs{a}, \bs{b}}}}, description={Finite population covariance between $\bs{a}$ and $\bs{b}$}}
\newglossaryentry{SAa2}{name={\ensuremath{\bs{S}^2_{\bs{A}, \bs{a}}}}, description={Scaled finite population variance of $\bs{a}$}}
\newglossaryentry{SAab2}{name={\ensuremath{
\bs{S}_{\bs{A}, \bs{a}, \bs{b}}}}, description={Scaled finite population covariance between $\bs{a}$ and $\bs{b}$}}
\newglossaryentry{bsQ}{name={\ensuremath{\bs{Q}}}, description={An $n\times n$ matrix used to define $\sigma^2_{\hd}$}}
\newglossaryentry{bsB}{name={\ensuremath{\bs{B}}}, description={An $n\times n$ matrix used to define $\sigma^2_{\hd}$}}
\newglossaryentry{alpha}{name={\ensuremath{\alpha}}, description={Covariate dimension to sample size ratio $p / n$}}
\newglossaryentry{R2}{name={\ensuremath{R^2}}, description={Canonical correlation between the covariate vector and the weighted potential outcomes}}
\newglossaryentry{diagA}{name={\ensuremath{\diag\{\bs{A}\}}}, description={The square matrix which keeps the diagonal elements of matrix $\bs{A}$}}
\newglossaryentry{diag-A}{name={\ensuremath{\diag^-\{\bs{A}\}}}, description={The square matrix which keeps the off-diagonal elements of matrix $\bs{A}$}}
\newglossaryentry{scre}{name={\ensuremath{\hat{\sigma}^2_\cre}}, description={The variance estimator of unadjusted estimator}}
\newglossaryentry{hsigmacre}{name={\ensuremath{\hat{\sigma}^2_\cre}}, description={An estimator of ${\sigma}_{\cre}^2$}}
\newglossaryentry{sumonek}{name={\ensuremath{\sum_{[i_{1}\ldots i_{k}]}}}, description= {Summation over all $(i_1,\ldots,i_k)$ with mutually distinct elements in $[n]$}}
\newglossaryentry{tldA}{name={\ensuremath{\bs{\tilde{A}}}}, description= {A centered matrix of matrix $\bs{A}$}}
\newglossaryentry{bsAtwo}{name={\ensuremath{\|\bs{A}\|_2}}, description= {The operator norm of $\bs{A}$}}
\newglossaryentry{tldU}{name={\ensuremath{\tilde{U}}}, description= {A centered random variable defined as $\tilde{U} := U - \E U$}}
\newglossaryentry{siz}{name={\ensuremath{s_i(z)}}, description= {$s_i(z) := H_{ii}(Y_i(z) - \bar{Y}(z)) - \frac{1}{n} \sum_{j=1}^n H_{jj} (Y_j(z) - \bar{Y}(z))$}}
\newglossaryentry{diz}{name={\ensuremath{d_i(z)}}, description= {The $i$-th entry of the vector $\bs{H}(Y_1(z)-\bar{Y}(z), \ldots, Y_n(z)-\bar{Y}(z))^\top$}}
\newglossaryentry{Az}{name={\ensuremath{\bs{A}(z)}}, description= {An $n\times n$ matrix defined as $\bs{A}(z) := \bs{H} \diag\left\{(Y_1(z) - \bar{Y}(z), \ldots, Y_n(z) - \bar{Y}(z))^\top\right\}$}}
\newglossaryentry{tldbsZ}{name={\ensuremath{\tilde{\bs{Z}}}}, description= {The vector of centered treatment assignments, i.e., $(\tilde{Z}_1, \ldots, \tilde{Z}_n)$}}
\newglossaryentry{deltaij}{name={\ensuremath{\delta_{ij}}}, description= {An indicator function that is equal to $1$ if and only if $i=j$}}
\newglossaryentry{diagy}{name={\ensuremath{\diag\{\bs{y}\}}}, description= {Diagonal square matrix having vector $\bs{y}$ as its diagonal elements}}
\newglossaryentry{bsT}{name={\ensuremath{\bs{T}}}, description= {The vector of the indicators of Bernoulli random sampling constructed by H\`{a}jek‘s Coupling of $\bs{Z}$}}
\begin{document}

\begin{frontmatter}
%%%%%%%%%%%%%%%%%%%%%%%%%%%%%%%%%%%%%%%%%%%%%%
%%                                          %%
%% Enter the title of your article here     %%
%%                                          %%
%%%%%%%%%%%%%%%%%%%%%%%%%%%%%%%%%%%%%%%%%%%%%%
\title{Debiased regression adjustment in completely randomized experiments with moderately high-dimensional covariates}
%\title{A sample article title with some additional note\thanksref{T1}}
\runtitle{Debiased regression adjustment}
%\thankstext{T1}{A sample of additional note to the title.}

\begin{aug}
%%%%%%%%%%%%%%%%%%%%%%%%%%%%%%%%%%%%%%%%%%%%%%%
%% Only one address is permitted per author. %%
%% Only division, organization and e-mail is %%
%% included in the address.                  %%
%% Additional information can be included in %%
%% the Acknowledgments section if necessary. %%
%% ORCID can be inserted by command:         %%
%% \orcid{0000-0000-0000-0000}               %%
%%%%%%%%%%%%%%%%%%%%%%%%%%%%%%%%%%%%%%%%%%%%%%%
\author[A]{\fnms{Xin}~\snm{Lu}\ead[label=e1]{lux20@mails.tsinghua.edu.cn}},
\author[B]{\fnms{Fan}~\snm{Yang}\ead[label=e2]{fyangmath@tsinghua.edu.cn}}
\and
\author[C]{\fnms{Yuhao}~\snm{Wang}\ead[label=e3]{yuhaow@tsinghua.edu.cn}}
%%%%%%%%%%%%%%%%%%%%%%%%%%%%%%%%%%%%%%%%%%%%%%
%% Addresses                                %%
%%%%%%%%%%%%%%%%%%%%%%%%%%%%%%%%%%%%%%%%%%%%%%

\address[A]{Department of Statistics and Data Science, Tsinghua University\printead[presep={,\ }]{e1}}

\address[B]{Yau Mathematical Sciences Center, Tsinghua University\printead[presep={,\ }]{e2}}

\address[C]{Institute for Interdisciplinary Information Sciences, Tsinghua University\printead[presep={,\ }]{e3}}
\end{aug}

\begin{abstract}
% \textcolor{red}{Need to edit:}
Completely randomized experiment is the gold standard for causal inference. When the covariate information for each experimental candidate is available, one typical way is to include them in covariate adjustments for more accurate treatment effect estimation. In this paper, we investigate this problem under the randomization-based framework, i.e., that the covariates and potential outcomes of all experimental candidates are assumed as deterministic quantities and the randomness comes solely from the treatment assignment mechanism. Under this framework, to achieve asymptotically valid inference, existing estimators usually require either (i) 
that the dimension of covariates $p$ is much smaller than the sample size $n$;
% that the dimension of covariates $p$ grows at a rate no faster than $O(n^{3 / 4})$ as sample size $n \to \infty$; 
or (ii) certain sparsity constraints on the linear representations of potential outcomes constructed via possibly high-dimensional covariates. In this paper, we consider the moderately high-dimensional regime where $p$ is allowed to be in the same order of magnitude as $n$. 
We develop a novel debiased estimator with a corresponding inference procedure and establish its asymptotic normality under mild assumptions. Our estimator is model-free and does not require any sparsity constraint on potential outcome's linear representations. We also discuss its asymptotic efficiency improvements over the unadjusted treatment effect estimator under different dimensionality constraints.
% Our analysis implies that collecting more covariates for regression adjustments is a blessing when $p = o(n)$, but may become a curse as $p$ gets sufficiently close to $n$, which is consistent with practitioner's empirical observations. 
Numerical analysis confirms that compared to other regression adjustment based treatment effect estimators, our debiased estimator performs well in moderately high dimensions.
% We show that our estimator can achieve the same kind of accuracy improvement as in the fixed dimension setting whenever $p = o(n)$; and establish sufficient conditions for our estimator to be more efficient than without using covariate information in the regime $p \asymp n$.
% Our theory is design-based, which means that it does not rely on any model or distributional assumptions on the experimental candidates. 
\end{abstract}

\begin{keyword}[class=MSC]
\kwd[Primary ]{62J05}
% \kwd{???}
\kwd[; secondary ]{62E20}
\end{keyword}

\begin{keyword}
\kwd{randomization-based inference}
\kwd{causal inference}
\kwd{regression adjustment}
\kwd{High-dimensional statistics}
\end{keyword}

\end{frontmatter}
%%%%%%%%%%%%%%%%%%%%%%%%%%%%%%%%%%%%%%%%%%%%%%
%% Please use \tableofcontents for articles %%
%% with 50 pages and more                   %%
%%%%%%%%%%%%%%%%%%%%%%%%%%%%%%%%%%%%%%%%%%%%%%
%\tableofcontents

%%%%%%%%%%%%%%%%%%%%%%%%%%%%%%%%%%%%%%%%%%%%%%
%%%% Main text entry area:

\section{Introduction} \label{sec:intro}
% \textcolor{red}{YW: more to add.}

% First discuss preliminary works by Lin and Lihua, then show that they have a fundamental limitation, namely the lack of high-dimensional result. Then show that we have successfully tackled this problem for both $p = o(n)$ and $p \asymp n$. 

Since the seminal work of \cite{fisher1935design}, completely randomized experiment has been the gold standard for causal inference. By using only the randomization in treatment assignments as the reasoned basis, completely randomized experiments can provide a valid inference of treatment effects without any model or distributional assumptions on the experimental candidates, such as being i.i.d. sampled from some superpopulation or some other model assumption that may be unverifiable in practice. 
 Such inference is often called randomization-based or design-based inference, sometimes also called finite-population-based inference to emphasize its focus on just candidates in the experiment. Evaluating causal effects under this inferential framework has been an active area of research in the past few years  \citep[e.g.][]{Lin2013Agnostic,bloniarz2016lasso, Li9157,lei2021regression,wang2022rerandomization}, which is also the framework we focus on in this paper.
% Moreover, compared with observational studies, randomized experiments avoid untestable assumptions.  

When the covariate information of each experimental candidate is available, it is often popular to use regression adjustment in the analysis stage to utilize the additional covariate information to improve estimation precision \citep{Lin2013Agnostic,bloniarz2016lasso,negi2021revisiting,lei2021regression}. \citet{Lin2013Agnostic} showed that in completely randomized experiments, regression adjustment can improve the asymptotic efficiency of average treatment effect estimation when the dimension of covariates $p$ is fixed as the sample size $n$ goes to infinity. However, in modern experiments, researchers can collect a large number of covariates. It is important to develop methodology and theory for high-dimensional settings where $p\rightarrow \infty$ as the sample size goes to infinity. 

In the high-dimensional regime where $p \gg n$, the LASSO-adjusted estimator \citep{bloniarz2016lasso} has been proposed to estimate causal effect with high accuracy. However, it is under the requirement that the potential outcomes can be well represented by a sparse linear function of the high-dimensional covariates, which can be unrealistic in practice.
In the lower dimensional regime, \cite{lei2021regression} improved Lin's method by debiasing and guaranteed improvement of estimation efficiency compared to the difference in means estimator when $p = O(n^{2/3} / (\log n)^{1/3})$, without any assumption on the sparsity of potential outcome's linear representations. {\revone Other research works that consider the diverging $p$ regime include \cite{wu2018loop,chiang2023regression,chang2024exact}, which will be discussed further in literature review (\Cref{sec:framework}).}
%{ Under similar regularity conditions, \cite{chiang2023regression} proposed a regression adjusted estimator using cross-fitting, allowing for $p=o(n^{3/4})$ up to log factors, and demonstrating superior empirical performance in bias control.} {\revone \cite{chang2024exact} obtained exact closed-form bias corrections for Lin's method, allowing for $p=o(n)$ under extra assumptions on the leverage scores. \cite{wu2018loop} proposed an unbiased ``leave-one-out potential-outcomes" estimator whose prediction functions support variable selection for high-dimensional covariates. 
%However, this method currently lacks a theoretical guarantee.}

In this paper, we consider the \emph{moderately} high-dimensional regime where $p$ is allowed to be in the same order of magnitude as $n$. We develop a novel debiased estimator with a corresponding inference procedure and establish its asymptotic normality and inference validity. Our estimator guarantees improvement of estimation efficiency over the unadjusted estimator in the regime $p=o(n)$. In the higher dimensional regime where $p$ can be in the same order of magnitude as $n$, we prove that if the canonical correlation between potential outcomes and covariates is sufficiently large relative to $p/n$, we can still guarantee efficiency improvement compared to the unadjusted estimator. Noteworthy, our theory for asymptotic normality and inference validity is based on some standard regularity conditions on potential outcomes and their empirical regression residuals; beyond that, no assumption is required on the observed covariate features.

% To the best of our knowledge, this is also the weakest assumptions in literature. 
% To get an efficiency improvement in the $p \asysp n$ regime, our estimator still requires some assumption on the covariate features; which holds with high probability under a superpopulation framework 
% Moreover, except for the standard regularity conditions assumed 
% Noteworthy, our asymptotic theory for asymptotic normality and inference validity is built on the weakest assumptions ever made; in particular, we impose no assumption on the finite-population moment of covariates. 
% which has been used in all papers, as far as we know. 
% Therefore, our inference is reliable, regardless of the heavy-tailedness of covariates.

% Throughout this paper, we adpot the framework of finite population under which there has become a prominent academic interest to evaluate the causal effect in the past few years, the so-called randomization-based inference or design-based inference, see, for example, \cite{Lin2013Agnostic,bloniarz2016lasso, Li9157,wang2022rerandomization}. The framework does not impose additional distributional or model assumptions on units and the entire randomness comes only from the treatment assignment. 

Before moving forward, it would be convenient to introduce some notations that will be used in the rest of this paper. Given $n$ $d$-dimensional samples $\bs{a}_1, \ldots, \bs{a}_n$ of a variable $\bs{a}$, we let $\bar{\bs{a}}$ denote its empirical average, and let $\gls*{Sa2} := \frac{1}{n - 1} \sum_{i=1}^n (\bs{a}_i - \bar{\bs{a}})(\bs{a}_i - \bar{\bs{a}})^\top$ be the empirical covariance matrix of $\bs{a}$. Given $n$ samples of two variables $\bs{a}$ and $\bs{b}$, we write $\gls*{Sab2} := \frac{1}{n - 1} \sum_{i=1}^n (\bs{a}_i - \bar{\bs{a}})(\bs{b}_i - \bar{\bs{b}})^\top$ as its empirical covariance matrix. Analogously, given a matrix $\bs{A} \in \bR^{n \times n}$, we define the \emph{scaled} variance of variable $\bs{a}$, and \emph{scaled} covariance of variables $\bs{a}$ and $\bs{b}$ as
\begin{align*}
\gls*{SAa2} & := \frac{1}{n - 1} \sum_{i=1}^n \sum_{j = 1}^n A_{ij} (\bs{a}_i - \bar{\bs{a}}) (\bs{a}_j - \bar{\bs{a}})^\top \\
 \gls*{SAab2}& := \frac{1}{n - 1} \sum_{i=1}^n \sum_{j = 1}^n A_{ij} (\bs{a}_i - \bar{\bs{a}}) (\bs{b}_j - \bar{\bs{b}})^\top.
\end{align*}
Apparently, $\bs{S}_{\bs{a}}^2 = \bs{S}^2_{\bs{I}, \bs{a}}$, where $\bs{I}$ is the identity matrix. 
% Notice that depending on $\bs{A}$, $\bs{S}_{\bs{A},\bs{a}}^2$ may not be necessarily non-negative. 
Given a sequence of random variables $U_n$, we use $U_n \asyeq \mathcal{N}(0, 1)$ to denote that it converges in distribution to a standard normal distribution. We write $\bs{H} \in \bR^{n \times n}$ as the hat matrix where $H_{ij} := (n - 1)^{-1} (\bs{X}_i - \bar{\bs{X}})^\top \bs{S}_{\bs{X}}^{-2} (\bs{X}_j - \bar{\bs{X}})$. Let $\bs{1}_d$ and $\bs{0}_d$ be vectors of all ones and zeros with dimension $d$, respectively. When the dimension is clear from the context, we may omit the subscript $d$. Given any matrix $\bs{A} \in \bR^{n \times n}$, we write $\gls*{diagA}$ as a diagonal matrix with $(i,i)$-th entry equal to $A_{ii}$, and write $\gls*{diag-A}$ as a matrix with all diagonal entries equal to zero and off-diagonal entries equal to off-diagonal entries of $\bs{A}$.  Analogously, given any vector $\bs{y} \in \bR^n$,  with a slight abuse of notation, let $\gls*{diagy}$ be the diagonal square matrix having $\bs{y}$ as its diagonal elements.

% Analogously we define $S_z$ etc. We define $s_z$ as the estimated variance of potential outcome $Y_i(z)$ using data just from treatment arm $z$. ......

% \subsection*{Notation}
% \begin{itemize}
    % \item Treatment indicator for unit $i$, $i=1,\ldots,n$: $Z_i=1$ for the treatment, $Z_i=0$ for the control. $n_z$, $z=0,1$, is the units assigned to the treatment group $z$, $r_z = n_z/n$ is the proportion of units assigned to the treatment group $z$. Let $\alpha = p/n$ be the ratio of covariate dimension to sample size. 
    % \item $x_i$ is a $p$-dimensional covariates. $X = (x_1,\ldots,x_n^\top)^\top$ is the covariate matrix of all units. $Y_i(z)$ potential outcome for the treatment arm $z$. Denote the scaled potential outcome $u_i(z) = Y_i(z)/r_z^2$
   % \item For a finite population $\{y_1,\ldots, y_n\}$, we write its mean as $\bar{y}$ and $\tilde{y}_i = y_i - \bar{y}$. Similar definition can be applied to different finite populations. Let $\tilde{X}$ be the concatenation of $\tilde{x}_i$ by row.
    % \item $H = \tilde{X} (\tilde{X}^\top \tilde{X})^{-1} \tilde{X}^\top$ is the projection matrix to the column space of $X$. $h_{ii}$ is $(i,i)$-th element of $H$, i.e., the leverage score.
    % \item We use $\sumi{1}{k}$ to denote summation over all $(i_1,\ldots,i_k)$ with mutually distinct elements in $[n]$. Denote $\{1,\ldots,n\}$ by $[n]$.

% \end{itemize}

\subsection{Framework, literature review and overview of contributions}\label{sec:framework}
We consider an experiment with $n$ experimental units and two arms $z \in \{0, 1\}$. We restrict the experiment to be a completely randomized experiment where the experimenter selects $n_1$ units uniformly at random to the treatment group, and the rest $n_0$ units to the control group. To describe causality, we adopt the potential outcome framework, where for each experimental candidate $i$, we assume there are two potential outcomes $Y_i(1)$ and $Y_i(0)$, where $Y_i(z)$ denotes the potential outcome of unit $i$ had unit $i$ been assigned to group $z$. Then the observed outcome $Y_i$ satisfies $Y_i = Z_i Y_i(1) + (1 - Z_i) Y_i(0)$, where the random variable $Z_i \in \{0, 1\}$ denotes the treatment arm assigned to unit $i$.

In this paper, we consider the randomization-based framework where all the potential outcomes $(Y_i(1), Y_i(0))$ are considered deterministic and the randomness comes only from the randomness in the treatment assignment mechanism. This regime has a long history in the study of randomized experiments~\citep{imbens2015causal}. Under this regime, our target of interest then becomes estimating the sample average treatment effect:
\[
\bar{\tau} :=\frac{1}{n} \sum_{i = 1}^n \tau_i \quad\textrm{where}\quad \tau_i = Y_i(1) - Y_i(0).
\]

According to the finite-population central limit theorem~\citep{hajek1960limiting}, one can prove that under some standard regularity conditions, as $n$ goes to infinity, the simple difference in mean estimator $\hat{\tau}_{\unadj} := \frac{1}{n_1} \sum_{i=1}^n Z_i Y_i - \frac{1}{n_0} \sum_{i=1}^n (1 - Z_i) Y_i$ is guaranteed to provide asymptotically normal and unbiased estimation of $\bar\tau$. Specifically, writing $r_z := n_z / n$ as the proportion of units in treatment arm $z$, we have
\[ \sqrt{n} (\hat{\tau}_{\unadj} - \bar\tau) / \sigma_\cre \asyeq \mathcal{N}(0, 1) \quad\textrm{where}\quad \sigma^2_\cre := r_1^{-1} S_{Y(1)}^2 + r_0^{-1} S_{Y(0)}^2 - S_\tau^2.
\]
When each experimental unit $i$ has a deterministic covariate information $\bX_i$ of dimension $p$ indicating its properties, such as age, education, and body weights, a typical choice is to use regression adjustment to incorporate these information for more efficient treatment effect estimation. Define $\tilde{\bs{\beta}}_z := \bs{S}_{\bs{X}}^{-2} \bs{S}_{\bs{X}, Y(z)}$,
and $\hat{\bs{\beta}}_z$ as an empirical estimate of $\tilde{\bs{\beta}}_z$ using samples in the treatment arm $z$:
\begin{equation}\label{eq:oldbeta}
	\hat{\bs{\beta}}_z := \bs{s}_{\bs{X}, z}^{-2} \bs{s}_{\bs{X}, Y(z)},
\end{equation}
where $\bs{s}_{\bs{X}, z}^2$ and $\bs{s}_{\bX, Y(z)}$ denotes empirical estimates of $\bs{S}_{\bs{X}}^2$ and $\bs{S}_{\bs{X}, Y(z)}$ using samples from treatment arm $z$:
\[
\bs{s}_{\bs{X}, z}^2 := \frac{1}{n_z - 1} \sum_{i: Z_i = z} (\bX_i - \bar{\bX}_z) (\bX_i - \bar{\bX}_z)^\top\;\&\; \bs{s}_{\bs{X}, Y(z)} := \frac{1}{n_z - 1} \sum_{i: Z_i = z} (\bX_i - \bar{\bX}_z) (Y_i - \bar{Y}_z),
\]
where $\bar{\bX}_z$ and $\bar{Y}_z$ are estimated means using the data in treatment arm $z$.
 \citet{Lin2013Agnostic} showed that in the regime where $p$ is assumed as a fixed constant, the regression-adjusted estimator:
\begin{equation}\label{eq:regadj}
\hat{\tau}_\lin := \frac{1}{n_1} \sum_{i=1}^n Z_i \{Y_i - \hat{\bs\beta}_1^\top (\bs{X}_i - \bar{\bs{X}})\} - \frac{1}{n_0} \sum_{i=1}^n (1 - Z_i) \{Y_i - \hat{\bs\beta}_0^\top (\bs{X}_i - \bar{\bs{X}})\}
\end{equation}
has the representation
\begin{equation}\label{eq:regrep}
\hat{\tau}_\lin -\bar\tau = \frac{1}{n} \sum_{i=1}^n Z_i (r_1^{-1} e_i(1) + r_0^{-1} e_i(0)) + \op(1 / \sqrt{n}),
\end{equation}
where 
\begin{equation}\label{eq:resid}
e_i(z) := Y_i(z) - \bar{Y}(z) - \tilde{\bs{\beta}}_z^\top (\bs{X}_i - \bar{\bs{X}})
\end{equation}
corresponds to the regression residual of $Y_i(z)$. 
Thus, from standard results in finite population central limit theorem (see e.g.~\citet{li2017general} and the references therein), it has the asymptotic distribution
\[
 \sqrt{n}(\hat{\tau}_\lin - \bar{\tau}) / \sigma_{\adj} \asyeq \mathcal{N}(0, 1) \quad\textrm{where}\quad \sigma^2_\adj := r_1^{-1} S_{e(1)}^2 + r_0^{-1} S_{e(0)}^2 - S_{\tau_e}^2,
\]
where $\tau_{e,i} := e_i(1) - e_i(0)$ corresponds to the individual difference of the residuals in two treatment arms.

As an extension, \citet{lei2021regression} showed that when $p = o(\sqrt{n})$ up to log factors, the above conclusion still holds under certain regularity conditions. Moreover, they proposed a debiased estimator that allows for the more relaxed regime $p = O(n^{2/3} / (\log n)^{1/3})$. {\revone Under similar regularity conditions, \cite{chiang2023regression} proposed a regression adjusted estimator using cross-fitting, allowing for $p=o(n^{3/4})$ up to log factors, and demonstrating superior empirical performance in bias control. \cite{chang2024exact} proposed an exactly unbiased estimator based on Lin's estimator, and proved that it achieves the same asymptotic representation as the right hand side of~\eqref{eq:regrep} when the maximum leverage score $\kappa := \max_{1 \le i \le n} H_{ii} = o(1)$, which requires $p = o(n)$. In a related vein, \cite{wu2018loop} proposed an unbiased ``leave-one-out potential-outcomes" estimator whose prediction functions support variable selection for high-dimensional covariates. 
However, this method currently lacks a theoretical guarantee.}

In this paper, we develop a new regression adjustment-based ATE estimator in the higher dimensional regime where $p$ is allowed to be in the same order of magnitude as $n$. Under this regime, a major challenge is that the inverse covariance matrix used in the construction of $\hat{\bs{\beta}}_z$ is based on the $\bs{X}_i$'s in treatment arm $z$, which is hard to analyze with large $p$.
Specifically, the $\bs{s}_{\bX, z}^{-2}$ in~\eqref{eq:oldbeta} is a complex nonlinear function of the treatment assignment $\bs{Z}$, adding complexity to the analysis of Lin's estimator when the dimension of covariates is diverging. In fact, even consistency requires $p = o(n / \log n)$~\citep{lei2021regression}.

In this paper, we instead consider a variant of regression adjustment estimator where the inverse covariance matrix is instead constructed using covariate information on the entire dataset, i.e. that we instead set\footnote{From here and below, we assume throughout that $p < n$ {\revone and that $\bs{S}^2_{\bX}$ is invertible}, so that the regression adjustment estimator is well defined.} 
	\begin{equation}\label{eq:newbeta}
		\hat{\bs{\beta}}_z := \bs{S}_{\bs{X}}^{-2} \bs{s}_{\bs{X}, Y(z)},
	\end{equation} 
where recall that as in~\eqref{eq:oldbeta}, $\bs{s}_{\bs{X}, Y(z)}$ denotes an empirical estimate of $\bs{S}_{\bs{X}, Y(z)}$ using samples from treatment arm $z$, and set $\hat{\tau}_\adj$ in exactly the same way as $\hat{\tau}_\lin$ in~\eqref{eq:regadj}, but using $\hat{\bs{\beta}}_z$ as defined in~\eqref{eq:newbeta}. In other words, compared to the definition at~\citet{Lin2013Agnostic,lei2021regression}, the main difference is that we use the $\bs{X}_i$'s of the entire sample to construct the inverse covariance matrix, not only those in treatment arm $z$. 
{\rev Consequently, the estimator becomes a quadratic function of $\bs{Z}$, allowing us to remove the main barrier of asymptotic analysis in both low and moderate dimensions.}
Such a variant has been discussed before by \citet{li2020rerandomization,wang2022rerandomization} in the lower dimensional regime. 
Building on this variant of regression adjustment-based estimator, we propose a new debiased estimator for average treatment effect estimation; we prove that in an asymptotic regime where $p = o(n)$, the debiased estimator is asymptotically normal with the same variance as in the fixed dimensional regime, namely $\sigma_\adj^2$. 
We also derive the asymptotic distribution of the debiased estimator in the higher-dimensional regime where $p$ can be in the same order of magnitude as $n$ and propose sufficient conditions so that the debiased estimator is asymptotically more efficient than the unadjusted estimator. As far as we are aware, both regimes have not been well investigated by existing randomization-based framework literature.

Finally, we would like to remark that besides the randomization-based framework we consider in this manuscript, another choice is the superpopulation framework, which assumes that the experimental units must be randomly sampled from some superpopulation. This framework is also popular in literature, some examples include~\citet{tsiatis2008covariate,wager2016high,negi2021revisiting}.

The rest of this paper is organized as follows. In \Cref{sec:reg}, we present our new estimator, and prove its asymptotic normality in the regime $p = o(n)$. In~\Cref{sec:highdim}, we prove its asymptotic convergence in the higher dimensional regime where $p$ is allowed to be in the same order of magnitude as $n$ and discuss conditions so that it is asymptotically more efficient than without regression adjustment. In~\Cref{sec:inference}, we present a new confidence interval construction method. In~\Cref{sec:assumption}, we discuss the regularity conditions involved in Sections~\ref{sec:reg}--\ref{sec:inference}. We further conduct a numerical analysis in~\Cref{sec:numerical}. {\rev Besides demonstrating the empirical performance of our proposed estimator, our empirical analysis also indicates that the use of full inverse covariance matrix in~\eqref{eq:newbeta} may not be just out of technical convenience, but is fundamental in improving our estimator.} We end with a concluding remark in~\Cref{sec:conclusion}.

\section{A debiased regression adjustment estimator}\label{sec:reg}

In this section, we discuss how our debiased estimator is constructed, and present its asymptotic property in the regime $p = o(n)$. As will be demonstrated in the Supplementary Material, under some regularity conditions that will be discussed further in this section, the following decomposition holds (see Supplementary Material for more details):
\begin{align}\label{eq:adjdecomp}
\hat{\tau}_\adj - \bar\tau = & - \frac{r_1 r_0}{n} \sum_{i = 1}^n H_{ii} \left(\frac{Y_i(1) - \bar{Y}(1)}{r_1^2} - \frac{Y_i(0) - \bar{Y}(0)}{r_0^2}\right) \\
& + \frac{1}{n} \sum_{i=1}^n (Z_i - r_1) c_i - \frac{1}{n} \sum_{i \neq j} (Z_i - r_1) (Z_j - r_1) \left(\frac{A_{ij}(1)}{r_1^2} -\frac{A_{ij}(0)}{r_0^2}\right) + \op(1 / \sqrt{n}), \notag
\end{align}
where $A_{ij}(z)$ is the $(i, j)$-th entry of the matrix
\[
\gls*{Az} := \bs{H} \diag\left\{(Y_1(z) - \bar{Y}(z), \ldots, Y_n(z) - \bar{Y}(z))^\top\right\};
\]
and $c_i$ is defined as
	\begin{align*}
		c_i & := r_0\frac{p}{n}\frac{Y_i(1)-\bar{Y}(1)}{r_1^2} +  r_1\frac{p}{n}\frac{Y_i(0)-\bar{Y}(0)}{r_0^2} -
		(r_0-r_1)\left(\frac{s_i(1)}{r_1^2}-\frac{s_i(0)}{r_0^2}\right)+ \frac{e_i(1)}{r_1}+\frac{e_i(0)}{r_0},\\
		& \qquad\textrm{with}\;\;\gls*{siz} := H_{ii}(Y_i(z) - \bar{Y}(z)) - \frac{1}{n} \sum_{j=1}^n H_{jj} (Y_j(z) - \bar{Y}(z)),\quad z = 0, 1.
	\end{align*}

Apparently, the second term {\rev on the right hand side of \eqref{eq:adjdecomp}} is of mean zero. For the third term, since the $Z_i$'s are just weakly dependent, one can also show that its mean is approximately zero. Therefore, the first term constitutes the bias. Based on this, we propose a new debiased ATE estimator via stripping the original $\hat{\tau}_\adj$ with an approximately unbiased estimator of the first term, which is constructed based on the observations in the two treatment arms:
\begin{equation}\label{eq:db}
\hat{\tau}_\db := \hat{\tau}_\adj + r_1 r_0\left(\frac{1}{n_1}\sum_{i: Z_i = 1} H_{ii} \frac{(Y_i - \bar{Y}_1)}{r_1^2} - \frac{1}{n_0}\sum_{i: Z_i = 0} H_{ii} \frac{(Y_i - \bar{Y}_0)}{r_0^2}\right),
\end{equation}
where recall that different from $\bar{Y}(z)$, $\bar{Y}_z$ is the estimated mean using the outcome data in treatment arm $z$.
Below we provide an asymptotic convergence guarantee of $\hat{\tau}_\db$ in the asymptotic regime $p = o(n)$. Our new guarantee is based on the following $4$ assumptions:

\begin{assumption}
\label{assumption:positive-limit-of-assigned-proportion}
    For $z=0,1$, $r_z$ tends to a limit in $(0,1)$.
\end{assumption}

\begin{assumption}
\label{assumption:2-moment-of-finite-population-for-y}
For $z=0,1$, $\sum_{i = 1}^n \left(Y_i(z)-\bar{Y}(z)\right)^2 = O(n).$
\end{assumption}
 
%  Denote $R^2$ the squared multiple correlation between $r_1 \tdu(1) + r_0\tdu(0)$ and $X$, that is
% \[
% R^2 =\frac{\|r_1 d(1) + r_0 d(0)\|_2^2}{\|r_1 \tilde{u}(1) + r_0\tilde{u}(0)\|_2^2},~\text{where}~d(z) = H\tdu(z).
% \]

\begin{assumption}
    \label{assumption:lindeberg-type-condition-p-o(n)}
    % Let $e_i(z) := Y_i(z) - \bar{Y}(z) - \bs{\beta}_z^\top (\bs{X}_i - \bar{\bs{X}})$, 
    Consider the $e_i(z)$ in~\eqref{eq:resid},
    as $n \to \infty$,
    \[
    \max_z \max_i |Y_i(z) - \bar{Y}(z)| / \sqrt{n} \to 0 \quad\&\quad \max_z \max_i |e_i(z)| / \sqrt{n} \to 0.
    \] 
\end{assumption}
% Let $\sigma_{\alpha=0}^2 = n^{-1}(r_1r_0)\|r_1 e(1) + r_0e(0)\|_2^2$.
\begin{assumption}
\label{assumption:Positive-variance-limit-p-o(n)}
   $\liminf_{n \to \infty} \sigma_\adj^2>0$.
\end{assumption}

Assumptions~\ref{assumption:positive-limit-of-assigned-proportion},~\ref{assumption:2-moment-of-finite-population-for-y} and~\ref{assumption:Positive-variance-limit-p-o(n)} are standard assumptions in randomization-based inference. \Cref{assumption:lindeberg-type-condition-p-o(n)} is a Lindeberg--Feller-type condition to guarantee that the representation in~\eqref{eq:regrep} has an approximately normal distribution in the large sample limit; similar condition has also appeared in previous regression adjustment literature; see e.g. \citet{lei2021regression} and the references therein. {\rev The second part of Assumption \ref{assumption:lindeberg-type-condition-p-o(n)} is implicitly related to covariates. In the fixed $p$ regime, it is usually considered as a mild assumption. In~\Cref{prop:justify-assumption-3}, we further provide some justifications of~\Cref{assumption:lindeberg-type-condition-p-o(n)} in the regime considered in this work. Specifically, we demonstrate that when $(Y_i(z), \bs{X}_i)$'s are i.i.d. generated from some superpopulation with an arbitrary nonlinear relationship between $Y_i(z)$ and $\bs{X}_i$, \Cref{assumption:lindeberg-type-condition-p-o(n)} can be satisfied with high probability by assuming further some moment conditions on $Y_i(z)$ and $\bs{X}_i$. Suppose instead that $(Y_i(z), \bX_i)$'s are from a parametric model where $Y_i(z) = \bs{\beta}_z^\top \bs{X}_i + \varepsilon_{z,i}$ where $\varepsilon_{z,i}$ is noise independent of $\bs{X}_i$. Then it directly follows from \citet[Proposition~F.2]{lei2021regression} that 
	we only need $\varepsilon_{z,i}$'s to have bounded $\delta$-th order moment for some $\delta > 2$ to satisfy the second part of \Cref{assumption:lindeberg-type-condition-p-o(n)}. In other words, no assumptions are required for $\bs{X}_i$; in fact, the $\bs{X}_i$'s can even be deterministic.}

% The above assumptions are standard in regression adjustment literature.
% ~\citep{lei2021regression,Lin2013Agnostic}. 
With these assumptions, we are able to show that $\htaudb$ has the representation
\begin{align}\label{eq:dbdecomp}
\htaudb - \bar\tau = & \frac{1}{n} \sum_{i=1}^n Z_i (r_1^{-1} e_i(1) + r_0^{-1} e_i(0)) + \frac{1}{n} \sum_{i=1}^n Z_i (r_1^{-1} s_i(1)+r_0^{-1} s_i(0)) \\
& - \frac{1}{n} \sum_{i \neq j} (Z_i - r_1) (Z_j - r_1) \left(\frac{ A_{ij}(1)}{r_1^2} -  \frac{A_{ij}(0)}{r_0^2} \right) + \op(1 / \sqrt{n}). \notag
\end{align}

We now invoke the following condition on the ordered sequence of potential outcomes, which characterizes the tail of the population of potential outcomes:
\begin{assumption}
    \label{assumption:p-o(n)-case}
    As $n \to \infty$, $p / n \to 0$. Moreover, let $(Y_{(1)}(z) - \bar{Y}(z))^2 \geq \ldots \geq (Y_{(n)}(z) - \bar{Y}(z))^2$ be the ordered sequence of $\{(Y_i(z) - \bar{Y}(z))^2\}_{i=1}^n$. Then, for any $z\in \{0,1\}$,  we have that $\sum_{i=1}^p (Y_{(i)}(z) - \bar{Y}(z))^2 = o(n).$
\end{assumption}
{\rev When $p$ is a fixed constant, \Cref{assumption:p-o(n)-case} reduces to the first half of \Cref{assumption:lindeberg-type-condition-p-o(n)}. However, once we are faced with a diverging $p$, \Cref{assumption:p-o(n)-case} cannot be always satisfied given the assumptions metioned above. To elucidate, we now give a sequence of potential outcomes that satisfies \Cref{assumption:2-moment-of-finite-population-for-y} and the first half of \Cref{assumption:lindeberg-type-condition-p-o(n)}, but violates \Cref{assumption:p-o(n)-case}:  
	\[
	(Y_i(z))_{i=1}^{n} \equiv \Big(\frac{q_n(n-q_n)}{n^2}\Big)^{-1/2}(\bs{1}_{q_n}^\top,\bs{0}_{n-q_n}^\top)^\top,\quad 0 \le q_n<n/2,
	\]
	where {\rev $p=o(n)$}, $q_n$ is a sequence of integers to be determined. 
	%	For two sequence $a_n$ and $b_n$, Let $a_n=\Theta(b_n)$, if $a_n = O(b_n)$ and $b_n = O(a_n)$. 
	Apparently, we have $\sum_i (Y_i(z)-\bar{Y}(z))^2 = n$, so that \Cref{assumption:2-moment-of-finite-population-for-y} is directly satisfied. Moreover, we have $\max_i |Y_i(z)-\bar{Y}(z)| = \{(n-q_n)/q_n\}^{1/2}$ and 
	\[
	\sum_{i=1}^p (Y_{(i)}(z)-\bar{Y}(z))^2 = \begin{cases}
		p\{(n-q_n)/q_n\} \quad p\leq q_n;\\
		q_n\{(n-q_n)/q_n\}+(p-q_n)\{q_n/(n-q_n)\} \quad p > q_n.
	\end{cases}  
	\]
This means that under the regime $p \to \infty$ as $n \to \infty$, by choosing $q_n$ such that $q_n \to \infty$ and $p / q_n \to \infty$, we have $\max_i |Y_i(z)-\bar{Y}(z)| = \{(n-q_n)/q_n\}^{1/2} = o(n^{1/2})$, and therefore the first half of  \Cref{assumption:lindeberg-type-condition-p-o(n)} holds. However, $n^{-1}\sum_{i=1}^p (Y_{(i)}(z)-\bar{Y}(z))^2 > (n-q_n)/n$, which tends to $1$ instead of $0$ (since $q_n = o(p) = o(n)$), and thus \Cref{assumption:p-o(n)-case} is not satisfied anymore.	 
%	Therefore, the first half of  \Cref{assumption:lindeberg-type-condition-p-o(n)} holds if and only if $q\rightarrow \infty$. \Cref{assumption:p-o(n)-case} holds if and only if $q/p\rightarrow \infty$.
}
%\begin{remark}
%    \label{rmk:comparing-assumption-3-with-5}
%    \lx{\Cref{assumption:p-o(n)-case} is stronger than the first half of \Cref{assumption:lindeberg-type-condition-p-o(n)}.  When $p=1$, \Cref{assumption:p-o(n)-case} reduces to 
%the first half of \Cref{assumption:lindeberg-type-condition-p-o(n)}. Consider the following sequence of $Y_i(z)$:  
%\[
%(Y_i(z))_{i=1}^{n} \equiv \Big(\frac{q(n-q)}{n^2}\Big)^{-1/2}(\bs{1}_{q}^\top,\bs{0}_{n-q}^\top)^\top,\quad q<n/2.
%\]
%For two sequence $a_n$ and $b_n$, Let $a_n=\Theta(b_n)$, if $a_n = O(b_n)$ and $b_n = O(a_n)$. We have $\sum_i (Y_i(z)-\bar{Y}(z))^2 = n$, $\max_i |Y_i(z)-\bar{Y}(z)| = \{(n-q)/q\}^{1/2}$ and 
%\[
%\sum_{i=1}^p (Y_{(i)}(z)-\bar{Y}(z))^2 = \begin{cases}
%    p\{(n-q)/q\} = \Theta\big(np/q\big),\quad p\leq q;\\
%    q\{(n-q)/q\}+(p-q)\{q/(n-q)\} = \Theta(n) \quad p > q.
%\end{cases}  
%\]
%Therefore, the first half of  \Cref{assumption:lindeberg-type-condition-p-o(n)} holds if and only if $q\rightarrow \infty$. \Cref{assumption:p-o(n)-case} holds if and only if $q/p\rightarrow \infty$.}
%\end{remark}

Armed with the above five assumptions, we are able to show that the second and third terms in the decomposition~\eqref{eq:dbdecomp} are of order $\op(1 / \sqrt{n})$, so that $\htaudb$ has the same asymptotic representation (and therefore the same asymptotic distribution) as $\hat{\tau}_\adj$ in the fixed $p$ case. Specifically, we have the following result:

\begin{theorem}
\label{theorem:CLT-p=o(n)}
    Under Assumptions \ref{assumption:positive-limit-of-assigned-proportion}--\ref{assumption:p-o(n)-case},  we have
\begin{align*}
     n^{1/2}\left(\htaudb-\bar\tau\right)/\sigma_{\adj}
      \asyeq \mathcal{N}(0,1).
\end{align*}
\end{theorem}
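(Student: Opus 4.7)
The plan is to work from the decomposition~\eqref{eq:dbdecomp} of $\htaudb - \bar\tau$ stated just above the theorem and show that only its first summand contributes to the asymptotic distribution, producing the claimed $\mathcal{N}(0,\sigma_\adj^2/n)$ limit, while the two remaining summands are $\op(1/\sqrt n)$. Throughout, I would rely on two structural properties of the hat matrix $\boldsymbol H$ that hold regardless of any distributional assumption on the covariates: $\boldsymbol H$ is idempotent with $\operatorname{tr}(\boldsymbol H)\leq p$ (so $H_{ii}\in[0,1]$, $\sum_i H_{ii}\leq p$, and $\sum_i H_{ij}^2 = H_{jj}$), and $\boldsymbol H\mathbf 1 = 0 = \mathbf 1^\top\boldsymbol H$ because the rows of the centered design sum to zero.

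The first summand can be rewritten as a centered linear statistic in $Z_i-r_1$ via $\sum_i e_i(z) = 0$ (a consequence of~\eqref{eq:resid}), and its asymptotic variance under complete randomization is exactly $\sigma_\adj^2/n$. A standard finite-population CLT in the spirit of \citet{li2017general} then applies: the Lindeberg-type condition is supplied by the residual part of Assumption~\ref{assumption:lindeberg-type-condition-p-o(n)}, the requisite $L^2$ control by Assumption~\ref{assumption:2-moment-of-finite-population-for-y} together with the projection inequality $\sum_i e_i(z)^2 \leq \sum_i(Y_i(z)-\bar Y(z))^2$, and non-degeneracy by Assumption~\ref{assumption:Positive-variance-limit-p-o(n)}. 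For the second summand $\frac{1}{n}\sum_i Z_i(r_1^{-1}s_i(1) + r_0^{-1}s_i(0))$, the $s_i(z)$ are already centered, so the mean vanishes and the variance under complete randomization is of order $n^{-2}\sum_i s_i(z)^2$. Using the bound $\sum_i s_i(z)^2 \leq \sum_i H_{ii}^2(Y_i(z)-\bar Y(z))^2$, the pointwise inequality $H_{ii}^2 \leq H_{ii}$, and the rearrangement inequality with weights $H_{ii}\in[0,1]$ summing to at most $p$, one obtains $\sum_i s_i(z)^2 \leq \sum_{i=1}^p (Y_{(i)}(z)-\bar Y(z))^2 = o(n)$ by Assumption~\ref{assumption:p-o(n)-case}. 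Hence this summand has variance $o(1/n)$ and is $\op(1/\sqrt n)$.

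The main obstacle is the quadratic-form third summand $T := -\frac{1}{n}\sum_{i\neq j}(Z_i-r_1)(Z_j-r_1)B_{ij}$ with $B_{ij} := A_{ij}(1)/r_1^2 - A_{ij}(0)/r_0^2$. For its mean, the zero row/column sums of $\boldsymbol H$ give $\sum_{i\neq j} A_{ij}(z) = -\sum_i H_{ii}(Y_i(z)-\bar Y(z))$, which by Cauchy--Schwarz (using $\sum_i H_{ii}\leq p$ and $\sum_i H_{ii}(Y_i(z)-\bar Y(z))^2 = o(n)$) is $o(\sqrt{pn})=o(n)$; combined with $\E[(Z_i-r_1)(Z_j-r_1)] = -r_1r_0/(n-1)$ this forces $\E[T] = o(1/n)$, so $\sqrt n\,\E[T]\to 0$. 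For the variance, I would expand $\E[(Z_i-r_1)(Z_j-r_1)(Z_k-r_1)(Z_l-r_1)]$ over the five possible overlap patterns of $(i,j,k,l)$ with $i\neq j$ and $k\neq l$ under complete randomization. The dominant contribution is proportional to $\sum_{i\neq j} B_{ij}^2$, and here the idempotency $\boldsymbol H^2 = \boldsymbol H$ is decisive: it yields $\sum_{i:i\neq j} H_{ij}^2 = H_{jj} - H_{jj}^2 \leq H_{jj}$, so
\[
\sum_{i\neq j} A_{ij}(z)^2 \;\leq\; \sum_j H_{jj}(Y_j(z)-\bar Y(z))^2 \;=\; o(n)
\]
by Assumption~\ref{assumption:p-o(n)-case} once more. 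The subdominant overlap patterns produce $O(1/n)$-weighted cross sums of $B_{ij}B_{kl}$, which I would control using the same idempotency and row-sum identities. Altogether $\var(\sqrt n\,T) \to 0$, so $T = \op(1/\sqrt n)$, and Slutsky's lemma combined with the CLT for the first summand completes the proof.
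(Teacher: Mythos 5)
Your proposal is correct, but it reaches the conclusion by a genuinely different route than the paper. Both arguments start from the decomposition \eqref{eq:dbdecomp} (which the paper proves in Appendices~\ref{sec:decomp-htauadj}--\ref{sec:hajek-coupling} as Proposition~\ref{proposition:decompose-hat-taudb}); from there, the paper does \emph{not} work under complete randomization directly. It first applies the first- and second-order H\`{a}jek couplings (Propositions~\ref{proposition:first-order-hajek-coupling} and~\ref{proposition:second-order-hajek-coupling}) to replace the $Z_i$'s by i.i.d.\ Bernoulli indicators $T_i$ (Proposition~\ref{prop:decompose-taudb-with-Ti}); the linear $e$-part is then handled by a Berry--Esseen bound for independent summands (Lemma~\ref{lem:asymptotic-normality-linear-term}), and the $s$-part and the quadratic part are killed by trivial second-moment computations that exploit independence, e.g.\ $\E \tilde T_i^2\tilde T_j\tilde T_k=0$, together with the same bounds $\sum_i s_i(z)^2\le\sum_{i=1}^p(Y_{(i)}(z)-\bar Y(z))^2=o(n)$ and $\sum_{i\ne j}A_{ij}(z)^2\le\sum_i H_{ii}(Y_i(z)-\bar Y(z))^2=o(n)$ that you also use. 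You instead stay within complete randomization throughout: a finite-population CLT in the style of \citet{li2017general} for the linear term, and a direct expansion of the joint moments of $(Z_i-r_1)(Z_j-r_1)(Z_k-r_1)(Z_l-r_1)$ over overlap patterns for the quadratic term. This is viable: the dominant (two-index) pattern is exactly your $\sum_{i\ne j}B_{ij}^2=o(n)$ bound, the mean is $o(1/n)$ as you compute, and the three- and four-index cross sums can indeed be controlled by the row-sum identity $\sum_i H_{ij}=0$, $\|\bs H\|_2\le 1$ and Assumption~\ref{assumption:2-moment-of-finite-population-for-y} (together with the exact orders $O(n^{-1})$ and $O(n^{-2})$ of the centered mixed moments), though you leave this bookkeeping at the level of a sketch and it is the most tedious part of your route. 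The trade-off is clear: your approach is self-contained and avoids the coupling machinery for this $p=o(n)$ regime, whereas the paper's coupling-plus-independence argument makes the variance computations nearly immediate and, more importantly, is the same apparatus reused for the homogeneous-sum CLT in the $p\asymp n$ regime (Theorem~\ref{theorem:CLT-alpha-greater-than-0}), which a purely complete-randomization moment expansion would not deliver.
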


% One of the intermediate step in the proof of Theorem~\ref{theorem:CLT-p=o(n)}  is to 
%  show
%  \[
%  \sqrt{n} (\htaudb - \bar\tau) = \frac{1}{\sqrt{n}} \sum_{i=1}^n Z_i(r_1 e_i(1) + r_0 e_i(0)) + \op(1),
%  \]
 % so that it is equivalent to the asymptotic representation of the regression-adjustment based estimator considered by~\citep{Lin2013Agnostic,lei2021regression}. 
 
 As discussed before, Assumptions \ref{assumption:positive-limit-of-assigned-proportion}--\ref{assumption:Positive-variance-limit-p-o(n)} are basic assumptions in previous literature; \Cref{assumption:p-o(n)-case} is a novel contribution from us, which requires no further assumption on the moment of covariate $\bs{X}_i$ or the leverage scores $H_{ii}$ to obtain asymptotically normal convergence.
 As we will show later in~\Cref{sec:assumption}, \Cref{assumption:p-o(n)-case} holds with high probability when $Y_i(z)$ are i.i.d. generated from a superpopulation with bounded second order moment. {\revone Finally, a simple inspection of the proof of \Cref{theorem:CLT-p=o(n)} reveals that, if we instead want to prove that $\htaudb - \bar{\tau}$ follows the representation on the right hand side of~\eqref{eq:regrep}, then only Assumptions~\ref{assumption:positive-limit-of-assigned-proportion}--\ref{assumption:2-moment-of-finite-population-for-y},~\ref{assumption:p-o(n)-case} and the first half of Assumption~\ref{assumption:lindeberg-type-condition-p-o(n)} are needed. In other words, no assumptions on covariate vectors $\bs{X}_i$'s are required, except that $\bs{S}^2_{\bX}$ must be invertible.}

In the next section, we further consider the regime where $p$ can be in the same order of magnitude as $n$. Under this regime, the second and third terms in the decomposition~\eqref{eq:dbdecomp} are not necessarily negligible anymore, so we need new analysis to understand their asymptotic convergence. 
The second term is easy to deal with. The main obstacle of the analysis is that the quadratic form of centered treatment indicators (i.e., the third term in~\eqref{eq:dbdecomp}) needs to be characterized by a new analytic tool, which is the central limit theorem of quadratic forms. We will discuss this further in the next section.

\section{Asymptotic convergence of debiased estimator with moderately high-dimensional covariates}\label{sec:highdim}

In this section, we consider the asymptotic convergence of our debiased estimator in the moderate high-dimensional regime where we allow $p$ to be in the same order of magnitude as $n$. As mentioned before, since in this regime, the second and third terms in the decomposition~\eqref{eq:dbdecomp} are not negligible, we need to derive their (joint) distribution in the large sample limit. With the help of standard results in combinatorial central limit theorem (CLT)~\citep{hajek1960limiting}, the second term is easy to derive. Whilst for the third term, due to the quadratic functions in the form $(Z_i - r_1) (Z_j - r_1)$, standard combinatorial CLT is not applicable to understand its convergence anymore.

In this paper, we will use the newly developed central limit theorem of the so-called \emph{homogeneous} sums from \citet{koike2022high} to characterize the third term of \eqref{eq:dbdecomp}. Specifically, \citet{koike2022high} studied the convergence of random variables satisfying the following form:
\begin{definition}\label{def:Wi}
Let $\boldsymbol{W}=\left(W_i\right)_{i=1}^n$ be a sequence of independent centered random variables with unit variance. A homogeneous sum is a random variable of the form
$$
 Q(f ; \boldsymbol{W})=\sum_{i_1, \ldots, i_q=1}^n f\left(i_1, \ldots, i_q\right) W_{i_1} \cdots W_{i_q},
$$
where $n, q \in \mathbb{N},[n]:=\{1, \ldots, n\}$ and $f:[n]^q \rightarrow \mathbb{R}$ is a symmetric function vanishing on diagonals, i.e., $f\left(i_1, \ldots, i_q\right)=0$ unless $i_1, \ldots, i_q$ are mutually different.
\end{definition}
This is an extension of the linear statistics studied by the standard CLT. Apparently, by setting $W_i \equiv (Z_i - r_1)/(r_1r_0)^{1/2}$, $q = 2$ and 
\[
f(i_1, i_2) \equiv r_1r_0 \left(r_1^{-2} A_{i_1 i_2}(1) + r_1^{-2} A_{i_2 i_1}(1) - r_0^{-2}  A_{i_1 i_2}(0) -  r_0^{-2} A_{i_2 i_1}(0)\right) / (2n),
\]
the third term in~\eqref{eq:dbdecomp} falls into this category, with the exception that in our problem, the $(Z_i - r_1)$'s are weakly dependent.

Below we give a brief literature review of this class of CLT. 
% \textcolor{red}{Give a literature reivew of this class of CLT.}
% 
\cite{rotar1976limit} and \cite{rotar1979limit} studied the invariance principles of $Q(f;\bs{W})$ regarding the law of $W$. \cite{de1990central} established the univariate central limit theorem for $Q(f;\bs{W})$. \cite{koike2022high} extended it to the multivariate case and obtained the bound for the error of normal approximation. The special case of $q=1$ is the classical sum of independent random variables. The special case of $q=2$ has been extensively studied; see, for example, \cite{de1973asymptotic,de1987central,fox1987central}. Note that all of the results are for independent $W_i$'s.

Nevertheless, the results of \citet{koike2022high} are still not sufficient, since \citet{koike2022high} assumed that all the random variables are independent, whilst in our problem, the $Z_i$'s are weakly dependent due to simple random sampling. Mimicking the idea of H\`{a}jek's coupling \citep{hajek1960limiting} and its extension in~\citet{wang2022rerandomization}, we are able to propose a new combinatorial central limit theorem to characterize the joint distribution of the decomposition in~\eqref{eq:adjdecomp}, and furthermore, the asymptotic distribution of $\hat{\tau}_{\db} - \bar\tau$.

To formally describe the new convergence result, we first define 
\[
\sigma^2_{\hd, l} := r_1^{-1} S_{e(1)+s(1)}^2 + r_0^{-1} S_{e(0)+s(0)}^2 - S_{\tau_e + \tau_s}^2,
\]
where analogous to $\tau_{e, i}$, we write $\tau_{s, i} := s_i(1) - s_i(0)$. Moreover, we define
\[
\sigma^2_{\hd, q} := (r_1r_0)^2 S_{\bs{Q}, \;r_1^{-2} Y(1) - r_0^{-2} Y(0)}^2,
% \frac{(r_1r_0)^2}{n-1} \left\|\bs{Q}^{1/2} \left(\frac{\bs{Y}(1) - \bs{1} \bar{Y}(1)}{r_1^2} - \frac{\bs{Y}(0) - \bs{1} \bar{Y}(0)}{r_0^2}\right)\right\|_2^2,
\]
where $\gls*{bsQ}$ is an $n \times n$ dimensional matrix such that $Q_{ij} := H_{ij}^2$ whenever $i \neq j$ and $Q_{ii} := H_{ii} - H_{ii}^2$. %, $\tilde{\bs{Y}}(z)$ is an $n$-dimensional vector with $i$-th entry equal to $Y_i(z) - \bar{Y}(z)$, $\bs{1}$ is an $n$-dimensional vector with all elements equal to $1$. 
Apparently, $\sigma_{\hd, l}^2$ and $\sigma_{\hd, q}^2$ correspond to the variances contributed by the linear statistic and quadratic statistic in~\eqref{eq:dbdecomp}, respectively. We also write $\sigma^2_{\hd} := \sigma^2_{\hd, l} + \sigma^2_{\hd, q}$ as their total variance. We now invoke the following assumption regarding the asymptotics of our estimator variance.

\begin{assumption}
\label{assumption:Positive-variance-limit-alpha-positive}
 $\liminf_{n \to \infty} \sigma_{\hd,l}^2 > 0 $ or $\liminf_{n \to \infty} \sigma_{\hd,q}^2 > 0 $.
\end{assumption}

Armed with this assumption, we are able to show that our debiased estimator is asymptotically normal.

\begin{theorem}
\label{theorem:CLT-alpha-greater-than-0}
    If Assumptions \ref{assumption:positive-limit-of-assigned-proportion}--\ref{assumption:lindeberg-type-condition-p-o(n)}, \ref{assumption:Positive-variance-limit-alpha-positive} 
     hold, % and $p / n$ has a positive limit,  
     we have as $n \to \infty$,
\begin{align*}
     n^{1/2}\left(\htaudb-\bar\tau\right)/\sigma_\hd \asyeq \mathcal{N}(0,1).
\end{align*}
\end{theorem}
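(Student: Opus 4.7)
The plan is to start from the decomposition~\eqref{eq:dbdecomp}, which writes $\htaudb-\bar\tau$ (modulo an $\op(1/\sqrt{n})$ remainder) as a linear statistic plus a quadratic statistic in the centered indicators $\{Z_i-r_1\}_{i=1}^n$, and then establish their joint asymptotic normality with total variance $\sigma_\hd^2 = \sigma_{\hd,l}^2 + \sigma_{\hd,q}^2$. For the linear piece, I would pool the first two sums of~\eqref{eq:dbdecomp} into a single statistic $n^{-1}\sum_{i=1}^n Z_i v_i$ with $v_i := r_1^{-1}(e_i(1)+s_i(1)) + r_0^{-1}(e_i(0)+s_i(0))$. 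Since the residuals and the $s_i(z)$'s average to zero, a direct covariance computation under simple random sampling gives variance $\sigma_{\hd,l}^2/n$ up to $O(n^{-2})$; and since $0\le H_{ii}\le 1$ forces $\max_i|s_i(z)|\le 2\max_i|Y_i(z)-\bar Y(z)|$, the Lindeberg-type conditions of Assumption~\ref{assumption:lindeberg-type-condition-p-o(n)} transfer to $\{v_i\}$. The H\`{a}jek--Lindeberg combinatorial CLT~\citep{hajek1960limiting} then yields asymptotic normality of the linear part with the claimed variance.

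The real work is the quadratic piece $T := -n^{-1}\sum_{i\ne j}(Z_i-r_1)(Z_j-r_1)\bigl(A_{ij}(1)/r_1^2 - A_{ij}(0)/r_0^2\bigr)$. Setting $W_i := (Z_i-r_1)/\sqrt{r_1r_0}$ and symmetrizing over the pair $(i,j)$ yields a kernel $f(i,j)$ that vanishes on the diagonal and makes $T$ fit the homogeneous-sum template of~\Cref{def:Wi} with $q=2$. However, the $\{W_i\}$ are only weakly dependent under simple random sampling, so \citet{koike2022high} does not apply directly. My approach is to adapt the H\`{a}jek-style coupling extended to quadratic functionals in \citet{wang2022rerandomization}: couple $\{W_i\}$ to an i.i.d.\ centered array $\{\widetilde W_i\}$ with matching marginal variances, bound the $L^2$ discrepancy between $\sum_{i\ne j}f(i,j)W_iW_j$ and $\sum_{i\ne j}f(i,j)\widetilde W_i\widetilde W_j$ via a combinatorial fourth-moment identity, and then apply Koike's degree-$2$ CLT (with \citet{de1987central,fox1987central} as a fallback) to the coupled version. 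A variance computation in the independent world, together with the identity $Q_{ij}=H_{ij}^2$ for $i\ne j$ and $Q_{ii}=H_{ii}-H_{ii}^2$ that arises from diagonal corrections when one shifts between the $i\ne j$ and full $i,j$ sums, recovers the limiting variance $\sigma_{\hd,q}^2$.

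The main obstacle is verifying the influence-type hypothesis of the quadratic CLT without any moment assumption on $\bs{X}_i$, using only $\sum_i H_{ii}=p$, the projection identity $\sum_j H_{ij}^2=H_{ii}$, and the Lindeberg bound on $Y_i(z)-\bar Y(z)$ from Assumption~\ref{assumption:lindeberg-type-condition-p-o(n)}. I would dominate $\sum_{i,j}f(i,j)^4$ and the max-row-sum $\max_i\sum_j f(i,j)^2$ by products of $\max_i(Y_i(z)-\bar Y(z))^2$ with quantities like $\sum_i H_{ii}=p$ via Cauchy--Schwarz. For joint convergence, the cross-covariance between the linear and quadratic parts reduces under simple random sampling to weighted sums of third central moments of the centered indicators, all of lower order than $1/n$, so the two pieces are asymptotically uncorrelated; combined with the Cram\'er--Wold device and the coupling to the independent $\widetilde W_i$'s, this yields joint normality. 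Since Assumption~\ref{assumption:Positive-variance-limit-alpha-positive} keeps at least one of $\sigma_{\hd,l}^2,\sigma_{\hd,q}^2$ bounded away from zero, $\sigma_\hd^2$ is non-degenerate and Slutsky's lemma delivers the stated CLT.
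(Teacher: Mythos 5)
Your overall strategy is the one the paper actually follows: split \eqref{eq:dbdecomp} into a linear and a quadratic statistic, replace the dependent indicators by i.i.d.\ Bernoulli variables via a H\`ajek-type coupling (the paper proves a second-order coupling, Proposition~\ref{proposition:second-order-hajek-coupling}, exactly for the quadratic form, leading to Proposition~\ref{prop:decompose-taudb-with-Ti}), and then invoke the homogeneous-sum CLT of \citet{koike2022high} after checking covariance, fourth-cumulant and influence conditions using $\sum_j H_{ij}^2=H_{ii}\le 1$ and the Lindeberg bound on $Y_i(z)-\bar Y(z)$. Up to that point your sketch is on track, although the verification of the fourth-moment condition for the quadratic part is heavier than ``bound $\sum_{i,j}f(i,j)^4$ and the max row sum'': one must also control the mixed terms $\sum f_{i_1i_2}^2 f_{i_2i_3}f_{i_3i_1}$, $\sum f_{i_1i_2}f_{i_2i_3}f_{i_3i_4}f_{i_4i_1}$, etc., which the paper handles with trace inequalities.

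The genuine gap is in your last step. First, ``asymptotically uncorrelated + Cram\'er--Wold'' does not deliver joint normality: a Cram\'er--Wold argument needs a CLT for every mixed-degree combination $aQ_1+bQ_2$, and marginal normality of each piece plus vanishing covariance is not sufficient; what the paper actually uses is the \emph{multivariate} bound of \citet{koike2022high} (Proposition~\ref{prop:negligible-Kolmogorov-distance}), which controls the joint law of $(Q_1,Q_2)$ over quadrants. Second, and more seriously, that bound carries the factor $1+1/\min\{\sigma_{\hd,l},\sigma_{\hd,q}\}$, and Assumption~\ref{assumption:Positive-variance-limit-alpha-positive} only keeps \emph{one} of the two variances bounded away from zero; the other may shrink at an arbitrary rate, in which case the joint normal approximation can be useless and the quadratic (or linear) piece need not be asymptotically normal after its own standardization. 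Your closing sentence (``$\sigma_\hd^2$ is non-degenerate and Slutsky delivers the CLT'') does not address this. The paper resolves it by splitting into subsequences according to whether $\sigma_{\hd,q}$ exceeds a threshold $\delta_n^{1/6}$: on the non-degenerate subsequence it converts the quadrant-type joint bound into a bound for the distribution function of the \emph{sum} $Q_1+Q_2$ via a slicing/discretization argument (the joint Kolmogorov distance over rectangles does not directly control the law of the sum), and on the degenerate subsequence it shows $Q_2/\sigma_\hd=\op(1)$ and falls back on the marginal CLT of \citet{de1990central} for $Q_1$ plus Slutsky (Proposition~\ref{prop:marginal-weak-converge}). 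Without an argument of this kind your proof does not cover the full range of Assumption~\ref{assumption:Positive-variance-limit-alpha-positive}.
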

% \begin{proof}[Proof of Theorem~\ref{theorem:CLT-alpha-greater-than-0}]
%     See Supplementary Materials.
% \end{proof}

% Noteworthy, as also discussed before the statement at~\Cref{theorem:CLT-alpha-greater-than-0}, one important intermediate step is to show that the joint distribution of the random vector 

Notice that in the above theorem, we do not require any assumption on the scaling of $p$; instead, we just require $p < n$ so that the debiased estimator is well-defined. Of course, as we will discuss later, we may still need $p / n$ to be asymptotically upper bounded by some constant in $(0, 1)$ to justify some assumptions. For more discussions, we refer the readers to~\Cref{sec:assumption}. 
% Here $\sigma_{\hd, l}^2$ corresponds to the asymptotic variance of the linear statistics, namely the first two terms in the decomposition~\eqref{eq:dbdecomp}, and $\sigma_{\hd, q}^2$ corresponds to the variance of the quadratic statistics.

Interestingly, without the constraint $p = o(n)$, we do not need \Cref{assumption:p-o(n)-case} anymore. This is because in the regime $p \asymp n$ we need to characterize more carefully the distribution of $\htaudb$ by considering the second and third terms in the decomposition~\eqref{eq:dbdecomp}. 
% \begin{align*}
%      n^{-1/2} \sum_i \tilde{Z}_i (r_1 s_i(1)+r_0 s_i(0)), \quad   n^{-1/2}\sum_{[i,j]} \tilde{Z}_i\tilde{Z}_j ({A}_{ij}(1)-{A}_{ij}(0)).
% \end{align*}
{\rev To our knowledge,  at the time we first circulated our manuscript, this was the first result which proves asymptotic normality in the moderately high-dimensional regime without any sparsity constraint on the potential outcome's linear representations. After we made our work publicly available, we have discovered an eariler work:~\citet{chang2021exact} {\revone (i.e., the arXiv version of~\citet{chang2024exact})}, which proposed a debiased estimator similar to ours. Indeed, as we prove in the Supplementary Material, under  Assumptions~\ref{assumption:positive-limit-of-assigned-proportion}--\ref{assumption:2-moment-of-finite-population-for-y} and the first half of \Cref{assumption:lindeberg-type-condition-p-o(n)}, the difference between our estimator and that of \citet{chang2021exact} (denoted as $\hat{\tau}_{\CMA}$) is asymptotically negligible: $\hat{\tau}_{\CMA} - \htaudb  = \Op(n^{-1})$. However, we would like to emphasize that though we are working on similar estimators, our theoretical focus is quite different. \cite{chang2021exact} proved that $\hat{\tau}_{\CMA}$ is exactly unbiased for $\bar\tau$, but all the asymptotic analysis of $\hat{\tau}_{\CMA}$ are under the fix $p$ regime. {\revone In their published version \citep{chang2024exact}, which became publicly available concurrently with our work, they further provided an asymptotic analysis allowing a diverging $p$ for an estimator that is slightly modified from $\hat{\tau}_{\CMA}$ (see also \Cref{sec:framework}).} 
%	In this paper, we instead consider the more challenging $p \asymp n$ regime and study the asymptotic normality and efficiency improvements under these regimes in \Cref{theorem:CLT-p=o(n),theorem:CLT-alpha-greater-than-0} and \Cref{sec:efficiency}. 
	Beyond \cite{chang2024exact}, \cite{chiang2023regression} also proposed an estimator that has exactly unbiased property (denoted as $\hat{\tau}_{\CMO}$). Under again Assumptions~\ref{assumption:positive-limit-of-assigned-proportion}--\ref{assumption:lindeberg-type-condition-p-o(n)} and an additional assumption that $\max_z |\bar{Y}(z)| = O(1)$, we instead have
	\begin{align*}
		\hat{\tau}_{\CMO}-\htaudb = \Delta_{\CMO} + \op(n^{-1/2}),
	\end{align*}
	where  
	\[
	\Delta_{\CMO} := \frac{1}{n }\sum_{i=1}^n (Z_i-r_1)H_{ii} \Big(\frac{\bar{Y}(1)}{r_1}+\frac{\bar{Y}(0)}{r_0}\Big)  -\frac{1}{n}\sum_{i\ne j} (Z_i-r_1)(Z_j-r_1) H_{ij}\Big(\frac{\bar{Y}(1)}{r_1^2}-\frac{\bar{Y}(0)}{r_0^2}\Big)
	\]
	is a nuisance random component which depends on the population means. Moreover, as we prove in the Supplementary Material, $\Delta_{\CMO} = \op(n^{-1/2})$ when $p = o(n)$; and $\Delta_{\CMO} = \Op(n^{-1/2})$ when $p \asymp n$. In other words, our analysis shows that $\hat{\tau}_{\CMO}$ has the same asymptotic distribution as $\htaudb$ when $p = o(n)$, which we believe is of independent interest.
}

\subsection{Efficiency improvement for debiased regression adjustment with moderately high-dimensional covariates}
\label{sec:efficiency}

In \Cref{theorem:CLT-alpha-greater-than-0}, we present the asymptotic distribution of our new debiased estimator. In this section, we discuss conditions so that $\sigma_{\hd, l}^2 + \sigma_{\hd, q}^2$ is smaller than $\sigma_\cre^2$, i.e., that our debiased estimator is asymptotically more efficient than without doing regression adjustment at all. 
% \textcolor{red}{YW: I got until here.}

To shed light on how the covariate-dimension-to-sample-size-ratio $p / n$ influences the variance of our new estimator, we consider a class of pre-treatment covariates whose leverage scores concentrate around their mean $p / n$.  We formalize it into the following assumption.
\begin{assumption}\label{assumption:maximum-leverage-score-close-to-alpha}
Let $\gls*{alpha} := p / n$. As $n \to \infty$, we have that:
    \[
    \max_{1 \le i \le n} |H_{ii} - \alpha| \to 0 \quad\&\quad \max_{z \in \{0, 1\}} \max_{1 \le i \le n} |Y_i(z) - \bar{Y}(z)| / \sqrt{n} \to 0,
    \]
    or for some constant $\eta > 0$,
    \[
    \frac{1}{n}\sum_{i=1}^n (H_{ii} - \alpha)^2 \to 0 \quad\&\quad \max_{z \in \{0, 1\}}\frac{1}{n}\sum_{i=1}^n |Y_i(z) - \bar{Y}(z)|^{2 + \eta} {\rev = O(1)}.
    \]
\end{assumption}
% \begin{assumption}
% \label{assumption:maximum-leverage-score-close-to-alpha}
%     $\max_i |H_{i,i} - \alpha| = o(1)$ when $\max_i \tilde{Y}_i^2(z) = o(n)$ or $\sum_i \tilde{h}_{ii}^2 = o(n)$ when $\sum_i \tilde{Y}_i^{2+\eta}(z) = o(n)$ for some $\eta > 0$.
% \end{assumption}
In~\citet{lei2021regression}, the authors have also assumed similar assumptions. Specifically, they require $\max_i  H_{ii} = o(1)$, which is equivalent to our first constraint in the regime $\alpha \to 0$. To justify this assumption, they proved that this assumption holds with high probability when the covariates are randomly generated from a superpopulation with $(6+\delta)$-th moment. 
% \textcolor{red}{add a short summary here}. 
This proof, albeit being enough for the setting $p = o(n)$, cannot be used in the $p \asymp n$ regime. As we will discuss further in \Cref{sec:assumption}, in this paper, we provide a new proof to show that if the covariates are generated as i.i.d. realizations of a random variable with $(4+\delta)$-th moment, 
% which to our knowledge is a quite mild assumption, 
\Cref{assumption:maximum-leverage-score-close-to-alpha} holds with high probability.

% The constrain on the magnitude of maximum leverage score is one of the core assumptions in \cite{lei2021regression} to prove the CLT in their paper, though their assumption is stricter. We only use Assumption~\ref{assumption:maximum-leverage-score-close-to-alpha} to illustrate the influence of $\alpha$ and we do not rely on Assumption~\ref{assumption:maximum-leverage-score-close-to-alpha} for inference. Moreover, as we will demonstrate in \Cref{sec:assumption}, \Cref{assumption:maximum-leverage-score-close-to-alpha} holds with high probability, if the covariate are generated as i.i.d. realizations of a random variable with $4+\delta$-th moment, which is a quite mild assumption.

Armed with the new assumption, we are able to propose some bounds on $\sigma_{\hd, l}$ and $\sigma_{\hd, q}$, depending on $\alpha$. By defining $\gls*{R2} := 1 - \frac{\sigma_\adj^2}{\sigma_\cre^2}$, which is equivalent to the {\rev squared canonical correlation} between $\bs{X}$ and $r_1^{-1} Y(1) + r_0^{-1} Y(0)$, we have the following result:

\begin{corollary}
\label{corollary:upper-lower-bound-for-sigma-alpha>0}
 Under Assumptions \ref{assumption:positive-limit-of-assigned-proportion}, \ref{assumption:2-moment-of-finite-population-for-y} and %\ref{assumption:Positive-variance-limit-alpha-positive}--
 \ref{assumption:maximum-leverage-score-close-to-alpha}, as $n \to \infty$, we have
 \begin{align*}
     & \sigma_{\hd,l}^2 = \left[(1+\alpha)^2-(1+2\alpha)R^2\right] \sigma_\cre^2 + o(1),\\
     & 0 \leq \sigma_{\hd,q}^2 \leq 2(r_1r_0)^2 \alpha(1-\alpha) S_{r_1^{-2} Y(1) - r_0^{-2} Y(0)}^2  + o(1).
 \end{align*}
% \lx{where ``$o(1)$" represents a sequence tending to $0$ as $n$ goes to $\infty$,  whose exact expression is not important, hence omitted.} 
 As a consequence, there is\footnote{{\rev Given a sequence of quantities $a_n$, we write $o(1) \le a_n$ if there exists a sequence $b_n \to 0$ such that $b_n \le a_n$.}}
\begin{align*}
    o(1)  \leq \sigma_\hd^2 - \left[(1+\alpha)^2-(1+2\alpha)R^2\right] \sigma_\cre^2 \leq 2(r_1r_0)^2 \alpha(1-\alpha) S_{r_1^{-2} Y(1) - r_0^{-2} Y(0)}^2 + o(1).
\end{align*}
\end{corollary}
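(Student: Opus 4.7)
The strategy is to reduce each bound to an exact algebraic identity plus an $o(1)$ replacement error coming from Assumption~\ref{assumption:maximum-leverage-score-close-to-alpha}. The first step, shared by both bounds, is to replace $H_{ii}$ by $\alpha$. Since $\sum_j (Y_j(z)-\bar Y(z)) = 0$,
\[
s_i(z) = \alpha\bigl(Y_i(z)-\bar Y(z)\bigr) + (H_{ii}-\alpha)\bigl(Y_i(z)-\bar Y(z)\bigr) - \frac{1}{n}\sum_{j=1}^n (H_{jj}-\alpha)\bigl(Y_j(z)-\bar Y(z)\bigr).
\]
Under the uniform variant of Assumption~\ref{assumption:maximum-leverage-score-close-to-alpha} the last two terms are immediately negligible in the empirical $L^2$ seminorm, since $\max_i |H_{ii}-\alpha|\to 0$ and Assumption~\ref{assumption:2-moment-of-finite-population-for-y} controls the second moment of $Y_i(z)-\bar Y(z)$; under the averaged variant, H\"older's inequality with exponent pair $\bigl((2+\eta)/\eta,\,(2+\eta)/2\bigr)$ yields the same conclusion. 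The identical argument lets $Q_{ii} = H_{ii}(1-H_{ii})$ be replaced by $\alpha(1-\alpha)$ with $o(1)$ error when weighted against a bounded second-moment sequence. These two replacements are the only places where Assumption~\ref{assumption:maximum-leverage-score-close-to-alpha} enters.

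\textbf{Linear bound.} Write $u_i(z) := \tilde{\bs{\beta}}_z^\top(\bs X_i - \bar{\bs X})$, so that $Y_i(z)-\bar Y(z) = e_i(z) + u_i(z)$ and $e_i(z) + s_i(z) = (1+\alpha)e_i(z) + \alpha u_i(z) + o(1)$. The crucial algebraic input is the cross-arm empirical orthogonality $S_{e(z_1),u(z_2)} = 0$ for every $z_1, z_2 \in \{0,1\}$, which follows in a single line from the normal equation $\tilde{\bs{\beta}}_{z_2} = \bs S_{\bs X}^{-2}\bs S_{\bs X, Y(z_2)}$ and the symmetry of $\bs S_{\bs X}^{-2}$. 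Expanding $\sigma_{\hd,l}^2$ bilinearly in $(e,u)$, every cross covariance vanishes, and combining the resulting formula with the identity $S_\tau^2 - S_{\tau_e}^2 = S_{u(1)-u(0)}^2$ (itself immediate from $\tau = \tau_e + (u(1)-u(0))$ and the same orthogonality) collapses the expression to $(1+\alpha)^2 \sigma_\adj^2 + \alpha^2(\sigma_\cre^2 - \sigma_\adj^2) = (1+2\alpha)\sigma_\adj^2 + \alpha^2\sigma_\cre^2$. Substituting $\sigma_\adj^2 = (1-R^2)\sigma_\cre^2$ yields the stated $[(1+\alpha)^2 - (1+2\alpha)R^2]\sigma_\cre^2 + o(1)$.

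\textbf{Quadratic bound.} Let $w_i := r_1^{-2}Y_i(1) - r_0^{-2}Y_i(0)$ and $\tilde w_i := w_i - \bar w$. Because $\bs H$ is the orthogonal projection onto the column span of the centered covariate matrix, $\sum_j H_{ij}^2 = H_{ii}$, and hence $Q_{ii} = \sum_{j\neq i} H_{ij}^2$. Plugging this into $\tilde{\bs w}^\top \bs Q\, \tilde{\bs w}$ and symmetrizing over the $(i,j)$ and $(j,i)$ pairs produces
\[
\tilde{\bs w}^\top \bs Q\, \tilde{\bs w} = \sum_{i<j} H_{ij}^2 (\tilde w_i + \tilde w_j)^2,
\]
which is manifestly nonnegative and gives $\sigma_{\hd,q}^2 \geq 0$. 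For the upper bound, $2|\tilde w_i \tilde w_j| \leq \tilde w_i^2 + \tilde w_j^2$ together with the same identity $\sum_{j\neq i} H_{ij}^2 = Q_{ii}$ gives $\tilde{\bs w}^\top \bs Q\, \tilde{\bs w} \leq 2\sum_i Q_{ii}\, \tilde w_i^2$; replacing $Q_{ii}$ by $\alpha(1-\alpha)$ as in the first paragraph and recognizing $\frac{1}{n-1}\sum_i \tilde w_i^2 = S_{r_1^{-2}Y(1) - r_0^{-2}Y(0)}^2$ delivers the claimed bound. Adding the two bounds for $\sigma_{\hd,l}^2$ and $\sigma_{\hd,q}^2$ gives the final sandwich for $\sigma_\hd^2$.

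\textbf{Main obstacle.} Once the two $H_{ii}\to\alpha$ replacements are justified, everything reduces to bilinear bookkeeping plus the two projection identities $S_{e(z_1),u(z_2)}=0$ and $\sum_j H_{ij}^2 = H_{ii}$. The only technically delicate step is the averaged variant of Assumption~\ref{assumption:maximum-leverage-score-close-to-alpha}, where one must couple the $L^2$ bound on $H_{ii}-\alpha$ with the $(2+\eta)$-th moment control on the potential outcomes through a careful H\"older estimate; the uniform variant is immediate.
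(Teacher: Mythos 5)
Your proposal is correct and follows essentially the same route as the paper's proof: replace $H_{ii}$ by $\alpha$ via Assumption~\ref{assumption:maximum-leverage-score-close-to-alpha} (uniform case trivially, averaged case by H\"older), use the residual/fitted orthogonality $\sum_i e_i(z_1)d_i(z_2)=0$ to collapse $\sigma_{\hd,l}^2$ to $(1+\alpha)^2\sigma_\adj^2+\alpha^2(\sigma_\cre^2-\sigma_\adj^2)$, and bound the off-diagonal quadratic part by the diagonal (your sum-of-squares identity $\tilde{\bs w}^\top\bs Q\tilde{\bs w}=\sum_{i<j}H_{ij}^2(\tilde w_i+\tilde w_j)^2$ is a nice alternative to the paper's ``it is a variance'' argument for nonnegativity, but is equivalent to its Cauchy--Schwarz step). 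The one detail worth making explicit is that the $Q_{ii}\to\alpha(1-\alpha)$ replacement involves $|H_{ii}-\alpha|$ to the \emph{first} power, so in the averaged variant the H\"older exponents must be retuned (the paper takes $\eta'\in(0,\eta)$ with $(2+\eta')/\eta'>2$); you flag this as the delicate step, so it is an omission of detail rather than a gap.
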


% \textcolor{red}{I got until here.}

Informally then, Corollary \ref{corollary:upper-lower-bound-for-sigma-alpha>0} shows that a necessary condition for the debiased estimator to give a variance smaller than that of $\hat{\tau}_\unadj$ is
 \begin{equation}\label{eq:necessarylb}
        (1+\alpha)^2-(1+2\alpha)R^2-1 < 0 \Leftrightarrow R^2 >\frac{\alpha^2+2\alpha}{1+2\alpha}.
    \end{equation}
At the same time, a sufficient condition for the debiased estimator to give a variance reduction is
\begin{equation}\label{eq:sufflb}
R^2 > \frac{\alpha^2+2\alpha}{1+2\alpha} + 2r_1r_0 \frac{\alpha(1-\alpha)}{1+2\alpha} \frac{S^2_{r_1^{-2} Y(1) - r_0^{-2} Y(0)}}{S^2_{r_1^{-1} Y(1) + r_0^{-1} Y(0)}}.
 \end{equation}

When $r_1 \equiv r_0 \equiv \frac{1}{2}$, then the above inequality can be written as
\[
R^2 > \underset{=: R_L^2}{\underbrace{\frac{\alpha^2+2\alpha}{1+2\alpha} +  \frac{\alpha(1-\alpha)}{1+2\alpha} \frac{S_\tau^2}{2 S_{\frac{Y(1) + Y(0)}{2}}^2}}}.
\]
We denote the right-hand side of the above inequality by $R_L^2$.  Informally, the magnitude of $R_L^2$ depends on two quantities: the first is the covariate-dimension-to-sample size ratio $\alpha$; the second is a scaled ratio between the variance of individual treatment effect $\tau_i = Y_i(1) - Y_i(0)$ and the variance of the average of two potential outcomes $\frac{Y_i(1) + Y_i(0)}{2}$, which we denote by $\gamma := \frac{S_\tau^2}{2 S_{\frac{Y(1) + Y(0)}{2}}^2}$.

\begin{figure}[t!]
    \centering
    \includegraphics[width=0.7\linewidth]{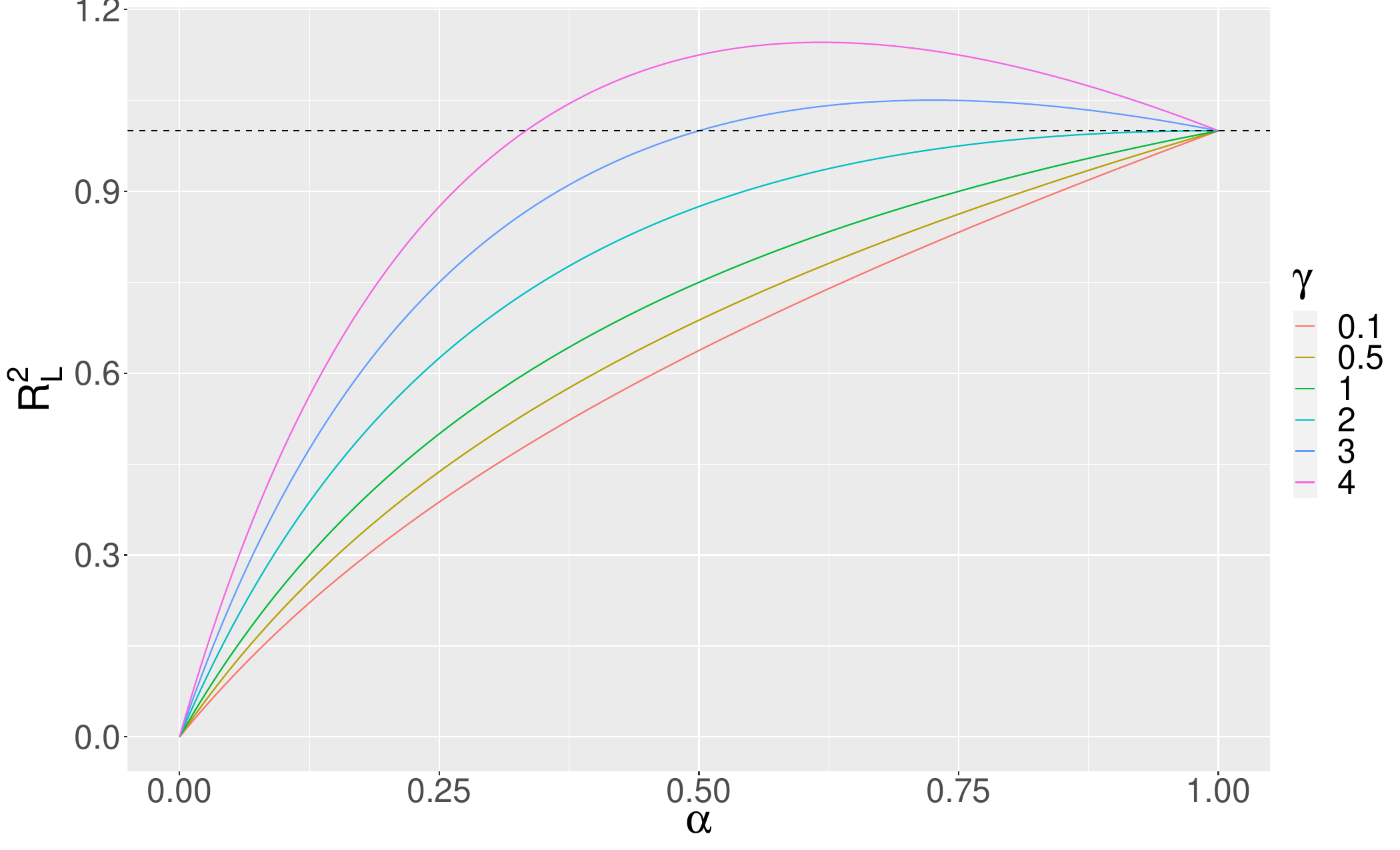}
    \caption{Curves of $R_L^2$ as a function of $\alpha := p / n$ with different magnitudes of $\gamma := \frac{S_\tau^2}{2 S_{\frac{Y(1) + Y(0)}{2}}^2}$. 
    % Here $r_1$ and $r_0$ are chosen to be $1 / 2$. 
    The dashed line signifies $1$.}
    \label{fig:dependence}
\end{figure}

% \textcolor{red}{I got until here.}
\Cref{fig:dependence} illustrates the dependency of $R_L^2$ on $\alpha$ and $\gamma$. Apparently, with a decreasing $\gamma$, $R_L^2$ decreases monotonically. This indicates that when the individual effects have smaller heterogeneity, less dependency between the potential outcomes and covariates is required for the debiased estimator to have an efficiency improvement compared to the unadjusted estimator. When $\gamma$ is small, $R_L^2$ increases monotonically as $\alpha$ goes from $0$ to $1$. When $\gamma$ is increased to above $2$, the trend then follows a different pattern. Regardless of the magnitude of $\gamma$, $R_L^2$ reduces to zero as $\alpha$ goes to zero. This is consistent with the theoretical findings in the low-dimensional setting. When $\alpha$ approaches $1$, both $R_L^2$ and the lower bound in~\eqref{eq:necessarylb} approaches $1$. This implies that we usually cannot achieve any efficiency improvement when $p$ is close to $n$. 
This also implies that adding more covariates to the regression will usually result in a phase transition from ``mostly harmless'' to ``harmful'' to the post-experiment analysis. In practice, we recommend practitioners to choose a moderate number of covariates so that $R^2$ is above $R_L^2$.

When $\alpha$ is small, say $0.1$, and we restrict $\gamma$ to be no larger than $2$, $R_L^2$ is at most $0.325$. We believe this already includes a large number of cases in practical applications; when $\alpha = 0.5$, one may require a relatively low $\gamma$ to keep $R_L^2$ away from $1$. 
Of course, since $R^2 > R_L^2$ is just a sufficient condition for our estimator to be more accurate than without using regression adjustment at all, in practice one may still observe an improved accuracy even when this is violated.
% We will provide more numerical illustrations in \Cref{sec:numerical}.

% there can be less dependence on the potential outcomes when the degree of heterogeneity of individual effect is small. Moreover, both the two terms vanishes as $\alpha$ goes to $0$, this is consistent with the previous observation in the low dimensional setting. When $\alpha$ approaches $1$, we require $R^2 > 1$ as a sufficient and necessary condition. Note that $R^2$ is bounded between $0$ and $1$. This implies that we cannot achieve any improvement in precision when $p$ is close to $n$. Consequently, regardless of the DGP, adding more covariates to the regression will result in a phase transition from “no harm” to “no good”.

% \begin{remark}

% And $R^2 >\alpha(\alpha+2)/(1+2\alpha)$ is a sufficient and necessary condition under constant treatment effect and $r_1=r_0$.  We give a more detailed discussion of the implication of the above condition under a superpopulation perspective.
% \end{remark}

\section{Inference}\label{sec:inference}
Inference on $\htaudb$ relies on a valid estimation of $\sigma_\hd^2$, so that one can construct asymptotically valid Wald-type confidence intervals. We will derive the formula of the variance estimator in this section and show that this variance estimator is asymptotically valid with moderately high-dimensional covariates.

Our new inferential technique is constructed by a decomposition of $\sigma_\hd^2$, which is given below in~\eqref{eq:sigmadecomp}. In order to describe this decomposition, we define:
\begin{equation}\label{def:BB}
    \gls*{bsB} :=\bs{M}^\top \bs{M},\quad \textrm{where}\quad \bs{M}:=\left(\bs{I} - \frac{1}{n}\bs{1}\bs{1}^\top\right)  - \bs{H} + \left(\bs{I} - \frac{1}{n}\bs{1}\bs{1}^\top\right) \diag\{\bs{H}\}.
\end{equation}
As will be demonstrated in the Supplementary Material, with the above notation, we are able to rewrite $\sigma_\hd$ as
\begin{align}
\label{eq:rewrite-sigma-hd}
\sigma_\hd^2 = (r_1r_0) S_{\bs{B}, \;r_1^{-1} Y(1) + r_0^{-1} Y(0)}^2 + (r_1r_0)^2 S_{\bs{Q}, \;r_1^{-2} Y(1) - r_0^{-2} Y(0)}^2.
% \frac{r_1r_0}{n - 1} \left\|\bs{B} ()\right\|_2^2 & \\
%     & + \frac{(r_1r_0)^2}{n-1} \left\|\bs{Q}^{1/2} \left(\frac{\bs{Y}(1) - \bs{1} \bar{Y}(1)}{r_1^2} - \frac{\bs{Y}(0) - \bs{1} \bar{Y}(0)}{r_0^2}\right)\right\|_2^2.
\end{align}
% \[
% \sigma_\hd^2 := (r_1r_0)^2 \left\|\bs{Q}^{1/2}\right\|_2^2
% \]

% Recall that $\hu_i(z) = u_i(z)-\hat{\bar{u}}(z)$, $\hu_i = Z_i\hu_i(1)+(1-Z_i)\hu_i(0)$ and $\hu$ be the corresponding vector of $\hu_i$. To give more intuition of the estimation strategy we will use later, we write $\sigma_{0<\alpha <1}^2$ as a quadratic form. 

% Noting that
%     \[
%     \|r_1e(1)+r_0 e(0) + r_1 s(1) + r_0 s(0)\|_2^2/n = \{r_1\tdu(1)+r_0\tdu(0)\}^\top B^2\{r_1\tdu(1)+r_0\tdu(0)\}/n,
%     \]
% where $B = H^{\perp} + H_0\diag(h)$. $H^\perp = H_0-H$, $H_0 = I-\mathbf{1}_n^\top\mathbf{1}_n/n$. We have
% \[
%  \sigma_{0<\alpha <1}^2 =(r_1r_0)^2\{\tilde{u}(1)-\tilde{u}(0)\}^\top Q \{\tilde{u}(1)-\tilde{u}(0)\}/n + (r_1r_0)\{r_1\tdu(1)+r_0\tdu(0)\}^\top B^2\{r_1\tdu(1)+r_0\tdu(0)\}/n.
% \]

Now, using that for any \emph{symmetric} matrix $\bs{A}$ and any vectors $\bs{a}, \bs{b}$, 
\begin{align}
\label{eq:property-of-S2-1}
    S_{\bs{A}, \bs{a}, \bs{b}} = S_{\diag\{\bs{A}\}, \bs{a}, \bs{b}} + S_{\diag^-\{\bs{A}\}, \bs{a}, \bs{b}},
\end{align}
and that 
\begin{align}
\label{eq:property-of-S2-2}
    S^2_{\bs{A}, \bs{a} + \bs{b}} = S^2_{\bs{A}, \bs{a}} + S^2_{\bs{A}, \bs{b}} + 2 S_{\bs{A}, \bs{a}, \bs{b}},
\end{align}
(which we will clarify in the Supplementary Material), we further decompose $\sigma_{\hd}^2$ as

\begin{equation}\label{eq:sigmadecomp}
\begin{aligned}
 \sigma_{\hd}^2 &:= \underset{=: \cI_1}{\underbrace{(r_1r_0) \sum_{z \in \{0, 1\}} \left(S^2_{r_1r_0\diag\{\bs{Q}\}, \;r_z^{-2} Y(z)} + S^2_{r_z^{-2} \diag\{\bs{B}\}, \;Y(z)}\right)}} \\
& + \underset{=: \cI_2}{\underbrace{(r_1r_0) \sum_{z \in \{0, 1\}} \left(S^2_{r_1r_0\diag^-\{\bs{Q}\}, \;r_z^{-2} Y(z)} + S^2_{r_z^{-2} \diag^-\{\bs{B}\}, \;Y(z)}\right)}} \\
& + \underset{=: \cI_3}{\underbrace{2 S_{\diag\{\bs{B}\}, Y(1), Y(0)} - 2 S_{\diag\{\bs{Q}\}, Y(1), Y(0)}}} \\
&+ \underset{=: \cI_4}{\underbrace{2 S_{\diag^-\{\bs{B}\}, Y(1), Y(0)} - 2 S_{\diag^-\{\bs{Q}\}, Y(1), Y(0)}}}.
\end{aligned}
\end{equation}
Informally, $\cI_1$ and $\cI_2$ correspond to the variances of a single world, and $\cI_3$ and $\cI_4$ correspond to the covariance of counterfactual worlds. 

% Armed with the above expression, we now split $\sigma_\hd^2$ as $4$ components $\cI_1, \ldots, \cI_4$, defined as follows.
% \[
% \cI_1 := (r_1r_0) \sum_{z \in \{0, 1\}} \left(S^2_{r_1r_0\diag\{\bs{Q}\}, \;r_z^{-2} Y(z)} + S^2_{r_z^{-2} \diag\{\bs{B}\}, \;Y(z)}\right), 
% \]
% % \begin{align}
% % \[
% % \sigma_{0<\alpha <1}^2 = \mathcal{I}_1+\mathcal{I}_2+\mathcal{I}_3+\mathcal{I}_4.
% % \]
% and $\cI_2$ the same as $\cI_1$ but with $\diag\{\bs{B}\}$ and $\diag\{\bs{Q}\}$ replaced by $\diag^{-}\{\bs{B}\}$, $\diag^{-}\{\bs{Q}\}$, i.e., the $n\times n$ matrices with vanishing diagonal elements and off-diagonal elements equal to $\bs{B}$ or $\bs{Q}$, representing the contributions of cross-sample \lx{canonical correlation}s (\lx{variance of a single world defined by diagonal elements and off-diagonal elements, respectively}). $\cI_3$ and $\cI_4$ represents the ``counterfactual'' components  of $\sigma_\hd^2$ (\lx{covariance of counterfactual worlds defined by diagonal elements and off-diagonal elements, respectively}), where $
% \cI_3 := 2 S_{\diag\{\bs{B}\}, Y(1), Y(0)} - 2 S_{\diag\{\bs{Q}\}, Y(1), Y(0)}$, and $\cI_4$ is defined in the same way as in $\cI_3$ but with $\diag$ replaced by $\diag^-$. 

Armed with the above decomposition, we construct an estimation of $\sigma_\hd^2$ by estimating $\cI_1, \ldots, \cI_4$ separately. Since they represent variances from different sources, we need different estimation strategies for each term. We first consider $\cI_1$ and $\cI_2$. Since these quantities are quadratic functions of potential outcomes of a single arm, they can be consistently estimated using empirical observations from a single arm. Specifically, we can estimate $\cI_1$ via
\[
\hat{\cI}_1 := (r_1r_0) \sum_{z \in \{0, 1\}} \left(s^2_{r_1r_0\diag\{\bs{Q}\}, \;r_z^{-2} Y(z)} + s^2_{r_z^{-2} \diag\{\bs{B}\}, \;Y(z)}\right),
\]
where $s^2_{r_1r_0\diag\{\bs{Q}\}, \;r_z^{-2} Y(z)}$ and $s^2_{r_z^{-2} \diag\{\bs{B}\}, \;Y(z)}$ are empirical estimates of their oracle versions using samples from treatment arm $z$. For example, we write
\[
s^2_{r_1r_0\diag\{\bs{Q}\}, \;r_z^{-2} Y(z)} := \frac{1}{n_z} \sum_{i: Z_i = z} r_1 r_0 Q_{ii} (r_z^{-2} Y_i - r_z^{-2} \bar{Y}_z)^2.
\]
We now consider $\hat{\cI}_2$. Since it involves cross-sample products, we define $s^2_{r_1r_0\diag^-\{\bs{Q}\}, \;r_z^{-2} Y(z)}$ instead as
\[
s^2_{r_1r_0\diag^-\{\bs{Q}\}, \;r_z^{-2} Y(z)} := \frac{1}{r_z n_z} \sum_{i \ne j: Z_i, Z_j = z} r_1 r_0 Q_{ij} (r_z^{-2} Y_i - r_z^{-2} \bar{Y}_z) (r_z^{-2} Y_j - r_z^{-2} \bar{Y}_z)
\]
and similarly for $s^2_{r_z^{-2} \diag^{-}\{\bs{B}\}, \;Y(z)}$. 
% Both $\cI_1$ and $\cI_2$ involves potential outcomes of a single world, and can thus be consistently estimated using the samples from a single strata. 

Finally, we discuss the estimation of $\cI_3$ and $\cI_4$. Since $\cI_3$ corresponds to the covariances of potential outcomes from two worlds, it cannot be estimated consistently from observed data directly. Instead, it is only identifiable up to an upper bound. As we will show in the proof of \Cref{thm:inference}, ${\cI}_3$ can be decomposed as
\begin{align*}
    {\cI}_{3} = \sum_{z \in \{0, 1\}}& \left(S^2_{\diag\{\bs{B}\}, Y(z)} - S^2_{\diag\{\bs{Q}\}, Y(z)}- S^2_{\diag^-\{\bs{H}\},Y(z)}\right) + 2S_{\diag^-\{\bs{H}\},Y(1),Y(0)}  \\
    &-  S^2_{\diag\{\bs{H}\},Y(1)-Y(0)} - S^2_{e(1)-e(0)} + O(n^{-1}),
\end{align*}
where the last two terms ($S^2_{\diag\{\bs{H}\},Y(1)-Y(0)}$ and $S^2_{e(1)-e(0)}$) represent treatment effect variation and thus can not be estimated consistently. Fortunately, the last two terms are non-negative; this allows us to provide a consistent estimation of an upper bound of $\cI_3$ just with the first two terms in the above decomposition, which we denote by $\cI_{3, \ub}$. 
%
% that is 
% \[
% {\cI}_{3,\ub} = \sum_{z \in \{0, 1\}} \left(S^2_{\diag\{\bs{B}\}, Y(z)} - S^2_{\diag\{\bs{Q}\}, Y(z)}- S^2_{\diag^-\{\bs{H}\},Y(z)}\right) + 2S_{\diag^-\{\bs{H}\},Y(1),Y(0)}.
% \]
Noteworthy, besides the variance of a single world, the  ${\cI}_{3,\ub}$ involves a term representing the covariance of counterfactual worlds. We define its empirical estimate as  
\[
s_{\diag^-\{\bs{H}\},Y(1),Y(0)} := \frac{1}{nr_1r_0}\sum_{i \ne j: Z_i=1,Z_j=0} H_{ij} (Y_i-\bar{Y}_1)(Y_j-\bar{Y}_0)
\]
and similarly we can define $s_{\diag^-\{\bs{B}\},Y(1),Y(0)}$ and $s_{\diag^-\{\bs{Q}\},Y(1),Y(0)}$.
In light of the above, we obtain an empirical estimate of ${\cI}_{3,\ub}$ as
\[
\hat{\cI}_{3,\ub} = \sum_{z \in \{0, 1\}} \left(s^2_{\diag\{\bs{B}\}, Y(z)} - s^2_{\diag\{\bs{Q}\}, Y(z)}- s^2_{\diag^-\{\bs{H}\},Y(z)}\right) + 2s_{\diag^-\{\bs{H}\},Y(1),Y(0)}.
\]

For $\cI_4$, mimicking the estimates for $\cI_3$, we propose to estimate it via
\[
\hat{\cI}_4 = 2\left(s_{\diag^-\{\bs{B}\},Y(1),Y(0)}-s_{\diag^-\{\bs{Q}\},Y(1),Y(0)}\right).
\]
%\[
%\hat{\cI}_4 := \frac{2}{n r_1 r_0} \sum_{i \neq j: Z_i = 1, Z_j = 0} (B_{i,j} - %Q_{i,j})(Y_i - \bar{Y}_1) (Y_j - \bar{Y}_0).
%\]
Putting together, we get the variance estimator $\hat{\sigma}_{\hd}^2 := \hat{\cI}_1+\hat{\cI}_2+\hat{\cI}_{3,\ub}+\hat{\cI}_4$. The following theorem characterizes the asymptotic convergence of this estimator. 
\begin{theorem}\label{thm:inference}
 {\revone If Assumptions \ref{assumption:positive-limit-of-assigned-proportion}--\ref{assumption:p-o(n)-case} hold, 
    we have, there exists a non-negative sequence $\epsilon_n = \op(1)$ such that,}
    \begin{align*}
         \hat{\sigma}^2_{\hd}= {\sigma}_{\adj}^2 +  S_{e(1)-e(0)}^2 + \op(1), \quad \& \quad  {\revone \frac{\hat{\sigma}^2_{\hd}}{\sigma^2_{\adj}} \geq 1-\epsilon_n.}
    \end{align*}
    {\revone Otherwise, if Assumptions \ref{assumption:positive-limit-of-assigned-proportion}--\ref{assumption:lindeberg-type-condition-p-o(n)}, and \ref{assumption:Positive-variance-limit-alpha-positive} hold,  we have,  there exists a non-negative sequence $\epsilon_n = \op(1)$ such that,}
    \begin{align*}
         \hat{\sigma}^2_{\hd}= {\sigma}_{\hd}^2 + S_{e(1)-e(0)}^2 +  S^2_{\diag\{\bs{H}\},Y(1)-Y(0)} + \op(1), \quad \& \quad  {\revone \frac{\hat{\sigma}^2_{\hd}}{\sigma^2_{\hd}} \geq 1-\epsilon_n.}
    \end{align*}
\end{theorem}
 {\revone \Cref{thm:inference} suggests the Wald-type confidence intervals using $\hat{\sigma}^2_{\hd}$ are conservative.} Since as mentioned earlier, ${\sigma}_{\hd}^2$ contains both the variances contributed by the linear statistic and the quadratic statistic, the construction of $\hat{\sigma}^2_{\hd}$ needs to estimate both variances to guarantee inference validity. We note that we are not the first to estimate the quadratic statistic variance of a covariate adjusted estimator for confidence interval construction. In the regime of $p = o(n^{3/4})$ up to log factors, \cite{chiang2023regression} used a bias-corrected version of a $\textnormal{HC}_3$-type variance estimator which captures the quadratic statistic variance of their cross-fitted estimator to improve the finite-sample performance of confidence interval construction.
%	But it is not guaranteed to be valid when $p \asymp n$, unlike our estimator. For an empirical comparison of our variance estimator and theirs, see numerical analysis in \Cref{sec:numerical}.

%\begin{remark}
%\label{rmk:bias-corrected-HC3-variance-estimator}
% \lx{We are not the first to consider estimating the quadratic stochastic component of the regression adjusted estimator. In the regime of $p = o(n^{3/4})$ up to log factors, \cite{chiang2023regression} used a bias-corrected version of $\textnormal{HC}_3$-type variance estimator which captures the the quadratic stochastic component of their cross-fitted estimator to improve its finite-sample performance. But it is not guaranteed to be valid when $p \asymp n$, unlike our estimator. For an empirical comparison of our variance estimator and theirs, see numerical analysis in \Cref{sec:numerical}.}  
%\end{remark}

Due to the unidentifiability of the counterfactual covariance, the estimated $\hat{\sigma}_\hd^2$ contains a variance inflation.
In the regime $p = o(n)$, our variance estimation has the same inflation as in the lower dimensional regime where $p= O(n^{2/3} / (\log n)^{1/3})$, see \cite{lei2021regression}. This variance inflation is always no greater than the usual inflation \emph{without} any covariate adjustment, namely $S^2_{\tau}$.
 Nevertheless, in the regime $p\asymp n$, the variance inflation $S_{e(1)-e(0)}^2 + S^2_{\diag\{\bs{H}\},Y(1)-Y(0)}$ is not always smaller than $S^2_{\tau}$, especially when there is strong co-linearity between $H_{ii}$ and $\tau_i$. We will demonstrate this in numerical analysis. On the other hand, as we will show in Section~\ref{sec:assumption}, when the data exhibit sufficient linearity and light tail, one can still expect $S_{e(1)-e(0)}^2 + S^2_{\diag\{\bs{H}\},Y(1)-Y(0)}<S_{\tau}^2$. 

 We now showcase an alternative variance estimator with variance inflation equal to $S^2_\tau$. 
 In this estimator, instead of estimating ${\cI}_{3,\ub}$, we focus on the following alternative upper bound of $\cI_3$:
\[
\cI_{3,\ub}^{\prime} := \sum_{z \in \{0, 1\}} \left(S^2_{\diag\{\bs{B}\}, Y(z)} - S^2_{\diag\{\bs{Q}\}, Y(z)}\right),
\]
which can be estimated via $$\hat{\cI}_{3,\ub}^\prime := \sum_{z \in \{0, 1\}} \left(s^2_{\diag\{\bs{B}\}, Y(z)} - s^2_{\diag\{\bs{Q}\}, Y(z)}\right).$$
Armed with $\hat{\cI}_{3,\ub}^\prime$, we define the alternative variance estimator as $\hat{\sigma}_\hd^\prime{}^2 := \hat{\cI}_1+\hat{\cI}_2+\hat{\cI}_{3,\ub}^\prime+\hat{\cI}_4$. The following corollary characterizes the asymptotic convergence of the alternative variance estimator. 
\begin{corollary}\label{cor:inference}
    {\revone If Assumptions \ref{assumption:positive-limit-of-assigned-proportion}--\ref{assumption:p-o(n)-case} hold,
    we have, there exists a non-negative sequence $\epsilon_n = \op(1)$,}
    \begin{align*}
        \hat{\sigma}_{\hd}^\prime{}^2= {\sigma}_{\adj}^2 +  S_{\tau}^2 + \op(1),\quad \& \quad  {\revone \frac{\hat{\sigma}_{\hd}^\prime{}^2}{\sigma^2_{\adj}} \geq 1-\epsilon_n.}
    \end{align*}
    {\revone Otherwise, if Assumptions \ref{assumption:positive-limit-of-assigned-proportion}--\ref{assumption:lindeberg-type-condition-p-o(n)}, and \ref{assumption:Positive-variance-limit-alpha-positive}  hold,  we have, there exists a non-negative sequence $\epsilon_n = \op(1)$,}
    \begin{align*}
         \hat{\sigma}_{\hd}^\prime{}^2= {\sigma}_{\hd}^2 + S_{\tau}^2 +  \op(1),\quad \& \quad  {\revone \frac{\hat{\sigma}_{\hd}^\prime{}^2}{\sigma^2_{\hd}} \geq 1-\epsilon_n.}
    \end{align*}
\end{corollary}
{\revone \Cref{cor:inference} shows the Wald-type confidence intervals using $\hat{\sigma}_{\hd}^\prime{}^2$ are conservative.} In practice, we recommend the use of $\min\{\hat{\sigma}_{\hd}^2,\hat{\sigma}_{\hd}^\prime{}^2\}$ for a shorter confidence interval. Then, no matter $S_{e(1)-e(0)}^2+ S^2_{\diag\{\bs{H}\},Y(1)-Y(0)} > S_{\tau}^2$ or not, the variance inflation is always no greater than $S_\tau^2$, i.e., the inflation without using any covariate adjustment.
Therefore, the confidence interval length from our inferential procedure is always asymptotically shorter than the unadjusted estimator whenever $\sigma^2_\hd < \sigma^2_\cre$.

Since the inferential procedure of \cite{Lin2013Agnostic} may behave poorly in practice when the covariate dimension is relatively large, existing literature recommended to use HC3-type standard error to boost finite sample performance which heavily penalizes dimension $p$ used by the analysis. However, the HC3-type standard error is typically conservative and has no theoretical guarantee in the moderately high-dimensional regime. Based on our theory, we provide an inference procedure that is valid under this regime and the estimated variance is ``tight'' in that the bias is the variance of unit-level treatment effect which can not be estimated from data. 
% We will discuss these further in the simulation study.
% that our inference procedure performs far better than \cite{lei2021regression}.

\section{Justification of assumptions}\label{sec:assumption}
In this section, we justify Assumptions~\ref{assumption:p-o(n)-case}--\ref{assumption:maximum-leverage-score-close-to-alpha}. The following proposition implies that \Cref{assumption:p-o(n)-case} holds with high probability if the potential outcomes are i.i.d. generated from a superpopulation with bounded variance. 
\begin{mproposition}
    \label{proposition:justify-sum-of-max-p-o(n)}
    Fix $z \in \{0, 1\}$. If $p=o(n)$ and $\{Y_{i}(z)\}_{i=1}^n$ are i.i.d.~random variables with $\var(Y_1(z)) < \infty$, then there exists a positive sequence $c_n\rightarrow 0$ such that
    \[
    \mathbb{P}\left(\frac{1}{n}\sum_{i=1}^p \left({Y}_{(i)}(z)-\bar{Y}(z)\right)^2 >c_n\right) \to 0.
    \]
\end{mproposition}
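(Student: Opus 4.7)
I plan to reduce the problem to bounding the top-$p$ order statistics of an i.i.d.~non-negative sequence with finite mean, via a centering step followed by classical truncation. Writing $\mu := \E[Y_1(z)]$ and $W_i := (Y_i(z)-\mu)^2$, the $W_i$'s are i.i.d.~non-negative with $\E[W_1] = \var(Y_1(z)) < \infty$. Using $(Y_i(z)-\bar Y(z))^2 \le 2(Y_i(z)-\mu)^2 + 2(\bar Y(z)-\mu)^2$ together with the fact that $\sum_{i \in I} W_i \le \sum_{i=1}^p W_{(i)}$ for any index set $I$ of size $p$ (with $W_{(1)} \ge \cdots \ge W_{(n)}$ denoting the order statistics), I obtain
\begin{equation*}
\frac{1}{n}\sum_{i=1}^p (Y_{(i)}(z)-\bar Y(z))^2 \;\le\; \frac{2}{n}\sum_{i=1}^p W_{(i)} + \frac{2p}{n}(\bar Y(z)-\mu)^2 .
\end{equation*}
The second term has expectation $2p\var(Y_1(z))/n^2 = o(1)$ since $p=o(n)$, so it is negligible.

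For the main term I would apply truncation at an arbitrary level $M>0$. Splitting the top-$p$ contribution according to whether $W_{(i)} \le M$ and bounding the large part by the full tail sum yields
\begin{equation*}
\frac{1}{n}\sum_{i=1}^p W_{(i)} \;\le\; \frac{pM}{n} + \frac{1}{n}\sum_{i=1}^n W_i\, \mathbf{1}\{W_i > M\},
\end{equation*}
and therefore, defining $g(M) := \E[W_1\,\mathbf{1}\{W_1 > M\}]$,
\begin{equation*}
\E\!\left[\frac{1}{n}\sum_{i=1}^p W_{(i)}\right] \;\le\; \frac{pM}{n} + g(M).
\end{equation*}
The assumed integrability of $W_1$ (equivalently, the finiteness of $\var(Y_1(z))$) guarantees that $g(M)\to 0$ as $M\to\infty$ by dominated convergence.

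The final step is to choose a diverging truncation level $M_n\to\infty$ that keeps the bulk term $pM_n/n$ vanishing; since $p=o(n)$, I may take $M_n = \sqrt{n/p}$, for which $pM_n/n = \sqrt{p/n}\to 0$. Setting $\delta_n := pM_n/n + g(M_n)\to 0$, the two displays above combine to give $\E\bigl[\tfrac{1}{n}\sum_{i=1}^p (Y_{(i)}(z)-\bar Y(z))^2\bigr] \to 0$. A final application of Markov's inequality with $c_n := \sqrt{\E[\tfrac{1}{n}\sum_{i=1}^p (Y_{(i)}(z)-\bar Y(z))^2]}\to 0$ then produces $\Pr\bigl(\tfrac{1}{n}\sum_{i=1}^p (Y_{(i)}(z)-\bar Y(z))^2 > c_n\bigr) \le c_n \to 0$, as required.

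There is no deep analytic obstacle here: the proof is a textbook truncation argument for i.i.d.~integrable variables. The only genuinely delicate point is the calibration of $M_n$, which must diverge fast enough for $g(M_n)\to 0$ yet slowly enough for $pM_n/n\to 0$; this is precisely what $p=o(n)$ together with the second-moment assumption on $Y_1(z)$ buys us.
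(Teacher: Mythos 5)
Your proof is correct, and it takes a genuinely simpler route than the paper's. Both arguments begin identically: center at the population mean, use $(Y_i(z)-\bar Y(z))^2\le 2(Y_i(z)-\mu)^2+2(\bar Y(z)-\mu)^2$, and dispose of the $\tfrac{p}{n}(\bar Y(z)-\mu)^2$ term via a second-moment bound (the paper uses Chebyshev with $c_{n1}=(p/n)^{1/2}$; you use the expectation directly — equivalent). The difference is in handling the top-$p$ sum of $W_i=(Y_i(z)-\mu)^2$. The paper splits into a bounded case and an unbounded case, truncates at a population quantile $Q_{2\alpha}$ chosen so that, by Bernstein's inequality, at least $p$ of the $W_i$ exceed it with high probability; on that event the top-$p$ sum is dominated by the full tail sum $\sum_i W_i\mathbf 1\{W_i\ge Q_{2\alpha}\}$, which is then controlled by Markov. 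You instead use the deterministic inequality $\sum_{i=1}^p W_{(i)}\le pM+\sum_{i=1}^n W_i\mathbf 1\{W_i>M\}$, valid pathwise for any $M$, take expectations, and calibrate $M_n=\sqrt{n/p}\to\infty$ so that both $pM_n/n=\sqrt{p/n}$ and the tail expectation $g(M_n)$ vanish. This removes the need for Bernstein's inequality, the quantile construction, and the bounded/unbounded case split, at no loss of generality; the paper's quantile truncation ties the rate to $\E[W_1\mathbf 1\{W_1\ge Q_{2\alpha}\}]$ while yours ties it to $\sqrt{p/n}+g(\sqrt{n/p})$, which is a comparable, equally explicit rate. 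The only cosmetic point to fix: if $\var(Y_1(z))=0$ your $c_n=\sqrt{\E[\cdot]}$ can be zero rather than positive, so take e.g. $c_n=\max\bigl\{\sqrt{\E[\tfrac1n\sum_{i=1}^p(Y_{(i)}(z)-\bar Y(z))^2]},\,n^{-1}\bigr\}$ to match the statement's requirement of a positive sequence.
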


We now focus on the justification of~\Cref{assumption:maximum-leverage-score-close-to-alpha}. The assumptions on $Y_i(z)$'s are common in randomization-based literature; therefore, we only need to justify the assumptions on $\bs{H}$. Since $\max_{1 \le i \le n} |H_{ii} - \alpha| \to 0$ is a sufficient assumption for $\sum_{i = 1}^n (H_{ii} - \alpha)^2 / n \to 0$, we only need to show that the former condition holds with high probability under some superpopulation assumption on $\bs{X}_i$'s. In \Cref{proposition:4+eta-th-moment},  we show that when $\bs{X}_i$'s are i.i.d. realizations from some superpopulation with entry-wise bounded $(4 + \eta)$-th order moment up to some transformation, $\max_{1 \le i \le n} |H_{ii} - \alpha| \to 0$ holds with high probability.

%\textcolor{red}{Delete this graph?} We assume $\{\bs{X}_{i}\}$ for $i\in [n]$ are i.i.d. isotropic random vectors with mean $0$ and covariance $\mathbf{I}$. Noting that if $\bs{X}_i$  have covariance $\bs{\Sigma}$, we can just apply our analysis to $\bs{V}_i=\bs{\Sigma}^{-1/2}(\bs{X}_i-\bs{\mu})$, in that 
%\[
%H_{ii} = (n-1)^{-1}(\bs{X}_i-\bar{\bs{X}})^\top \bs{S}_{\bs{X}}^{-2} (\bs{X}_i-%\bar{\bs{X}}) = (n-1)^{-1} (\bs{V}_i-\bar{\bs{V}})^\top \bs{S}_{\bs{\bs{V}}}^{-2} %%(\bs{V}_i-\bar{\bs{V}}) .
%\]
%\textcolor{red}{For the definition of $\bs{H}$, please see page 1.} 
%\textcolor{red}{Check the $\limsup p / n$ thing.}
\begin{mproposition}
    \label{proposition:4+eta-th-moment}
    Suppose that $\{\bs{X}_i\}_{i=1}^n$ are i.i.d. random vectors 
    %with $\E \bs{X}_i=0$ and $\E \bs{X}_i\bs{X}_i^\top  = \mathbf{I}$. Suppose that random vectors $\bX_i$ are 
    generated from independent random variables as $\bX_i = \mathbf O \bXX_i,$ where $\mathbf O$ is a deterministic non-singular matrix and $\bXX_i$ have independent entries with mean 0, variance 1, and  $\max_j\E |V_{1j}|^{4+\eta} <C$ for some constants $\eta, C >0$; suppose also $\limsup_{n\to \infty}\alpha = \limsup_{n\to \infty}(p/n)<1$. 
    Then, for any constant $\delta$ satisfying that $0 < \delta < \frac{\eta}{8 + 2 \eta}$,
    \begin{equation}\label{eq:Hii}
        \mathbb{P}\left(\max_{i \in [n]}|H_{ii}-\alpha|> n^{- \delta}\right) \to 0.
    \end{equation}
    % Let $c>0$ be an arbitrary constant. Then, as $n \to \infty$, we have 
    % \begin{equation}\label{eq:Hii}
    %     \mathbb{P}\left(\max_{i \in [n]}|H_{i,i}-\alpha|> c\right) \to 0.
    % \end{equation}
    % \textcolor{red}{To show that Assumption 7 holds with high probability, it seems to me what we really need is actually something like:
    % there exists a sequence $c_n \to 0$ such that
    % \[
    % \mathbb{P}\left(\max_{i \in [n]}|H_{i,i}-\alpha|> c_n\right) \to 0,
    % \]
    % which is actually stronger than the above result. Do you think it is provable? Besides, a crucial think is to show Assumption 7 holds when $p$ is in the same order of magnitude as $n$, so the $p/n^{2/3}\to 0$ results are not interesting from my perspective.
    % }
    %  Furthermore, if $p/n^{2/3}\to 0$, we have 
    % \begin{equation}\label{eq:Haver}
    % \mathbb{P}\left(\frac{1}{n} \sum_{i=1}^n|H_{i,i}-\alpha| > c n^{-1/2}\right) \to 0.
    % \end{equation}
    % % \[
    % % \mathbb{P}\left(\max_{i \in [n]}|H_{i,i}-\alpha|> c n^{-1/2}\right) \to 0.
    % % \]
    % As a consequence, it implies that for any $\gamma>1$,
    % \[
    % \mathbb{P}\Bigg(\left(\frac{1}{n} \sum_{i=1}^n|H_{i,i}-\alpha|^\gamma\right)^{1/\gamma} > c n^{-1/2}\Bigg) \to 0.
    % \]
\end{mproposition}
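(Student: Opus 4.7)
The plan is to combine a leave-one-out Sherman--Morrison reduction with a moment-based concentration inequality for quadratic forms and the Marchenko--Pastur law. First I would reduce to the isotropic case: since $\bs{H}$ is the orthogonal projection onto the column space of the centered design matrix $(\bs{I} - n^{-1}\bs{1}\bs{1}^\top)\bs{X}$, and since this column space is invariant under the invertible right-transformation $\bs{X} = \bs{V}\mathbf{O}^\top$, we have $\bs{H}(\bs{X}) = \bs{H}(\bs{V})$. So we may assume without loss of generality that $\bs{X}_i$ itself has independent, mean-zero, unit-variance coordinates with bounded $(4+\eta)$-th moment.

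Next, write $\bar{\bs{X}}_{-i} := (n-1)^{-1}\sum_{j\neq i}\bs{X}_j$, $\bs{D}_i := \bs{X}_i - \bar{\bs{X}}_{-i}$, and $\bs{\Sigma}_{-i} := \sum_{j\neq i}(\bs{X}_j - \bar{\bs{X}}_{-i})(\bs{X}_j - \bar{\bs{X}}_{-i})^\top$. Two applications of the Sherman--Morrison identity give
\[
H_{ii} \;=\; \frac{\xi_i}{1+\xi_i}, \qquad \xi_i \;=\; \Big(\tfrac{n-1}{n}\Big)^{2}\frac{\zeta_i}{1 + \tfrac{n-1}{n^{2}}\zeta_i},\qquad \zeta_i \;:=\; \bs{D}_i^\top \bs{\Sigma}_{-i}^{-1}\bs{D}_i,
\]
where crucially $\bs{\Sigma}_{-i}$ is a function of $\{\bs{X}_j\}_{j\neq i}$ only, hence independent of $\bs{X}_i$. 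Elementary algebra shows that $H_{ii}\to\alpha$ uniformly in $i$ is equivalent to $\zeta_i \to \alpha/(1-\alpha)$ uniformly in $i$, and since the map $\xi\mapsto \xi/(1+\xi)$ is Lipschitz on compact subsets of $(-1,\infty)$, it suffices to establish $\max_i |\zeta_i - \alpha/(1-\alpha)| = \Op(n^{-\delta})$.

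Conditioning on $\mathcal{F}_{-i} := \sigma(\bs{X}_j : j\neq i)$ and using that $\bs{X}_i$ has identity covariance, one computes
\[
\E[\zeta_i\mid\mathcal{F}_{-i}] \;=\; \tr(\bs{\Sigma}_{-i}^{-1}) + \bar{\bs{X}}_{-i}^\top\bs{\Sigma}_{-i}^{-1}\bar{\bs{X}}_{-i}.
\]
The second term is $\Op(1/n)$ since $\|\bar{\bs{X}}_{-i}\|^2 = \Op(p/n)$ and the smallest eigenvalue of $\bs{\Sigma}_{-i}$ is of order $n$ on a good event. The first term satisfies $\tr(\bs{\Sigma}_{-i}^{-1}) \to \alpha/(1-\alpha)$ in probability by the Marchenko--Pastur law applied to the sample covariance of the $n-1$ vectors indexed by $\{j\neq i\}$ (valid under $(4+\eta)$-th moments and $\limsup p/n < 1$, with Bai--Yin controlling the smallest eigenvalue). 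Uniformity in $i$ follows because $\bs{\Sigma}_{-i}$ and $\bs{\Sigma}_{-j}$ differ by at most a rank-two update, so Sherman--Morrison yields $|\tr(\bs{\Sigma}_{-i}^{-1}) - \tr(\bs{\Sigma}_{-j}^{-1})| = O(1/n)$ on the good event.

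The main obstacle is the uniform concentration of $\zeta_i$ around $\E[\zeta_i\mid\mathcal{F}_{-i}]$ using only a $(4+\eta)$-moment hypothesis, and this is exactly what dictates the threshold $\delta < \eta/(8+2\eta)$. Writing $\bs{M} := \bs{\Sigma}_{-i}^{-1}$ and expanding,
\[
\zeta_i - \E[\zeta_i\mid\mathcal{F}_{-i}] \;=\; \sum_{k}M_{kk}(X_{ik}^{2}-1) \;+\; \sum_{k\neq l}M_{kl}X_{ik}X_{il} \;-\; 2\bs{X}_i^\top\bs{M}\bar{\bs{X}}_{-i}.
\]
The diagonal piece is a sum of independent centered variables, so Rosenthal's inequality bounds its conditional $(2+\eta/2)$-th moment by $C\|\bs{M}\|_F^{2+\eta/2}$; the off-diagonal genuine quadratic form has the same order via Burkholder--Rosenthal applied to the martingale with differences $X_{il}\sum_{k<l}(M_{kl}+M_{lk})X_{ik}$; the linear piece has conditional variance $\bar{\bs{X}}_{-i}^\top \bs{M}^{2}\bar{\bs{X}}_{-i} = \Op(1/n^{2})$, which is negligible at any relevant scale. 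Since $\|\bs{M}\|_F^{2} = \tr(\bs{\Sigma}_{-i}^{-2}) = \Op(1/n)$ by the Marchenko--Pastur identity $\int x^{-2}\,d\mu_{\mathrm{MP}} = (1-\alpha)^{-3}<\infty$, the conditional $(2+\eta/2)$-th moment of the deviation is $\Op(n^{-(1+\eta/4)})$. Markov at the threshold $n^{-\delta}$, together with a union bound over $i\in[n]$, yields a total failure probability of order $n\cdot n^{-(1+\eta/4) + \delta(2+\eta/2)}$, which is $o(1)$ precisely when $\delta(2+\eta/2) < \eta/4$, i.e., $\delta < \eta/(8+2\eta)$. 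Substituting back into the Sherman--Morrison formula gives $\max_i|H_{ii}-\alpha| = \Op(n^{-\delta})$, which is \eqref{eq:Hii}.
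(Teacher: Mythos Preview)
Your approach is correct and genuinely different from the paper's. The paper truncates the entries of $\bXX$ at level $n^{2/(4+\eta)}\log n$, rewrites $H_{ii}$ in terms of the resolvent of the centered sample Gram matrix evaluated at $z=\ii n^{-1/2}$, and invokes the anisotropic local law for sample covariance matrices to obtain entrywise concentration of the resolvent at rate $n^{2/(4+\eta)-1/2}=n^{-\eta/(8+2\eta)}$; the threshold thus emerges from the truncation level needed for the truncated and original matrices to coincide with high probability. You instead use the leave-one-out Sherman--Morrison reduction plus a Whittle-type moment bound at order $s=2+\eta/2$ (exactly consuming the $(4+\eta)$-th moment hypothesis) followed by Markov and a union bound; here the same threshold emerges from balancing the moment order against the factor $n$ from the union bound. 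It is pleasant that the two routes yield the identical exponent. Your route is more elementary and makes the source of $\eta/(8+2\eta)$ transparent; the paper's route is shorter if one accepts the local law as a black box, and it handles the concentration of the conditional mean $\tr(\bs{\Sigma}_{-i}^{-1})$ automatically. Two places in your write-up should be tightened: (i) you assert only $\tr(\bs{\Sigma}_{-i}^{-1})\to\alpha/(1-\alpha)$ in probability, but the claim requires rate $\op(n^{-\delta})$ --- the Bai--Silverstein CLT for linear spectral statistics (or a direct martingale variance bound) gives $\Op(1/n)$ under fourth moments, which suffices; (ii) the union bound over $i$ should be carried out explicitly on the Bai--Yin good event $\{\lambda_{\min}(\bs{\Sigma}_{-i})\ge cn\ \forall i\}$, so that $\|\bs{M}\|_F^2\le C/n$ holds with a deterministic constant before you apply Markov.
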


{\rev Under the same assumptions about $\bs{X}_i$ and additional assumptions regarding $Y_i(z)$, we can justify the second part of \Cref{assumption:lindeberg-type-condition-p-o(n)} by \Cref{prop:justify-assumption-3} without assuming a linear relationship between $\bs{X}_i$ and $Y_i(z)$.
	
	\begin{mproposition}
\label{prop:justify-assumption-3}
     Suppose that $\{\bs{X}_i,Y_i(z)\}_{i=1}^n$ are i.i.d. random vectors 
    generated from independent random variables as $\bX_i = \mathbf O \bXX_i,$ where $\mathbf O$ is a deterministic non-singular matrix and $\bXX_i$ have independent entries with mean 0, variance 1, and  $\max_j\E |V_{1j}|^{4+\eta} <C$ for some constants $\eta, C >0$. Then, the followings hold: 
    \begin{itemize}
    \item[(i)] if $p/n \rightarrow 0$, and $\E |Y_i(z)|^{2+\eta} < C$, then there exist constants $C_1,\delta > 0$ such that
    \begin{equation}\label{eq:ei-p/n-tends-to-0}
        \mathbb{P}\left(\max_{i\in [n]}\Big|\frac{e_{i}(z)}{\sqrt{n}}\Big|> C_1 \left(\frac{p}{n}\right)^{1/2}+C_1 n^{-\delta}\right) \to 0;
    \end{equation}
    \item[(ii)] if  $\limsup_{n\to \infty}(p/n)<1$ and $\E |Y_i(z)|^{4+\eta} < C$, then there exist constants $C_2,\delta > 0$ such that
     \begin{equation}\label{eq:ei-p/n-same-order}
        \mathbb{P}\left(\max_{i\in [n]}\Big|\frac{e_{i}(z)}{\sqrt{n}}\Big|> C_2n^{-\delta} \right) \to 0.
    \end{equation}
    \end{itemize}
\end{mproposition}}

% Our proof shows that $c_n$ can be taken as $n^{-\delta}$ for any constant $0<\delta < \frac{\eta}{8+2\eta}$. 

We turn to~\Cref{assumption:Positive-variance-limit-alpha-positive}. In fact, it can be justified by simply applying \Cref{corollary:upper-lower-bound-for-sigma-alpha>0}. 
% provides a justification for Assumption \ref{assumption:Positive-variance-limit-alpha-positive}. 
Specifically, under Assumptions \ref{assumption:positive-limit-of-assigned-proportion}, \ref{assumption:2-moment-of-finite-population-for-y} and %\ref{assumption:Positive-variance-limit-alpha-positive}--
 \ref{assumption:maximum-leverage-score-close-to-alpha}, we have $\liminf_{n \to \infty} \sigma_{\hd,l}^2 > 0$ when 
 % the following two conditions hold: (i) $\liminf_{n \to \infty} \sigma_{\cre}^2> 0$ and (ii) $\liminf_{n \to \infty} (p / n) > 0$ or $\limsup_{n \to \infty} R^2 < 1$.
 either (i) \Cref{assumption:Positive-variance-limit-p-o(n)} holds or (ii) $\liminf_{n \to \infty} \sigma_{\cre}^2> 0$ and $\liminf_{n \to \infty} (p / n) > 0$. 
 These requirements are natural in randomization-based literature.
%  When the covariates are heavy-tailed such that Assumption~\ref{assumption:maximum-leverage-score-close-to-alpha} fails, by the definitions of $\sigma_{\hd,q}^2$ and $\sigma_{\hd,l}^2$, 
% another sufficient condition for Assumption~\ref{assumption:Positive-variance-limit-alpha-positive} to hold is 
% \[
% \liminf_{n \to \infty} \lambda_{\min}(\bs{B}) > 0, \quad \liminf_{n \to \infty} \sigma^2_{\cre}>0,
% \]
% or
% \[
% \liminf_{n \to \infty} \lambda_{\min}(\bs{Q}) > 0, \quad \liminf_{n \to \infty} S^2_{r_1^{-2} Y(1) - r_0^{-2} Y(0)}>0,
% \]
% where $\lambda_{\min}$ represents the minimum eigenvalue of a matrix.
% Since $\bs{B}$ and $\bs{Q}$ are deterministic, we can check whether $\liminf \lambda_{\min}(\bs{B}) > 0$ or $\liminf \lambda_{\min}(\bs{Q}) > 0$ holds.

Finally, we investigate the relationship between $S_\tau^2$ and $S_{\diag\{\bs{H}\}, Y(1) - Y(0)}^2 + S_{e(1) - e(0)}^2$ under a superpopulation framework where $(Y_i(1), Y_i(0), \bs{X}_i, \varepsilon_i(1), \varepsilon_i(0))$ are i.i.d. generated from some  distribution. We assume a linear model where $\bs{X}_i, \varepsilon_i(1), \varepsilon_i(0)$ are independent and $\mu_z$, $z\in\{0,1\}$, are deterministic scalars:
\begin{align}
\label{DGP:linear-model}
    Y_i(1) = \mu_1 + \bs{X}_i^\top\bs{\beta}_1 + \varepsilon_i(1), \quad Y_i(0) = \mu_0+ \bs{X}_i^\top\bs{\beta}_0 + \varepsilon_i(0).
\end{align}
Proposition~\ref{proposition:compare-the-variance} shows that under this superpopulation framework, the confidence interval given by $\hat{\sigma}_{\hd}^2$ is asymptotically no larger than the confidence interval from $\hat{\sigma}_{\hd}^\prime{}^2$ with high probability.
\begin{mproposition}
    \label{proposition:compare-the-variance}
  Under model \eqref{DGP:linear-model}, for $z\in\{0,1\}$, we assume that $\E|\varepsilon_1(z)|^{4} < C$ and $\E|\bs{X}_1^\top\bs{\beta}_z|^{4}  < C$ for some constant $C>0$, and $\bs{X}_i$ satisfies the conditions in \Cref{proposition:4+eta-th-moment}. Then, there exists a positive sequence $c_n\rightarrow 0$ such that
    \[
    \mathbb{P}\left( S^2_{\tau}-S^2_{e(1)-e(0)}-S^2_{\diag\{\bs{H}\},Y(1)-Y(0)} >-c_n\right) \to 0.
    \]
\end{mproposition}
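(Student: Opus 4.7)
The plan is to expand each quantity under model~\eqref{DGP:linear-model} and show that the difference $S^2_\tau - S^2_{e(1)-e(0)} - S^2_{\diag\{\bs{H}\},Y(1)-Y(0)}$ concentrates around the non-negative limit $(1-\alpha)\bs{\delta}^\top\bs{\Sigma}\bs{\delta}$, where $\bs{\delta}:=\bs{\beta}_1-\bs{\beta}_0$ and $\bs{\Sigma}:=\cov(\bs{X}_1)$. Because this limit is non-negative, the event that the difference falls below $-c_n$ has vanishing probability for any positive $c_n\to 0$ slower than the $\op(1)$ fluctuation. I read the proposition in this equivalent direction, i.e.\ $\Pr(\cdot<-c_n)\to 0$, since a non-negative limit forces $\Pr(\cdot>-c_n)\to 1$; this is consistent with the surrounding claim that $\hat{\sigma}_\hd^2$ gives an asymptotically shorter confidence interval than $(\hat{\sigma}_\hd')^2$.

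Writing $\xi_i:=\varepsilon_i(1)-\varepsilon_i(0)$ and $\bs{\xi}_c:=\bs{\xi}-\bar\xi\bs{1}$, the linear model yields $\tau_i-\bar\tau = (\bs{X}_i-\bar{\bs{X}})^\top\bs{\delta} + (\xi_i-\bar\xi)$, so
\[
S^2_\tau = \bs{\delta}^\top \bs{S}^2_{\bs{X}}\bs{\delta} + 2 S_{\bs{X}^\top\bs{\delta},\xi} + S^2_\xi.
\]
By independence of $\bs{X}$ and $\xi$ and the bounded fourth moments, Chebyshev applied to the i.i.d.\ scalars $\{\bs{X}_i^\top\bs{\delta}\}$ gives $S_{\bs{X}^\top\bs{\delta},\xi}=\op(1)$ and $\bs{\delta}^\top \bs{S}^2_{\bs{X}}\bs{\delta} = \bs{\delta}^\top\bs{\Sigma}\bs{\delta}+\Op(n^{-1/2})$, hence $S^2_\tau = \bs{\delta}^\top\bs{\Sigma}\bs{\delta} + \var(\xi_1) + \op(1)$. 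Next, since $\bs{H}$ is the orthogonal projection onto the centered column space of $\bs{X}$, it satisfies $\bs{H}\bs{1}=\bs{0}$, is idempotent, and has $\tr(\bs{H})=\tr(\bs{H}^2)=p$. A direct calculation then gives $\bs{e}(1)-\bs{e}(0)=(\bs{I}-\bs{H})\bs{\xi}_c$, so $(n-1)S^2_{e(1)-e(0)} = (n-1)S^2_\xi - \bs{\xi}_c^\top\bs{H}\bs{\xi}_c$. Conditioning on $\bs{X}$, the quadratic form $\bs{\xi}_c^\top\bs{H}\bs{\xi}_c$ has conditional mean $p\var(\xi_1)$ and a Hanson--Wright-type variance of order $p$ under the fourth-moment assumption on $\xi$; dividing by $(n-1)^2$ makes the fluctuation $\op(1)$, yielding $S^2_{e(1)-e(0)} = (1-\alpha)\var(\xi_1) + \op(1)$.

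For the third term, \Cref{proposition:4+eta-th-moment} applied to $\bs{X}$ gives $\max_i|H_{ii}-\alpha|=\op(1)$; combined with $S^2_\tau=\Op(1)$ from the preceding step, this implies $S^2_{\diag\{\bs{H}\},Y(1)-Y(0)} = \alpha S^2_\tau + \op(1) = \alpha\bigl(\bs{\delta}^\top\bs{\Sigma}\bs{\delta}+\var(\xi_1)\bigr) + \op(1)$. Assembling the three expansions,
\[
S^2_\tau - S^2_{e(1)-e(0)} - S^2_{\diag\{\bs{H}\},Y(1)-Y(0)} = (1-\alpha)\,\bs{\delta}^\top\bs{\Sigma}\bs{\delta} + \op(1),
\]
whose right-hand side is non-negative; selecting $c_n$ to dominate the $\op(1)$ envelope gives the claimed high-probability bound.

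The main obstacle I expect is the concentration of $\bs{\xi}_c^\top\bs{H}\bs{\xi}_c$ in the second step, under only fourth-moment control on $\xi$ and with no spectral bound on $\bs{H}$ beyond idempotency. Conditioning on $\bs{X}$ reduces the problem to a quadratic form in the independent centered vector $\bs{\xi}$, whose conditional mean and variance depend on $\bs{H}$ only through the exact identities $\tr(\bs{H})=\tr(\bs{H}^2)=p$ that hold for any realization of $\bs{X}$ with full column rank, which keeps the argument clean without ex ante bounds on the leverage scores. A secondary technical point is propagating $\bs{\delta}^\top\bs{S}^2_{\bs{X}}\bs{\delta}\to\bs{\delta}^\top\bs{\Sigma}\bs{\delta}$ when $p$ grows with $n$; this reduces to a scalar law of large numbers for the i.i.d.\ sequence $\{\bs{X}_i^\top\bs{\delta}\}$, whose fourth moment is bounded by hypothesis.
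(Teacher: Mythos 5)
Your proposal is correct and takes essentially the same route as the paper's own proof: the quantity $S^2_\tau - S^2_{e(1)-e(0)}$ is exactly the paper's $S^2_{\bs{H},\tau}$, which it likewise splits into the noise quadratic form $\bs{\xi}^\top\bs{H}\bs{\xi}/(n-1)$, the signal term $\bs{\delta}^\top\bs{X}^\top\bs{P}\bs{X}\bs{\delta}/(n-1)$ and a cross term, each concentrated by second-moment (Whittle/Chebyshev) bounds, while $S^2_{\diag\{\bs{H}\},Y(1)-Y(0)}$ is handled via Proposition~\ref{proposition:4+eta-th-moment} exactly as you do, yielding the same limit $(1-\alpha)\|\mathbf{O}^\top(\bs{\beta}_1-\bs{\beta}_0)\|_2^2\geq 0$. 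Your reading of the inequality's direction is also the one the paper's proof actually establishes (the displayed ``$>-c_n$ with probability $\to 0$'' is evidently a typo for the complementary event).
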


\section{Numerical analysis}\label{sec:numerical}
In this section, we conduct a numerical analysis to examine the finite sample performance of our debiased estimator and its corresponding inference procedure, together with several competitors.
% ~\citep{Lin2013Agnostic,lei2021regression}. 

\subsection{Experimental setup}
\paragraph*{Pre-treatment variable generation}
Let $\sca()$ be a standardization function:  for a finite population $\{a_i\}_{i=1}^n$ with $\bs{a} = (a_1,\ldots,a_n)$, $\sca(a_i):= (a_i-\bar{a})/\left(\sum_{i=1}^n (a_i-\bar{a})^2/n\right)^{1/2}$ and $\sca(\bs{a}) = (\sca(a_1),\ldots,\sca(a_n))$. Set $n=1000$ and $r_1=0.35$. We first generate a matrix $\bs{\mathcal{X}}\in\mathbb{R}^{n\times n}$ and two vectors, $\bs{\beta}\in \mathbb{R}^n$ and $\bs{\Delta}\in \mathbb{R}^n$, with i.i.d. entries from  $t$ distribution with $3$ degrees of freedom.  We keep $\bs{\mathcal{X}}$, $\bs{\beta}$  and $\bs{\Delta}$ fixed throughout the simulation. For each covariate-dimension-to-sample-size-ratio $\alpha$, let $\bs{X} = (\bs{X}_1,\ldots, \bs{X}_n)^\top$ be the first $p:=\alpha n$ columns of $\bs{\mathcal{X}}$.  We  generate the potential outcomes according to the following model
\begin{align*}
    Y_i(1) = \mu_1 + \sca(\bs{X}_i^\top\bs{\beta}_1) + \varepsilon_i(1)/\sqrt{\gamma}; \quad Y_i(0) = \mu_0+ \sca(\bs{X}_i^\top\bs{\beta}_0) + \varepsilon_i(0)/\sqrt{\gamma}.
\end{align*}
Here $\mu_z$ ($z\in\{0, 1\}$) are generated i.i.d. from  $t$ distribution with $3$ degrees of freedom. 
For any vector $\bs{a}$, let $\bs{a}_{[p]}$ be the subvector of the first $p$ elements; the coefficients $\bs{\beta}_1$ and $\bs{\beta}_0$ are generated by
\[
\bs{\beta}_{1} = \bs{\beta}_{[p]}+\delta\bs{\Delta}_{[p]}, \quad \bs{\beta}_0 = \bs{\beta}_{[p]} - \delta\bs{\Delta}_{[p]}.
\] 
The factor $\delta$ is introduced to control the heterogeneity of individual-level treatment effect.

\begin{figure}[t!]
    \centering
    \includegraphics[width = \linewidth]{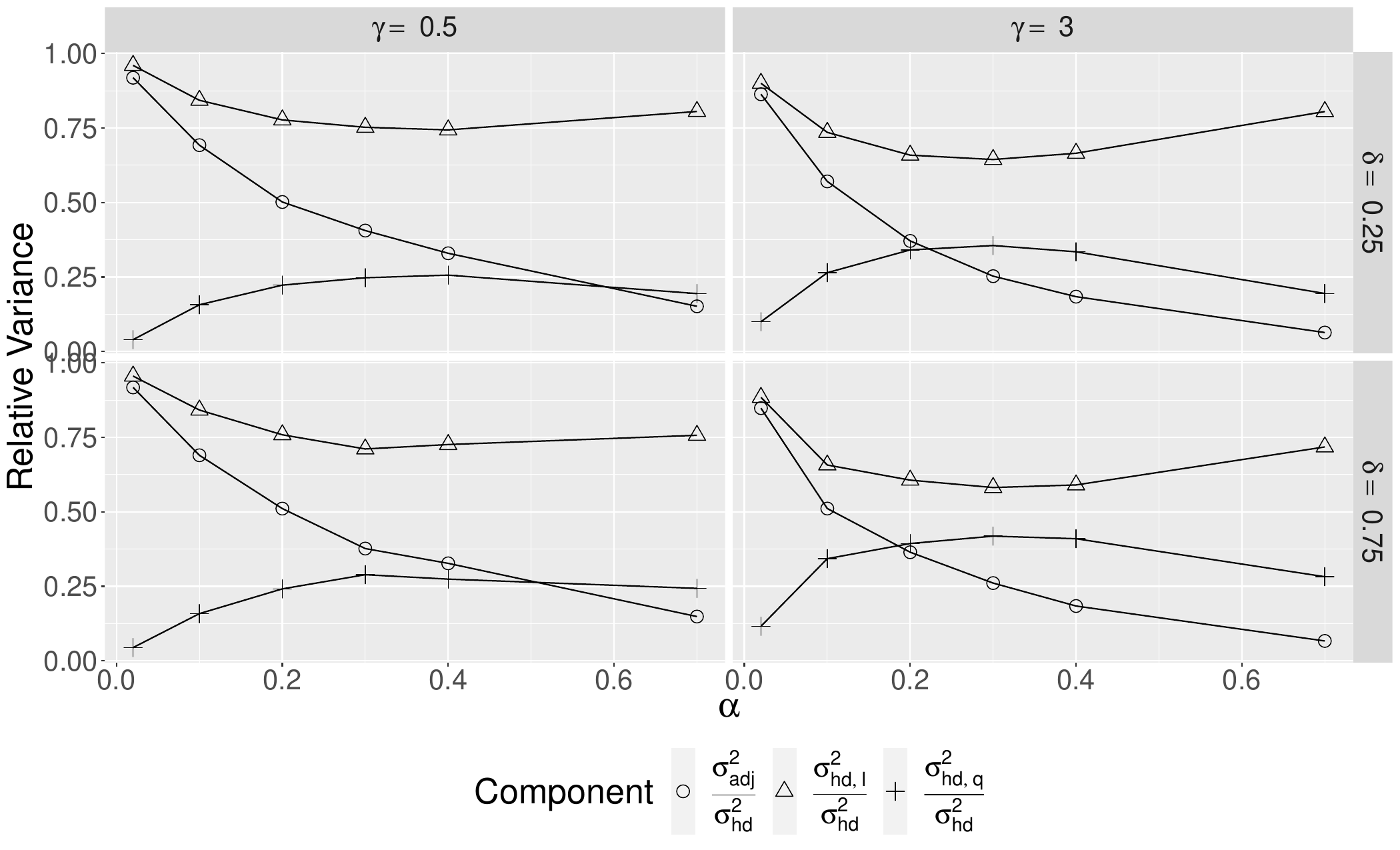}
    \caption{Relative size of variance components for different choices of $\gamma$, $\delta$ and $\alpha$ under the independent $t$ residual.}
    \label{fig:var-comp-q=0-resid-t}
\end{figure}

\begin{figure}[t!]
  \centering
\begin{subfigure}[b]{\linewidth}
  \includegraphics[width=\linewidth]{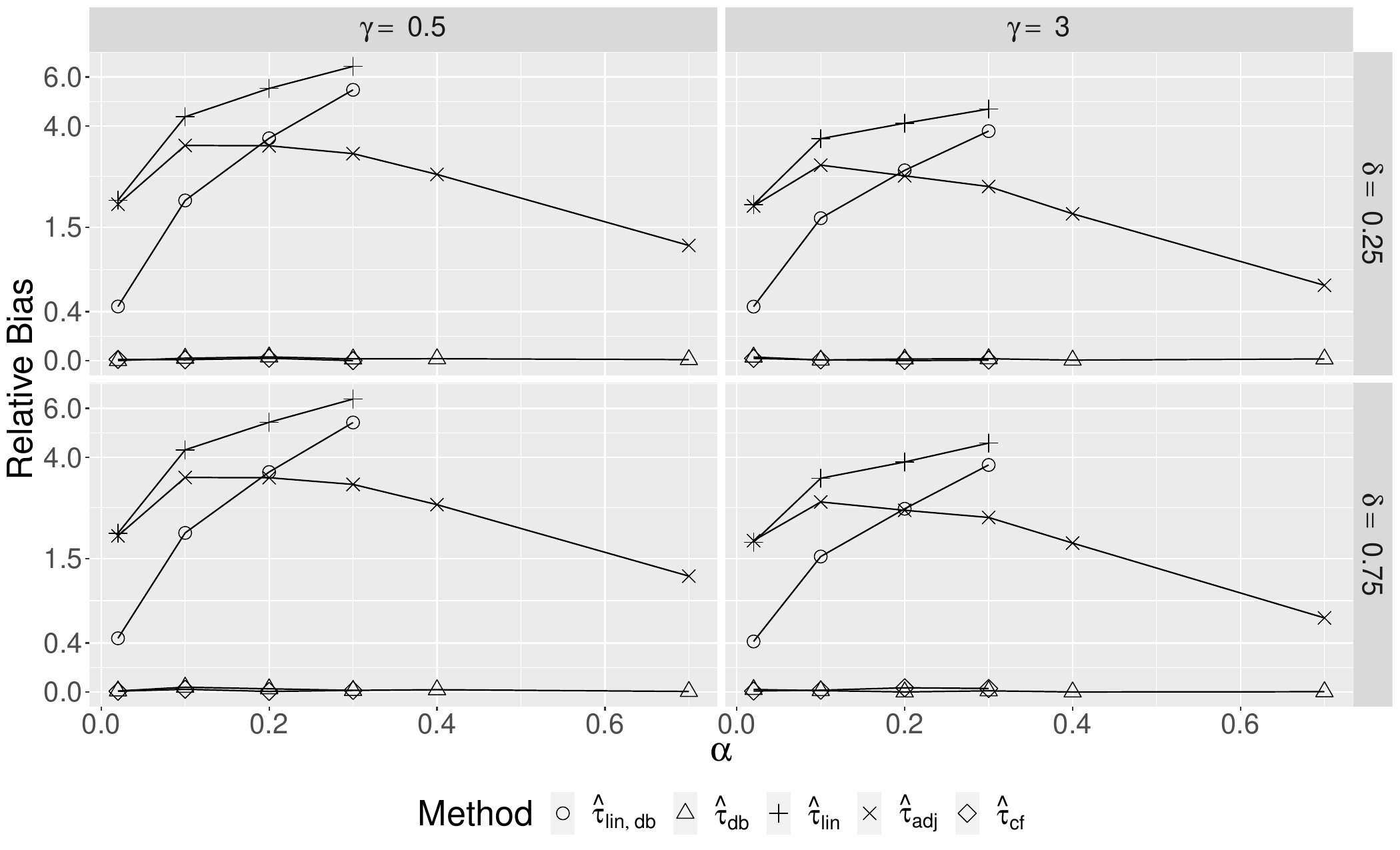}
  \caption{Worst case residual}     \label{fig:bias_worst}
\end{subfigure}
 \begin{subfigure}[b]{\linewidth}
  \centering
  \includegraphics[width=\linewidth]{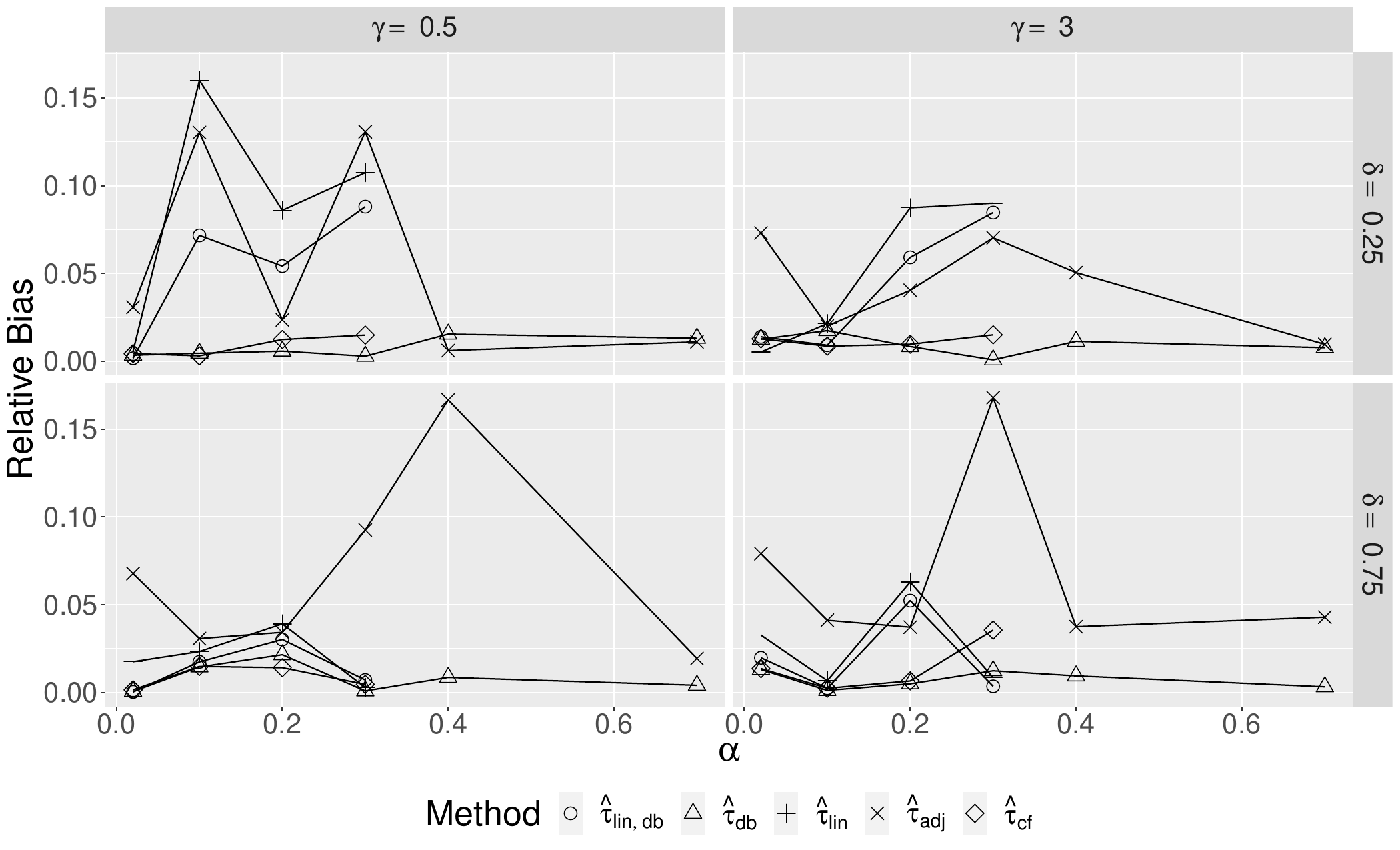}
 \caption{Independent $t$ residual} \label{fig:bias_t}
\end{subfigure}
%  \begin{subfigure}[b]{0.95\textwidth}
%   \centering
%   \includegraphics[width=\textwidth]{figure/relative_bias_q_1_resid_worst.pdf}
%  \caption{  $q=1$} \label{fig:bias_worst_residual_q_1}
% \end{subfigure}
\caption{Relative bias for different choices of $\gamma$, $\delta$ and $\alpha$ under the worst-case residual and independent $t$ residual. {\rev For (a), we use a transformation of $\log_{10}(1+x)$ for the y-axis to adapt the curve display.}
%Notice that the range of the $y$-axis varies from row to row to adapt the curve display.
} \label{fig:bias}
\end{figure}

For the noise terms, $\gamma$ is the scaling factor for the magnitude of the noise.  In addition, we consider $2$ generating models of $\varepsilon_i(z)$:
\begin{itemize}
    \item Worst-case residual: let $\bs{\varepsilon}(z) := (\varepsilon_1(z),\ldots,\varepsilon_n(z))^\top$,
    \begin{align*}
        \bs{\varepsilon}(1) = \sca((\bs{I}-\bs{H})(H_{11},\ldots,H_{nn})^\top);\quad  \bs{\varepsilon}(0) = -2\sca((\bs{I}-\bs{H})(H_{11},\ldots,H_{nn})^\top).
    \end{align*}
    This residual is motivated by \cite{lei2021regression} and produces a large bias for regression-adjusted estimators in theory.
    \item $t$  residual: $\varepsilon_i(z)=\sca(\breve{\varepsilon}_i(z))$. $\breve{\varepsilon}_i(z)$ is generated i.i.d. from $t$ distribution with $3$ degrees of freedom. 
    % Under this residual, the generating model is a special case of \eqref{DGP:linear-model}.
\end{itemize}

% For pretreatment covariates $\bs{X}_i$, we generate them by setting $X_{ij} = \operatorname{sign}({U}_{ij}) |{{U}}_{ij}|^{1+q}$ ($i=1,\ldots,n$, $j=1,\ldots, p$).   Here, $q$ captures the nonlinearity between $(Y_i(z),\bs{X}_i)$ and the heavy tail of $\bs{X}_i$.

We view the simulation as a full factorial experiment and generate the data under all combinations of the following $4$ factors: $\delta = \{0.25,0.75\}$; $\gamma = \{0.5,3\}$ and the covariate-dimension-to-sample-size-ratio $\alpha = \{0.02,0.1,0.2,0.3,0.4,0.7\}$ and generating models of $\varepsilon_i(z)$. Throughout this section we fix $\bs{\mathcal{X}}$ to be generated from $t$ distribution with 3 degrees of freedom; in the Supplementary Material we further provide simulations with $\bs{\mathcal{{X}}}$ generated from Cauchy distribution. {\rev Note that the regression-adjusted estimator using the within-group covariance matrix, such as Lin's estimator, is only well defined when $p < \min\{n_1, n_0\}$. Therefore, the results for these estimators are not available for $\alpha = 0.4, 0.7$.}
% ; $q=\{0,0.5,1\}$ .

\paragraph*{Repeated sampling evaluation}
  Once the pre-treatment variables $\{(\bX_i, Y_i(1), Y_i(0))\}_{i=1}^n$ are generated, we fix them and draw $10000$ random assignments. For evaluation criterion, We consider the empirical relative root mean squared error (relative RMSE) defined by the empirical RMSE of the estimators divided by the oracle standard errors of the unadjusted estimator $\hat{\tau}_{\unadj}$; and the empirical relative absolute bias {\revone (relative bias)} defined by the absolute value of the empirical bias divided by the asymptotic standard error of $\htaudb$, $\sigma_\hd/\sqrt{n}$. 
For inference procedures, we then compare, under a $0.05$ significance level, the empirical coverage probabilities and empirical averages of relative confidence interval length defined by the corresponding confidence interval length divided by the length of the confidence interval constructed \emph{without} covariate adjustment. 

\begin{figure}[h!]
\begin{subfigure}[b]{\linewidth}
  \centering
  \includegraphics[width=\linewidth]{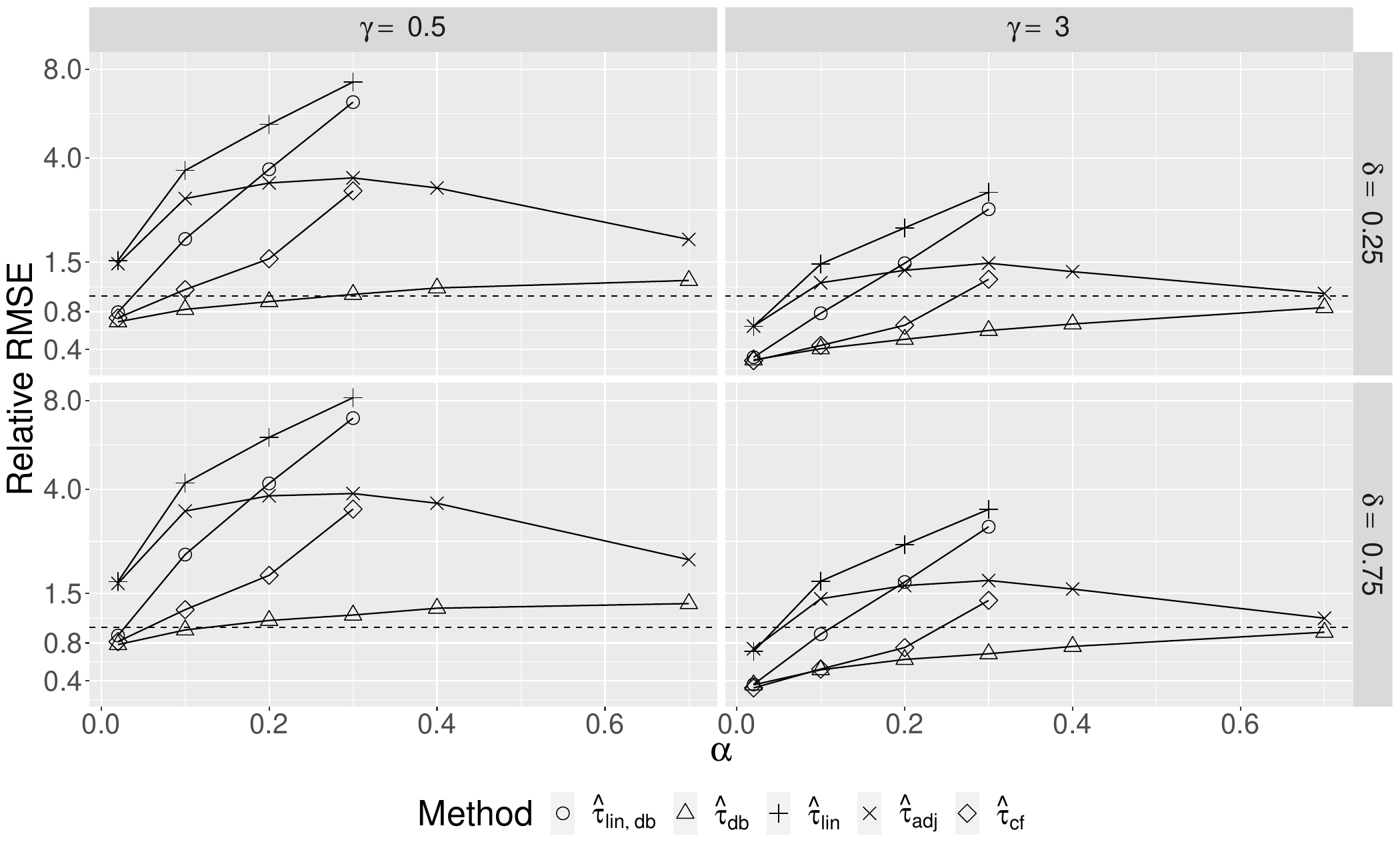}
  \caption{worst-case residual}     \label{fig:rmse_worst}
\end{subfigure}
\begin{subfigure}[b]{\linewidth}
  \centering
  \includegraphics[width=\linewidth]{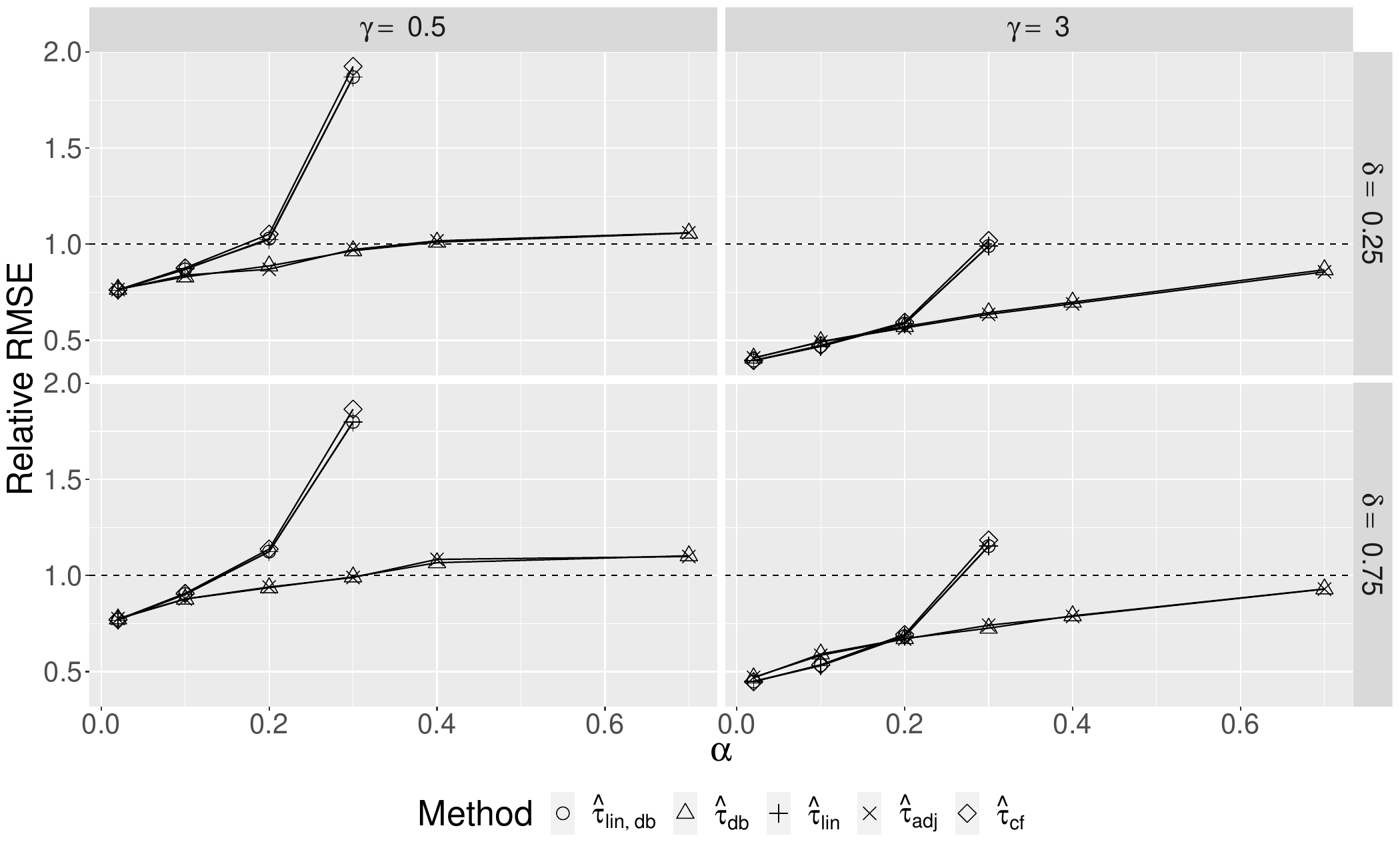}
  \caption{independent $t$ residual}     \label{fig:rmse_t}
\end{subfigure}
%  \begin{subfigure}[b]{1\textwidth}
%   \centering
%   \includegraphics[width=0.95\textwidth, height = 0.3\textheight]{figure/relative_rmse_q_1_resid_worst.pdf}
%  \caption{  $q=1$} \label{fig:rmse_worst_residual_q_1}
% \end{subfigure}
\caption{Relative RMSEs for different choices of $\gamma$, $\delta$ and $\alpha$ under the worst-case residual and independent $t$ residual.  The dashed lines signify $1$. {\rev For (a), we use a transformation of $\log_{10}(1+x)$ for the y-axis to adapt the curve display.}} \label{fig:rmse}
\end{figure}

\begin{figure}
 \begin{subfigure}[b]{1\linewidth}
  \centering
  \includegraphics[width=0.95\linewidth]{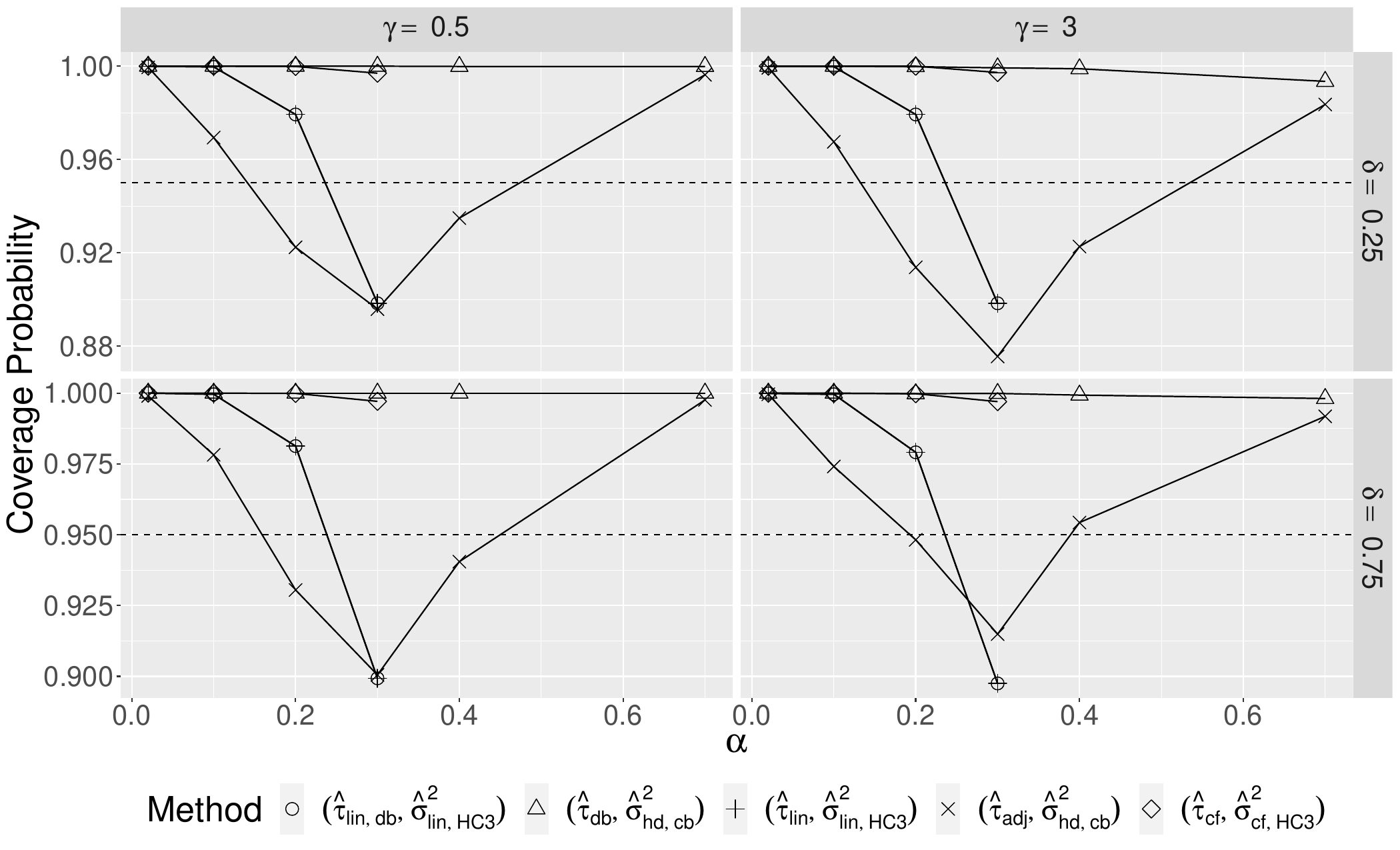}
 \caption{the worst-case residual} \label{fig:cp_worst_residual_q_0}
\end{subfigure}
\begin{subfigure}[b]{1\linewidth}
  \centering
  \includegraphics[width=0.95\linewidth]{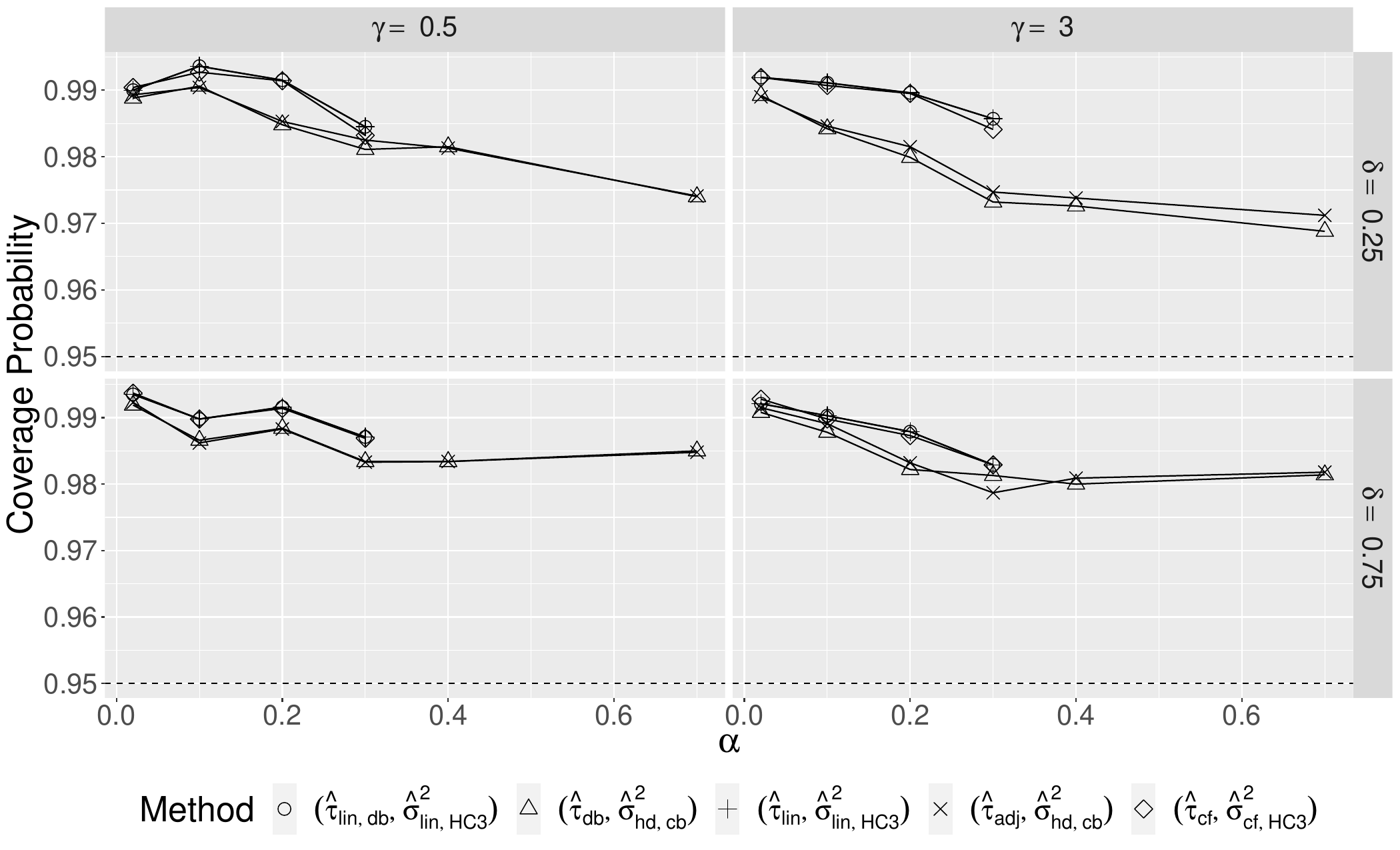}
  \caption{the independent $t$ residual}     \label{fig:cp_t_residual_q_0}
\end{subfigure}
\caption{Empirical coverage probabilities for different choices of $\gamma$, $\delta$ and $\alpha$ under the worst-case residual and independent $t$ residual.  The dashed lines signify $0.95$.} \label{fig:cp}
\end{figure}

\paragraph*{Methods for comparison}
For estimators, we consider our proposed high-dimensional regression estimator $\htaudb$ and its ``un-debiased`` version $\hat{\tau}_{\adj}$, i.e., the one without the debiasing step in~\eqref{eq:db}; Lin's regression estimator \citep{Lin2013Agnostic} $\hat{\tau}_{\textrm{lin}}$ as defined in~\eqref{eq:regadj},
% i.e., $\hat{\tau}_{\adj}$ but with $\hat{\bs{\beta}}_z$ constructed by only using samples from treatment arm $z$, 
and its debiased version \citep{lei2021regression}, $\hat{\tau}_{\textrm{lin},\textrm{db}}$; 
% defined by
% \begin{align*}
%     \hat{\tau}_{\lin,\textrm{db}} = \hat{\tau}_{\textrm{lin}} + \frac{n_0}{n_1^2} \sum_{Z_i=1} H_{ii}\hat{e}_i(1) -\frac{n_1}{n_0^2} \sum_{Z_i=0}H_{ii} \hat{e}_i(0), 
% \end{align*}
% where $\hat{e}_i(z) = Y_i-\bar{Y}_z - \hat{\bs{\beta}}_z^\top (\bs{X}_i-\bar{\bs{X}}_z)$ for $Z_i=z$ with $\bar{\bs{X}}_z$ being the sample  mean of covariates under treatment arm $z$; 
{\rev the regression adjusted estimator using cross-fitting proposed in \cite{chiang2023regression} denoted as $\hat{\tau}_{\cf}$}.

For the inference procedure, we consider $5$  Wald-type confidence intervals based on the $5$ point estimators and their corresponding variance estimators. In particular, for $\htaudb$ and $\hat{\tau}_{\adj}$, we pair them with our recommended variance estimator $\hat{\sigma}_{\hd,\textrm{cb}}^2:=\min\{\hat{\sigma}_{\hd}^2,\hat{\sigma}_{\hd}^\prime{}^2\}$ (``$\textrm{cb}$'' for combine); for $\hat{\tau}_{\lin}$ and $\hat{\tau}_{\lin,\db}$, we pair them with HC3 variance estimator proposed in~\citet{lei2021regression}, denoted as $\hat{\sigma}_{\lin,\textrm{HC3}}^2$; {\rev $\hat{\tau}_{\cf}$ is combined with the bias-corrected version of HC3-type variance estimator mentioned in \citet{chiang2023regression}, denoted as $\hat{\sigma}_{\cf,\textnormal{HC}3}^2$.}

\subsection{Results}

\paragraph*{Relative magnitude of variance components}
\label{sec:sim-var-comp}
Figure \ref{fig:var-comp-q=0-resid-t} shows relative magnitude of $\sigma_{\hd,l}^2$, $\sigma_{\hd,q}^2$, $\sigma_{\adj}^2$ divided by $\sigma_\hd^2$. With a relatively high dimension, the quadratic component $\sigma_{\hd,q}^2$ is non-negligible. 
% Under the setting with a strong signal, strong treatment effect heterogeneity ($\gamma=3$, $\delta = 0.75$), and $\alpha=0.3$, the quadratic component becomes the main source of variance. 
Besides, as the dimension increases, $\sigma_{\adj}^2$ becomes an inaccurate approximation even for the linear component of variance, $\sigma^2_{\hd,l}$, not to mention $\sigma^2_{\hd}$.
\paragraph*{The effectiveness of debiasing}
\label{sec:sim-debias}
\Cref{fig:bias} shows the relative bias of different methods under the worst-case residual and independent $t$ residual, respectively. 
% Figures~\ref{fig:bias_t_residual_q_0}--\ref{fig:bias_t_residual_q_1} show the relative bias of different methods under the independent $t$ residual. 
Apparently, {\rev the relative bias of both $\htaudb$ and $\hat{\tau}_{\cf}$ are negligible and significantly below $1$ in all cases, even under large $\alpha$ and worst-case residual.} At the same time, under the worst-case residual, the relative bias of $\hat{\tau}_{\adj}$ can be significantly above $1$. This suggests the necessity of debiasing. Moreover, under again the worst-case residual, not only $\hat{\tau}_\lin$ but also $\hat{\tau}_{\lin,\db}$ have explosive growth in bias as $\alpha$ grows.  $\hat{\tau}_{\lin,\db}$
has a bias smaller than $\hat{\tau}_\lin$; nevertheless its bias is still non-negligible under all the worst-case residual setups, except for $\alpha=0.02$. This is consistent with the theory of \cite{lei2021regression} requiring $p$ tending to infinity slow enough.

\paragraph*{Relative RMSE} 
% \textcolor{red}{This part needs a careful revision.}
\Cref{fig:rmse} shows
% Figures~\ref{fig:rmse_worst_residual_q_0}--\ref{fig:rmse_worst_residual_q_1} show 
the relative RMSEs of different methods under the worst-case residual and independent $t$ residual. 
% Figure~\ref{fig:rmse_t_residual_q_0}--\ref{fig:rmse_t_residual_q_1} show the relative RMSEs of different methods under the independent $t$ residual. 
Under the worst-case residual, $\htaudb$ has the best performance among the $5$ methods. {\rev Although $\hat{\tau}_\cf$ exhibits negligible bias, its relative RMSE can increase significantly as $p$ increases.} The remaining methods have large RMSEs due to their non-negligible bias.  When signal to noise ratio is relatively high ($\gamma = 3$), $\htaudb$ can exploit the covariate information to keep relative RMSE smaller than $1$, i.e., to perform better than the unadjusted estimator, even with high dimensions. When the signal-to-noise ratio is low ($\gamma=0.5$) and the degree of heterogeneity is high $\delta = 0.75$, the efficiency improvement from our estimator is less compelling. This is consistent with our theory that the {\rev canonical correlation} needs to be relatively larger than $\alpha$ to secure efficiency improvement. 

% Under a linear data generating process with a strong signal $(q=0, \gamma=3)$, $\htaudb$ can efficiently use $400$ covariates to reduce the RMSE compared to the unadjusted estimator. 

Under the independent $t$ residual,  when $\alpha$ is small (say less than $0.1$), all methods tend to have similar RMSEs. As $\alpha$ gets larger, their RMSEs begin to diverge. Interestingly, both $\htaudb$ and $\hat{\tau}_{\adj}$ have the smallest relative RMSE, whilst the relative bias of $\htaudb$ is obviously smaller than $\hat{\tau}_{\adj}$ with independent $t$ residual. 
% However, as we have already shown in~\Cref{fig:bias}, under the independent $t$ residual, $\hat{\tau}_{\hd, \textrm{undb}}$ has non-negligible bias compared to $\hat{\tau}_{\hd,\db}$, which is in contrast to the conclusions about relative bias. 
% Under a strong signal ($\gamma = 3$), $\htaudb$ can efficiently use $700$ covariates to reduce the RMSE compared to the unadjusted estimator, demonstrating its ability to utilize large dimensional covariates. 
% Under the nonlinear DGP, if we assure the signal is large enough like in our simulation with $\gamma=3$, the $\htaudb$ can still guarantee improvement with $200$ covariates. 

Finally, by checking the figures of both types of residual, we can conclude that when the signal is weak ($\gamma = 0.5$) and the dimension is high, no method can guarantee improvement. But even in the least favorable case, the relative RMSE of $\htaudb$ is just slightly above $1$. In other words, $\htaudb$ never does significant harm to RMSE. {\rev Comparing the two estimators $\hat{\tau}_{\adj}$ and $\hat{\tau}_{\lin}$, both of which do not use any debiasing technique and differ only in whether using the within-group inverse covariance matrix in the estimation of $\hat{\bs{\beta}}_z$, $\hat{\tau}_{\adj}$ significantly outperforms $\hat{\tau}_{\lin}$ in terms of RMSE as $p$ increases. This indicates that, even without debiasing, using the full population covariance alone can significantly improve the estimator. This is in striking contrast to the intuitions in the low dimensional regime, in which the application of the within-group inverse convariance matrix is usually more advocated since it accounts for the imbalance between treatment and control groups to improve precision~\citep{ding2019decomposing} (see also~\citet{cochran1977sampling} for related discussion on the ratio estimator in survey sampling).}

% \begin{figure}
% \begin{subfigure}[b]{1\textwidth}
%   \centering
%   \includegraphics[width=0.95\textwidth, height = 0.3\textheight]{figure/relative_rmse_q_0_resid_t.pdf}
%   \caption{ $q=0$}     \label{fig:rmse_t_residual_q_0}
% \end{subfigure}
% \begin{subfigure}[b]{1\textwidth}
%   \centering
%   \includegraphics[width=0.95\textwidth, height = 0.3\textheight]{figure/relative_rmse_q_0.5_resid_t.pdf}
%   \caption{ $q=0.5$}     \label{fig:rmse_t_residual_q_0.5}
% \end{subfigure}
%  \begin{subfigure}[b]{1\textwidth}
%   \centering
%   \includegraphics[width=0.95\textwidth, height = 0.3\textheight]{figure/relative_rmse_q_1_resid_t.pdf}
%  \caption{  $q=1$} \label{fig:rmse_t_residual_q_1}
% \end{subfigure}
% \caption{Relative RMSE for different values of $q$ under the independent $t$ residual. The dashed lines signify $1$.} 
% \end{figure}

\paragraph*{Inference performance}
\Cref{fig:cp} shows the empirical coverage probabilities of different methods under the worst-case residual and independent $t$ residual. {\rev Only the combinations of $(\htaudb,\hat{\sigma}_{\hd,\cb}^2)$ gives a valid empirical coverage in all cases. $(\hat{\tau}_{\cf},\hat{\sigma}_{\cf,\textnormal{HC3}}^2)$ gives a valid empirical coverage when it is well defined, i,e, when $p < \min\{n_1, n_0\}$. } With worst-case residual, the other methods cannot guarantee a correct empirical coverage as $\alpha$ grows. 
% This is due to the poor approximation of the standard error and explosive bias of the estimators. Under the worst-case residual, choosing $\hat{\tau}_{\adj}$ as the point estimator may lead to the failure of the confidence interval due to its large bias. Only the combination of $(\htaudb,\hat{\sigma}_{\hd,\cb})$ gives a valid coverage probabilities in all cases.

\Cref{fig:ci} shows the relative confidence interval length produced by $\hat{\sigma}_{\hd,\cb}$, $\hat{\sigma}_{\lin,\textrm{HC3}}$ and {\rev $\hat{\sigma}_{\cf,\textnormal{HC3}}$. Overall, $\hat{\sigma}_{\hd,\cb}$ produce the shortest confidence intervals. The relative confidence interval lengths produced by $\hat{\sigma}_{\lin,\textrm{HC3}}$ and $\hat{\sigma}_{\cf,\textnormal{HC3}}$ are quite close.} The trend for the curve of $\hat{\sigma}_{\hd,\cb}$ is very similar to that of $\htaudb$ in \Cref{fig:rmse}. In particular, as long as the relative RMSE of $\htaudb$ is less than $1$, the relative confidence interval length of $\hat{\sigma}_{\hd,\cb}$ is also less than $1$. This echoes our discussion of \Cref{cor:inference}. In the least favorable case ($\gamma=0.5$, $\delta = 0.75$, $\alpha=0.7$, worst-case residual), the relative confidence interval length is about $1.1$, better than the relative RMSE of $\htaudb$ (about $1.5$). 
% On the other hand, sometimes even when the relative RMSE of $\hat{\tau}_{\lin,\db}$ is below $1$, the confidence interval length produced by $\hat{\sigma}_{\lin,\textrm{HC3}}$ is significantly above $1$.

\begin{figure}[t!]
 \begin{subfigure}[b]{1\linewidth}
  \centering
  \includegraphics[width=0.95\linewidth]{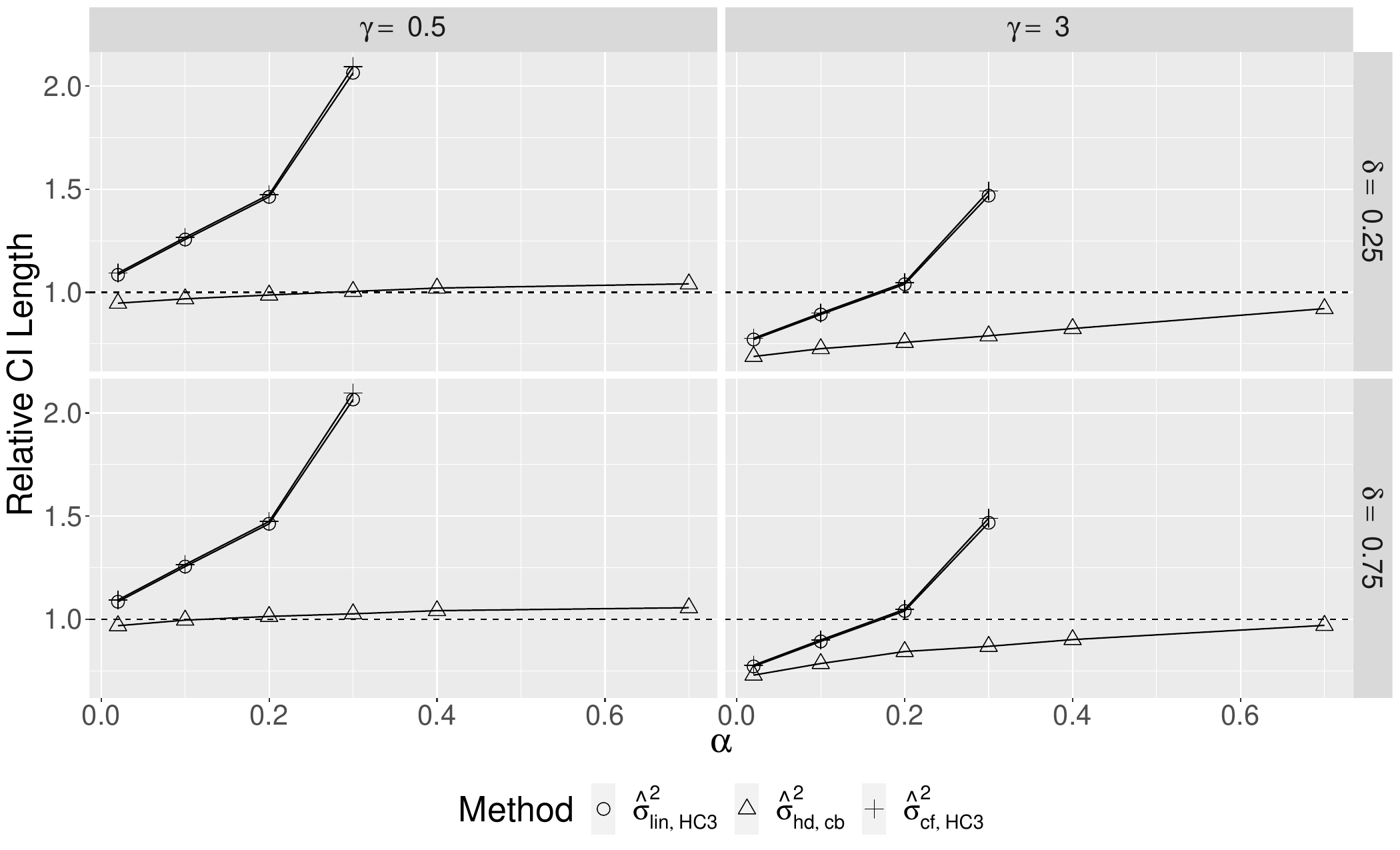}
 \caption{the worst-case residual} \label{fig:ci_worst_residual_q_0}
\end{subfigure}
\begin{subfigure}[b]{1\linewidth}
  \centering
  \includegraphics[width=0.95\linewidth]{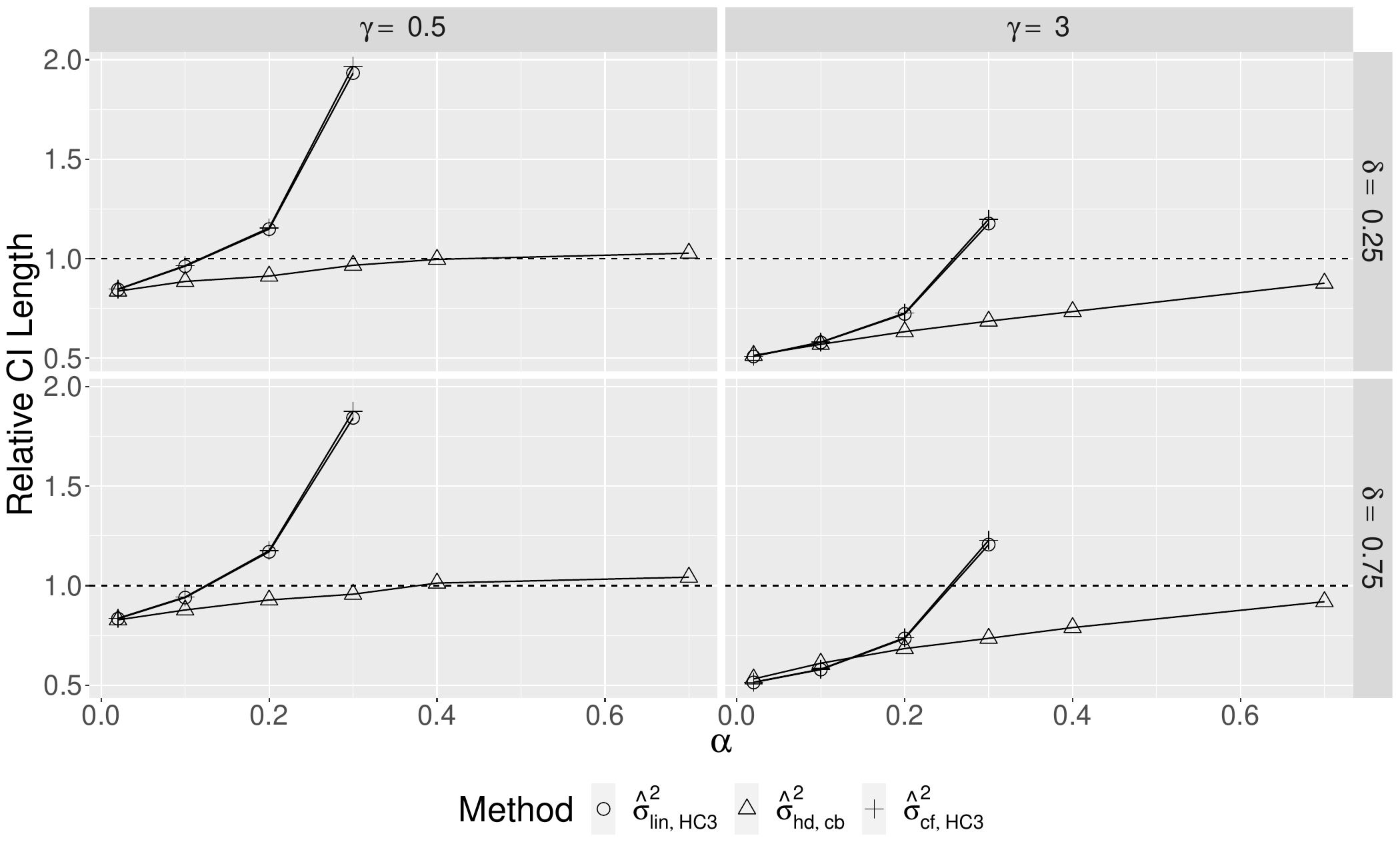}
  \caption{the independent $t$ residual}     \label{fig:ci_t_residual_q_0}
\end{subfigure}
\caption{Relative confidence interval length for different choices of $\gamma$, $\delta$ and $\alpha$ under the worst-case residual and independent $t$ residual.  The dashed lines signify $1$.} \label{fig:ci}
\end{figure}

\begin{figure}
 \begin{subfigure}[b]{0.48\linewidth}
  \centering
  \includegraphics[width=\linewidth]{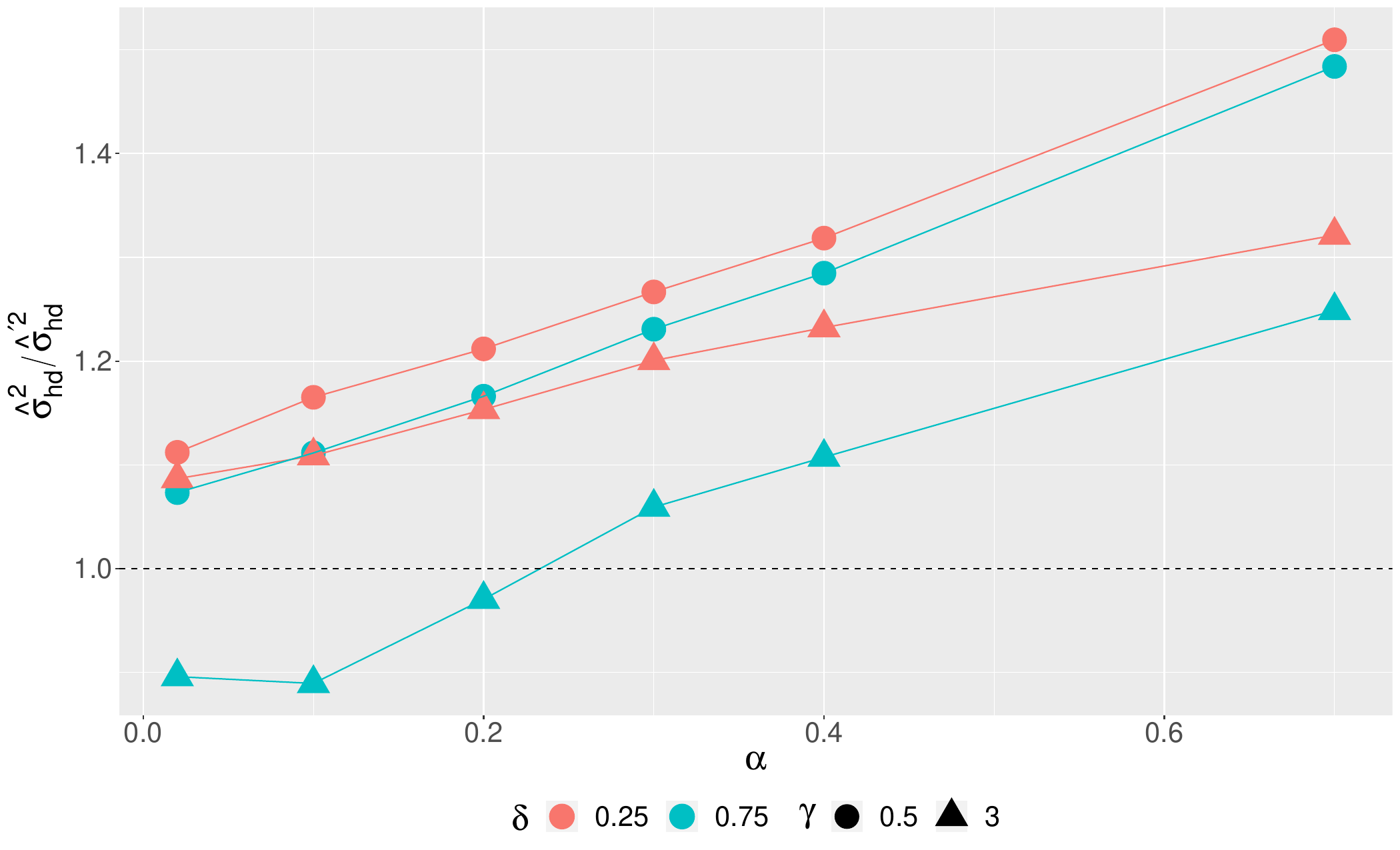}
 \caption{the worst-case residual} \label{fig:var_ratio_worst_residual_q_1}
\end{subfigure}
\begin{subfigure}[b]{0.48\linewidth}
  \centering
  \includegraphics[width=\linewidth]{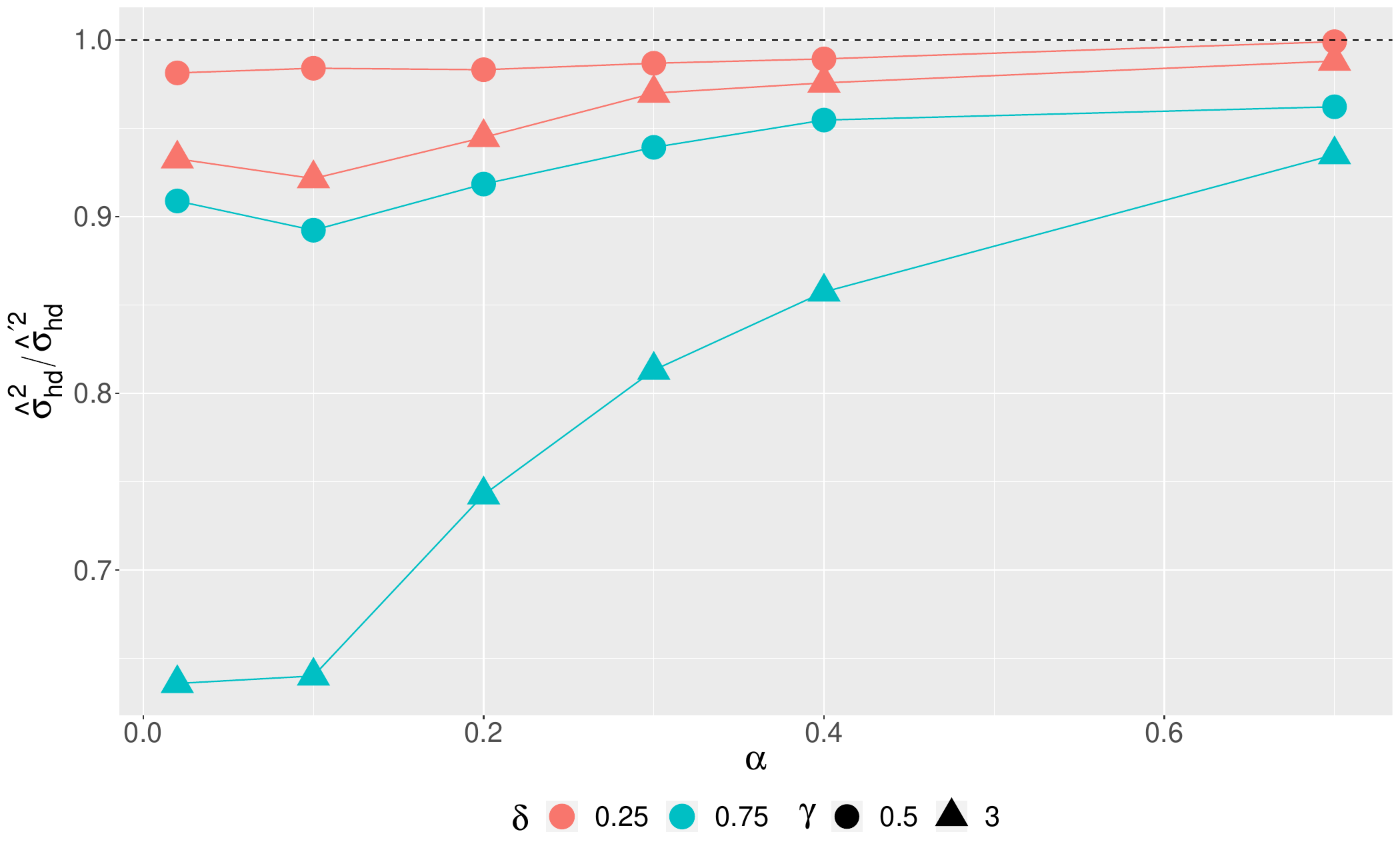}
  \caption{the independent $t$ residual}     \label{fig:var_ratio_t_residual_q_1}
\end{subfigure}
\caption{Ratios of $\hat{\sigma}^2_{\hd}$ to $\hat{\sigma}_{\hd}^{\prime}{}^2$ for different choices of $\gamma$, $\delta$ and $\alpha$ under the worst-case residual and independent $t$ residual.  The dashed lines signify $1$. } 
\end{figure}

\paragraph*{Usefulness of $\hat{\sigma}_{\hd,\textrm{cb}}$}
Figures ~\ref{fig:var_ratio_t_residual_q_1} and \ref{fig:var_ratio_worst_residual_q_1} demonstrate the ratio of $\hat{\sigma}^2_{\hd}$ to $\hat{\sigma}_{\hd}^{\prime}{}^2$.  Under independent $t$ residual, $\hat{\sigma}_{\hd}^2$ is smaller which echoes Proposition~\ref{proposition:compare-the-variance}. whilst under the worst-case residual $\hat{\sigma}_{\hd}^{\prime}{}^2$ is overall smaller. This supports our claim that $\hat{\sigma}_{\hd,\textrm{cb}}^2$ improves the estimation precision by taking the advantages of both $\hat{\sigma}_{\hd}^2$ and $\hat{\sigma}_{\hd}^{\prime}{}^2$.

% In light of the above analysis, we recommendation to choose $(\htaudb,\hat{\sigma}_{\hd,\textrm{cb}}^2)$ under a strong signal.

\section{Conclusion}\label{sec:conclusion}
% Since the classic theory of covariate adjustment claims that there is no harm in estimation precision for doing regression adjustment \citep{Lin2013Agnostic}, it seems that the more covariates the better. 
In practical applications, ignoring covariate dimension can result in catastrophic finite sample performance. Yet, at least under the context of finite-population inference, to the best of our knowledge, no theory explains this phenomenon. In this paper, we fill this gap by proposing a new debiased regression adjustment based average treatment effect estimator; and we study the conditions so that our estimator can have an advantage over the unadjusted estimator. In general, we require that the multiple correlation between covariates and potential outcomes increases with the covariate-dimension-to-sample-size ratio. Therefore, we recommend that practitioners use a moderate number of covariates that are predictive of the potential outcomes.

Our numerical analysis shows that compared to the other competitors, our estimator achieves the best performance in terms of estimation precision, bias reduction, inference reliability, and confidence interval length when the covariate-dimension-to-sample-size ratio is high; 
and an improved efficiency compared to the unadjusted estimator with a sufficiently large signal to noise ratio. It would be of interest to design new covariate adjustment based estimators that can bring improved accuracy even with low signal to noise ratio, which we leave for future work. 
Our additional numerical analysis in the Supplementary Material shows that our estimator is able to provide valid confidence interval even for heavy-tailed covariates, such as from Cauchy distribution.
% Moreover, both our asymptotic normality and inference validity require no assumption on the moment of covariates. This implies that we do not need to preprocess the covariates like trimming the covariates in \citet{lei2021regression} to secure its validity. All these theoretical findings suggest our estimator to be a reliable tool in empirical applications.

Our theory builds upon a new central limit theorem of homogeneous sums~\citep{koike2022high}. It would also be interesting to use this new central limit theorem to study rerandomization \citep{morgan2012rerandomization,li2017general} in the moderately high-dimensional regime. In this paper, we mainly focus on completely randomized experiments. It would be interesting to extend our theory to more complex experiments such as stratified experiments \citep{liu2020regression}, and factorial experiments~\citep{liu2022randomization}. We study a high-dimensional extension of the OLS estimator and it would be interesting to consider the high-dimensional extension of the generalized linear estimator \citep{guo2023generalized}.

%\textcolor{red}{
%	\printglossary[title={Notation table}]}

%%%%%%%%%%%%%%%%%%%%%%%%%%%%%%%%%%%%%%%%%%%%%%
%% Support information, if any,             %%
%% should be provided in the                %%
%% Acknowledgements section.                %%
%%%%%%%%%%%%%%%%%%%%%%%%%%%%%%%%%%%%%%%%%%%%%%
\begin{acks}[Acknowledgments]
The authors would like to thank the Editor, the Associate Editor and three anonymous referees for helpful comments that improved the paper. 

Fan Yang is also affiliated with Beijing Institute of Mathematical Sciences and Applications.
Yuhao Wang is also affiliated with Shanghai Qi Zhi Institute and Shanghai Artificial Intelligence Laboratory. Correspondence should be addressed to Yuhao Wang.
\end{acks}

%%%%%%%%%%%%%%%%%%%%%%%%%%%%%%%%%%%%%%%%%%%%%%
%% Funding information, if any,             %%
%% should be provided in the                %%
%% funding section.                         %%
%%%%%%%%%%%%%%%%%%%%%%%%%%%%%%%%%%%%%%%%%%%%%%
\begin{funding}
The research of Yuhao Wang is supported by National Key R\&D Program (No. 2022YFA1008100), the grant of National Natural Science Foundation of China (NSFC) 12201341, and Shanghai Qi Zhi Institute Innovation Program SQZ202304.
Fan Yang is supported in part by the National Key R\&D Program of China (No. 2023YFA1010400). 
\end{funding}

\bibliographystyle{imsart-nameyear}
\bibliography{paper-ref}       % Bibliography file (usually '*.bib')

\newpage
\appendix

\begin{center}
	\bf \large
	\uppercase{Supplement to ``Debiased regression adjustment \\
 in completely randomized experiments \\ with moderately high-dimensional covariates''}
\end{center}

\bigskip

Appendix \ref{sec:useful-lemmas} provides some useful lemmas. It includes the technical details for the comments of \eqref{eq:property-of-S2-1} and \eqref{eq:property-of-S2-2}.

Appendix \ref{sec:decomp-htauadj} studies the decomposition of $\hat{\tau}_\adj$ and $\hat{\tau}_\db$. It includes the technical details for the comment of \eqref{eq:adjdecomp}.

Appendix \ref{sec:hajek-coupling} studies an extension of the H\`{a}jek's coupling.

Appendix \ref{sec:normality-p-o(n)} studies the asymptotic normality of $\hat{\tau}_\db$ in the regime $p=o(n)$. It includes the proof of \Cref{theorem:CLT-p=o(n)}.

Appendix \ref{sec:normality-p-O(n)} studies the asymptotic normality of $\hat{\tau}_\db$ in the moderately high-dimensional regime. It includes the proof of \Cref{theorem:CLT-alpha-greater-than-0}.

Appendix \ref{sec:inference-supp} studies the validity of the proposed inference procedure. It includes the proof of \Cref{thm:inference}, \Cref{cor:inference}, and the technical details for the comment of \eqref{eq:rewrite-sigma-hd} and \eqref{eq:sigmadecomp}.

{\rev Appendix \ref{sec:exact-unbiased} provides some comparisons with exactly unbiased estimators $\hat{\tau}_{\CMA}$ and $\hat{\tau}_{\CMO}$ introduced by~\citet{chang2021exact} and~\citet{chiang2023regression}, respectively.}

Appendix \ref{sec:justification-of-assumption} studies the justification of assumptions. It includes the proof of Propositions \ref{proposition:justify-sum-of-max-p-o(n)}--\ref{proposition:compare-the-variance}  and \Cref{corollary:upper-lower-bound-for-sigma-alpha>0}. 

Appendix \ref{sec:addsimu} contains additional numerical experiment results.

The last page of this supplement is a notation table containing all the major notations introduced in the main text as well as the Supplementary Material.
\vspace{4mm}

\noindent{\bf Notations and definitions.}
Define $[n]:= \{1,\ldots,n\}$.
 We use $\gls*{sumonek}$  to denote summation over all $(i_1,\ldots,i_k)$ with mutually distinct elements in $[n]$. So we may use $\sum_{[i,j]}$ and $\sum_{i\ne j}$ interchangeably. For any matrix $\bs{A}$, let $A_{ij}$ be its $(i,j)$th element. We use $\gls*{tldA}$ to denote a centered matrix with
\[
\tilde{A}_{ij} := \begin{cases}
    A_{ij} - \frac{\sum_{k\ne l} A_{kl}}{n(n-1)}, &i\ne j;\\
     A_{ii} - \frac{\sum_{k} A_{kk}}{n}, &i=j.
\end{cases}
\]
Let $\gls*{bsAtwo}$, $\tr(\bs{A})$ be the operator norm and trace of matrix $\bs{A}$, respectively.
For any random variable $U$, we define $\gls*{tldU} := U-\E U$ as a centered random variable. Let $\bs{Z}:= (Z_1,\ldots,Z_n)$. and $\gls*{tldbsZ} := (\tilde{Z}_1,\ldots,\tilde{Z}_n)$. Let $\gls*{deltaij}=1$ if $i=j$ and $\delta_{ij}=0$, otherwise.

Finally, let $\gls*{bsT} = (T_1,\ldots, T_n)\in \{0,1\}^n$ be the indicator of Bernoulli random sampling with each element i.i.d. generated from Bernoulli random variable with probability $r_1$. Moreover, we construct $\bs{T}$ so that the joint distribution of $\bs{T}$ and $\bs{Z}$ follows the so-called ``H\`{a}jek's Coupling'' which will be discussed further in Appendix~\ref{sec:hajek-coupling}. Let $\gls*{diz}$ be the $i$-th entry of the vector $\bs{H}(Y_1(z) - \bar{Y}(z), \ldots, Y_n(z) - \bar{Y}(z))^\top$.

\section{Some useful lemmas}
\label{sec:useful-lemmas}
We start by stating several useful lemmas and then proceed to prove the main results. \Cref{lem:S2-property} shows the technical details of the comments of \eqref{eq:property-of-S2-1} and \eqref{eq:property-of-S2-2}.
\begin{lemma}
\label{lem:S2-property}
For any symmetric matrix $\bs{A},\bs{B}\in \mathbb{R}^{n\times n}$, any constant $c,d$, and any population $\{\bs{a}_i\}_{i=1}^n$, $\{\bs{b}\}_{i=1}^n$, we have
\begin{align*}
        S_{\bs{A}+\bs{B}, \bs{a}, \bs{b}} = S_{\bs{A}, \bs{a}, \bs{b}} & + S_{\bs{B}, \bs{a}, \bs{b}},\quad S_{\bs{A}, c\bs{a}, d\bs{b}} = cd S_{\bs{A}, \bs{a}, \bs{b}} \\
        S^2_{\bs{A}, \bs{a} + \bs{b}} &= S^2_{\bs{A}, \bs{a}} + S^2_{\bs{A}, \bs{b}} + 2 S_{\bs{A}, \bs{a}, \bs{b}}.
\end{align*}
\end{lemma}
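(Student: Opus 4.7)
The plan is to prove all three identities directly from the definitions by linear expansion, with the only subtle point being the polarization-type identity whose symmetry factor of $2$ depends on $\bs{A}$ being symmetric.

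First I would handle the two easy identities. Additivity $S_{\bs{A}+\bs{B}, \bs{a}, \bs{b}} = S_{\bs{A}, \bs{a}, \bs{b}} + S_{\bs{B}, \bs{a}, \bs{b}}$ follows by substituting $(\bs{A}+\bs{B})_{ij} = A_{ij} + B_{ij}$ into the defining double sum and splitting it into two sums. Homogeneity $S_{\bs{A}, c\bs{a}, d\bs{b}} = cd\, S_{\bs{A}, \bs{a}, \bs{b}}$ follows because $\overline{c\bs{a}} = c\bar{\bs{a}}$, so $(c\bs{a}_i - \overline{c\bs{a}})(d\bs{b}_j - \overline{d\bs{b}})^\top = cd(\bs{a}_i - \bar{\bs{a}})(\bs{b}_j - \bar{\bs{b}})^\top$, and $cd$ may be pulled outside the sum. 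Neither step uses symmetry of $\bs{A}$.

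The third identity is the substantive one. Starting from the definition of $S^2_{\bs{A}, \bs{a}+\bs{b}}$, I would expand each factor as $(\bs{a}_i+\bs{b}_i) - \overline{\bs{a}+\bs{b}} = (\bs{a}_i - \bar{\bs{a}}) + (\bs{b}_i - \bar{\bs{b}})$ on both the $i$-slot and the $j$-slot. Distributing the resulting outer product produces four double sums: the two ``diagonal'' sums reassemble as $S^2_{\bs{A}, \bs{a}}$ and $S^2_{\bs{A}, \bs{b}}$, while the two ``cross'' sums give $S_{\bs{A}, \bs{a}, \bs{b}} + S_{\bs{A}, \bs{b}, \bs{a}}$. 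The remaining step is to argue $S_{\bs{A}, \bs{b}, \bs{a}} = S_{\bs{A}, \bs{a}, \bs{b}}$ under symmetry of $\bs{A}$: swapping the dummy indices $i \leftrightarrow j$ in the double sum defining $S_{\bs{A}, \bs{b}, \bs{a}}$ converts $A_{ij}$ into $A_{ji}$, and then $A_{ji} = A_{ij}$ yields the claim. (For vector-valued $\bs{a}, \bs{b}$ a transpose appears, but the identity is applied in the paper only to scalar quantities such as $Y(1)$ and $Y(0)$, where this is a non-issue.)

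There is no real obstacle; the entire lemma is bookkeeping, and the only thing one must remember to justify is that the symmetry of $\bs{A}$ is exactly what merges the two cross terms into the single factor $2 S_{\bs{A}, \bs{a}, \bs{b}}$ — without symmetry the identity would have to be stated in the less clean form $S^2_{\bs{A}, \bs{a}+\bs{b}} = S^2_{\bs{A}, \bs{a}} + S^2_{\bs{A}, \bs{b}} + S_{\bs{A}, \bs{a}, \bs{b}} + S_{\bs{A}, \bs{b}, \bs{a}}$.
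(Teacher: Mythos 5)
Your proposal is correct and follows essentially the same route as the paper's proof: expand the defining double sums, split and rescale for the first two identities, and use $A_{ij}=A_{ji}$ via an index swap to merge the two cross terms into $2S_{\bs{A},\bs{a},\bs{b}}$ for the third. Your parenthetical remark about the transpose in the vector-valued case is a fair observation (the paper writes the cross terms as equal without comment), and, as you note, it is harmless since the identity is invoked only for scalar populations.
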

\begin{proof}[Proof of \Cref{lem:S2-property}]
    By definition,  we have
    \begin{align*}
         S_{\bs{A}+\bs{B}, \bs{a}, \bs{b}} =& \frac{1}{n-1}\sum_{i=1}^n \sum_{j=1}^n (A_{ij}+B_{ij})(\bs{a}_i-\bar{\bs{a}})(\bs{b}_j-\bar{\bs{b}})^\top\\
         =& \frac{1}{n-1}\sum_{i=1}^n \sum_{j=1}^n A_{ij}(\bs{a}_i-\bar{\bs{a}})(\bs{b}_j-\bar{\bs{b}})^\top+\frac{1}{n-1}\sum_{i=1}^n \sum_{j=1}^n B_{ij}(\bs{a}_i-\bar{\bs{a}})(\bs{b}_j-\bar{\bs{b}})^\top\\
         =&S_{\bs{A}, \bs{a}, \bs{b}} + S_{\bs{B}, \bs{a}, \bs{b}},
    \end{align*}
    and
    \begin{align*}
        S_{\bs{A}, c\bs{a}, d\bs{b}}& = \frac{1}{n-1}\sum_{i=1}^n \sum_{j=1}^n A_{ij}cd(\bs{a}_i-\bar{\bs{a}})(\bs{b}_j-\bar{\bs{b}})^\top 
        %\\        &= \frac{cd}{n-1}\sum_{i=1}^n \sum_{j=1}^n A_{ij}(\bs{a}_i-\bar{\bs{a}})(\bs{b}_j-\bar{\bs{b}})^\top 
        = cd S_{\bs{A}, \bs{a}, \bs{b}}.
    \end{align*}
Using that $A_{ij} = A_{ji}$, we have
\begin{align*}
     S^2_{\bs{A}, \bs{a} + \bs{b}}&=  \frac{1}{n-1}\sum_{i=1}^n \sum_{j=1}^n A_{ij}(\bs{a}_i+\bs{b}_i-\bar{\bs{a}}-\bar{\bs{b}})(\bs{a}_j+\bs{b}_j-\bar{\bs{a}}-\bar{\bs{b}})^\top\\
     & =\frac{1}{n-1}\sum_{i=1}^n \sum_{j=1}^n A_{ij}(\bs{a}_i-\bar{\bs{a}})(\bs{a}_j-\bar{\bs{a}})^\top + \frac{1}{n-1}\sum_{i=1}^n \sum_{j=1}^n A_{ij}(\bs{b}_i-\bar{\bs{b}})(\bs{b}_j-\bar{\bs{b}})^\top  \\
     &\quad + \frac{1}{n-1}\sum_{i=1}^n \sum_{j=1}^n A_{ij}(\bs{a}_i-\bar{\bs{a}})(\bs{b}_j-\bar{\bs{b}})^\top + \frac{1}{n-1}\sum_{i=1}^n \sum_{j=1}^n A_{ij}(\bs{b}_i-\bar{\bs{b}})(\bs{a}_j-\bar{\bs{a}})^\top\\
     &=\frac{1}{n-1}\sum_{i=1}^n \sum_{j=1}^n A_{ij}(\bs{a}_i-\bar{\bs{a}})(\bs{a}_j-\bar{\bs{a}})^\top + \frac{1}{n-1}\sum_{i=1}^n \sum_{j=1}^n A_{ij}(\bs{b}_i-\bar{\bs{b}})(\bs{b}_j-\bar{\bs{b}})^\top\\
     &\quad  + \frac{2}{n-1}\sum_{i=1}^n \sum_{j=1}^n A_{ij}(\bs{a}_i-\bar{\bs{a}})(\bs{b}_j-\bar{\bs{b}})^\top\\
     &=S^2_{\bs{A}, \bs{a}} + S^2_{\bs{A}, \bs{b}} + 2 S_{\bs{A}, \bs{a}, \bs{b}}.
\end{align*}
\end{proof}
\begin{lemma}
\label{lem:S2a-S2AAb}
    If $(a_1-\bar{a},\ldots,a_n-\bar{a})^\top = \bs{A}(b_1-\bar{b},\ldots,b_n-\bar{b})^\top$, 
    % and $\bs{P} = \bs{I}-\frac{1}{n}\bs{1}\bs{1}^\top$, 
    we have 
    \[
    S^2_{a} = S^2_{\bs{A}^\top \bs{A},b}.
    \]
\end{lemma}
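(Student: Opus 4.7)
The plan is to recast both sides of the identity in matrix notation and then observe that they coincide immediately from the stated hypothesis. Set $\bs{u} := (a_1-\bar a,\ldots,a_n-\bar a)^\top$ and $\bs{v} := (b_1-\bar b,\ldots,b_n-\bar b)^\top$, so the hypothesis reads $\bs{u}=\bs A \bs{v}$.

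First, I would expand $S^2_a$ directly from its definition as $S^2_a = \frac{1}{n-1}\sum_{i=1}^n(a_i-\bar a)^2 = \frac{1}{n-1}\bs{u}^\top \bs{u}$. Next, I would expand the scaled variance on the right-hand side using the definition given in the introduction, namely
\[
S^2_{\bs A^\top \bs A, b} = \frac{1}{n-1}\sum_{i=1}^n\sum_{j=1}^n (\bs A^\top \bs A)_{ij}(b_i-\bar b)(b_j-\bar b) = \frac{1}{n-1}\bs{v}^\top \bs A^\top \bs A\, \bs{v}.
\]
Substituting $\bs{u}=\bs A \bs{v}$ into the first expression gives $\bs{u}^\top \bs{u} = \bs{v}^\top \bs A^\top \bs A\, \bs{v}$, which matches the second expression exactly, yielding the claimed equality.

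Because both sides reduce to the same quadratic form after one algebraic substitution, there is no real obstacle here; the only thing worth being careful about is that the definition of $S^2_{\bs A^\top \bs A, b}$ is genuinely the double sum $\sum_{i,j}(\bs A^\top \bs A)_{ij}(b_i-\bar b)(b_j-\bar b)$ rather than a differently normalized object, which follows directly from the definitions already introduced in the paper. So the proof is essentially a one-line linear-algebra identity.
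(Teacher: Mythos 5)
Your proposal is correct and follows essentially the same route as the paper's own proof: both write the centered vectors, express $S^2_a$ as $\frac{1}{n-1}\bs{u}^\top\bs{u}$ and the scaled variance as $\frac{1}{n-1}\bs{v}^\top\bs{A}^\top\bs{A}\bs{v}$, and conclude by substituting $\bs{u}=\bs{A}\bs{v}$. Nothing is missing.
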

\begin{proof}[Proof of \Cref{lem:S2a-S2AAb}]
    Let $\bs{a} := (a_1-\bar{a},\ldots,a_n-\bar{a})^\top$ and $\bs{b} := (b_1-\bar{b},\ldots,b_n-\bar{b})^\top$.
    Using $\bs{a} = \bs{A}\bs{b}$, we get that 
\begin{align*}
    S^2_{a} = \frac{1}{n-1}\sum_{i=1}^n (a_i-\bar{a})^2 =
\frac{1}{n-1}\bs{a}^\top \bs{a} = \frac{1}{n-1}\bs{b}^\top\bs{A}^\top\bs{A}\bs{b}  = S^2_{\bs{A}^\top \bs{A},b}
\end{align*}
\end{proof}
\begin{lemma}
\label{lem:variance-crt-2-forms}
For any populations $\{{a}_i\}_{i=1}^n$ and $\{{b_i}\}_{i=1}^n$, we have that
\[
(r_1r_0)S^2_{r_1^{-1}a+r_0^{-1}b} = r_1^{-1}S^2_{a}+ r_0^{-1}S^2_{b} - S^2_{a-b}.
\]
\end{lemma}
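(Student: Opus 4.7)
The plan is to reduce the identity to an elementary algebraic manipulation using the bilinearity properties already established in \Cref{lem:S2-property} together with the constraint $r_1 + r_0 = 1$.

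First, I would apply the identity $S^2_{\bs{A}, \bs{a} + \bs{b}} = S^2_{\bs{A}, \bs{a}} + S^2_{\bs{A}, \bs{b}} + 2 S_{\bs{A}, \bs{a}, \bs{b}}$ from \Cref{lem:S2-property} with $\bs{A} = \bs{I}$ and the scalar-scaling identity $S_{\bs{A}, c\bs{a}, d\bs{b}} = cd\, S_{\bs{A}, \bs{a}, \bs{b}}$ to expand the left-hand side:
\begin{align*}
(r_1 r_0)\, S^2_{r_1^{-1} a + r_0^{-1} b}
&= (r_1 r_0)\left(r_1^{-2} S^2_a + r_0^{-2} S^2_b + 2 r_1^{-1} r_0^{-1} S_{a,b}\right) \\
&= r_0 r_1^{-1} S^2_a + r_1 r_0^{-1} S^2_b + 2 S_{a,b}.
\end{align*}

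Next, I would expand the right-hand side using the same tools, writing $S^2_{a-b} = S^2_a + S^2_b - 2 S_{a,b}$, so
\begin{align*}
r_1^{-1} S^2_a + r_0^{-1} S^2_b - S^2_{a-b}
&= (r_1^{-1} - 1) S^2_a + (r_0^{-1} - 1) S^2_b + 2 S_{a,b}.
\end{align*}
Invoking $r_0 = 1 - r_1$ and $r_1 = 1 - r_0$ gives $r_1^{-1} - 1 = r_0 / r_1$ and $r_0^{-1} - 1 = r_1 / r_0$, and the two sides match term by term. The key identity $r_1 + r_0 = 1$ is the only nontrivial input beyond \Cref{lem:S2-property}.

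There is no genuine obstacle here; the only thing to be careful about is bookkeeping the scalar factors and making sure the cross term $2 S_{a,b}$ appears with the correct coefficient on both sides, which is ensured by the symmetric form of the $S^2_{\bs{A}, \bs{a}+\bs{b}}$ expansion.
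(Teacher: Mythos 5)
Your proposal is correct and follows essentially the same route as the paper's own proof: expand via the bilinearity identities of \Cref{lem:S2-property} and use $r_1 + r_0 = 1$ to convert $r_0/r_1$ into $r_1^{-1} - 1$ (and symmetrically), absorbing the leftover $-S^2_a - S^2_b + 2S_{a,b}$ into $-S^2_{a-b}$. The only cosmetic difference is that the paper transforms the left-hand side directly into the right-hand side, whereas you expand both sides and match terms.
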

\begin{proof}[Proof of \Cref{lem:variance-crt-2-forms}]
    Using \Cref{lem:S2-property}, we obtain that
    \begin{align*}
        (r_1r_0)S^2_{r_1^{-1}a+r_0^{-1}b} &= \frac{r_0}{r_1}S^2_{a} + \frac{r_1}{r_0}S^2_{b} + 2S^2_{a,b}\\
        &= \big(\frac{1}{r_1}-1\big)S^2_{a} + \big(\frac{1}{r_0}-1\big)S^2_{b} + 2S^2_{a,b}\\
        &= \frac{1}{r_1}S^2_{a} + \frac{1}{r_0}S^2_{b} -S^2_{a}-S^2_{b} + 2S^2_{a,b}\\
        &=\frac{1}{r_1}S^2_{a} + \frac{1}{r_0}S^2_{b} - S^2_{a-b}.
    \end{align*}
\end{proof}

\begin{lemma}
  \label{lem:variance-of-swr-mean}
  Consider any finite population $\{y_i\}_{i\in[n]}$, the variance of its sample total is
  \begin{align*}
    \var\bigg(\sum_{i:Z_i=1} y_i\bigg) = \frac{n_1n_0}{n(n-1)}\sum_{i} ({y}_i-\bar{y})^2.
  \end{align*}
\end{lemma}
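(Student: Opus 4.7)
The plan is to compute the variance directly by expanding the quadratic form in the $Z_i$'s and exploiting the known first and second moments of the assignment indicators under complete randomization.

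First I would write $\sum_{i:Z_i=1} y_i = \sum_{i=1}^n Z_i y_i$ and record the moments: under complete randomization we have $\E[Z_i] = n_1/n$, $\var(Z_i) = n_1 n_0/n^2$, and for $i \neq j$, $\E[Z_i Z_j] = n_1(n_1-1)/(n(n-1))$, which after subtracting $(n_1/n)^2$ gives $\cov(Z_i, Z_j) = -n_1 n_0/(n^2(n-1))$. These are standard and follow simply from counting the number of assignments in which units $i$ and $j$ are both treated.

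Then I would expand
\begin{align*}
\var\!\left(\sum_{i=1}^n Z_i y_i\right) = \sum_{i} y_i^2 \var(Z_i) + \sum_{i \neq j} y_i y_j \cov(Z_i, Z_j) = \frac{n_1 n_0}{n^2} \sum_{i} y_i^2 - \frac{n_1 n_0}{n^2(n-1)} \sum_{i \neq j} y_i y_j.
\end{align*}
Factoring out $n_1 n_0 / (n^2(n-1))$ and using $\sum_{i \neq j} y_i y_j = (\sum_i y_i)^2 - \sum_i y_i^2$, the bracket simplifies to $n \sum_i y_i^2 - (\sum_i y_i)^2 = n \sum_i (y_i - \bar{y})^2$. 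This yields the claimed expression $\frac{n_1 n_0}{n(n-1)} \sum_i (y_i - \bar{y})^2$.

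There is no real obstacle here; the only step that requires minor care is verifying the covariance formula $\cov(Z_i, Z_j) = -n_1 n_0/(n^2(n-1))$, and the arithmetic identity $(n-1)\sum_i y_i^2 - \sum_{i\neq j} y_i y_j = n \sum_i (y_i - \bar{y})^2$. Both are routine. An alternative, slicker route would be to invoke the well-known variance formula for a simple random sample without replacement (of size $n_1$ from a population of size $n$) and multiply by $n_1^2$ to convert the sample mean variance $\frac{n_0}{n_1 n}\frac{1}{n-1}\sum_i (y_i-\bar y)^2$ into the sample total variance, but the direct moment computation above is self-contained and fits the style of the surrounding lemmas.
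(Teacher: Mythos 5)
Your proof is correct. The paper does not actually prove this lemma; it simply cites Theorem 2.2 of Cochran (1977), and your direct moment computation (using $\var(Z_i)=n_1n_0/n^2$ and $\cov(Z_i,Z_j)=-n_1n_0/(n^2(n-1))$, then the identity $n\sum_i y_i^2-(\sum_i y_i)^2=n\sum_i(y_i-\bar y)^2$) is exactly the standard derivation underlying that citation, so your argument is a self-contained version of the same result with no substantive difference in approach.
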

\begin{proof}[Proof of \Cref{lem:variance-of-swr-mean}]
    See Theorem 2.2 of \cite{cochran1977sampling}.
\end{proof}
\begin{lemma}
    \label{lem:order-sample-mean}
    If $\sum_i (y_i-\bar{y})^2 = O(n)$ and $r_1$ tends to a limit in $(0,1)$, then we have
    \[
    \sum_{i:Z_i=1} (y_i - \bar{y})/n_1 = \Op(n^{-1/2}).
    \]
\end{lemma}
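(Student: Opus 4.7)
The plan is to apply Chebyshev's inequality after computing the mean and variance of the quantity in question. First, I would observe that in a completely randomized experiment, $\E[Z_i] = n_1/n$ for every $i$, so
\[
\E\left[\sum_{i:Z_i=1}(y_i - \bar{y})\right] = \sum_{i=1}^n \frac{n_1}{n}(y_i - \bar{y}) = 0.
\]
Hence $\sum_{i:Z_i=1}(y_i-\bar{y})/n_1$ has mean zero, and its deviation from zero can be controlled purely via the variance.

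Next, I would invoke \Cref{lem:variance-of-swr-mean} with the centered population $\{y_i - \bar{y}\}_{i=1}^n$ (which has the same variability as $\{y_i\}$ since variance is translation-invariant) to obtain
\[
\var\left(\frac{1}{n_1}\sum_{i:Z_i=1}(y_i-\bar{y})\right) = \frac{1}{n_1^2}\cdot \frac{n_1 n_0}{n(n-1)}\sum_{i=1}^n (y_i-\bar{y})^2 = \frac{n_0}{n_1 n(n-1)}\sum_{i=1}^n (y_i-\bar{y})^2.
\]
Under the hypotheses, $n_0/n_1 = r_0/r_1$ is bounded (since $r_1$ tends to a limit in $(0,1)$, so $r_0 = 1 - r_1$ is bounded away from $0$ and $1$), and $\sum_i (y_i - \bar{y})^2 = O(n)$ by assumption. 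Plugging these in, the variance is of order $O(1/n^2) \cdot O(n) = O(1/n)$.

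Finally, I would apply Chebyshev's inequality: for any $\varepsilon > 0$,
\[
\Pr\left(\left|\frac{1}{n_1}\sum_{i:Z_i=1}(y_i-\bar{y})\right| > \varepsilon\, n^{-1/2}\right) \leq \frac{\var\bigl(n_1^{-1}\sum_{i:Z_i=1}(y_i-\bar{y})\bigr)}{\varepsilon^2/n} = \frac{O(1/n)}{\varepsilon^2/n} = O(\varepsilon^{-2}),
\]
which can be made arbitrarily small by choosing $\varepsilon$ large, yielding the claimed $\Op(n^{-1/2})$ bound. There is no real obstacle here; the only subtlety is verifying that $n_0/n_1$ remains bounded, which follows immediately from \Cref{assumption:positive-limit-of-assigned-proportion} (implicit in the hypothesis that $r_1$ tends to a limit in $(0,1)$).
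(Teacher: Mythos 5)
Your proof is correct and follows exactly the route the paper takes: the paper's own proof is the one-line remark that the lemma follows from \Cref{lem:variance-of-swr-mean} and Chebyshev's inequality, and your write-up simply fills in the mean-zero, variance $O(1/n)$, and Chebyshev details of that same argument.
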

\begin{proof}[Proof of Lemma~\ref{lem:order-sample-mean}]
    It follows from Lemma~\ref{lem:variance-of-swr-mean} and Chebyshev's inequality.
\end{proof}

\begin{lemma}
\label{lem:bound-of-the-trace}
 Let $\bs{A}_l$, $l=1, \ldots, q$, be  $n \times n$ deterministic matrices with $n \geqslant 2$. Let $\alpha_{l1}^2, \ldots, \alpha_{ln}^2$ be the eigenvalues of $\bs{A}_l \bs{A}_l^{\top}$ in descending order with $\alpha_{li}\geq 0$ for $l \in [q]$ and  $i\in [n]$. Then, we have that
  $$
  -\sum_{i=1}^n \alpha_{1 i} \cdots \alpha_{q i} \leqslant \operatorname{tr}\left(\bs{A}_1 \cdots \bs{A}_q\right) \leqslant \sum_{i=1}^n \alpha_{1 i} \cdots \alpha_{q i}.
  $$
\end{lemma}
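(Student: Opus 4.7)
The plan is to recognize the lemma as a two-step consequence of classical matrix-analysis facts. Since $\alpha_{li}^2$ are the eigenvalues of $\bs{A}_l\bs{A}_l^\top$ in descending order with $\alpha_{li}\ge 0$, the $\alpha_{li}$ are precisely the ordered singular values $\sigma_i(\bs{A}_l)$. The lemma therefore reduces to the upper bound $|\tr(\bs{A}_1\cdots\bs{A}_q)| \le \sum_{i=1}^n \prod_{l=1}^q \sigma_i(\bs{A}_l)$; the stated lower bound then follows by replacing $\bs{A}_1$ with $-\bs{A}_1$.

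The first step would be a nuclear-norm bound on the trace. Writing the SVD $\bs{M}:=\bs{A}_1\cdots\bs{A}_q = \bs{U}\bs{\Sigma}\bs{V}^\top$, one has $\tr(\bs{M}) = \sum_i \sigma_i(\bs{M})(\bs{V}^\top\bs{U})_{ii}$, whose absolute value is at most $\sum_i \sigma_i(\bs{M})$ because each diagonal entry of an orthogonal matrix lies in $[-1,1]$. Hence $|\tr(\bs{A}_1\cdots\bs{A}_q)| \le \sum_{i=1}^n \sigma_i(\bs{A}_1\cdots\bs{A}_q)$.

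The second step bounds this right-hand side by the target via weak majorization: $\sum_{i=1}^n \sigma_i(\bs{A}_1\cdots\bs{A}_q) \le \sum_{i=1}^n \prod_{l=1}^q \alpha_{li}$. For $q=2$ this is von Neumann's trace inequality. For $q\ge 3$ I would proceed by induction on $q$: set $\bs{B}:=\bs{A}_2\cdots\bs{A}_q$; apply von Neumann to $\bs{A}_1\bs{B}$ to obtain $\sum_i \sigma_i(\bs{A}_1\bs{B}) \le \sum_i \alpha_{1i}\sigma_i(\bs{B})$; invoke the inductive hypothesis to get the weak-majorization bounds $\sum_{i=1}^k \sigma_i(\bs{B}) \le \sum_{i=1}^k \prod_{l=2}^q \alpha_{li}$ for every $k$; and lift these partial-sum inequalities via Abel summation with the non-increasing, non-negative weights $\alpha_{1i}$, yielding $\sum_i \alpha_{1i}\sigma_i(\bs{B}) \le \sum_i \alpha_{1i}\prod_{l=2}^q \alpha_{li}$. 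Chaining Steps~1 and~2 then gives the lemma.

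The main obstacle is precisely the Abel-summation step inside the induction: one must keep both singular-value sequences and the product sequence $\prod_{l=2}^q \alpha_{li}$ sorted in descending order so that the weak-majorization hypothesis pairs cleanly with the non-increasing, non-negative weights $\alpha_{1i}$. All the other ingredients — the SVD, the nuclear-norm bound on the trace, and von Neumann's trace inequality — are classical and could alternatively be quoted from a reference such as Bhatia's \emph{Matrix Analysis}.
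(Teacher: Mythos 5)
Your overall strategy (reduce to a singular-value weak-majorization statement and then bound the trace by the nuclear norm) is viable and is genuinely different from the paper, which simply invokes Kristof's theorem \citep{kristof1970theorem} with $\Gamma_l=\bs{I}$; a self-contained argument along your lines is a reasonable alternative. However, as written the induction does not close, and the key inequality is misattributed. Von Neumann's trace inequality gives $|\tr(\bs{X}\bs{Y})|\le\sum_i\sigma_i(\bs{X})\sigma_i(\bs{Y})$; combined with $\sum_i\sigma_i(\bs{M})=\max_{\bs{U}}|\tr(\bs{U}\bs{M})|$ over orthogonal $\bs{U}$ it yields only the \emph{full-sum} bound $\sum_{i=1}^n\sigma_i(\bs{A}_1\bs{B})\le\sum_{i=1}^n\alpha_{1i}\sigma_i(\bs{B})$, i.e.\ the case $k=n$. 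But your inductive hypothesis is the weak-majorization statement for \emph{all} partial sums, $\sum_{i=1}^k\sigma_i(\bs{B})\le\sum_{i=1}^k\prod_{l\ge 2}\alpha_{li}$ for every $k$, while the inductive step you describe only produces the full-sum conclusion at the next level. Hence the statement you prove at $q$ factors is strictly weaker than the statement you assume at $q-1$ factors, and already the base case you would need for $q=3$ — the partial-sum inequality $\sum_{i=1}^k\sigma_i(\bs{A}_2\bs{A}_3)\le\sum_{i=1}^k\alpha_{2i}\alpha_{3i}$ for $k<n$ — is never established: this is Horn's weak-majorization (equivalently log-majorization) inequality for singular values of a product, not a consequence of von Neumann's trace inequality in the way you use it.

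The repair is standard: either quote directly the fact that the singular values of $\bs{A}_1\cdots\bs{A}_q$ are weakly majorized by the componentwise products $\bigl(\prod_l\alpha_{li}\bigr)_i$ (a consequence of the log-majorization $\prod_{i\le k}\sigma_i(\bs{X}\bs{Y})\le\prod_{i\le k}\sigma_i(\bs{X})\sigma_i(\bs{Y})$, see e.g.\ Horn and Johnson, \emph{Topics in Matrix Analysis}, Theorem 3.3.14, or Bhatia), or strengthen your inductive claim to the weak majorization at every level and use Horn's pairwise inequality $\sum_{i\le k}\sigma_i(\bs{A}_1\bs{B})\le\sum_{i\le k}\alpha_{1i}\sigma_i(\bs{B})$ in the inductive step; your Abel-summation argument with the non-increasing, non-negative weights $\alpha_{1i}$ then goes through for every $k$, not just $k=n$, and chaining with the nuclear-norm bound on the trace (and the sign flip $\bs{A}_1\mapsto-\bs{A}_1$ for the lower bound) completes the lemma. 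With that substitution the proof is correct, at the cost of invoking a majorization result that is of essentially the same depth as the Kristof theorem the paper cites.
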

\begin{proof}[Proof of Lemma~\ref{lem:bound-of-the-trace}]
    This Lemma follows directly from Theorem (second version) of \cite{kristof1970theorem} with $\Gamma_l = \bs{I}$, $l=1,\ldots,q$.
\end{proof}
We will use Lemma~\ref{lem:bound-of-the-trace} repeatedly. For example, to bound the following quadratic form of $y_i$'s, $\sum_{[i,j]} H_{ij}^4 y_i^2 y_j^2$. Let $\bs{y} = (y_1,\ldots,y_n)$. We rewrite the quadratic form as the trace of the product of several matrices
\begin{align*}
  \sumtwo \h{1}{2}^4 \yi{1}^{2} \yi{2}^{2} = \operatorname{tr}\left(\operatorname{diag}(\bs{y})^{2}\diag^-(\bs{Q})\operatorname{diag}(\bs{y})^{2}\diag^-(\bs{Q})\right).
\end{align*}
We can apply  Lemma~\ref{lem:bound-of-the-trace} with $A_1,A_2,A_3,A_4$ being $\operatorname{diag}(\bs{y})^{2},\diag^-(\bs{Q}),\operatorname{diag}(\bs{y})^{2},\diag^-(\bs{Q})$, respectively. Let $|y_{(1)}|\geq\ldots\geq |y_{(n)}|$ be the ordered sequence of $\{|y_i|\}_{i=1}^n$. Note that, for $i=1,\ldots,n$, 
\[
\alpha_{1i} =\alpha_{3i} = {\rev |y_{(i)}|^2}, \quad \alpha_{2i} = \alpha_{4i} < \|\diag^-(\bs{Q})\|_2.
\]
Therefore, there is 
\[
\left|\sumtwo \h{1}{2}^4 \yi{1}^{2} \yi{2}^{2}\right| < \|\diag^-(\bs{Q})\|_2^2 \sum_i {\rev y_{(i)}^4} = \|\diag^-(\bs{Q})\|_2^2 \sum_i {\rev y_{i}^4}. 
\]
\begin{lemma}
\label{lem:norm-of-two-matrix}
We have that
\begin{align*}
 \|\diag^-\{\bs{Q}\}\|_2 \leq 1,\quad \|\diag^-\{\bs{H}\}\|_2 \leq 2.
\end{align*}
\end{lemma}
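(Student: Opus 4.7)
The plan is to exploit the fact that $\bs{H}$ is an orthogonal projection. Writing $\tilde{\bs{X}}$ for the $n\times p$ matrix whose $i$-th row is $(\bs{X}_i-\bar{\bs{X}})^\top$, the definition $H_{ij} = (n-1)^{-1}(\bs{X}_i-\bar{\bs{X}})^\top \bs{S}_{\bs{X}}^{-2}(\bs{X}_j-\bar{\bs{X}})$ gives $\bs{H} = \tilde{\bs{X}}(\tilde{\bs{X}}^\top\tilde{\bs{X}})^{-1}\tilde{\bs{X}}^\top$, which is symmetric and idempotent under the standing assumption $p<n$. Consequently $\|\bs{H}\|_2 = 1$, every diagonal entry satisfies $H_{ii}\in[0,1]$, and the identity $\bs{H}^2=\bs{H}$ entrywise reads $\sum_{k}H_{ik}H_{kj} = H_{ij}$; in particular $\sum_{k}H_{ik}^2 = H_{ii}$.

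For the second bound, I would simply write $\diag^-\{\bs{H}\} = \bs{H} - \diag\{\bs{H}\}$ and apply the triangle inequality for the spectral norm. The first term contributes $\|\bs{H}\|_2=1$, and the second contributes $\|\diag\{\bs{H}\}\|_2 = \max_i H_{ii} \le 1$, yielding $\|\diag^-\{\bs{H}\}\|_2 \le 2$.

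For the first bound, note that $\diag^-\{\bs{Q}\}$ is the symmetric matrix with entries $H_{ij}^2$ off the diagonal and zero on the diagonal. For a symmetric matrix $\bs{A}$ the spectral norm is bounded by the maximum absolute row sum, so it suffices to bound $\max_i \sum_{j\ne i} H_{ij}^2$. Using $\sum_{j}H_{ij}^2 = H_{ii}$ from the previous paragraph, this row sum equals $H_{ii} - H_{ii}^2 \le H_{ii} \le 1$ (in fact $\le 1/4$, but we only need $\le 1$). Hence $\|\diag^-\{\bs{Q}\}\|_2 \le 1$.

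There is no real obstacle here: once one recognizes $\bs{H}$ as the orthogonal projection onto the column space of centered covariates, both bounds reduce to a triangle inequality and a row-sum estimate. The only subtlety worth stating explicitly is the identity $\sum_{k}H_{ik}^2 = H_{ii}$, which turns the seemingly quadratic quantity $\sum_{j\ne i}H_{ij}^2$ into the linear quantity $H_{ii}(1-H_{ii})$.
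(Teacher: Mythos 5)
Your proposal is correct and follows essentially the same route as the paper: the bound $\|\diag^-\{\bs{Q}\}\|_2\le 1$ via the row-sum (Gershgorin-type) estimate $\sum_{j\ne i}H_{ij}^2 = H_{ii}-H_{ii}^2\le 1$, and the bound $\|\diag^-\{\bs{H}\}\|_2\le 2$ via the triangle inequality $\|\bs{H}\|_2+\|\diag\{\bs{H}\}\|_2\le 2$. The only cosmetic difference is that you make the projection structure of $\bs{H}$ explicit before invoking $\sum_k H_{ik}^2=H_{ii}$, which the paper simply cites as a known identity.
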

\begin{proof}[Proof of \cref{lem:norm-of-two-matrix}]
 Using the Gershgorin circle theorem (see Theorem~0 of \cite{bell1965gershgorin}), we get that
 \begin{align*}
  \|\diag^-\{\bs{Q}\}\|_2 <\max_i \sum_{j\in [n]\backslash i} H_{ij}^2  = \max_i (H_{ii}-H_{ii}^2) <1.
 \end{align*}
 On the other hand, by the triangle inequality, we have 
 \begin{align*}
  \|\diag^-\{\bs{H}\}\|_2 \leq\|\bs{H}\|_2 + \|\operatorname{diag}\{\bs{H}\}\|_2 \leq 2.
 \end{align*}
\end{proof}
Recall the definition of $\bs{A}(z)$ in the main text. 
\begin{lemma}
\label{lem:equality-td-A}
Fix $z\in\{0,1\}$, we have that
\[
 \sum_{i\in [n]\backslash \{j\}}\tilde{A}_{ij}(z) = -s_j(z),\quad \sum_{j\in [n]\backslash \{i\}}\tilde{A}_{ij}(z)  = d_i(z)  - s_i(z), \quad \tilde{A}_{ii}(z) = s_i(z).
 \]
\end{lemma}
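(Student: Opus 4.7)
The plan is to prove the three claims by direct computation, leveraging one key structural fact: the hat matrix $\bs{H}$ has zero row and column sums. Indeed, since $H_{kl} = (n-1)^{-1}(\bs{X}_k - \bar{\bs{X}})^\top \bs{S}_{\bs{X}}^{-2}(\bs{X}_l - \bar{\bs{X}})$ and $\sum_k (\bs{X}_k - \bar{\bs{X}}) = 0$, we have $\sum_k H_{kl} = \sum_l H_{kl} = 0$. From the definition $\bs{A}(z) = \bs{H}\diag\{(Y_1(z)-\bar{Y}(z), \ldots, Y_n(z)-\bar{Y}(z))\}$, this gives $A_{ij}(z) = H_{ij}(Y_j(z)-\bar{Y}(z))$, so $\sum_i A_{ij}(z) = (Y_j(z)-\bar{Y}(z))\sum_i H_{ij} = 0$ and similarly $\sum_j A_{ij}(z) = d_i(z)$.

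With these observations, the diagonal case is immediate: $\sum_k A_{kk}(z) = \sum_k H_{kk}(Y_k(z)-\bar{Y}(z))$, so by the centering convention of $\tilde{A}$ on the diagonal,
\begin{align*}
\tilde{A}_{ii}(z) = A_{ii}(z) - \frac{1}{n}\sum_k A_{kk}(z) = H_{ii}(Y_i(z)-\bar{Y}(z)) - \frac{1}{n}\sum_k H_{kk}(Y_k(z)-\bar{Y}(z)) = s_i(z).
\end{align*}

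For the off-diagonal sums, the zero-row-sum property gives $\sum_{k,l} A_{kl}(z) = 0$, hence $\sum_{k \ne l} A_{kl}(z) = -\sum_k A_{kk}(z)$. Therefore, for $i \ne j$,
\begin{align*}
\tilde{A}_{ij}(z) = A_{ij}(z) + \frac{1}{n(n-1)}\sum_k A_{kk}(z).
\end{align*}
Summing this over $i \in [n]\setminus\{j\}$ and using $\sum_{i \ne j} A_{ij}(z) = -A_{jj}(z)$ yields
\begin{align*}
\sum_{i \in [n]\setminus\{j\}} \tilde{A}_{ij}(z) = -A_{jj}(z) + \frac{1}{n}\sum_k A_{kk}(z) = -s_j(z).
\end{align*}
Analogously, summing over $j \in [n]\setminus\{i\}$ and using $\sum_{j \ne i} A_{ij}(z) = d_i(z) - A_{ii}(z)$ gives $d_i(z) - s_i(z)$, which is the remaining claim.

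There is no genuine obstacle here; the entire argument is a one-page verification once one notices that $\bs{H}$ annihilates the all-ones vector. The only subtle point is keeping track of the two different centering rules (dividing by $n$ on the diagonal versus $n(n-1)$ off the diagonal) when simplifying the row/column sums.
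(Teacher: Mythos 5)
Your proof is correct and follows essentially the same route as the paper: both rest on the facts that $\bs{H}\bs{1}=\bs{0}$ (so the column sums of $\bs{A}(z)$ vanish and its row sums equal $d_i(z)$) and that the centering of $\tilde{\bs{A}}(z)$ reduces to subtracting $\frac{1}{n}\sum_k A_{kk}(z)$ on the diagonal and adding $\frac{1}{n(n-1)}\sum_k A_{kk}(z)$ off the diagonal; the paper just phrases this in matrix form via $\tilde{\bs{A}}(z)=\bs{A}(z)-\frac{\sum_i A_{ii}(z)}{n-1}(\bs{I}-\frac{1}{n}\bs{1}\bs{1}^\top)$ while you work entrywise. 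No gaps.
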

\begin{proof}[Proof of Lemma~\ref{lem:equality-td-A}]
By the fact $\sum_{i,j} A_{ij}(z)=0$, we see that
\begin{align}
\label{eq:eq9}
        \tilde{\bs{A}}(z)={\bs{A}}(z)- \frac{\sum_i A_{ii}(z)}{n-1}\left(\bs{I} - \frac{1}{n}\bs{1}\bs{1}^\top\right).
\end{align}
 Let $\bs{Y}(z) := (Y_1(z),\ldots,Y_n (z))^\top$ and $\bs{d}(z) := (d_1(z),\ldots,d_n(z))^\top$.  Therefore, we have
    \[
    \tilde{\bs{A}}(z)\bs{1} = \bs{A}(z)\bs{1} = \bs{H}\diag(\bs{Y}(z)-\bar{Y}(z)\bs{1})\bs{1} = \bs{H}(\bs{Y}(z)-\bar{Y}(z)\bs{1}) = \bs{d}(z),
    \]
    and 
    \[
        \tilde{\bs{A}}(z)^\top\bs{1} = \bs{A}(z)^\top\bs{1} = \diag(\bs{Y}(z)-\bar{Y}(z)\bs{1})\bs{H}\bs{1} = \bs{0},
    \]
    which implies that
    \begin{align}
    \label{eq:eq10}
          \sum_{j}\tilde{A}_{ij}(z)  = d_i(z), \quad   \sum_{i}\tilde{A}_{ij}(z) = 0.
    \end{align}
    By \eqref{eq:eq9}, we have
    \begin{equation}
    \begin{split}
        \label{eq:eq11}
            \tilde{A}_{ii}(z) &= A_{ii}(z) - \frac{\sum_i A_{ii}(z)}{n} \\
            &= H_{ii}(Y_i(z) - \bar{Y}(z)) - \frac{1}{n} \sum_{j=1}^n H_{jj} (Y_j(z) - \bar{Y}(z)) = s_i(z). 
    \end{split}        
    \end{equation}
    Combining \eqref{eq:eq10} and \eqref{eq:eq11}, we get that
    \[
     \sum_{i\in [n]\backslash \{j\}}\tilde{A}_{ij}(z) = -s_j(z),\quad \sum_{j\in [n]\backslash \{i\}}\tilde{A}_{ij}(z)  = d_i(z)  - s_i(z).
    \]
    This concludes the proof.
\end{proof}

\section{Decompostion of $\htauadj$}
\label{sec:decomp-htauadj}

In this section, we derive the decompositions of $\hat{\tau}_\adj$ and $\hat{\tau}_\db$, which correspond to Propositions \ref{proposition:asymptotic-equivalance-htauadj} and \ref{proposition:decompose-hat-taudb}. Before proving these results, we first state some useful lemmas.
\begin{lemma}
\label{lem:order-of-sumi-hii_yi}
Fix $z\in\{0,1\}$. Under Assumption~\ref{assumption:2-moment-of-finite-population-for-y}, we have
    \[
\frac{\sum_i A_{ii}(z)}{n} = \frac{\sum_i H_{ii}\left(Y_i(z)-\bar{Y}(z)\right)}{n} = O(1).
\]
\end{lemma}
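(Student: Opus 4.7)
The plan is to bound $\sum_i H_{ii}(Y_i(z) - \bar Y(z))$ directly by Cauchy--Schwarz, exploiting two basic properties of the hat matrix $\bs{H}$. First I would record that, with the definition $H_{ij} = (n-1)^{-1}(\bs{X}_i - \bar{\bs{X}})^\top \bs{S}_{\bs{X}}^{-2}(\bs{X}_j - \bar{\bs{X}})$, the matrix $\bs{H}$ is a (symmetric) orthogonal projection onto the column span of the centered covariate matrix. Consequently $\bs{H}$ is idempotent, which gives the two facts I need: $0 \le H_{ii} \le 1$ for every $i$, and $\sum_{i=1}^n H_{ii} = \operatorname{tr}(\bs{H}) = p$.

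Next I would apply Cauchy--Schwarz to the target sum:
\begin{equation*}
\left|\sum_{i=1}^n H_{ii}(Y_i(z) - \bar Y(z))\right| \le \left(\sum_{i=1}^n H_{ii}^2\right)^{1/2}\left(\sum_{i=1}^n (Y_i(z) - \bar Y(z))^2\right)^{1/2}.
\end{equation*}
Because $H_{ii}\in[0,1]$, we have $\sum_i H_{ii}^2 \le \sum_i H_{ii} = p$, and by assumption $p < n$, so $\sum_i H_{ii}^2 = O(n)$. Meanwhile Assumption~\ref{assumption:2-moment-of-finite-population-for-y} gives $\sum_i (Y_i(z) - \bar Y(z))^2 = O(n)$. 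Multiplying and taking square roots yields $\left|\sum_i H_{ii}(Y_i(z) - \bar Y(z))\right| = O(n)$, and dividing by $n$ produces the claimed $O(1)$ bound. The identity $\sum_i A_{ii}(z) = \sum_i H_{ii}(Y_i(z) - \bar Y(z))$ is immediate from the definition $\bs{A}(z) = \bs{H}\,\operatorname{diag}((Y_1(z) - \bar Y(z), \ldots, Y_n(z) - \bar Y(z))^\top)$, so the same bound transfers to $\sum_i A_{ii}(z)/n$.

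There is no real obstacle here: the only non-trivial ingredient is the pair of standard projection facts $H_{ii}\in[0,1]$ and $\operatorname{tr}(\bs{H}) = p$, after which Cauchy--Schwarz combined with Assumption~\ref{assumption:2-moment-of-finite-population-for-y} finishes the argument in one line.
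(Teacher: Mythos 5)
Your proposal is correct and follows essentially the same route as the paper's proof: both apply Cauchy--Schwarz to $\sum_i H_{ii}(Y_i(z)-\bar Y(z))$, then use $0\le H_{ii}\le 1$ together with $\sum_i H_{ii}^2 \le \sum_i H_{ii} = p \le n$ and Assumption~\ref{assumption:2-moment-of-finite-population-for-y} to conclude the $O(1)$ bound. The identification $\sum_i A_{ii}(z)=\sum_i H_{ii}(Y_i(z)-\bar Y(z))$ is likewise immediate in both arguments, so there is nothing to add.
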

\begin{proof}[Proof of Lemma~\ref{lem:order-of-sumi-hii_yi}]
    By Cauchy-Schwartz inequality, we have
    \begin{align*}
        &\frac{\sum_i H_{ii}\left(Y_i(z)-\bar{Y}(z)\right)}{n}\leq  \left(\frac{\sum_i H_{ii}^2}{n}\right)^{1/2}\left(\frac{\sum_i\left(Y_i(z)-\bar{Y}(z)\right)^2}{n}\right)^{1/2}.
    \end{align*}
Recall that for $i \in [n]$, $H_{ii} \le 1$, we have $\sum_i H_{ii}^2 \le \sum_i H_{ii} = p$. Therefore,
\[
\frac{\sum_i H_{ii}\left(Y_i(z)-\bar{Y}(z)\right)}{n} \leq \left(\frac{p}{n}\right)^{1/2}\left(\frac{\sum_i\left(Y_i(z)-\bar{Y}(z)\right)^2}{n}\right)^{1/2} = O(1),
\]
where in the last step we applied  Assumption~\ref{assumption:2-moment-of-finite-population-for-y}.
\end{proof}

\begin{lemma}
  \label{lem:ZHZ-order-A(z)-order}
Fix $z\in \{0,1\}$. Under Assumptions~\ref{assumption:positive-limit-of-assigned-proportion}--\ref{assumption:2-moment-of-finite-population-for-y} and the first half of \Cref{assumption:lindeberg-type-condition-p-o(n)}, we have that 
  \begin{align*}
     \bs{Z}^\top \tilde{\bs{A}}(z) \bs{Z} = \op(n).
  \end{align*}
 Moreover, we have
  \[
  \bs{Z}^\top \tilde{\bs{H}}\bs{Z}= \op(n).
  \]
\end{lemma}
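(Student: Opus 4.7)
The plan is to expand $\bs Z = r_1 \bs 1 + \tilde{\bs Z}$ with $\tilde{\bs Z} = \bs Z - r_1 \bs 1$, giving
\[
\bs Z^\top \tilde{\bs A}(z) \bs Z = r_1^2 \bs 1^\top \tilde{\bs A}(z) \bs 1 + r_1 \bs 1^\top \tilde{\bs A}(z) \tilde{\bs Z} + r_1 \tilde{\bs Z}^\top \tilde{\bs A}(z) \bs 1 + \tilde{\bs Z}^\top \tilde{\bs A}(z) \tilde{\bs Z}.
\]
Lemma~\ref{lem:equality-td-A} says that the columns of $\tilde{\bs A}(z)$ sum to zero while the rows sum to $\bs d(z)$, so $\bs 1^\top \tilde{\bs A}(z) = \bs 0$ and $\tilde{\bs A}(z) \bs 1 = \bs d(z)$. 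The first two terms therefore vanish, and the third collapses to the linear statistic $r_1 \sum_i \tilde Z_i d_i(z)$. Since $\bs 1^\top \bs H = \bs 0$ forces $\bar d(z) = 0$, Lemma~\ref{lem:variance-of-swr-mean} combined with the idempotence of $\bs H$ gives variance bounded by $\sum_i d_i(z)^2 \le \sum_i (Y_i(z) - \bar Y(z))^2 = O(n)$ under Assumption~\ref{assumption:2-moment-of-finite-population-for-y}, so this linear piece is $\Op(\sqrt n) = \op(n)$.

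The crux is then the quadratic form $\tilde{\bs Z}^\top \tilde{\bs A}(z) \tilde{\bs Z}$. First I verify the mean is zero: the diagonal contribution is $r_1 r_0 \sum_i s_i(z) = 0$ by construction of $s_i(z)$, and the off-diagonal contribution is $-\frac{r_1 r_0}{n-1} \sum_{i \ne j} \tilde A_{ij}(z) = 0$ again by Lemma~\ref{lem:equality-td-A}. For the variance, the plan is to invoke the H\`{a}jek coupling developed in Appendix~\ref{sec:hajek-coupling} to transfer from $\bs Z$ to an i.i.d.\ Bernoulli$(r_1)$ vector $\bs T$, for which a standard fourth-moment computation yields $\var(\tilde{\bs T}^\top \tilde{\bs A}(z) \tilde{\bs T}) = O(\|\tilde{\bs A}(z)\|_F^2)$. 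I then bound the Frobenius norm by
\[
\|\tilde{\bs A}(z)\|_F^2 \le 2 \sum_{i,j} H_{ij}^2 (Y_j(z) - \bar Y(z))^2 + O(n) = 2 \sum_j H_{jj} (Y_j(z) - \bar Y(z))^2 + O(n) = O(n),
\]
using $\sum_i H_{ij}^2 = H_{jj} \le 1$ (idempotence of $\bs H$) and Assumption~\ref{assumption:2-moment-of-finite-population-for-y}. Chebyshev's inequality then gives $\tilde{\bs Z}^\top \tilde{\bs A}(z) \tilde{\bs Z} = \Op(\sqrt n) = \op(n)$, completing the first assertion.

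The second claim proceeds along the same template with $\tilde{\bs H}$ in place of $\tilde{\bs A}(z)$. Since $\sum_i H_{ij} = 0$ and $\sum_i H_{ii} = p$, the centering built into $\tilde{\bs H}$ is symmetric in the sense that both $\bs 1^\top \tilde{\bs H} = \bs 0$ and $\tilde{\bs H} \bs 1 = \bs 0$, so every coupling term with $\bs 1$ disappears and only $\tilde{\bs Z}^\top \tilde{\bs H} \tilde{\bs Z}$ survives. The mean-zero / variance-bound argument then carries over verbatim with $\|\tilde{\bs H}\|_F^2 = \operatorname{tr}(\bs H^2) + O(1) = p + O(1) = O(n)$ by idempotence. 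The principal obstacle throughout is precisely the variance bound for the quadratic form under complete randomization: a purely direct route through the joint fourth moments $\E[\tilde Z_{i_1} \tilde Z_{i_2} \tilde Z_{i_3} \tilde Z_{i_4}]$ is workable but tedious, so leveraging the H\`{a}jek coupling the paper has already developed is the cleanest route.
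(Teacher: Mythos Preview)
Your proposal is correct and takes a genuinely different route from the paper's proof.

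The paper does not expand $\bs Z = r_1 \bs 1 + \tilde{\bs Z}$. Instead it writes $\tilde{\bs A}(z) = \bs A(z) - \frac{\sum_i A_{ii}(z)}{n-1}\bs P$ and $\tilde{\bs H} = \bs H - \frac{p}{n-1}\bs P$, then controls $\bs Z^\top \bs A(z)\bs Z$, $\bs Z^\top \bs H\bs Z$, and $\bs Z^\top \bs P\bs Z$ separately by invoking Lemmas~\ref{lem:quadra-term-1} and \ref{lem:linear-term-1}. Those lemmas are direct variance computations under simple random sampling (expanding the fourth moments $\cov(Z_iZ_j,Z_kZ_l)$ and bounding each combinatorial piece), and they replace $Z_iZ_j$ by $r_1^2$ and $Z_i$ by $r_1$ up to $\op(n)$; the centering then makes everything cancel. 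Your route instead exploits the row/column-sum identities of $\tilde{\bs A}(z)$ from Lemma~\ref{lem:equality-td-A} to kill the $\bs 1$-coupling terms immediately, reducing to a single quadratic form whose variance you bound via the H\`ajek coupling (Propositions~\ref{proposition:first-order-hajek-coupling}--\ref{proposition:second-order-hajek-coupling}) plus an i.i.d.\ fourth-moment calculation. Your approach is conceptually cleaner, in fact yields the sharper rate $\Op(\sqrt n)$, and does not appear to need Assumption~\ref{assumption:lindeberg-type-condition-p-o(n)} at all (the paper's route uses it through the $\max_i g_i^2 = o(n)$ hypothesis of Lemma~\ref{lem:quadra-term-1}). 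The paper's route, on the other hand, is self-contained within Appendix~\ref{sec:decomp-htauadj} and reuses machinery (Lemmas~\ref{lem:quadra-term-1}--\ref{lem:linear-term-1}) that it needs anyway for the inference section.

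Two minor points to tighten. First, Proposition~\ref{proposition:second-order-hajek-coupling} is stated only for matrices of the form $\bs H\diag(y)$ with $\sum_i y_i = 0$; applying it to $\tilde{\bs H}$ requires a sentence noting that its proof goes through verbatim once you observe $\tilde{\bs H}\bs 1 = \bs 0$ (so the role of $d_i$ there becomes zero) and $\|\tilde{\bs H}\|_F^2 = O(n)$. Second, your Frobenius-norm estimate $\|\tilde{\bs H}\|_F^2 = \tr(\bs H^2) + O(1)$ is off: the correction term is $-p^2/(n-1)$, which can be $\Theta(n)$ in the moderately high-dimensional regime. The conclusion $\|\tilde{\bs H}\|_F^2 = O(n)$ is unaffected, since $\tr(\tilde{\bs H}^2) = p - p^2/(n-1) \le p$.
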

\begin{proof}[Proof of Lemma~\ref{lem:ZHZ-order-A(z)-order}]
Since $\sum_{i,j} A_{ij}(z)=0$ and $\sum_{i,j} H_{ij} = 0$, we have 
$$\sum_{i \ne j} A_{ij}(z) = - \sum_{i} A_{ii}(z),\quad \sum_{i \ne j} H_{ij} = - \sum_{i} H_{ii},$$ which gives that
\[
\tilde{\bs{A}}(z)={\bs{A}}(z)- \frac{\sum_i A_{ii}(z)}{n-1}\left(\bs{I} - \frac{1}{n}\bs{1}\bs{1}^\top\right),\quad
\tilde{\bs{H}}=\bs{H}- \frac{\sum_i H_{ii}}{n-1}\left(\bs{I} - \frac{1}{n}\bs{1}\bs{1}^\top\right).
\]
In light of these equations, we now analyze $\bs{Z}^\top \bs{A}(z) \bs{Z}, \bs{Z}^\top \bs{H} \bs{Z}$ and $\bs{Z}^\top \left(\bs{I} - \bs{1}\bs{1}^\top/n\right) \bs{Z}$ one by one. 

We first consider $\bs{Z}^\top \bs{A}(z) \bs{Z}$. Observe that
    \[
    \bs{Z}^\top \bs{A}(z) \bs{Z} = \sum_{[i,j]} Z_i Z_j H_{ij} (Y_j(z)-\bar{Y}(z))+  \sum_{i} Z_i H_{ii} (Y_i(z)-\bar{Y}(z)).
    \]
    Applying Lemmas~\ref{lem:quadra-term-1} and \ref{lem:linear-term-1} with $y_i=1$, $g_i = Y_i(z)-\bar{Y}(z)$, $D_{ij}=H_{ij}$, and $a_i = H_{ii}$, we get
    \[
    \sum_{[i,j]} Z_i Z_j H_{ij} (Y_j(z)-\bar{Y}(z)) = r_1^2\sum_{[i,j]}H_{ij} (Y_j(z)-\bar{Y}(z))+\op(n),
    \]
    and
    \[
   \sum_{i} Z_i H_{ii} (Y_i(z)-\bar{Y}(z)) =r_1\sum_{i} H_{ii} (Y_i(z)-\bar{Y}(z))+\op(n).
    \]
    Therefore, we have 
    \[
    \bs{Z}^\top \bs{A}(z) \bs{Z} = \sum_{[i,j]} r_1^2 H_{ij} (Y_j(z)-\bar{Y}(z))+  \sum_{i} r_1 H_{ii} (Y_i(z)-\bar{Y}(z)) + \op(n).
    \]
    
    Applying similar analysis to $\bs{Z}^\top \bs{H} \bs{Z}$  and $\bs{Z}^\top (\bs{I}-\bs{1}\bs{1}^\top/n) \bs{Z}$, we get
    \begin{align*}
            &\bs{Z}^\top \bs{H} \bs{Z} = \sum_{[i,j]} r_1^2 H_{ij} +  \sum_{i} r_1 H_{ii}  + \op(n),\\
            &\bs{Z}^\top (\bs{I}-\bs{1}\bs{1}^\top/n) \bs{Z} = -\sum_{[i,j]} r_1^2  \frac{1}{n}  +  \sum_{i} r_1 \frac{n-1}{n} + \op(n).
    \end{align*}
    Putting together, we have
    \begin{align*}
        & \bs{Z}^\top \tilde{\bs{H}} \bs{Z} =\bs{Z}^\top \left\{{\bs{H}}- \frac{\sum_i H_{ii}}{n-1} (\bs{I}-\bs{1}\bs{1}^\top/n) \right\}\bs{Z} \\
        & =\sum_{[i,j]} r_1^2 \left(H_{ij}+\frac{\sum_k H_{kk}}{n(n-1)}\right) +  \sum_{i} r_1 \left(H_{ii}-\frac{\sum_j H_{jj}}{n}\right) + \op(n) + \op(n)\frac{\sum_i H_{ii}}{n-1}\\
        &= 0+ 0 + \op(n) + \op(n) \cdot p / (n - 1) = \op(n),
    \end{align*}
    where the last equality again uses $
    \sum_{i , j} H_{ij} = 0.
    $
    Similarly, we have
        \begin{align*}
          \bs{Z}^\top \tilde{\bs{A}}(z) \bs{Z} &=\bs{Z}^\top \left\{{\bs{A}}(z)- \frac{\sum_i A_{ii}(z)}{n-1} (\bs{I}-\bs{1}\bs{1}^\top/n) \right\}\bs{Z} \\
        & =\sum_{[i,j]} r_1^2 \left(A_{ij}(z)+\frac{\sum_k A_{kk}(z)}{n(n-1)}\right) +  \sum_{i} r_1 \left(A_{ii}(z)-\frac{\sum_j A_{jj}(z)}{n}\right) \\
        &\quad + \op(n) + \op(n)\frac{\sum_i A_{ii}(z)}{n-1}.
        % \\
        % &= 0+ 0 + \op(n) + \op(n)O(1) = \op(n),
    \end{align*}
    Apparently, the second term on the right-hand side of the above decomposition is equal to zero. For the first term, using that
    \[
    \sum_i \sum_j A_{ij}(z) = \sum_j  \sum_iH_{ij}(Y_j(z)-\bar{Y}(z)) = 0,
    \]
    we see that the first term is equal to zero as well. For the last term, applying Lemma~\ref{lem:order-of-sumi-hii_yi} yields $\frac{\sum_i A_{ii}(z)}{n-1} = O(1)$. Putting together, we have $\bs{Z}^\top \tilde{\bs{A}}(z) \bs{Z} = \op(n)$, which concludes the proof.
\end{proof}

The following proposition gives the detailed formulation and proof of \eqref{eq:adjdecomp}.
\begin{proposition}
\label{proposition:asymptotic-equivalance-htauadj}
If Assumptions~\ref{assumption:positive-limit-of-assigned-proportion}--\ref{assumption:2-moment-of-finite-population-for-y} and the first half of \Cref{assumption:lindeberg-type-condition-p-o(n)} hold, then we have 
  \begin{align*}
     &\hat{\tau}_{\adj}-\bar{\tau} +  \frac{r_1 r_0}{n} \sum_{i = 1}^n H_{ii} \left(\frac{Y_i(1) - \bar{Y}(1)}{r_1^2} - \frac{Y_i(0) - \bar{Y}(0)}{r_0^2}\right)\\
     &= \frac {1}{n} \sum_i ({Z}_i-r_1) c_i- \frac {1}{n} \sum_{[i,j]} ({Z}_i-r_1)({Z}_j-r_1) \left(\frac{{A}_{ij}(1)}{r_1^2} -\frac{{A}_{ij}(0)}{r_0^2} \right)+\op(n^{-1/2}),
\end{align*}
where 
\begin{align*}
           c_i =   \alpha r_0\frac{Y_i(1)-\bar{Y}(1)}{r_1^2} + \alpha r_1\frac{Y_i(0)-\bar{Y}(0)}{r_0^2} -
        (r_0-r_1)\left(\frac{s_i(1)}{r_1^2}-\frac{s_i(0)}{r_0^2}\right)+ \frac{e_i(1)}{r_1}+\frac{e_i(0)}{r_0}.
\end{align*}
\end{proposition}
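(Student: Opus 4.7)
The plan is to write $\hat{\tau}_\adj-\bar{\tau}$ as an explicit multilinear function of the assignment vector $\bs{Z}$, then expand each $Z_i$ as $r_1+(Z_i-r_1)$ and collect terms by their total degree in the centered indicators $\tilde Z_i:=Z_i-r_1$. First I would substitute $Y_i=Z_iY_i(1)+(1-Z_i)Y_i(0)$ everywhere, replace $\hat{\bs\beta}_z=\bs{S}_{\bs X}^{-2}\bs{s}_{\bs X,Y(z)}$ using its definition as a sample covariance from arm $z$, and rewrite every inner product $(\bs{X}_i-\bar{\bs X})^\top\bs{S}_{\bs X}^{-2}(\bs{X}_j-\bar{\bs X})$ as $(n-1)H_{ij}$. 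After this substitution the regression-adjustment piece becomes a sum of terms of the form $\frac{1}{n_z^2}\sum_{i,j:Z_i=Z_j=z}H_{ij}(Y_j(z)-\bar Y_z)$ (and analogous cross-products with $z'\neq z$), while the unadjusted part contributes the usual sample means. Using $n_z=nr_z$ and the identity $1/n_z=(1/(nr_z))\cdot(1+\op(1))$ will let me pull the normalizations outside without affecting the claimed $\op(n^{-1/2})$ remainder.

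Next I would perform the $Z_i=r_1+\tilde Z_i$ expansion systematically. Each quadratic (or higher) expression $\prod Z_{i_k}$ becomes a polynomial in $\tilde Z_{i_k}$ of degrees $0,1,2,\ldots$. Collecting the degree-$0$ (deterministic) terms from both $\hat{\bs\beta}_1$ and $\hat{\bs\beta}_0$ pieces will yield exactly the bias $-\frac{r_1r_0}{n}\sum_i H_{ii}\bigl(\frac{Y_i(1)-\bar Y(1)}{r_1^2}-\frac{Y_i(0)-\bar Y(0)}{r_0^2}\bigr)$, since after summation over $i,j$ the off-diagonal contributions cancel (because $\sum_j H_{ij}=0$) and only the diagonal pieces $H_{ii}(Y_i(z)-\bar Y(z))$ survive. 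The degree-$1$ (linear) terms from the various pieces, once collected, must match the $c_i$ in the statement: the $\alpha$-weighted outcome pieces come from the leverage-weighted averages of $(Y_i(z)-\bar Y(z))$, the $s_i(z)$ pieces come from the coupling between $H_{ii}$ and sampling-induced recentering of $\bar Y_z$ versus $\bar Y(z)$, and the $e_i(z)/r_z$ pieces come from the standard $\tilde{\bs\beta}_z^\top(\bs X_i-\bar{\bs X})$ subtracted from $Y_i(z)-\bar Y(z)$ à la Lin. The degree-$2$ terms give exactly $-\frac{1}{n}\sum_{[i,j]}\tilde Z_i\tilde Z_j(A_{ij}(1)/r_1^2-A_{ij}(0)/r_0^2)$, since $A_{ij}(z)=H_{ij}(Y_j(z)-\bar Y(z))$ is the natural object appearing when $(\bs X_i-\bar{\bs X})^\top\bs{S}_{\bs X}^{-2}(\bs X_j-\bar{\bs X})$ is multiplied by an outcome.

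The third step is to bound the remainder, which consists of (i) terms of degree $3$ and $4$ in $\tilde Z$, and (ii) lower-degree terms with an extra $n^{-1}$ factor arising from $n_z^{-1}-(nr_z)^{-1}$ and from replacing $\bar{\bs X}_z$ by $\bar{\bs X}$. For (i), each cubic/quartic form is, up to constants, of the type $n^{-2}\bs Z^\top\tilde{\bs A}(z)\bs Z\cdot(\text{linear in }\tilde Z)$ or $n^{-3}(\bs Z^\top\tilde{\bs H}\bs Z)(\bs Z^\top\tilde{\bs A}(z)\bs Z)$, and Lemma~\ref{lem:ZHZ-order-A(z)-order} gives that $\bs Z^\top\tilde{\bs H}\bs Z$ and $\bs Z^\top\tilde{\bs A}(z)\bs Z$ are both $\op(n)$ under Assumptions~\ref{assumption:2-moment-of-finite-population-for-y}--\ref{assumption:lindeberg-type-condition-p-o(n)}; combined with standard $\Op(n^{1/2})$ bounds on the linear centered sums this forces the total contribution to be $\op(n^{-1/2})$. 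For (ii), Lemma~\ref{lem:order-sample-mean} yields $\bar{\bs X}_z-\bar{\bs X}=\Op(n^{-1/2})$, so the corresponding error terms are already $\op(n^{-1/2})$ after multiplication by $O(1)$ factors.

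The main obstacle will be the bookkeeping in step two: several of the degree-$1$ contributions look at first sight different from the claimed $c_i$, and collapsing them into the compact form requires careful use of the identities $\sum_j H_{ij}=0$, of the definition $e_i(z)=Y_i(z)-\bar Y(z)-\tilde{\bs\beta}_z^\top(\bs X_i-\bar{\bs X})$, and of Lemma~\ref{lem:equality-td-A} relating $\sum_j\tilde A_{ij}(z)$ to $s_i(z)$ and the fitted values $d_i(z)$. In particular, the coefficient $\alpha$ that multiplies $(Y_i(z)-\bar Y(z))/r_z^2$ arises from $\frac{1}{n}\sum_i H_{ii}=\alpha$, and separating this diagonal contribution from the off-diagonal $H_{ij}$ pieces of the linear terms is the subtlest point. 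Once this algebra is organized, the quadratic remainders are handled cleanly by Lemma~\ref{lem:ZHZ-order-A(z)-order} and the proof concludes.
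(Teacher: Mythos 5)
Your plan is essentially the paper's own proof: the paper likewise writes the adjustment as hat-matrix quadratic forms in $\bs{Z}$, expands $Z_i=r_1+(Z_i-r_1)$, identifies the deterministic part as the bias, assembles the linear part into $c_i$ using Lemma~\ref{lem:equality-td-A} together with $e_i(z)=Y_i(z)-\bar{Y}(z)-d_i(z)$, keeps the quadratic part as the $A_{ij}(z)$ form, and disposes of the remainder with Lemmas~\ref{lem:order-of-sumi-hii_yi}, \ref{lem:ZHZ-order-A(z)-order} and \ref{lem:order-sample-mean}; only the bookkeeping differs (the paper centers $\bs{A}(z),\bs{H}$ into $\tilde{\bs{A}}(z),\tilde{\bs{H}}$ and splits the adjustment into pieces $M_{11},M_{12}$ rather than collecting by degree in $Z_i-r_1$). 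One caution: your remainder step ``replace $\bar{\bs{X}}_z$ by $\bar{\bs{X}}$ using $\bar{\bs{X}}_z-\bar{\bs{X}}=\Op(n^{-1/2})$'' is neither justified nor needed --- Lemma~\ref{lem:order-sample-mean} is a scalar statement and the paper imposes no moment conditions on the covariates, so a coordinate-wise bound cannot be aggregated through $\bs{S}_{\bs{X}}^{-2}$ when $p\asymp n$; fortunately, since $\sum_{i:Z_i=z}(Y_i-\bar{Y}_z)=0$, the arm-wise covariance $\bs{s}_{\bs{X},Y(z)}$ is identical whether one centers covariates at $\bar{\bs{X}}_z$ or at $\bar{\bs{X}}$, and $n_z^{-1}=(nr_z)^{-1}$ holds exactly, so the only genuine prefactor discrepancies are $n$ versus $n-1$ and $n_z$ versus $n_z-1$, which your $O(n^{-1})$-times-$\Op(1)$ accounting already covers.
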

\begin{proof}[proof of Proposition~\ref{proposition:asymptotic-equivalance-htauadj}]
In the following proof, for ease of presentation, we write $\bs{X}_i  := \bs{X}_i-\bar{\bs{X}}$. 
We observe that
\[
\hat{\tau}_\adj = \frac{\sum_i Z_iY_i(1)}{n_1}-\frac{\sum_i (1-Z_i) Y_i(0)}{n_0} - (r_1\hat{\bs{\beta}}_0+r_0\hat{\bs{\beta}}_1)^\top \left(\sum_i Z_i\bs{X}_i \frac{n}{n_1n_0}\right).
\]
Expanding the third term in the expression, we get
\begin{align*}
    (r_1\hat{\bs{\beta}}_0+r_0\hat{\bs{\beta}}_1)^\top \left(\sum_i Z_i\bs{X}_i \frac{n}{n_1n_0}\right) &= \left(\sum_i Z_i\bs{X}_i^\top \frac{n}{n_1n_0}\right)\bs{S}_{\bs{X}}^{-2}\left(r_1\bs{s}_{\bs{X,Y(0)}}+r_0\bs{s}_{\bs{X,Y(1)}}\right).
\end{align*}
We define 
\[
M_1 = r_0\left(\sum_i Z_i\bs{X}_i^\top \frac{n}{n_1n_0}\right)\bs{S}_{\bs{X}}^{-2}\bs{s}_{\bs{X,Y(1)}},\quad M_2 = r_1\left(\sum_i Z_i\bs{X}_i^\top \frac{n}{n_1n_0}\right)\bs{S}_{\bs{X}}^{-2}\bs{s}_{\bs{X,Y(0)}}.
\]
We now analyze the two terms $M_1$ and $M_2$.

For $M_1$, we write it as
\begin{align*}
    M_1 =& r_0\left(\sum_i Z_i\bs{X}_i^\top \frac{n}{n_1n_0}\right)\bs{S}_{\bs{X}}^{-2}\bs{s}_{\bs{X,Y(1)}}\\
    =&r_0\left(\sum_i Z_i\bs{X}_i^\top \frac{n}{n_1n_0}\right)\bs{S}_{\bs{X}}^{-2}\left(\frac{1}{n_1-1}\sum_i Z_i \bs{X}_i(Y_i(1)-\bar{Y}_1)\right)\\
    =&\frac{n-1}{(n_1-1)n_1}\bs{Z}^\top \bs{A}(1) \bs{Z} +\frac{n-1}{(n_1-1)n_1}\bs{Z}^\top \bs{H} \bs{Z} (\bar{Y}(1)-\bar{Y}_1) =: M_{11}+M_{12}. 
\end{align*}
For $M_{11}$, by the definition of $\tilde{\bs{A}}(1)$ and the fact that
\[
\bs{Z}^\top (\bs{I}-\bs{1}\bs{1}^\top/n) \bs{Z} = \frac{n_1n_0}{n},
\]
we decompose it as
\begin{align*}
    M_{11} =& \frac{n-1}{(n_1-1)n_1}\bs{Z}^\top \bs{A}(1) \bs{Z} =\frac{n-1}{(n_1-1)n_1}\left(\bs{Z}^\top \tilde{\bs{A}}(1) \bs{Z}+\frac{\sum_i H_{ii}\left(Y_i(1)-\bar{Y}(1)\right)}{n-1} \frac{n_1n_0}{n} \right)\\
    =:&M_{111}+M_{112}.
\end{align*}
For $M_{111}$, we further expand $\bs{Z}^\top \tilde{\bs{A}}(1) \bs{Z}$.  Applying Lemma~\ref{lem:equality-td-A} and using the fact $\sum_{[i,j]}\tilde{A}_{ij}(z) = 0$ repeatedly, we obtain that
\begin{align*}
  &\bs{Z}^\top \tilde{\bs{A}}(1) \bs{Z} = \sum_{i} Z_i s_i(1) + \sum_{[i,j]} Z_i Z_j  \tilde{A}_{ij}(1) \\
  &= \sum_{i} Z_i s_i(1) +\sum_{[i,j]} (Z_i-r_1)(Z_j-r_1) \tilde{A}_{ij}(1) + \sum_{[i,j]} (Z_i-r_1)r_1 \tilde{A}_{ij}(1) \\
  & \qquad + \sum_{[i,j]} (Z_j-r_1)r_1 \tilde{A}_{ij}(1) + \sum_{[i,j]} r_1^2 \tilde{A}_{ij}(1)\\
  &=\sum_{i} (Z_i-r_1) s_i(1) + \sum_{[i,j]} (Z_i-r_1)(Z_j-r_1) \tilde{A}_{ij}(1) + \sum_{i}r_1 (Z_i-r_1)(d_i(1)  - s_i(1)) \\
  &\qquad + \sum_{j}r_1 (Z_j-r_1)(-s_j(1))\\
& =\sum_{i} (Z_i-r_1) (s_i(1)+r_1 d_i(1)-2r_1 s_i(1)) + \sum_{[i,j]} (Z_i-r_1)(Z_j-r_1) \tilde{A}_{ij}(1)\\
&= \sum_{i} (Z_i-r_1) (s_i(1)+r_1 d_i(1)-2r_1 s_i(1)) + \sum_{[i,j]} (Z_i-r_1)(Z_j-r_1) {A}_{ij}(1)+\Op(1),
\end{align*}
where in the last step, we used that 
\[
\sum_{[i,j]}({Z}_i-r_1)({Z}_j-r_1) = -\sum_{i}({Z}_i-r_1)^2= O(n),
\]
 and that by Lemma~\ref{lem:order-of-sumi-hii_yi}, 
\[
A_{ij}(1)-\tilde{A}_{ij}(1) = -\frac{\sum_i H_{ii}\left(Y_i(1)-\bar{Y}(1)\right)}{(n-1)n} = O(n^{-1}),\quad \forall i\ne j.
\]
Moreover, by Lemma~\ref{lem:ZHZ-order-A(z)-order}, $\bs{Z}^\top \tilde{\bs{A}}(1) \bs{Z} = \op(n).$
Thus, we obtain that 
\begin{align*}
    M_{111} &= \left(\frac{n}{n_1^2}+ O(n^{-2})\right)\bs{Z}^\top \tilde{\bs{A}}(1) \bs{Z} = \frac{n}{n_1^2}\bs{Z}^\top \tilde{\bs{A}}(1) \bs{Z} + \op(n^{-1})\\
    &=\frac{1}{nr_1^2}\sum_{i} (Z_i-r_1) (s_i(1)+r_1 d_i(1)-2r_1 s_i(1)) \\
    &\quad + \frac{1}{nr_1^2} \sum_{[i,j]} (Z_i-r_1)(Z_j-r_1) {A}_{ij}(1)+\Op(n^{-1}).
\end{align*}
On the other hand, for $M_{112}$, we use Lemma~\ref{lem:order-of-sumi-hii_yi} to get that
\begin{align*}
   M_{112} &= \left(\frac{n_0}{n_1}+O(n^{-1})\right)\frac{\sum_i H_{ii}\left(Y_i(1)-\bar{Y}(1)\right)}{n} \\
   &=  \frac{r_0}{r_1}\frac{\sum_i H_{ii}\left(Y_i(1)-\bar{Y}(1)\right)}{n}+O(n^{-1}). 
\end{align*}
For $M_{12}$, we see that
\begin{align}
\label{eq:decompose-Z-H-Z}
    \bs{Z}^\top \bs{H} \bs{Z} = \bs{Z}^\top \tilde{\bs{H}} \bs{Z} + \frac{\sum_i H_{ii}}{n-1}\frac{n_1n_0}{n} = \bs{Z}^\top \tilde{\bs{H}} \bs{Z} + \frac{\alpha n_1n_0}{n-1},
\end{align}
with which we can decompose $M_{12}$ as
 \begin{align}
M_{12}=\frac{n-1}{(n_1-1)n_1}\bs{Z}^\top \tilde{\bs{H}} \bs{Z} (\bar{Y}(1)-\bar{Y}_1) + \frac{\alpha n_0}{n_1-1}(\bar{Y}(1)-\bar{Y}_1)=:M_{121}+M_{122}. 
 \end{align}
By Lemmas~\ref{lem:order-sample-mean} and \ref{lem:ZHZ-order-A(z)-order}, we have that 
 \[
 M_{121} = O(n^{-1})\op(n)\Op(n^{-1/2 })=\op(n^{-1/2}).
 \]
 For $M_{122}$, we can derive that
 \[
 M_{122} = \left(\frac{\alpha r_0}{r_1}+O(n^{-1})\right)(\bar{Y}(1)-\bar{Y}_1) = \frac{\alpha r_0}{r_1}(\bar{Y}(1)-\bar{Y}_1)+\Op(n^{-3/2}).
 \]
Combining the above results, we obtain that 
\begin{align}
    M_{1} &= \frac{r_0}{r_1}\frac{\sum_i H_{ii}\left(Y_i(1)-\bar{Y}(1)\right)}{n} \nonumber \\
    &+ \frac{1}{nr_1^2}\sum_{i} (Z_i-r_1) \left[s_i(1)+r_1 d_i(1)-2r_1 s_i(1)-\alpha r_0(Y_i(1)-\bar{Y}(1))\right] \nonumber \\ 
& +\frac{1}{nr_1^2} \sum_{[i,j]} (Z_i-r_1)(Z_j-r_1) {A}_{ij}(1)+\op(n^{-1/2}). \label{eq:eq12}
\end{align}

Now, notice that
\begin{align*}
    &M_1 = r_0\left(\sum_{i:Z_i=1} \bs{X}_i^\top/n_1-\sum_{i:Z_i=0} \bs{X}_i^\top/n_0\right)\bs{S}_{\bs{X}}^{-2}\bs{s}_{\bs{X,Y(1)}},\\
    &M_2 = -r_1\left(\sum_{i:Z_i=0} \bs{X}_i^\top/n_0-\sum_{i:Z_i=1} \bs{X}_i^\top/n_1\right)\bs{S}_{\bs{X}}^{-2}\bs{s}_{\bs{X,Y(0)}}.
\end{align*}
So, similar arguments also apply to $M_2$. By symmetry, replacing $Z_i$ with $1-Z_i$, replacing the treatment-group-specific quantities with their control-group analogues in the formula of \eqref{eq:eq12}, and multiplying with a negative sign, we obtain that 
\begin{align*}
    M_{2} &= -\frac{r_1}{r_0}\frac{\sum_i H_{ii}\left(Y_i(0)-\bar{Y}(0)\right)}{n}\\
    &+ \frac{1}{nr_0^2}\sum_{i} (Z_i-r_1) \left[s_i(0)+r_0 d_i(0)-2r_0 s_i(0)-\alpha r_1(Y_i(0)-\bar{Y}(0))\right]\\
& -\frac{1}{nr_0^2} \sum_{[i,j]} (Z_i-r_1)(Z_j-r_1) {A}_{ij}(0)+\op(n^{-1/2}).
\end{align*}

Finally, the conclusion follows immediately from the equation
\begin{align*}
    \hat{\tau}_\adj-\bar{\tau} & = \frac{\sum_i Z_i(Y_i(1)-\bar{Y}(1))}{nr_1}-\frac{\sum_i (1-Z_i)( Y_i(0)-\bar{Y}(0))}{n r_0} - M_1-M_2 \\
    & = \frac{1}{n} \sum_{i=1}^n (Z_i - r_1) \left(\frac{Y_i(1) - \bar{Y}(1)}{r_1} + \frac{Y_i(0) - \bar{Y}(0)}{r_0}\right) - M_1 - M_2,
\end{align*}
and that  $e_i(z) = Y_i(z) - \bar{Y}(z) - d_i(z)$ for $z \in \{0, 1\}$.
\end{proof}

As a direct consequence of Proposition~\ref{proposition:asymptotic-equivalance-htauadj}, the bias term  is 
\[
  b = -\frac{r_1 r_0}{n} \sum_{i = 1}^n H_{ii} \left(\frac{Y_i(1) - \bar{Y}(1)}{r_1^2} - \frac{Y_i(0) - \bar{Y}(0)}{r_0^2}\right).
\]
Recall that we estimate the bias via (see also~\eqref{eq:db})
  \begin{align*}
    \hat{b}=:- r_1 r_0\left(\frac{1}{n_1}\sum_{i: Z_i = 1} H_{ii} \frac{(Y_i - \bar{Y}_1)}{r_1^2} - \frac{1}{n_0}\sum_{i: Z_i = 0} H_{ii} \frac{(Y_i - \bar{Y}_0)}{r_0^2}\right).
\end{align*}
We apply the following proposition to characterize $\hat{b}$:
\begin{proposition}
  \label{proposition:decompose-hat-b}
  If Assumptions~\ref{assumption:positive-limit-of-assigned-proportion}--\ref{assumption:2-moment-of-finite-population-for-y} hold, then we have that 
\begin{align*}
  \hat{b} =&~ b - \frac{1}{n}\sum_i ({Z}_i-r_1)\left\{r_0 \frac{s_i(1)}{r_1^2}-r_0\alpha \frac{Y_i(1)-\bar{Y}(1)}{r_1^2} + r_1 \frac{s_i(0)}{r_0^2}-r_1\alpha\frac{Y_i(0)-\bar{Y}(0)}{r_0^2}\right\} \\
  &~+ \op(n^{-1/2}).
\end{align*}
\end{proposition}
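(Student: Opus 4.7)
The plan is to split $\hat b$ into its treatment and control contributions, treat them separately (they are structurally symmetric), and in each arm decompose the empirical sum around the population version of the bias, so that the discrepancy is captured by the two linear statistics in the statement.

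Write $\hat b = \hat b^{(1)}+\hat b^{(0)}$ with
\[
\hat b^{(1)} := -\frac{r_0}{n_1 r_1}\sum_{i:Z_i=1} H_{ii}(Y_i-\bar Y_1),\qquad \hat b^{(0)} := \frac{r_1}{n_0 r_0}\sum_{i:Z_i=0} H_{ii}(Y_i-\bar Y_0),
\]
and set $\tilde Y_i(z):=Y_i(z)-\bar Y(z)$. The analogous split of $b$ is $b=b_1+b_0$ with $b_1=-\frac{r_0}{nr_1}\sum_i H_{ii}\tilde Y_i(1)$ and $b_0=\frac{r_1}{nr_0}\sum_i H_{ii}\tilde Y_i(0)$. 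I will carry out the derivation for $\hat b^{(1)}$; the control case is identical after the substitution $1-Z_i-r_0=-(Z_i-r_1)$.

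For $\hat b^{(1)}$, use $Y_i-\bar Y_1=\tilde Y_i(1)-(\bar Y_1-\bar Y(1))$ on $\{Z_i=1\}$ to split it into two pieces. \textbf{First piece} (the ``main term''): write $Z_i=r_1+(Z_i-r_1)$ inside $\sum_i Z_i H_{ii}\tilde Y_i(1)$. The $r_1$-part produces exactly $b_1$. For the $(Z_i-r_1)$-part, use the identity $H_{ii}\tilde Y_i(1)=s_i(1)+\frac1n\sum_j H_{jj}\tilde Y_j(1)$ (the definition of $s_i(z)$): the constant shift multiplies $\sum_i(Z_i-r_1)=0$ and vanishes, leaving $-\frac{r_0}{nr_1^2}\sum_i (Z_i-r_1)s_i(1)$. \textbf{Second piece} (the ``centering term''): factor $(\bar Y_1-\bar Y(1))$ out of $\frac{r_0}{n_1 r_1}(\bar Y_1-\bar Y(1))\sum_{i:Z_i=1}H_{ii}$ and decompose $\sum_{i:Z_i=1}H_{ii}=r_1 p+\sum_i(Z_i-r_1)H_{ii}$. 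The deterministic part yields $\frac{r_0\alpha}{r_1}(\bar Y_1-\bar Y(1))$, which using $\bar Y_1-\bar Y(1)=\frac1{nr_1}\sum_i(Z_i-r_1)\tilde Y_i(1)$ becomes $\frac{r_0\alpha}{nr_1^2}\sum_i(Z_i-r_1)\tilde Y_i(1)$.

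The only term to bound as negligible is the cross-term $\frac{r_0}{n_1 r_1}(\bar Y_1-\bar Y(1))\sum_i(Z_i-r_1)H_{ii}$. By Lemma~\ref{lem:order-sample-mean} (with Assumption~\ref{assumption:2-moment-of-finite-population-for-y}), $\bar Y_1-\bar Y(1)=\Op(n^{-1/2})$. For the leverage fluctuation, Lemma~\ref{lem:variance-of-swr-mean} applied to $\{H_{ii}\}_{i=1}^n$, combined with the elementary bound $\sum_i(H_{ii}-\alpha)^2\le\sum_i H_{ii}^2-\alpha p\le p(1-\alpha)$, gives $\sum_i(Z_i-r_1)H_{ii}=\Op(\sqrt p)$. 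Hence the cross-term is $\Op(\sqrt p / n^{3/2})=\op(n^{-1/2})$, which is the place where we use $p<n$ but nothing stronger, so the argument holds throughout the moderately high-dimensional regime.

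Combining the three pieces gives
\[
\hat b^{(1)}=b_1-\frac{r_0}{nr_1^2}\sum_i(Z_i-r_1)s_i(1)+\frac{r_0\alpha}{nr_1^2}\sum_i(Z_i-r_1)\tilde Y_i(1)+\op(n^{-1/2}),
\]
and the symmetric computation for $\hat b^{(0)}$ produces the companion identity with $r_1\leftrightarrow r_0$ and $s_i(1),\tilde Y_i(1)$ replaced by $s_i(0),\tilde Y_i(0)$ (the extra sign from $(1-Z_i)-r_0=-(Z_i-r_1)$ cancels against the opposite sign carried by $\hat b^{(0)}$ relative to $\hat b^{(1)}$). Summing the two identities yields exactly the claimed decomposition. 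The main obstacle is purely bookkeeping: tracking the signs when passing from the treatment arm to the control arm and making sure the cross-term bound survives without needing $p=o(n)$; everything else reduces to algebra and the two standard lemmas cited above.
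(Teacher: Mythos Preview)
Your proof is correct and follows essentially the same route as the paper. The only organizational difference is that the paper first replaces $H_{ii}$ by $H_{ii}-\alpha$ (harmless because $\sum_{i:Z_i=z}(Y_i-\bar Y_z)=0$) and then uses a single difference-of-means identity to handle both arms at once, whereas you treat the arms separately and extract the $\alpha$-part from $\sum_{i:Z_i=z}H_{ii}$ afterwards; the negligible cross-term is bounded by the same $\Op(n^{-1/2})\times \Op(n^{-1/2})$ argument in both cases (your remark that this step ``uses $p<n$'' is slightly overstated, since $\sqrt{p}/n^{3/2}=o(n^{-1/2})$ requires only $p=o(n^2)$).
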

\begin{proof}[Proof of Proposition~\ref{proposition:decompose-hat-b}]
    We see that
    \begin{align*}
        \hat{b} = - r_1 r_0\left(\frac{1}{n_1}\sum_{i: Z_i = 1} (H_{ii}-\alpha )\frac{Y_i - \bar{Y}_1}{r_1^2} - \frac{1}{n_0}\sum_{i: Z_i = 0} (H_{ii} -\alpha)\frac{Y_i - \bar{Y}_0}{r_0^2}\right) =: M_1+M_2,
    \end{align*}
    where 
    \begin{align*}
        M_1 =  - r_1 r_0\left(\frac{1}{n_1}\sum_{i: Z_i = 1} (H_{ii}-\alpha )\frac{\bar{Y}(1) - \bar{Y}_1}{r_1^2} - \frac{1}{n_0}\sum_{i: Z_i = 0} (H_{ii} -\alpha)\frac{\bar{Y}(0) - \bar{Y}_0}{r_0^2}\right),\\
        M_2 = - r_1 r_0\left(\frac{1}{n_1}\sum_{i: Z_i = 1} (H_{ii}-\alpha )\frac{Y_i - \bar{Y}(1)}{r_1^2} - \frac{1}{n_0}\sum_{i: Z_i = 0} (H_{ii} -\alpha)\frac{Y_i - \bar{Y}(0)}{r_0^2}\right).
    \end{align*}
 For any $\{a_1, \ldots, a_n\}$ and $\{b_1, \ldots, b_n\}$ with empirical averages $\bar{a}$ and $\bar{b}$, there is 
    \begin{align*}
            \sum_{i:Z_i=1} a_i/n_1-\sum_{i:Z_i=0} b_i/n_0 =& \bar{a} - \bar{b} +\frac{1}{n}\sum_i Z_i \left( \frac{a_i - \bar{a}}{r_1}  + \frac{b_i - \bar{b}}{r_0} \right)\\
            =& \bar{a} - \bar{b} +\frac{1}{n}\sum_i (Z_i-r_1) \left( \frac{a_i - \bar{a}}{r_1}  + \frac{b_i - \bar{b}}{r_0} \right).
    \end{align*}
Applying the above equation  with $a_i$ and $b_i$ replaced by $(H_{ii}-\alpha )\frac{Y_i(1) - \bar{Y}(1)}{r_1^2}$ and $(H_{ii} -\alpha)\frac{Y_i(0) - \bar{Y}(0)}{r_0^2}$, respectively,
    we obtain that
    \[
    M_2 = b -\frac{1}{n}\sum_i ({Z}_i-r_1)\left\{r_0 \frac{s_i(1)}{r_1^2}-r_0\alpha \frac{Y_i(1)-\bar{Y}(1)}{r_1^2} + r_1 \frac{s_i(0)}{r_0^2}-r_1\alpha\frac{Y_i(0)-\bar{Y}(0)}{r_0^2}\right\}.
    \]
    It suffice to show that $M_1 = \op(n^{-1/2})$.
    Applying Lemma~\ref{lem:order-sample-mean} with $y_i = H_{ii}$ or $Y_i(z)$, we get $\sum_{i:Z_i=z} (H_{ii}-\alpha)/n_z = \Op(n^{-1/2})$ and  $\bar{Y}_z-\bar{Y}(z)= \Op(n^{-1/2})$, which implies that
    \[
    M_1 = \Op(n^{-1/2})\Op(n^{-1/2}) = \op(n^{-1/2}).
    \]
    This concludes the proof.
\end{proof}
Combining  Propositions~\ref{proposition:asymptotic-equivalance-htauadj} and \ref{proposition:decompose-hat-b}, it is straightforward to derive the following result.
\begin{proposition}
  \label{proposition:decompose-hat-taudb}
  If Assumptions~\ref{assumption:positive-limit-of-assigned-proportion}--\ref{assumption:2-moment-of-finite-population-for-y} and the first half of \Cref{assumption:lindeberg-type-condition-p-o(n)} hold, then we have that 
\begin{align*}
  \htaudb-\bar{\tau} &=  n^{-1}\sum_i ({Z}_i-r_1) \left\{\frac{e_i(1)}{r_1}+ \frac{e_i(0)}{r_0}+ \frac{s_i(1)}{r_1}+\frac{s_i(0)}{r_0}\right\}  \\
  &\qquad -   n^{-1}\sum_{[i,j]} ({Z}_i-r_1)({Z}_j-r_1) \left(\frac{{A}_{ij}(1)}{r_1^2}-\frac{{A}_{ij}(0)}{r_0^2}\right) + \op(n^{-1/2}).
\end{align*}
\end{proposition}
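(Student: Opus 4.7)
The plan is to view $\htaudb - \bar\tau$ as the difference $(\htauadj - \bar\tau) - \hat b$, where $\hat b$ is the bias estimator defined through the debiasing step in~\eqref{eq:db}. This identity is immediate from the definitions of $\htaudb$ and $\hat b$. With this splitting in hand, Proposition~\ref{proposition:asymptotic-equivalance-htauadj} rewrites $\htauadj - \bar\tau$ as $b + n^{-1}\sum_i (Z_i - r_1) c_i - n^{-1}\sum_{[i,j]}(Z_i - r_1)(Z_j - r_1)\bigl(A_{ij}(1)/r_1^2 - A_{ij}(0)/r_0^2\bigr) + \op(n^{-1/2})$, and Proposition~\ref{proposition:decompose-hat-b} rewrites $\hat b$ as $b$ plus an explicit linear-in-$(Z_i - r_1)$ correction, again up to $\op(n^{-1/2})$.

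Subtracting the two expansions, the deterministic bias $b$ cancels exactly, the quadratic statistic is inherited unchanged from Proposition~\ref{proposition:asymptotic-equivalance-htauadj}, and the two $\op(n^{-1/2})$ remainders combine into a single $\op(n^{-1/2})$ residual. All that remains is to simplify the coefficient of $(Z_i - r_1)$, which is $c_i$ plus the per-unit coefficient appearing in the $\hat b$ correction. Substituting the explicit form of $c_i$ given in Proposition~\ref{proposition:asymptotic-equivalance-htauadj}, the $\alpha$-scaled pieces $\alpha r_0(Y_i(1) - \bar Y(1))/r_1^2$ and $\alpha r_1(Y_i(0) - \bar Y(0))/r_0^2$ cancel exactly against their counterparts from the $\hat b$ correction. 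For the $s_i(z)$ contributions, the coefficient $-(r_0 - r_1)$ from $c_i$ combines with $r_0$ from the $\hat b$ correction to give $r_1$, producing $s_i(1)/r_1$; symmetrically, $(r_0 - r_1) + r_1 = r_0$ produces $s_i(0)/r_0$. The unchanged piece $e_i(1)/r_1 + e_i(0)/r_0$ from $c_i$ survives, yielding the linear coefficient stated in the proposition.

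No new probabilistic analysis is required beyond what was already deployed in the proofs of Propositions~\ref{proposition:asymptotic-equivalance-htauadj} and~\ref{proposition:decompose-hat-b}; the only technical care lies in the sign bookkeeping when collapsing the $s_i(z)$ coefficients, which is why I would write out the cancellation explicitly term by term rather than relying on symmetry shortcuts. In that sense the result is essentially a corollary of the two preceding propositions, and the decomposition~\eqref{eq:dbdecomp} used throughout Sections~\ref{sec:reg}--\ref{sec:highdim} is recovered directly from the final form.
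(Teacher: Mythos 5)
Your proposal is correct and is exactly the paper's route: the paper derives this result by combining Propositions~\ref{proposition:asymptotic-equivalance-htauadj} and~\ref{proposition:decompose-hat-b}, cancelling the bias $b$, and collapsing the linear coefficients precisely as you do (the $\alpha$-scaled terms cancel and $-(r_0-r_1)+r_0=r_1$, $(r_0-r_1)+r_1=r_0$ give $s_i(1)/r_1$ and $s_i(0)/r_0$). Your term-by-term bookkeeping supplies the algebra the paper leaves as ``straightforward,'' so there is nothing to add.
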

\section{H\`{a}jek's coupling}
\label{sec:hajek-coupling}
In this section, we study H\`{a}jek's coupling for sampling without replacement. We prove the second-order H\`{a}jek's coupling which is \Cref{proposition:second-order-hajek-coupling}. Then we use it to prove that $\hat{\tau}_\db$ is asymptotically equal to the summation of several homogeneous sums which is \Cref{prop:decompose-taudb-with-Ti}.

For ease of presentation, throughout~\Cref{proposition:first-order-hajek-coupling,proposition:second-order-hajek-coupling}, we consider a finite population $\{y_i\}_{i\in [n]}$ with $\sum_{i = 1}^n y_i = 0$.  Let $A_{ij}$ be the $(i,j)$-th element of $\bs{A} := \bs{H}\diag\left\{y_1,\ldots,y_n\right\}$. Let $d_i$ be the $i$th element of $\bs{H}(y_1,\ldots,y_n)^\top$. 
We can see that 
\[
\sum_{j} \tilde{A}_{ij} = d_i;\quad \sum_{i} \tilde{A}_{ij} = 0.
\] 
Recall that $\bs{Z} = (Z_1,\ldots,Z_n)\in \{0,1\}^n$ is the indicator of a completely randomized experiment with $\sum_i Z_i = n_1$ and $n_1/n=r_1$. Let $\bs{T} = (T_1,\ldots, T_n)\in \{0,1\}^n$ be the indicator of Bernoulli random sampling with each element i.i.d. generated from Bernoulli random variable with probability $r_1$. Let $n_1^\prime = \sum_i T_i$ and $\mathcal{T} = \{i:T_i=1\}$. We assume the following coupling between $\bs{T} $ and $\bs{Z}$:
\begin{itemize}
    \item If $n_1^\prime = n_1$, $\bs{Z} = \bs{T}$,
    \item If $n_1^\prime > n_1$, we  select a random sample $\mathcal{D}$ of size $n_1^\prime-n_1$ in $\mathcal{T}$ and define $Z_i = 0$ for $i\in\mathcal{D}$ and $Z_i = T_i$ for $i\in [n]\backslash \mathcal{D}$,
     \item If $n_1^\prime < n_1$, we  select a random sample $\mathcal{D}$ of size $n_1-n_1^\prime$ in $[n]\backslash\mathcal{T}$ and define $Z_i = 1$ for $i\in\mathcal{D}$ and $Z_i = T_i$ for $i\in [n]\backslash \mathcal{D}$.
\end{itemize}
\begin{proposition}[First-order H\`{a}jek's coupling]
  \label{proposition:first-order-hajek-coupling}
  If Assumption~\ref{assumption:positive-limit-of-assigned-proportion} holds and $\sum_i y_i^2 = O(n)$, then we have that
  \[
  n^{-1/2}\sum_i (Z_i-T_i)y_i = \op(1) .
  \]
  \end{proposition}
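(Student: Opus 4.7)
The plan is to control
\[
W := \sum_{i=1}^n (Z_i - T_i)\, y_i
\]
via a conditional second-moment bound. The starting observation is that by the coupling construction, $Z_i$ and $T_i$ agree outside the random set $\mathcal{D}$, so
\[
W = \varepsilon \sum_{i \in \mathcal{D}} y_i, \qquad \varepsilon \in \{-1,+1\},\ |\mathcal{D}| = |n_1^\prime - n_1|,
\]
where the sign $\varepsilon$ is determined by whether $n_1^\prime > n_1$ or $n_1^\prime < n_1$. So $W$ is (up to a known sign) a sum of $y_i$ over a random subset of size equal to the deviation of the Bernoulli sample count from $n_1$.

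The main step is a symmetry claim that I would prove first: conditional on $n_1^\prime$, the set $\mathcal{D}$ is uniformly distributed among all size-$|n_1^\prime - n_1|$ subsets of $[n]$. Given $n_1^\prime = k$, the set $\mathcal{T}$ is uniform on size-$k$ subsets of $[n]$, and $\mathcal{D}$ is a uniform size-$|k-n_1|$ subset of either $\mathcal{T}$ or $[n]\setminus\mathcal{T}$, depending on the sign of $k-n_1$. A short counting check using $\binom{n-m}{k-m}/[\binom{n}{k}\binom{k}{m}] = 1/\binom{n}{m}$, or equivalently the permutation-invariance of the entire construction under relabellings of $[n]$, shows that this compound distribution is in fact uniform over all size-$|k-n_1|$ subsets of $[n]$.

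Once this is in place, the rest is routine. Since $\sum_i y_i = 0$, the SRSWOR formulas give $\E[W \mid n_1^\prime] = 0$ and
\[
\var(W \mid n_1^\prime) = \frac{|n_1^\prime - n_1|\,(n - |n_1^\prime - n_1|)}{n(n-1)} \sum_{i=1}^n y_i^2 \le \frac{|n_1^\prime - n_1|}{n-1}\sum_{i=1}^n y_i^2.
\]
Taking expectations, applying $\E|n_1^\prime - n_1| \le \sqrt{\var(n_1^\prime)} = \sqrt{n r_1(1-r_1)}$, and combining with $\sum_i y_i^2 = O(n)$ and Assumption~\ref{assumption:positive-limit-of-assigned-proportion}, yields $\E[W^2] = O(n^{1/2})$. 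Chebyshev's inequality then gives $W = \Op(n^{1/4})$, so $n^{-1/2} W = \op(1)$ as claimed. The main obstacle is the symmetry claim in the second paragraph; once that is settled, the conclusion is a one-line variance bound.
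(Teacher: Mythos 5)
Your proof is correct, but it takes a different route from the paper: the paper disposes of this proposition in one line by citing Lemma~A3(iii) of \citet{wang2022rerandomization} with $u_i = y_i$, whereas you give a self-contained argument. Your key observation --- that $Z_i - T_i$ is supported on $\mathcal{D}$ with a deterministic sign given $\operatorname{sign}(n_1'-n_1)$, and that conditional on $n_1'$ the set $\mathcal{D}$ is a uniform size-$|n_1'-n_1|$ subset of $[n]$ (your counting identity $\binom{n-m}{k-m}/[\binom{n}{k}\binom{k}{m}]=1/\binom{n}{m}$ checks out, and exchangeability gives it even more cheaply) --- reduces everything to the SRSWOR variance formula (the paper's Lemma~\ref{lem:variance-of-swr-mean}), the bound $\E|n_1'-n_1|\le (nr_1r_0)^{1/2}$, and Chebyshev, yielding the quantitative rate $n^{-1/2}W=\Op(n^{-1/4})$. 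This is in fact the same conditioning-on-$n_1'$, second-moment strategy the paper itself uses for the harder second-order coupling (Proposition~\ref{proposition:second-order-hajek-coupling}), so your argument has the advantage of making the appendix self-contained and of exhibiting an explicit rate, at the cost of re-proving a known lemma. One point worth making explicit in a write-up: your step $\E[W\mid n_1']=0$ uses the standing convention of that appendix section that $\sum_i y_i=0$; this centering is genuinely needed (for $y_i\equiv 1$ one has $W=n_1-n_1'$, which is of exact order $n^{1/2}$), so the hypothesis should be stated rather than left implicit.
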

  \begin{proof}[Proof of Proposition~\ref{proposition:first-order-hajek-coupling}]
      The proposition follows from Lemma~A3 (iii) of \cite{wang2022rerandomization} with $u_i = y_i$.
  \end{proof}

  \begin{proposition}[Second-order H\`{a}jek's coupling]
    \label{proposition:second-order-hajek-coupling}
      Under Assumption~\ref{assumption:positive-limit-of-assigned-proportion} and $\sum_i {y}_i^2 = O(n)$, we have
  \[
  n^{-1/2}\sum_{[i,j]} (Z_iZ_j-T_iT_j) \tilde{A}_{ij} = \op(1).
  \]
  \end{proposition}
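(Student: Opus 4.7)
The plan is to reduce the problem to a careful moment/conditional-expectation analysis exploiting the specific structure of the Hájek coupling. The key observation is that $\bs{Z}$ and $\bs{T}$ differ only on the small ``repair'' set $\mathcal{D}$ of size $|n_1' - n_1|$, and this deviation enters every product $Z_i Z_j - T_i T_j$ in a structured way. First I would decompose using $\delta_i := Z_i - T_i$ via the identity
\[
Z_i Z_j - T_i T_j = \delta_i \delta_j + \delta_i T_j + T_i \delta_j,
\]
which splits the LHS into three pieces. By construction of the coupling, $|\delta_i| = \mathbf{1}(i \in \mathcal{D})$ with a common sign on $\mathcal{D}$ (determined by the case), $\mathcal{D}$ is contained in $\mathcal{T}$ or its complement, and $|\mathcal{D}| = |n_1' - n_1| = O_{\Prob}(\sqrt{n})$ by the CLT for the binomial variable $n_1' \sim \mathrm{Bin}(n, r_1)$.

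Next I would condition on $\bs{T}$. Given $\bs{T}$ and $|\mathcal{D}| = k$, the set $\mathcal{D}$ is a simple random sample of size $k$ from $\mathcal{T}$ (or $[n]\setminus\mathcal{T}$), so conditional moments are explicit. For the mixed pieces $\sum_{[i,j]} \delta_i T_j \tilde{A}_{ij}$ and $\sum_{[i,j]} T_i \delta_j \tilde{A}_{ij}$, which reduce (up to sign) to $\sum_{i \in \mathcal{D}} \sum_{j \ne i} T_j \tilde{A}_{ij}$ and its transpose, I would first compute the $\bs{T}$-conditional expectation. Using Lemma~\ref{lem:equality-td-A} ($\sum_i \tilde{A}_{ij} = 0$ and $\sum_j \tilde{A}_{ij} = d_i$) together with Lemma~\ref{lem:variance-of-swr-mean}–style variance bounds for sums of weakly dependent Bernoullis, the expected magnitude of each $\sum_j T_j \tilde{A}_{ij}$ is of order $(\sum_j \tilde{A}_{ij}^2)^{1/2}$ after appropriate centering. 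Combining with $|\mathcal{D}| = O_{\Prob}(\sqrt{n})$ and the Frobenius estimate $\sum_{i,j} \tilde{A}_{ij}^2 \le \|\bs{A}\|_F^2 = \sum_j y_j^2 H_{jj} = O(n)$, these pieces turn out to be $o_{\Prob}(\sqrt{n})$.

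The main obstacle is the pure $\delta$--$\delta$ term $\sum_{[i,j]} \delta_i \delta_j \tilde{A}_{ij} = \sum_{i \ne j, i,j\in\mathcal{D}} \tilde{A}_{ij}$, where the naive operator-norm bound $|\mathbf{1}_{\mathcal{D}}^\top \tilde{A} \mathbf{1}_{\mathcal{D}}| \le |\mathcal{D}|\,\|\tilde{A}\|_2 = O_{\Prob}(\sqrt{n})\cdot O(\sqrt{n})$ only yields $O_{\Prob}(n)$, which is too weak. To fix this I would exploit the additional randomness of $\mathcal{D}$ given $\bs{T}$: its conditional mean equals $\tfrac{|\mathcal{D}|(|\mathcal{D}|-1)}{n_1'(n_1'-1)}\bigl(\bs{T}^\top \tilde{A}\bs{T} - \sum_i T_i \tilde{A}_{ii}\bigr)$, where the prefactor is $O_{\Prob}(1/n)$ and $\bs{T}^\top \tilde{A}\bs{T} = O_{\Prob}(\|\tilde{A}\|_F) = O_{\Prob}(\sqrt{n})$ by a direct variance computation for a Bernoulli quadratic form (using $\sum_{i}\tilde{A}_{ii}=0=\sum_{i}\tilde{A}_{ij}$). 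For the conditional variance, I would use the hypergeometric covariance formulas for $(\mathbf{1}(i\in\mathcal{D}))_i$ to bound $\mathrm{Var}(\sum_{i\ne j, i,j\in\mathcal{D}}\tilde{A}_{ij} \mid \bs{T})$ by something of order $|\mathcal{D}|^2/n_1'^2 \cdot \|\tilde{A}\|_F^2$ plus a lower-order correction, yielding $o_{\Prob}(n)$ in second moment.

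Finally I would collect the three pieces and apply Markov's inequality, noting that the case split ($n_1' > n_1$ vs. $< n_1$ vs. $=n_1$) is symmetric; the first-order coupling proposition~\ref{proposition:first-order-hajek-coupling} and the identity $\sum_i \tilde{A}_{ii}=0$ serve as scaffolding to center all intermediate Bernoulli sums so their variances reduce to controllable quantities of the form $\sum_{i,j}\tilde{A}_{ij}^2$, which is $O(n)$ under the stated assumption $\sum_i y_i^2 = O(n)$.
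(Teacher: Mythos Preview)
Your route differs from the paper's. The paper does not split via $\delta_i = Z_i - T_i$ nor condition on the full vector $\bs{T}$; instead it conditions \emph{only} on $n_1'$ and observes that, given $n_1'$, the difference $D := \sum_{[i,j]}\tilde{A}_{ij}(T_iT_j - Z_iZ_j)$ is equal in distribution to $\sum_{(i,j)\in\mathcal{S}}\tilde{A}_{v_iv_j}$ for a uniformly random permutation $v$ and a deterministic index set $\mathcal{S}$ of cardinality $O(n\Delta)$, $\Delta = |n_1'-n_1|$. Expanding $\E[D^2\mid n_1']$ directly and repeatedly using the row/column centering $\sum_i\tilde{A}_{ij}=0$, $\sum_j\tilde{A}_{ij}=d_i$, $\sum_{[i,j]}\tilde{A}_{ij}=0$, every cross term collapses to one of five scalars $M_1,\dots,M_5$ (each $O(n)$), giving $\E[D^2\mid n_1'] \le C\Delta\, n^{-1}\sum_t M_t$ and hence $\E D^2 = O(\sqrt{n})$, so $D=\Op(n^{1/4})$. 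A single unconditional second-moment bound thus replaces your three pieces and the hypergeometric covariance machinery; the price is a somewhat tedious enumeration of two-, three-, and four-index configurations, but no layer of $\bs{T}$-conditional randomness has to be unwound.

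One place your sketch would not go through as written: for the mixed pieces you bound ``each $\sum_j T_j\tilde{A}_{ij}$'' by $(\sum_j\tilde{A}_{ij}^2)^{1/2}$ and then ``combine with $|\mathcal{D}|=\Op(\sqrt{n})$'', but crudely summing $|\mathcal{D}|$ such terms (or applying Cauchy--Schwarz) only yields $\Op(n^{3/4})$, not $\op(\sqrt{n})$. The mixed piece needs exactly the SRS conditional-variance argument you correctly apply to the $\delta\delta$ term: given $\bs{T}$, $\sum_{i\in\mathcal{D}}u_i$ with $u_i:=\sum_{j\ne i}T_j\tilde{A}_{ij}$ is a simple-random-sample linear statistic with conditional mean $\tfrac{k}{n_1'}\sum_{[i,j]}T_iT_j\tilde{A}_{ij}=\Op(1)$ and conditional variance $O(k/n_1')\sum_i u_i^2=\Op(\sqrt{n})$, after which Chebyshev closes it. Note also that $\E u_i = r_1(d_i-\tilde{A}_{ii})\ne 0$, so bounding $\sum_i u_i^2$ requires $\sum_i d_i^2$ and $\sum_i\tilde{A}_{ii}^2$ in addition to the off-diagonal Frobenius norm---precisely the quantities $M_1,\dots,M_5$ that drive the paper's direct computation.
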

  \begin{proof}[Proof of Proposition~\ref{proposition:second-order-hajek-coupling}]
  Let $v= (v_1,\ldots,v_n)^\top$ be a uniform at random permutation of $\{1,\ldots,n\}$ and is independent from $\tdm$. Write $D := \sumtwo \tda{1}{2} (\Ti{1} \Ti{2}-Z_{i_1} Z_{i_2})$; apparently $\E D = 0$. We now bound $\E[D^2]$. First, from the coupling between $\bs{T}$ and $\bs{Z}$, by conditioning on $\tdm$, the random variable $D$ is equal in distribution to
  \[
  \sum_{i=n_1+1}^{\tdm}\sum_{j=1}^{n_1} \tilde{A}_{v_iv_j}+\sum_{i=1}^{n_1}\sum_{j=n_1+1}^{\tdm}\tilde{A}_{v_iv_j}+\sum_{i=n_1+1}^{\tdm} \sum_{j=n_1+1}^{\tdm} \tilde{A}_{v_iv_j}(1-\delta_{ij})
  \]
  if $\tdm > n_1$, 
  \[
   -\sum_{i=\tdm+1}^{n_1}\sum_{j=1}^{\tdm} \tilde{A}_{v_iv_j}-\sum_{i=1}^{\tdm}\sum_{j=\tdm+1}^{n_1}\tilde{A}_{v_iv_j}-\sum_{i=\tdm+1}^{n_1} \sum_{j=\tdm+1}^{n_1} \tilde{A}_{v_iv_j}(1-\delta_{ij})
  \]
  if $\tdm < n_1$, 
  and $0$ if $\tdm = n_1$.
  
  We first consider $D^2$ conditioning on some $\tdm>n_1$. Under this event, we can write $D = \sum_{(i,j) \in \mathcal{S}} \tilde{A}_{v_i v_j}$, where 
  \begin{align*}
      &\mathcal{S} := \{(i,j):~n_1+1 \leq i\leq n_1^\prime, ~1\leq j \leq n_1\}\cup \{(i,j):~ 1\leq i \leq n_1,~n_1+1 \leq j\leq n_1^\prime\}   \cup\\ &\qquad\qquad\qquad\qquad\qquad\qquad\{(i,j): n_1+1 \leq i,j\leq n_1^\prime,i\ne j\}. 
  \end{align*}
  Then, we have that
\begin{align*}
    & D^2 = \sum_{i\ne j,(i,j)\in \mathcal{S}} \tilde{A}_{v_iv_j}^2 + \sum_{i\ne j,(i,j)\in \mathcal{S}} \tilde{A}_{v_iv_j}\tilde{A}_{v_jv_i} + \sum_{i\ne j\ne k,(i,j),(k,j)\in \mathcal{S}} \tilde{A}_{v_iv_j}\tilde{A}_{v_kv_j} + \\
    &+ \sum_{i\ne j\ne k, (i,j),(j,k)\in \mathcal{S}} \tilde{A}_{v_iv_j}\tilde{A}_{v_j v_k} + \sum_{i \ne j\ne k, (i,j),(i,k)\in \mathcal{S}} \tilde{A}_{v_iv_j}\tilde{A}_{v_i v_k} +\sum_{i\ne j\ne k\ne l, (i,j),(k,l)\in \mathcal{S}} \tilde{A}_{v_iv_j}\tilde{A}_{v_k v_l}.
\end{align*}
For the first term, we have that for each index,
  \[
  \E \tilde{A}_{v_iv_j}^2 = \frac{\sumtwo\tda{1}{2}^2}{n(n-1)}.
  \]
  Similarly, we have 
\begin{align*}
    &\E \tilde{A}_{v_iv_j}\tilde{A}_{v_jv_i} = \frac{\sumtwo\tda{1}{2}\tda{2}{1}}{n(n-1)},\quad  \E\tilde{A}_{v_iv_j}\tilde{A}_{v_kv_j}=\frac{\sumi{1}{3}\tda{1}{2}\tda{3}{2}}{n(n-1)(n-2)}, \\ & \E\tilde{A}_{v_iv_j}\tilde{A}_{v_j v_k}=\frac{\sumi{1}{3}\tda{3}{2}\tda{2}{1}}{n(n-1)(n-2)},  \quad\E\tilde{A}_{v_iv_j}\tilde{A}_{v_i v_k}= \frac{\sumi{1}{3}\tda{2}{3}\tda{2}{1}}{n(n-1)(n-2)},\\
    &\E\tilde{A}_{v_iv_j}\tilde{A}_{v_k v_l}=\frac{\sumi{1}{4}\tda{1}{2}\tda{3}{4}}{n(n-1)(n-2)(n-3)}.
\end{align*}
To understand the order of the above terms, we introduce $M_1, \ldots, M_5$ as
\begin{align*}
  &M_1:=\sum_{i_1} \tda{1}{1}^2,\quad M_2:=\sumtwo \tda{1}{2}^2,\quad 
  M_3:=\bigg|\sumtwo \tda{1}{2}\tda{2}{1}\bigg|,\\
  &M_4:=\sum_{i_1} \di{1}^2, \quad M_5:=\bigg|\sum_{i_1}\di{1} \tda{1}{1}\bigg|,
\end{align*}
 Now, by repeatedly applying $
\sum_{j} \tilde{A}_{ij} = d_i$ and $ \sum_{i} \tilde{A}_{ij} = 0$, we obtain that
\begin{align*}
  \sumi{1}{3}\tda{1}{2}\tda{1}{3} &=\sumtwo \di{1}\tda{1}{2} -\sumtwo (\tda{1}{2}\tda{1}{1}+\tda{1}{2}\tda{1}{2}) \\
  &=  \sumtwo\di{1}\tda{1}{2} -\sumtwo \tda{1}{2}\tda{1}{1} -M_2 \\
  &= \sumtwo \di{1}\tda{1}{2} -\sum_{i_1} (\di{1}\tda{1}{1}- \tda{1}{1}\tda{1}{1}) - M_2 \\
  &= \sumtwo \di{1}\tda{1}{2} + O(M_1+M_2+M_5)\\
&= \sum_{i_1}(\di{1}^2-\di{1}\tda{1}{1})+O(M_1+M_2+M_5) = O(M_1+M_2+M_4+M_5);\\
 \sumi{1}{3}\tda{1}{2}\tda{2}{3} &= \sumtwo\tda{1}{2}\di{2} -\sumtwo (\tda{1}{2}\tda{2}{1}+\tda{1}{2}\tda{2}{2}) \\
 &= \sumtwo\tda{1}{2}\di{2} -M_3+ \sum_{i_2} \tda{2}{2}^2 \\
& =\sumtwo\tda{1}{2}\di{2} + O(M_1+M_3) \\
&= -\sum_{i_2}\tda{2}{2}\di{2} + O(M_1+M_3) = O(M_1+M_3+M_5);\\
\sumi{1}{3}\tda{1}{2}\tda{3}{2} &=  -\sumtwo (\tda{1}{2}\tda{1}{2}+\tda{1}{2}\tda{2}{2}) = O(M_1+M_2).
\end{align*}
Applying $
\sum_i d_i = 0$ and $\sum_i \tilde{A}_{ii} = 0,$, we obtain that 
\begin{align*}
    &\sumi{1}{4}\tda{1}{2}\tda{3}{4} = \sumi{1}{3}\tda{1}{2}\di{3}-\sumi{1}{3}(\tda{1}{2}\tda{3}{1}+\tda{1}{2}\tda{3}{2}+\tda{1}{2}\tda{3}{3})\\
    &=-\sumtwo(\tda{1}{2}\di{1}+\tda{1}{2}\di{2})-\sumi{1}{3}(\tda{1}{2}\tda{2}{3}+\tda{1}{2}\tda{3}{2}+\tda{1}{2}\tda{3}{3})\\
    &= -\sumtwo(\tda{1}{2}\di{1}+\tda{1}{2}\di{2})-\sumi{1}{3}\tda{1}{2}\tda{3}{3} -\sumi{1}{3}(\tda{1}{2}\tda{2}{3}+\tda{1}{2}\tda{3}{2})\\
    &=-\sumtwo(\tda{1}{2}\di{1}+\tda{1}{2}\di{2})+\sumtwo(\tda{1}{2}\tda{1}{1}+\tda{1}{2}\tda{2}{2}) \\
    &\quad -\sumi{1}{3}(\tda{1}{2}\tda{2}{3}+\tda{1}{2}\tda{3}{2}).
\end{align*}
Then, using the derivations in the analysis of $\sumi{1}{3}\tda{1}{2}\tda{1}{3}$, $\sumi{1}{3}\tda{1}{2}\tda{2}{3}$ and $\sumi{1}{3}\tda{1}{2}\tda{3}{2}$, we get that 
\[
\sumi{1}{4}\tda{1}{2}\tda{3}{4} = O(M_1 + \cdots + M_5).
\]
On the other hand, writing $\Delta := |\tdm - n_1|$, we have
\begin{align*}
    & |\mathcal{S}| \le 2 n \Delta, \quad |\{(i, j, k): i \neq j \neq k, (i,j), (k,j) \in \mathcal{S}\}| \le 2 n^2 \Delta,\\
    & |\{(i,j,k,l): i \neq j \neq k \neq l, (i,j), (k, l) \in \mathcal{S}\}| \le 4 n^2 \Delta^2 \le 4 n^3 \Delta.
\end{align*}
Putting together, we obtain that when $\tdm > n_1$, there exists a universal constant $C>0$ which does not depend on $\Delta$ such that
\[
\E[D^2 \mid \tdm] \le C \Delta n^{-1} \sum_{t = 1}^5 M_t.
\]
With similar arguments, we can obtain the same bound for $\E[D^2 \mid \tdm]$ when $\tdm < n_1$. Finally, with the law of total expectation, we obtain that 
\[
\E[D^2] \le C \E[\Delta] n^{-1} (M_1+M_2+M_3+M_4+M_5).
\]
Now, by Assumption~\ref{assumption:positive-limit-of-assigned-proportion}, we can bound $\E\Delta$ as
\[
\E\Delta \le (\E\Delta^2)^{1/2} = \left(nr_1r_0\right)^{1/2} = O(n^{1/2}).
\]
Combining these results, we get
\begin{align*}
  \E D^2 =O\left(n^{-1/2}(M_1+M_2+M_3+M_4+M_5)\right).
\end{align*}

It remains to bound $M_i$, $i=1,\ldots,5$. Since $\h{1}{1}\le 1$ and $\h{2}{2} = \sum_{i_1} \h{1}{2}^2$ (due to the fact $\bs{H} = \bs{H}^2$), we have that %for $M_1, M_2$,
\begin{align*}
  M_1&= \sum_{i_1} \tda{1}{1}^2  \le  \sum_{i_1}\h{1}{1}^2\yi{1}^2\le  \sum_{i_1}\yi{1}^2  = O(n),\\
  M_2 &= \sumtwo \tda{1}{2}^2  \le  \sumtwo \h{1}{2}^2 \yi{2}^2 \le  \sum_{i_1}\h{1}{1}\yi{1}^2\le \sum_{i_1}\yi{1}^2=O(n).
\end{align*}
By Cauchy-Schwarz inequality, we have that $  M_3 \le M_2 =O(n)$. 
Finally, we have
\begin{align*}
   M_4 = \sum_{i} d_i^2 \le \sum_{i} y_i^2 = O(n),\quad M_5 \leq (M_4 M_1)^{1/2} = O(n).
\end{align*}
The above bounds give that $\E D^2 = O(n^{1/2})$.
Therefore, by Chebyshev's inequality, we have
   \[
   \sumtwo \tda{1}{2} (\Ti{1} \Ti{2}-Z_{i_1} Z_{i_2})= O_p(n^{1/4}) = \op(n^{1/2}).
   \]
The conclusion then follows.
  \end{proof}
  
 Equipped with Propositions~\ref{proposition:first-order-hajek-coupling} and \ref{proposition:second-order-hajek-coupling}, we can now approximate
  $\htaudb$ with a polynomial of $T_i$'s.

  \begin{proposition}
  \label{prop:decompose-taudb-with-Ti}
    If Assumptions~\ref{assumption:positive-limit-of-assigned-proportion}--\ref{assumption:2-moment-of-finite-population-for-y} and the first half of \Cref{assumption:lindeberg-type-condition-p-o(n)} hold, then we have that
    \begin{align*}
      \htaudb-\bar{\tau} &=  n^{-1}\sum_i ({T}_i-r_1) \left(\frac{e_i(1)}{r_1}+ \frac{e_i(0)}{r_0}+ \frac{s_i(1)}{r_1}+ \frac{s_i(0)}{r_0}\right)  \\
      & -  n^{-1}\sum_{[i,j]} \left({T}_i-r_1\right)\left({T}_j-r_1\right) \left(\frac{{A}_{ij}(1)}{r_1^2}-\frac{{A}_{ij}(0)}{r_0^2}\right) + \op(n^{-1/2}).
    \end{align*}
  \end{proposition}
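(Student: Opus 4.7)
The plan is to begin from the $Z$-based decomposition already established in Proposition~\ref{proposition:decompose-hat-taudb} and replace each occurrence of $Z_i$ with $T_i$ using H\`ajek's coupling. The linear-in-$Z$ term is handled by Proposition~\ref{proposition:first-order-hajek-coupling}, and the quadratic-in-$Z$ term by Proposition~\ref{proposition:second-order-hajek-coupling}, together with a centering correction on $\bs{A}(z)$. Throughout I write $B_{ij} := A_{ij}(1)/r_1^2 - A_{ij}(0)/r_0^2$ and $\tilde B_{ij} := \tilde A_{ij}(1)/r_1^2 - \tilde A_{ij}(0)/r_0^2$.

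For the linear term, set $y_i := e_i(1)/r_1 + e_i(0)/r_0 + s_i(1)/r_1 + s_i(0)/r_0$, so that the claim reduces to $n^{-1}\sum_i (Z_i - T_i) y_i = \op(n^{-1/2})$. I would verify $\sum_i y_i^2 = O(n)$ as follows: the regression residuals $e_i(z)$ have sum of squares no larger than $\sum_i (Y_i(z)-\bar Y(z))^2 = O(n)$ by Assumption~\ref{assumption:2-moment-of-finite-population-for-y}; and $s_i(z) = H_{ii}(Y_i(z)-\bar Y(z)) - n^{-1}\sum_j H_{jj}(Y_j(z) - \bar Y(z))$ satisfies $\sum_i s_i(z)^2 \le \sum_i H_{ii}^2 (Y_i(z)-\bar Y(z))^2 \le \sum_i (Y_i(z)-\bar Y(z))^2 = O(n)$. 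Proposition~\ref{proposition:first-order-hajek-coupling} then closes this step.

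For the quadratic term, note that for $i \ne j$, $B_{ij} - \tilde B_{ij} = -\frac{1}{n(n-1)}\bigl(\frac{\sum_k A_{kk}(1)}{r_1^2} - \frac{\sum_k A_{kk}(0)}{r_0^2}\bigr) =: c_n = O(1/n)$ by Lemma~\ref{lem:order-of-sumi-hii_yi}. Consider first the constant piece $c_n$: direct computation gives $\sum_{[i,j]}(Z_i-r_1)(Z_j-r_1) = -nr_1r_0$ deterministically, whereas $\sum_{[i,j]}(T_i-r_1)(T_j-r_1) = (\sum_i(T_i-r_1))^2 - \sum_i(T_i-r_1)^2 = -nr_1r_0 + \Op(n)$ since $\sum_i(T_i-r_1) = \Op(\sqrt n)$ and $\sum_i(T_i-r_1)^2 = nr_1r_0 + \op(n)$ by the weak law. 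Thus the $c_n$ contribution to the difference is $O(1/n) \cdot \Op(n) = \Op(1)$, which, divided by $n$, is $\op(n^{-1/2})$. For the $\tilde B_{ij}$ piece, I expand
\[
(Z_i-r_1)(Z_j-r_1) - (T_i-r_1)(T_j-r_1) = (Z_iZ_j - T_iT_j) - r_1\bigl[(Z_i-T_i)+(Z_j-T_j)\bigr].
\]
The term $\sum_{[i,j]}(Z_iZ_j - T_iT_j)\tilde B_{ij}$ is $\op(\sqrt n)$ by applying Proposition~\ref{proposition:second-order-hajek-coupling} twice (once to $\tilde{\bs{A}}(1)$ with $y_i=Y_i(1)-\bar Y(1)$ and once to $\tilde{\bs{A}}(0)$) and combining linearly; the required hypothesis $\sum_i(Y_i(z)-\bar Y(z))^2 = O(n)$ is exactly Assumption~\ref{assumption:2-moment-of-finite-population-for-y}. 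For the two remaining linear-in-$(Z-T)$ summations, I would rewrite $\sum_{[i,j]}(Z_i-T_i)\tilde B_{ij} = \sum_i (Z_i-T_i)\,u_i$ with $u_i := \sum_{j\ne i}\tilde B_{ij}$, and by symmetry relabel the other to $\sum_i (Z_i-T_i)\,v_i$ with $v_i := \sum_{j\ne i}\tilde B_{ji}$; Lemma~\ref{lem:equality-td-A} evaluates these row/column sums explicitly as $u_i = (d_i(1)-s_i(1))/r_1^2 - (d_i(0)-s_i(0))/r_0^2$ and $v_i = -s_i(1)/r_1^2 + s_i(0)/r_0^2$. Using $\|\bs H\|_2 \le 1$ gives $\sum_i d_i(z)^2 \le \sum_i(Y_i(z)-\bar Y(z))^2 = O(n)$, so $\sum_i u_i^2$ and $\sum_i v_i^2$ are $O(n)$, and Proposition~\ref{proposition:first-order-hajek-coupling} yields $\op(\sqrt n)$ for each.

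The main obstacle is the quadratic term, specifically the need to apply Proposition~\ref{proposition:second-order-hajek-coupling} which is stated only for the centered matrix $\tilde{\bs A}$. The centering correction $B_{ij} - \tilde B_{ij}$ is a nontrivial constant of order $1/n$, and it multiplies a quantity $\sum_{[i,j]}[(Z_i-r_1)(Z_j-r_1)-(T_i-r_1)(T_j-r_1)]$ that is $\Op(n)$ rather than $\op(n)$; the fact that this still gives a net $\op(n^{-1/2})$ contribution after normalization by $n$ is tight, and requires explicit evaluation of this quantity using the deterministic sum constraint $\sum_i Z_i = n_1$ versus the Binomial fluctuation of $\sum_i T_i$. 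The rest of the argument is careful bookkeeping that reduces every new error term to either Proposition~\ref{proposition:first-order-hajek-coupling} or Proposition~\ref{proposition:second-order-hajek-coupling} by way of Lemma~\ref{lem:equality-td-A}.
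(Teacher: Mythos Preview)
Your proposal is correct and follows essentially the same route as the paper: start from the $Z$-based decomposition in Proposition~\ref{proposition:decompose-hat-taudb}, split the quadratic difference into a centering constant $c_n$ and the centered $\tilde B_{ij}$ piece, handle the latter via Proposition~\ref{proposition:second-order-hajek-coupling} plus Lemma~\ref{lem:equality-td-A} and Proposition~\ref{proposition:first-order-hajek-coupling}, and handle the linear term directly with Proposition~\ref{proposition:first-order-hajek-coupling}. The only cosmetic difference is that for the $c_n$ piece the paper bounds the $Z$- and $T$-sums separately (each is $\Op(1)$ after multiplying by $c_n$), while you bound their difference; both yield $\Op(n^{-1})$ after normalization, which is comfortably $\op(n^{-1/2})$ rather than ``tight.''
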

\begin{proof}
For ease of presentation, we write $\tilde{Z}_i = Z_i -r_1$ and $\tilde{T}_i = T_i - r_1$.
By \Cref{proposition:decompose-hat-taudb}, it remains to show that
\begin{align*}
    M_1 - M_2&:=n^{-1}\sum_i (\tilde{T}_i - \tilde{Z}_i)\left(\frac{e_i(1)}{r_1}+ \frac{e_i(0)}{r_0}+ \frac{s_i(1)}{r_1}+ \frac{s_i(0)}{r_0}\right) \\
      &-  n^{-1}\sum_{[i,j]} (\tilde{T}_i\tilde{T}_j -\tilde{Z}_i\tilde{Z}_j)\left(\frac{{A}_{ij}(1)}{r_1^2}-\frac{{A}_{ij}(0)}{r_0^2}\right) = \op(n^{-1/2}).
\end{align*}

For the term $M_2$, using $\sum_{i,j } A_{ij}(z) = 0$, we obtain the decomposition
  \begin{align*}
    &\frac{1}{n}\sum_{[i,j]} (\tilde{T}_i\tilde{T}_j-\tilde{Z}_i\tilde{Z}_j) \left(\frac{{A}_{ij}(1)}{r_1^2}-\frac{{A}_{ij}(0)}{r_0^2}\right) \\
    & =
    \frac{1}{n}\sum_{[i,j]} (\tilde{T}_i\tilde{T}_j-\tilde{Z}_i\tilde{Z}_j) \left(-\frac{\sum_k A_{kk}(1)}{r_1^2 n(n-1)}+\frac{\sum_k A_{kk}(0)}{r_0^2 n(n-1)}\right) \\
    &\quad + \frac{1}{n}\sum_{[i,j]} (\tilde{T}_i\tilde{T}_j-\tilde{Z}_i\tilde{Z}_j) \left(\frac{\tilde{A}_{ij}(1)}{r_1^2}- \frac{\tilde{A}_{ij}(0)}{r_0^2} \right)   =: M_{21}+M_{22}.
  \end{align*}
  For ${M}_{21}$,
  as shown in the proof of Proposition~\ref{proposition:asymptotic-equivalance-htauadj}, we have 
  \[
  \sum_{[i,j]} \tilde{Z}_i\tilde{Z}_j = O(n).
  \]
  This, together with Lemma~\ref{lem:order-of-sumi-hii_yi}, yields that
  \[
  \sum_{[i,j]} \tilde{Z}_i\tilde{Z}_j\frac{\sum_k A_{kk}(z)}{n(n-1)}= O(1).
  \]
  Moreover, we see that for $i\ne j\ne k\ne l$,
  \[
  \E \tilde{T}_i^2\tilde{T}_j\tilde{T}_k = 0,\quad  \E \tilde{T}_i\tilde{T}_j\tilde{T}_k \tilde{T}_l = 0,
  \]
  which implies that
  \begin{align*}
\E\bigg(\sum_{[i,j]}\tilde{T}_i\tilde{T}_j\frac{\sum_k A_{kk}(z)}{n(n-1)} \bigg)^2 &= 2\sum_{[i,j]}\E(\tilde{T}_i \tilde{T}_j)^2 \left(\frac{\sum_k A_{kk}(z)}{n(n-1)}\right)^2 \\
&= 2(r_1r_0)^2 n(n-1)\left(\frac{\sum_k A_{kk}(z)}{n(n-1)}\right)^2 = O(1),   
\end{align*}
    where the last inequality follows from Lemma~\ref{lem:order-of-sumi-hii_yi}.
    Then, by Chebyshev's inequality, we have  
      \[
  \sum_{[i,j]}\tilde{T}_i\tilde{T}_j\frac{\sum_k A_{kk}(z)}{n(n-1)}  = \Op(1).
  \]
  Combining these results, we obtain $ M_{21} = \Op(n^{-1})$.

 We now focus on $M_{22}$. We first expand it as
 \begin{align*}
&M_{22} = \frac{1}{n}\sum_{[i,j]}  ({T}_i{T}_j-{Z}_i{Z}_j-r_1T_i+r_1 Z_i -r_1T_j+r_1Z_j) \left(\frac{\tilde{A}_{ij}(1)}{r_1^2}-\frac{\tilde{A}_{ij}(0)}{r_0^2} \right)\\
  &=\frac{1}{n}\sum_{[i,j]}  ({T}_i{T}_j-{Z}_i{Z}_j) \left(\frac{\tilde{A}_{ij}(1)}{r_1^2} -\frac{\tilde{A}_{ij}(0)}{r_0^2}\right) + \frac{1}{n}\sum_{[i,j]} r_1(Z_i-T_i)\left(\frac{\tilde{A}_{ij}(1)}{r_1^2} -\frac{\tilde{A}_{ij}(0)}{r_0^2}\right) \\
  &\quad +\frac{1}{n}\sum_{[i,j]} r_1(Z_j-T_j)\left(\frac{\tilde{A}_{ij}(1)}{r_1^2} -\frac{\tilde{A}_{ij}(0)}{r_0^2}\right).
 \end{align*}
 Then, by Lemma~\ref{lem:equality-td-A}, we see that 
 \begin{align*}
   M_{22} & = \frac{1}{n}\sum_{[i,j]}  ({T}_i{T}_j-{Z}_i{Z}_j) \left(\frac{\tilde{A}_{ij}(1)}{r_1^2} -\frac{\tilde{A}_{ij}(0)}{r_0^2}\right) \\
  &+ \frac{1}{n}\sum_i r_1(Z_i-T_i)\left(\frac{d_i(1)}{r_1^2}-\frac{s_i(1)}{r_1^2}-\frac{d_i(0)}{r_0^2}+\frac{s_i(0)}{r_0^2}\right)  \\
  &+ \frac{1}{n}\sum_i r_1(Z_i-T_i)\left(-\frac{s_i(1)}{r_1^2}+\frac{s_i(0)}{r_0^2}\right).
 \end{align*}
 Applying Proposition~\ref{proposition:second-order-hajek-coupling} with $y_{i} = Y_i(z)-\bar{Y}(z)$, we get
 \[
  \sum_{[i,j]} ({T}_i{T}_j-{Z}_i{Z}_j) \left(\frac{\tilde{A}_{ij}(1)}{r_1^2}-\frac{\tilde{A}_{ij}(0)}{r_0^2}\right) = \op(n^{1/2}).
  \]
Then, applying Proposition~\ref{proposition:first-order-hajek-coupling} with $y_i = s_i(z)$ and $d_i(z)$, we get 
\begin{align}
\label{eq:hajek-coupling}
    \sum_i ({Z}_i-T_i) s_i(z) = \op(n^{1/2}),\quad \sum_i ({Z}_i-T_i) d_i(z) = \op(n^{1/2}),
\end{align}
which implies that $M_{22} = \op(n^{-1/2})$. Together with the bound on $M_{21}$, it implies that 
$$M_2 = \op(n^{-1/2}).$$

It remains to bound $M_1$. By Proposition~\ref{proposition:first-order-hajek-coupling} with $y_i= e_i(z)$, we have
\[
\sum_i ({Z}_i-T_i) e_i(z) = \op(n^{1/2}),
\]
which, combined with \eqref{eq:hajek-coupling}, implies that $M_1 = \op(n^{-1/2})$. This concludes the proof.
\end{proof}
\section{Asymptotic normality of $\htaudb$ when $p=o(n)$}
\label{sec:normality-p-o(n)}
In this section, we study the asymptotic normality of $\hat{\tau}_\db$ when $p=o(n)$ and give the proof of \Cref{theorem:CLT-p=o(n)}. We first state some lemmas that will be used in the proof.

The following Lemma demonstrates that for a sequence of population $\{y_i\}_{i\in [n]}$ satisfying a bounded second moment and a Lindeberg--Feller--type condition, the linear type statistics is asymptotically normal.
\begin{lemma}
    \label{lem:asymptotic-normality-linear-term}
    Let $\sigma_y^2 =  r_1 r_0 \sum_i (y_i-\bar{y})^2/n$.  If $\liminf_{n \to \infty} \sigma_y >0$ and $\max_i (y_i-\bar{y})^2 = o(n)$, then we have
\begin{align*}
    \frac{\sum_i {T}_i (y_i-\bar{y})}{\sqrt{n}\sigma_y} \asyeq \mathcal{N}(0,1).
\end{align*}
\end{lemma}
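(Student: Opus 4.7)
The plan is to reduce the statement to a textbook application of the Lindeberg--Feller central limit theorem for sums of independent random variables, exploiting the crucial fact that, under the coupling construction in Appendix \ref{sec:hajek-coupling}, the $T_i$'s are i.i.d.~Bernoulli$(r_1)$. Since $\sum_i (y_i - \bar y) = 0$, I would first rewrite the numerator as
\begin{align*}
\sum_i T_i (y_i - \bar y) \;=\; \sum_i (T_i - r_1)(y_i - \bar y),
\end{align*}
so that each summand is centered. By independence, the variance of this sum is $r_1(1-r_1)\sum_i (y_i - \bar y)^2 = n \sigma_y^2$, matching the normalization in the statement.

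Next I would set $\xi_{n,i} := (T_i - r_1)(y_i - \bar y)/(\sqrt{n}\,\sigma_y)$ and verify the Lindeberg condition for the triangular array $\{\xi_{n,i}\}_{i=1}^n$. Because $|T_i - r_1| \le 1$, we have the deterministic bound
\begin{align*}
|\xi_{n,i}| \;\le\; \frac{\max_i |y_i - \bar y|}{\sqrt{n}\,\sigma_y}.
\end{align*}
The hypothesis $\max_i (y_i - \bar y)^2 = o(n)$ together with $\liminf_{n\to\infty}\sigma_y > 0$ therefore gives $\max_i |\xi_{n,i}| \to 0$. Consequently, for any fixed $\varepsilon > 0$, the indicator $\mathbf 1_{\{|\xi_{n,i}| > \varepsilon\}}$ is identically zero for all $i$ once $n$ is large enough, so the Lindeberg sum $\sum_i \E[\xi_{n,i}^2 \mathbf 1_{\{|\xi_{n,i}| > \varepsilon\}}]$ vanishes eventually. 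Applying the Lindeberg--Feller CLT then yields $\sum_i \xi_{n,i} \asyeq \mathcal{N}(0,1)$, which is the claim.

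There is no substantive obstacle here: the argument is a direct Lindeberg--Feller application made possible by the switch from the dependent completely randomized indicators $Z_i$ to the independent Bernoulli indicators $T_i$ furnished by the coupling of Appendix \ref{sec:hajek-coupling}. The only points requiring a line of care are (i) noting that $T_i - r_1$ is uniformly bounded by $\max(r_1, r_0) \le 1$ so that the uniform smallness of the summands follows from $\max_i(y_i - \bar y)^2 = o(n)$ alone, and (ii) invoking $\liminf \sigma_y > 0$ to ensure that the denominator $\sqrt{n}\,\sigma_y$ grows at rate $\sqrt{n}$, which is needed both to make the Lindeberg bound go to zero and to justify the normalization.
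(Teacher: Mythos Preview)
Your proof is correct and follows essentially the same approach as the paper: both exploit the independence of the $T_i$, the bound $|T_i-r_1|\le 1$, and the hypothesis $\max_i(y_i-\bar y)^2=o(n)$ together with $\liminf\sigma_y>0$ to conclude asymptotic normality via a standard CLT for independent sums. The only cosmetic difference is that the paper cites Berry's (1941) quantitative bound on the Kolmogorov distance rather than the qualitative Lindeberg--Feller theorem, but the underlying verification is identical and the rate is not used elsewhere.
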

\begin{proof}[Proof of \Cref{lem:asymptotic-normality-linear-term}]
    By the Theorem~1 of \cite{berry1941accuracy}, we have 
    \[
   d_K\left( \frac{\sum_i {T}_i (y_i-\bar{y})}{\sqrt{n}\sigma_y},\mathcal{N}(0,1)\right) < C \max_i \left|\frac{(y_i-\bar{y})}{\sqrt{n}\sigma_y}\right|,
    \]
    where $d_K$ denotes the Kolmogorov distance between two distributions.
    If $\liminf_{n \to \infty} \sigma_y >0$ and $\max_i (y_i-\bar{y})^2 = o(n)$, then 
    \[
    \max_i \left|\frac{y_i-\bar{y}}{\sqrt{n}\sigma_y}\right| = o(1).
    \]
    The conclusion follows.
\end{proof}

{\rev
	\begin{lemma}\label{lem:lpsum}
	Let $y_1 \ge y_2 \ge \cdots \ge y_n \ge 0$ be a sequence of real values in descending order. We have for any positive interger $p < n$,
	\[
	\sup_{\bv \in \mathcal{V}_p} \sum_{i = 1}^n v_i y_i = \sum_{i = 1}^p y_i,
	\]
	where 
	\[
	\mathcal{V}_p := \left\{\bv := (v_1, \ldots, v_n) \in \mathbb{R}^n: \quad\sum_{i=1}^n v_i \le p \;\&\; 0 \le v_i \le 1, \forall 1 \le i \le n\right\}.
	\]
\end{lemma}
\begin{proof}
	It is equivalent to proving that the vector $\bv^\ast$ with $v_i^\ast = 1$ if $i \le p$ and $0$ otherwise maximimizes the objective function. For any $\bv \in \mathcal{V}_p$, apparently we have $v_i \le v_i^\ast$ if $i \le p$, and $v_i \ge v_i^\ast$ othewise; moreover,
	\[
	\sum_{i = 1}^p (v_i^\ast - v_i) \ge \sum_{i=p + 1}^n (v_i - v_i^\ast) \ge 0.
	\]
	In light of the above, and the ordering property of $y_i$'s, we have
	\[
	\sum_{i = 1}^n v_i^\ast y_i - \sum_{i = 1}^n v_i y_i = \sum_{i = 1}^p (v_i^\ast - v_i) y_i - \sum_{i = p+1}^n (v_i - v_i^\ast) y_i \ge \sum_{i = 1}^p (v_i^\ast - v_i) y_p - \sum_{i = p+1}^n (v_i - v_i^\ast) y_p \ge 0,
	\]
	which proves the desired result.
\end{proof}
}

\begin{proof}[Proof of Theorem~\ref{theorem:CLT-p=o(n)}]
First, we see that
\begin{align*}
    \var\left(\frac1{\sqrt{n}}\sum_i {T}_i \left(\frac{e_i(1)}{r_1}+ \frac{e_i(0)}{r_0}\right)\right) =\frac{1}{n} \sum_i \left(\frac{e_i(1)}{r_1}+ \frac{e_i(0)}{r_0}\right)^2 = \frac{n-1}{n} S^2_{r_1^{-1}e(1)+ r_0^{-1}e(0)}.
\end{align*}
By \Cref{lem:variance-crt-2-forms}, we have 
\[
S^2_{r_1^{-1}e(1)+ r_0^{-1}e(0)} = r_1^{-1} S_{e(1)}^2 + r_0^{-1} S_{e(0)}^2 - S_{\tau_e}^2 = \sigma_\adj^2.
\]
Putting together, we get 
\[
\var\left(\frac1{\sqrt{n}}\sum_i {T}_i \left(\frac{e_i(1)}{r_1}+ \frac{e_i(0)}{r_0}\right) \right) = \sigma_\adj^2 + o(1).
\]
Then, by Lemma~\ref{lem:asymptotic-normality-linear-term}, we see that under Assumptions~\ref{assumption:positive-limit-of-assigned-proportion}-\ref{assumption:Positive-variance-limit-p-o(n)},   
\[
\frac{1}{\sqrt{n}\sigma_{\adj}}\sum_i {T}_i \left(\frac{e_i(1)}{r_1}+ \frac{e_i(0)}{r_0}\right)      \asyeq \mathcal{N}(0,1).
\]
Recall that $\tilde{T}_i = T_i-r_1$.  By Proposition~\ref{prop:decompose-taudb-with-Ti}, it remains to show that for $z\in \{0,1\}$
    \[
    n^{-1/2}\sum_i {T}_i s_i(z) = \op(1), \quad n^{-1/2}\sum_{[i,j]} \tilde{T}_i \tilde{T}_j{A}_{ij}(z) = \op(1).
    \]
 
 We have
 {\rev
 	\begin{equation}\label{eq:tssqbnd}
 		\begin{aligned}
 			\E\bigg(n^{-1/2}\sum_i {T}_i s_i(z)\bigg)^2 & =  \frac{r_1r_0}{n} \sum_i s_i (z)^2 \le \frac{r_1 r_0}{n} \sum_i H_{i i}^2\left(Y_i(z)-\bar{Y}(z)\right)^2 \\
 			& \le \frac{r_1r_0}{n} \sup_{\bv \in \mathcal{V}_p} \sum_i v_i (Y_i(z)-\bar{Y}(z))^2.
 		\end{aligned}
 	\end{equation}
where recall the definition of $\mathcal{V}_p$ in \Cref{lem:lpsum}.
Apparently, the first inequality follows from the inequality that $\sum_i(a_i-\bar{a})^2 \leq \sum_i a_i^2 $; the second inequality follows from $\sum_i H_{ii}^2 \leq \sum_i H_{ii} = p$ and $H_{ii} < 1$.
	Since all the $(Y_i(z)-\bar{Y}(z))^2$'s are nonnegative, it follows from~\Cref{lem:lpsum} that
	\[
	\E\bigg(n^{-1/2}\sum_i {T}_i s_i(z)\bigg)^2 \le \frac{r_1r_0}{n} \sum_{i=1}^p (Y_{(i)} - \bar{Y}(z))^2 = O(n^{-1}) o(n) = o(1),
	\]
	 where the penultimate equation is due to Assumption \ref{assumption:p-o(n)-case}.

From above, it follows that} 
    \[
     n^{-1/2}\sum_i {T}_i s_i(z) = \op(1).
    \]   
    Next, using that for $i\ne j\ne k\ne l$,
  \[
  \E \tilde{T}_i^2\tilde{T}_j\tilde{T}_k = 0,\quad  \E \tilde{T}_i\tilde{T}_j\tilde{T}_k \tilde{T}_l = 0,
  \] we can derive that
    \begin{align*}
       &\E\bigg(n^{-1/2}\sum_{[i,j]} \tilde{T}_i\tilde{T}_j {A}_{ij}(z)\bigg)^2 =n^{-1}\sum_{[i,j]}\E(\tilde{T}_i\tilde{T}_j)^2\left({A}_{ij}(z)^2+{A}_{ij}(z){A}_{ji}(z) \right)\\
       &= \frac{(r_1r_0)^2}{n} \sum_{[i,j]} \left({A}_{ij}(z)^2+{A}_{ij}(z){A}_{ji}(z) \right)\leq   \frac{2(r_1r_0)^2}{n} \sum_{[i,j]} {A}_{ij}(z)^2\\
       &
       = \frac{2(r_1r_0)^2}{n}  \sum_{[i,j]} H_{ij}^2\left({Y}_j(z)-\bar{Y}(z)\right)^2 \le \frac{2(r_1r_0)^2}{n}  \sum_{i} H_{ii}\left({Y}_i(z)-\bar{Y}(z)\right)^2
      \\ &
       \leq \frac{2(r_1r_0)^2}{n}  \sum_{i=1}^p  (Y_{(i)}(z)-\bar{Y}(z))^2 = o(1).
    \end{align*}
    Thus, by Chebyshev's inequality, we have
    \[
    n^{-1/2}\sum_i \tilde{T}_i \tilde{T}_j {A}_{ij}(z) = \op(1).\]
    Then, the conclusion follows.
\end{proof}
\section{The CLT of quadratic forms and the asymptotic normality of $\htaudb$ when $p \asymp n$}
\label{sec:normality-p-O(n)}
In this section, we study the asymptotic normality of $\hat{\tau}_\db$ when $p\asymp n$ and give the proof of \Cref{theorem:CLT-alpha-greater-than-0}. The main intermediate step is to show that the Kolmogorov distance between the normal distribution and the joint distribution of the linear and quadratic terms of $\hat{\tau}_\db$ is negligible (see \Cref{prop:negligible-Kolmogorov-distance}). %Using this and some other propositions, we are able to prove \Cref{theorem:CLT-alpha-greater-than-0}. 

For a symmetric function vanishing on diagonals, define the influence of the $i$-th variable of $f$ by
$$
\operatorname{Inf}_i\left(f\right) :=\sum_{i_2, \ldots, i_q} f\left(i, i_2, \ldots, i_q\right)^2.
$$
if $q\geq 2$ and $\operatorname{Inf}_i\left(f\right):= f\left(i\right)^2$ if $q=1$.
Denote
\begin{align*}
  \|f\|_{\ell_2} :=\bigg\{\sum_{i_1, \ldots, i_q} f\left(i_1, \ldots, i_q\right)^2\bigg\}^{1/2},\quad \mathcal{M}(f):=\max _{1 \leq i \leq n} \operatorname{Inf}_i(f).
\end{align*}

For a random variable $U$, define its fourth-order cumulant $\kappa_4(U) :=\E U^4-3 (\E U^2)^2.$ Set $W_i$ in \Cref{def:Wi} in the main context as $W_i = (T_i-r_1)/(r_1r_0)^{1/2}$, $i\in [n]$. With the decomposition in Proposition~\ref{prop:decompose-taudb-with-Ti}, we define $\bs{Q} := (Q_1,Q_2)^\top$ as
\begin{align*}
    &Q_1 :=\frac{\sqrt{r_1r_0}}{\sqrt n}\sum_i W_i \left(\frac{e_i(1)}{r_1}+ \frac{e_i(0)}{r_0}+ \frac{s_i(1)}{r_1}+ \frac{s_i(0)}{r_0}\right), \\
    &Q_2 :=\frac{r_1r_0}{\sqrt n}\sum_{[i,j]} W_i W_j \left(-\frac{{A}_{ij}(1)}{r_1^2}+\frac{{A}_{ij}(0)}{r_0^2}\right),
\end{align*}
and it is easy to see that $\sqrt{n}(\hat{\tau}_\db - \bar{\tau}) = Q_1+Q_2+\op(1)$. Moreover, we rewrite $Q_2$ into the form of a homogeneous sum
\[
Q_2=\frac{ r_1r_0}{2\sqrt{n}}\sum_{[i,j]} W_i W_j\left(-\frac{{A}_{ij}(1)}{r_1^2}-\frac{{A}_{ji}(1)}{r_1^2}+\frac{{A}_{ij}(0)}{r_0^2}+\frac{{A}_{ji}(0)}{r_0^2}\right).
\]
$Q_1$ and $Q_2$ define two homogeneous sums
\[
Q_1 = \sum_i f_1(i) W_i,\quad Q_2 = \sum_{[i,j]} f_2(i,j) W_i W_j,
\]
with 
\begin{align*}
    &f_1(i) = \frac{\sqrt{r_1r_0}}{\sqrt n}\left(\frac{e_i(1)}{r_1}+ \frac{e_i(0)}{r_0}+ \frac{s_i(1)}{r_1}+ \frac{s_i(0)}{r_0}\right),\\
    &f_2(i,j) = \frac{ r_1r_0}{2\sqrt{n}} \left(-\frac{{A}_{ij}(1)}{r_1^2}-\frac{{A}_{ji}(1)}{r_1^2}+\frac{{A}_{ij}(0)}{r_0^2}+\frac{{A}_{ji}(0)}{r_0^2}\right).
\end{align*}
The following Proposition gives the variances of the linear and quadratic terms.
\begin{proposition}
    \label{prop:var-of-l-q-term}
    We have
    \begin{align*}
        \var(Q_1) = \frac{n-1}{n} \sigma_{\hd,l}^2,\quad  \var(Q_2) = \frac{n-1}{n}\sigma_{\hd,q}^2, \quad {\rev \cov(Q_1,Q_2) = 0}.
    \end{align*}
\end{proposition}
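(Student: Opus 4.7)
The plan is to exploit the independence and moment structure of the $W_i$'s: by construction $W_1,\ldots,W_n$ are i.i.d.\ with $\E W_i=0$ and $\E W_i^2=1$, so the variances of both homogeneous sums decouple into sums of squared coefficients. The two claims then become algebraic identities linking these squared coefficients to $\sigma_{\hd,l}^2$ and $\sigma_{\hd,q}^2$, and the main work is to massage those identities using the definitions of $e_i(z),s_i(z),A_{ij}(z)$ and the idempotence of $\bs H$.

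For $\var(Q_1)$, independence immediately gives $\var(Q_1)=\sum_i f_1(i)^2=\frac{r_1r_0}{n}\sum_i h_i^2$ where $h_i:=r_1^{-1}(e_i(1)+s_i(1))+r_0^{-1}(e_i(0)+s_i(0))$. First I would observe that $\sum_i e_i(z)=0$ (from the OLS normal equations) and $\sum_i s_i(z)=0$ (from the centering built into the definition of $s_i$), so $\bar h=0$ and $\sum_i h_i^2=(n-1)S_h^2$. Applying Lemma~\ref{lem:variance-crt-2-forms} with $a_i=e_i(1)+s_i(1)$, $b_i=e_i(0)+s_i(0)$ gives $(r_1r_0)S_h^2=\sigma_{\hd,l}^2$, and the desired formula follows.

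For $\var(Q_2)$, since $f_2$ is symmetric and vanishes on diagonals, a fourth-moment expansion combined with the independence of the $W_i$'s shows that $\E[W_iW_jW_kW_l]$ (for $i\ne j$, $k\ne l$) is nonzero only when $\{i,j\}=\{k,l\}$, and equals $1$ there. Hence $\var(Q_2)=\E Q_2^2=2\sum_{i\ne j}f_2(i,j)^2$. Using $A_{ij}(z)=H_{ij}(Y_j(z)-\bar Y(z))$ and $H_{ij}=H_{ji}$, I would rewrite $f_2(i,j)=-\frac{r_1r_0}{2\sqrt n}H_{ij}(v_i+v_j)$ with $v_i:=r_1^{-2}(Y_i(1)-\bar Y(1))-r_0^{-2}(Y_i(0)-\bar Y(0))$. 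The key identity is then $\bs H^2=\bs H$, which yields $\sum_{j\ne i}H_{ij}^2=H_{ii}-H_{ii}^2=Q_{ii}$, matching the diagonal of $\bs Q$; the off-diagonal entries satisfy $Q_{ij}=H_{ij}^2$ by definition. Expanding $(v_i+v_j)^2$ and collecting terms then gives $\sum_{i\ne j}H_{ij}^2(v_i+v_j)^2=2\sum_{i,j}Q_{ij}v_iv_j$. Finally, since $\sum_i(Y_i(z)-\bar Y(z))=0$ forces $\bar v=0$, this equals $2(n-1)S^2_{\bs Q,v}$, so $\var(Q_2)=\frac{n-1}{n}(r_1r_0)^2S^2_{\bs Q,v}=\frac{n-1}{n}\sigma_{\hd,q}^2$.

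There is no serious obstacle here; the only points requiring care are (i) symmetrizing $Q_2$ correctly before reading off the quadratic-form variance, so the factor of $2$ lands in the right place, and (ii) using idempotence of $\bs H$ at the right moment to convert the raw sums of $H_{ij}^2$ into the matrix $\bs Q$ that appears in $\sigma_{\hd,q}^2$. Both $\bar h=0$ and $\bar v=0$ are what allow the raw square sums to be expressed as scaled variances $S^2_h$ and $S^2_{\bs Q,v}$, picking up the $(n-1)/n$ factor.
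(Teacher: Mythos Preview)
Your proposal is correct and follows essentially the same argument as the paper: independence reduces both variances to sums of squared coefficients, Lemma~\ref{lem:variance-crt-2-forms} handles $Q_1$, and idempotence of $\bs H$ (giving $\sum_{j\ne i}H_{ij}^2=H_{ii}-H_{ii}^2$) converts the $Q_2$ sum into the $\bs Q$-weighted quadratic form. The only cosmetic difference is that the paper expands $\var(Q_2)$ directly via the asymmetric coefficients $A_{ij}(z)$ (producing a ``square plus cross'' term $c_{ij}^2+c_{ij}c_{ji}$) rather than first symmetrizing to $f_2$ and picking up the factor $2\sum f_2^2$; these are algebraically identical. Your justification via $\bar v=0$ for passing from $\sum_{i,j}Q_{ij}v_iv_j$ to $(n-1)S^2_{\bs Q,v}$ is in fact cleaner than the paper's, which appeals to $\bs Q\bs 1=0$ --- a claim that is not actually true, though the conclusion remains valid precisely because $\bar v=0$.
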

\begin{proof}[Proof of Proposition~\ref{prop:var-of-l-q-term}]
Using $\E W_i = 0$, $\E W_i^2 = 1$, and the independence between $W_i$'s, we obtain that
\begin{align*}
    &\var(Q_1) = \sum_i f_1(i)^2 \var(W_1) = \sum_i f_1(i)^2 \\
    &=  \frac{r_1r_0}{n}\sum_i \left(\frac{e_i(1)}{r_1}+ \frac{e_i(0)}{r_0}+ \frac{s_i(1)}{r_1}+ \frac{s_i(0)}{r_0}\right)^2 \\
    &= \frac{n-1}{n} (r_1r_0) S^2_{r_1^{-1}e(1)+ r_1^{-1}s(1) + r_0^{-1} e(0)+ r_0^{-1}s(0)}  = \frac{n-1}{n} \sigma_{\hd,l}^2,
\end{align*}
where the last equation follows from Lemma~\ref{lem:variance-crt-2-forms} with $a_i = e_i(1)+s_i(1)$, $b_i = e_i(0)+s_i(0)$.

Using $\sum_{j:j\ne i} H_{ij}^2 = H_{ii}-H_{ii}^2$ and recalling the definition of $\bs{Q}$, we obtain that
\begin{align*}
        &\var(Q_2) = \sum_{[i,j]} \frac{(r_1r_0)^2}{n}\left(\frac{{A}_{ij}(1)}{r_1^2}-\frac{{A}_{ij}(0)}{r_0^2}\right)^2 \E(W_i^2 W_j^2)  \\
        &\qquad + \sum_{[i,j]} \frac{(r_1r_0)^2}{n}\left(\frac{{A}_{ij}(1)}{r_1^2}-\frac{{A}_{ij}(0)}{r_0^2}\right)\left(\frac{{A}_{ji}(1)}{r_1^2}-\frac{{A}_{ji}(0)}{r_0^2}\right) \E(W_i^2 W_j^2)\\
    &= \frac{(r_1r_0)^2}{n} \left\{\sum_{[i,j]}H_{ij}^2\left( \frac{{Y}_i(1)}{r_1^2}-\frac{{Y}_i(0)}{r_0^2}\right)\left(\frac{{Y}_j(1)}{r_1^2}-\frac{{Y}_j(0)}{r_0^2}\right)+ \sum_{[i,j]}H_{ij}^2\left( \frac{{Y}_i(1)}{r_1^2}-\frac{{Y}_i(0)}{r_0^2}\right)\right\}\\
&= \frac{(r_1r_0)^2}{n} \left\{\sum_{[i,j]}H_{ij}^2\left( \frac{{Y}_i(1)}{r_1^2}-\frac{{Y}_i(0)}{r_0^2}\right)\left(\frac{{Y}_j(1)}{r_1^2}-\frac{{Y}_j(0)}{r_0^2}\right)+\sum_i (H_{ii}-H_{ii}^2)\left( \frac{{Y}_i(1)}{r_1^2}-\frac{{Y}_i(0)}{r_0^2}\right)^2 \right\}   \\
&= \frac{(r_1r_0)^2}{n}  \sum_i \sum_j Q_{ij}\left( \frac{{Y}_i(1)}{r_1^2}-\frac{{Y}_i(0)}{r_0^2}\right)\left(\frac{{Y}_j(1)}{r_1^2}-\frac{{Y}_j(0)}{r_0^2}\right)\\
&=\frac{(r_1r_0)^2}{n}  \sum_i \sum_j Q_{ij}\left( \frac{{Y}_i(1)-\bar{Y}(1)}{r_1^2}-\frac{{Y}_i(0)-\bar{Y}(0)}{r_0^2}\right)\left(\frac{{Y}_j(1)-\bar{Y}(1)}{r_1^2}-\frac{{Y}_j(0)-\bar{Y}(0)}{r_0^2}\right)\quad\\
& =\frac{n-1}{n}\sigma_{\hd,q}^2,
\end{align*}
where the penultimate equation is due to $\bs{Q}^\top\bs{1}=\bs{Q}\bs{1}=\bs{0}$.

{\rev Finally, we have
	\begin{align*}
		\cov(Q_1, Q_2) = & \frac{r_1^{3/2} r_0^{3/2}}{n} \sum_{[i,j,k]} W_i W_j W_k \left(\frac{e_i(1)}{r_1}+ \frac{e_i(0)}{r_0}+ \frac{s_i(1)}{r_1}+ \frac{s_i(0)}{r_0}\right) \left(-\frac{{A}_{jk}(1)}{r_1^2}+\frac{{A}_{jk}(0)}{r_0^2}\right) \\
		& + \frac{r_1^{3/2} r_0^{3/2}}{n} \sum_{[i,j]} W_i^2 W_j  \left(\frac{e_i(1)}{r_1}+ \frac{e_i(0)}{r_0}+ \frac{s_i(1)}{r_1}+ \frac{s_i(0)}{r_0}\right) \left(-\frac{{A}_{ij}(1)}{r_1^2}+\frac{{A}_{ij}(0)}{r_0^2}\right) \\
		& + \frac{r_1^{3/2} r_0^{3/2}}{n} \sum_{[i,j]} W_i W_j^2   \left(\frac{e_j(1)}{r_1}+ \frac{e_j(0)}{r_0}+ \frac{s_j(1)}{r_1}+ \frac{s_i(0)}{r_0}\right) \left(-\frac{{A}_{ij}(1)}{r_1^2}+\frac{{A}_{ij}(0)}{r_0^2}\right).
	\end{align*}
Therefore, $\cov(Q_1,Q_2) = 0$ follows from all the $W_i$'s are independent and $\E[W_i] = 0$.}
\end{proof}
Let $\bs{G} := (G_1,G_2)^\top$ be a normal approximation of $(Q_1,Q_2)$, i.e.
\[
  \begin{pmatrix}
      G_1\\
      G_2
  \end{pmatrix}   \sim \mathcal{N}\left(0,\begin{pmatrix}
      \sigma_{\hd,l}^2& 0\\
      0 & \sigma_{\hd,q}^2
  \end{pmatrix}\right).
\]
The following proposition shows the order of $ \sigma^2_{\hd,l}$ and $ \sigma^2_{\hd,q}$.
\begin{proposition}
\label{prop:order-of-sigma-l-q}
If Assumptions~\ref{assumption:positive-limit-of-assigned-proportion} and \ref{assumption:2-moment-of-finite-population-for-y} hold, then we have that
    \[
    \sigma^2_{\hd,l} = O(1),\quad \sigma^2_{\hd,q} = O(1).
    \]
\end{proposition}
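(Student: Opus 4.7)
The plan is to show both variance components are $O(1)$ using only Assumption~\ref{assumption:positive-limit-of-assigned-proportion}, Assumption~\ref{assumption:2-moment-of-finite-population-for-y}, and the projection structure of $\bs H$. Two recurring facts will do the heavy lifting: $H_{ii}\in[0,1]$, and the row-sum identity $\sum_{j\ne i}H_{ij}^2 = H_{ii}-H_{ii}^2\le 1$, which follows immediately from $\bs H^2 = \bs H$. Assumption~\ref{assumption:2-moment-of-finite-population-for-y} then controls $\sum_i (Y_i(z)-\bar Y(z))^2$ and, via the triangle inequality, any linear combination such as $\sum_i \tilde u_i^2$ for $u_i := r_1^{-2}Y_i(1)-r_0^{-2}Y_i(0)$, which is $O(n)$ once Assumption~\ref{assumption:positive-limit-of-assigned-proportion} is invoked.

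For $\sigma^2_{\hd,l}$, I would first apply \Cref{lem:variance-crt-2-forms} with $a_i = e_i(1)+s_i(1)$ and $b_i = e_i(0)+s_i(0)$ to rewrite $\sigma^2_{\hd,l} = (r_1 r_0)\,S^2_{r_1^{-1}(e(1)+s(1)) + r_0^{-1}(e(0)+s(0))}$, which is manifestly nonnegative. By the triangle inequality for $S_{\cdot}$, it then suffices to establish $\sum_i e_i(z)^2 = O(n)$ and $\sum_i s_i(z)^2 = O(n)$ for each $z\in\{0,1\}$. The first bound follows because $(e_1(z),\dots,e_n(z))^\top = (\bs I-\bs H)(Y_1(z)-\bar Y(z),\dots,Y_n(z)-\bar Y(z))^\top$ and $\bs I - \bs H$ is an orthogonal projection, so $\sum_i e_i(z)^2 \le \sum_i (Y_i(z)-\bar Y(z))^2 = O(n)$. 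The second bound uses that $s_i(z)$ is a centered version of $H_{ii}(Y_i(z)-\bar Y(z))$, so $\sum_i s_i(z)^2 \le \sum_i H_{ii}^2(Y_i(z)-\bar Y(z))^2 \le \sum_i (Y_i(z)-\bar Y(z))^2 = O(n)$ from $H_{ii}\in[0,1]$. Combined with Assumption~\ref{assumption:positive-limit-of-assigned-proportion} this gives $\sigma^2_{\hd,l} = O(1)$.

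For $\sigma^2_{\hd,q}$, write $\tilde u_i := u_i - \bar u$ and split $\sigma^2_{\hd,q} = \frac{(r_1 r_0)^2}{n-1}\bigl(\sum_i Q_{ii}\tilde u_i^2 + \sum_{i\ne j} H_{ij}^2\,\tilde u_i\tilde u_j\bigr)$. Since $Q_{ii}=H_{ii}(1-H_{ii})\in[0,1/4]$, the diagonal part is bounded by $\sum_i \tilde u_i^2 = O(n)$. The off-diagonal term is the main step: by AM-GM ($|\tilde u_i\tilde u_j|\le (\tilde u_i^2+\tilde u_j^2)/2$) and the symmetry of $H_{ij}^2$ in $i,j$, I would bound $\bigl|\sum_{i\ne j}H_{ij}^2 \tilde u_i \tilde u_j\bigr| \le \sum_i \tilde u_i^2 \sum_{j\ne i}H_{ij}^2 = \sum_i \tilde u_i^2 (H_{ii}-H_{ii}^2) \le \sum_i \tilde u_i^2 = O(n)$, where the key step is the row-sum identity $\sum_{j\ne i}H_{ij}^2 = H_{ii}-H_{ii}^2 \le 1$. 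Dividing by $n-1$ and invoking Assumption~\ref{assumption:positive-limit-of-assigned-proportion} yields $\sigma^2_{\hd,q} = O(1)$. The main obstacle is precisely this off-diagonal sum: a naive operator-norm approach would require bounding $\|\bs H\circ\bs H\|_{\mathrm{op}}$ via Schur product inequalities, but the AM-GM reduction to row sums of $H_{ij}^2$ sidesteps that entirely and needs no additional leverage-score assumption.
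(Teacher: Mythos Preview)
Your proof is correct, and it takes a genuinely different (and somewhat more elementary) route than the paper. For $\sigma_{\hd,l}^2$, the paper passes through the matrix representation $\sigma_{\hd,l}^2 = (r_1r_0)\,S^2_{\bs B,\,r_1^{-1}Y(1)+r_0^{-1}Y(0)}$ and bounds $\|\bs B\|_2$ by sub-additivity and sub-multiplicativity of the operator norm applied to $\bs M = \bs P - \bs H + \bs P\,\diag\{\bs H\}$; you instead stay at the level of the vectors $e(z)$ and $s(z)$ and use the projection inequality and $H_{ii}\le 1$ directly, which is cleaner but specific to this decomposition. For $\sigma_{\hd,q}^2$, the paper bounds $\|\bs Q\|_2$ by splitting $\bs Q = \diag(\bs Q)+\diag^-(\bs Q)$ and applying the Gershgorin circle theorem to $\diag^-(\bs Q)$---which uses exactly the same row-sum identity $\sum_{j\ne i}H_{ij}^2 = H_{ii}-H_{ii}^2$ that you invoke, just repackaged as an operator-norm statement. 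So your remark that an operator-norm route would require Schur product inequalities is not quite right: the paper's operator-norm argument is no harder than yours, because Gershgorin plus the row-sum identity already gives $\|\diag^-(\bs Q)\|_2\le 1$. The trade-off is that your AM--GM argument is self-contained and avoids naming any matrix-norm machinery, while the paper's approach is more uniform (both variance pieces are handled by the single template ``bound the operator norm of the weighting matrix, then use Assumption~\ref{assumption:2-moment-of-finite-population-for-y}'').
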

\begin{proof}[Proof of Proposition~\ref{prop:order-of-sigma-l-q}]
Recall that $\bs{B}$ is defined as
\[
\bs{B} =\bs{M}^\top \bs{M},\quad \textrm{where}\quad \bs{M}=\left(\bs{I} - \frac{1}{n}\bs{1}\bs{1}^\top\right)  - \bs{H} + \left(\bs{I} - \frac{1}{n}\bs{1}\bs{1}^\top\right) \diag\{\bs{H}\}.
\]
and $\sigma_{\hd,l}^2 $ and $\sigma_{\hd,q}^2$ are defined as
\begin{align*}
\sigma_{\hd,l}^2 = (r_1r_0) S_{\bs{B}, \;r_1^{-1} Y(1) + r_0^{-1} Y(0)}^2,\quad \sigma_{\hd,q}^2 = (r_1r_0)^2 S_{\bs{Q}, \;r_1^{-2} Y(1) - r_0^{-2} Y(0)}^2.
\end{align*}
By sub-additivity and sub-multiplicativity of the $l_2$-norm and the trivial bounds
\[
\left\|\bs{I} - \frac{1}{n}\bs{1}\bs{1}^\top\right\|_2 \leq 1, \quad \|\bs{H}\|_2 \leq 1,
\]
we see that $\|\bs{B}\|_2 = O(1)$, which, combined with Assumptions~\ref{assumption:positive-limit-of-assigned-proportion} and \ref{assumption:2-moment-of-finite-population-for-y}, yields that
$
\sigma_{\hd,l}^2 = O(1).
$
For $\sigma_{\hd,q}^2$, we notice that
\[
\|\diag(\bs{Q})\|_2 = \max_{i \in [n]} (H_{ii}-H_{ii}^2) \leq \frac{1}{4} = O(1).
\]
Therefore, by Lemma~\ref{lem:norm-of-two-matrix}, we have 
\[
\|\bs{Q}\|_2 \leq \|\diag(\bs{Q})\|_2+\|\diag^-(\bs{Q})\|_2 = O(1),
\]
which, combined with Assumption \ref{assumption:positive-limit-of-assigned-proportion} and \ref{assumption:2-moment-of-finite-population-for-y}, yields that $
\sigma_{\hd,q}^2 = O(1).
$
This concludes the proof.
\end{proof}
The following proposition shows that the Kolmogorov distance between $(G_1,G_2)$ and $(Q_1,Q_2)$ is negligible.
\begin{proposition}
\label{prop:negligible-Kolmogorov-distance}
Under Assumptions \ref{assumption:positive-limit-of-assigned-proportion}--\ref{assumption:lindeberg-type-condition-p-o(n)} and \ref{assumption:Positive-variance-limit-alpha-positive}, we have that
    \begin{align*}
        \sup _{(x_1,x_2)^\top \in \mathbb{R}^2}|\mathbb{P}(Q_1\leq x_1; Q_2\leq x_2)-\mathbb{P}(G_1 \leq x_1; G_2\leq x_2)| = \delta_n \left(1+\frac{1}{\min\{\sigma_{\hd,q},\sigma_{\hd,l}\}}\right),
    \end{align*}
    for a deterministic parameter $\delta_n$ of order $o(1)$.
\end{proposition}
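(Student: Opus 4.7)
The plan is to recognize $(Q_1,Q_2)$ as a bivariate vector of homogeneous sums (of orders 1 and 2 respectively) in the independent centered random variables $W_i = (T_i - r_1)/\sqrt{r_1 r_0}$, and then invoke the multivariate central limit theorem for homogeneous sums established by \citet{koike2022high}. Since the $T_i$'s are i.i.d.\ $\mathrm{Bernoulli}(r_1)$, the $W_i$'s are i.i.d.\ with mean $0$, unit variance, and uniformly bounded higher moments; in particular $\kappa_4(W_1)$ is $O(1)$, so the bounded-moment hypotheses of Koike's theorem are met by the kernels $f_1$ and $f_2$ introduced above.

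Koike's multivariate bound controls $d_K$ between the normalized vector $\mathrm{cov}((Q_1,Q_2))^{-1/2}(Q_1,Q_2)^\top$ and a standard bivariate Gaussian in terms of $\sqrt{\mathcal{M}(f_k)}$ and products involving $\|f_k\|_{\ell_2}$ and $|\kappa_4(W_1)|^{1/2}$ for $k=1,2$. Unwinding the normalization on the original (unscaled) scale introduces a factor $\|\mathrm{cov}((Q_1,Q_2))^{-1/2}\|$, which is precisely of order $1/\min\{\sigma_{\hd,l},\sigma_{\hd,q}\}$ since the limiting covariance is diagonal by construction of $(G_1,G_2)$. By Proposition~\ref{prop:var-of-l-q-term} one has $\|f_1\|_{\ell_2}^2 = (n-1)\sigma_{\hd,l}^2/n$ and (up to a factor of 2) $\|f_2\|_{\ell_2}^2$ is comparable to $\sigma_{\hd,q}^2$, and by Proposition~\ref{prop:order-of-sigma-l-q} both are $O(1)$. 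Hence the whole argument reduces to showing $\mathcal{M}(f_1) = o(1)$ and $\mathcal{M}(f_2) = o(1)$; defining $\delta_n$ as the resulting upper bound will then deliver the claim.

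For the linear kernel, $\mathcal{M}(f_1) = \max_i f_1(i)^2$ with $f_1(i) = \sqrt{r_1 r_0}\,n^{-1/2}\bigl(e_i(1)/r_1 + e_i(0)/r_0 + s_i(1)/r_1 + s_i(0)/r_0\bigr)$. Assumption~\ref{assumption:lindeberg-type-condition-p-o(n)} gives $\max_i e_i(z)^2/n \to 0$ and $\max_i (Y_i(z) - \bar Y(z))^2/n \to 0$; together with $H_{ii}\le 1$ and Lemma~\ref{lem:order-of-sumi-hii_yi}, this yields $\max_i s_i(z)^2/n \to 0$, so $\mathcal{M}(f_1) \to 0$. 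For the quadratic kernel, $\mathcal{M}(f_2) = \max_i \sum_{j\ne i} f_2(i,j)^2$ is bounded (up to constants depending on $r_1,r_0$) by $n^{-1}\max_i \sum_{j\ne i}\bigl[H_{ij}^2(Y_j(z)-\bar Y(z))^2 + H_{ij}^2(Y_i(z)-\bar Y(z))^2\bigr]$. The second term is at most $n^{-1} H_{ii}\max_i(Y_i(z)-\bar Y(z))^2 \le n^{-1}\max_i(Y_i(z)-\bar Y(z))^2 = o(1)$, and for the first term, $\sum_{j\ne i} H_{ij}^2 (Y_j(z)-\bar Y(z))^2 \le \max_j (Y_j(z)-\bar Y(z))^2 \cdot \sum_j H_{ij}^2 = H_{ii}\max_j (Y_j(z)-\bar Y(z))^2$, giving $\mathcal{M}(f_2) = o(1)$ again by Assumption~\ref{assumption:lindeberg-type-condition-p-o(n)}.

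The principal obstacle is the proper invocation of Koike's multivariate bound when the two components have different orders and the coordinates have possibly very different variances; one must verify that the Kolmogorov-distance bound for vectors of homogeneous sums can be combined so that the leading contribution after normalization is indeed of the form $(\sqrt{\mathcal{M}(f_1)} + \sqrt{\mathcal{M}(f_2)} + \mathrm{lower\ order})\cdot(1 + 1/\min\{\sigma_{\hd,l},\sigma_{\hd,q}\})$, and then absorb all the $O(1)$ factors (coming from $\|f_k\|_{\ell_2}$, $\kappa_4(W_1)$, and $r_1,r_0$) into a single $\delta_n = o(1)$. Once this step is carried out, the above bounds on $\mathcal{M}(f_1)$ and $\mathcal{M}(f_2)$ finish the proof.
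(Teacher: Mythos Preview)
Your overall strategy---invoke Koike's multivariate CLT for homogeneous sums in the i.i.d.\ variables $W_i$---is exactly right and matches the paper's approach. However, you have misread the form of Koike's bound. Theorem~2.1 of \citet{koike2022high}, as applied in the paper, controls the Kolmogorov distance by
\[
C\Bigl(\delta_0(\bs{Q})^{1/3}+\delta_1(\bs{Q})^{1/3}+\max_{k}\mathcal{M}(f_k)^{1/2}\Bigr)\Bigl(1+\tfrac{1}{\min\{\sigma_{\hd,l},\sigma_{\hd,q}\}}\Bigr),
\]
where $\delta_1(\bs{Q})$ is built from $|\kappa_4(Q_k)|$ and $\sum_i\operatorname{Inf}_i(f_k)^2$. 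The relevant fourth cumulants are those of the \emph{homogeneous sums} $Q_k$, not of the underlying $W_1$. For the linear part this is harmless: $\kappa_4(Q_1)=\kappa_4(W_1)\sum_i f_1(i)^4$, which is $o(1)$ by your own $\mathcal{M}(f_1)$ bound and $\|f_1\|_{\ell_2}^2=O(1)$. For the quadratic part it is not: $\kappa_4(Q_2)$ expands into a linear combination of sums of the form $\sum_{[i_1,\ldots,i_r]} H_{i_1 i_2}\cdots H_{i_r i_1}\,y_{i_1}^{m_1}\cdots y_{i_r}^{m_r}$ with $r\in\{2,3,4\}$ and $\sum m_j=4$, and showing these are all $o(1)$ is the main technical content of the proof. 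The paper handles them by rewriting each as a trace of products of $\diag(\bs{y})^{m}$ with $\diag^-(\bs{H})$ or $\diag^-(\bs{Q})$ and applying Kristof's trace inequality (Lemma~\ref{lem:bound-of-the-trace}) together with the operator-norm bounds of Lemma~\ref{lem:norm-of-two-matrix}; the Lindeberg-type condition $\max_i y_i^2=o(n)$ then gives $\sum_i y_i^4/n^2=o(1)$ and closes the argument. The same trace machinery is needed for $\sum_i\operatorname{Inf}_i(f_2)^2$, which you also omit.

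In short: your verification of $\mathcal{M}(f_1),\mathcal{M}(f_2)=o(1)$ is correct and necessary, but it is not the whole reduction. The ``lower order'' terms you hope to absorb into $O(1)$ constants are exactly the $\kappa_4(Q_2)$ and $\sum_i\operatorname{Inf}_i(f_2)^2$ terms, and these require the trace-inequality argument sketched above before you can declare $\delta_n=o(1)$.
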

\begin{proof}[Proof of Proposition~\ref{prop:negligible-Kolmogorov-distance}]
For ease of presentation, we denote 
\begin{align*}
&  g_i := (r_1r_0)^{1/2}\left(\frac{e_i(1)}{r_1}+ \frac{e_i(0)}{r_0}+ \frac{s_i(1)}{r_1}+ \frac{s_i(0)}{r_0}\right),\\ 
&  y_i := (r_1r_0)\left(-\frac{Y_i(1)-\bar{Y}(1)}{r_1^2}+\frac{Y_i(0)-\bar{Y}(0)}{r_0^2}\right).  
\end{align*}
Therefore, we can rewrite that
\[
f_1(i) = n^{-1/2} g_i,\quad f_2(i,j) = (2\sqrt{n})^{-1}H_{ij}(y_i+y_j) .
\]
  By Theorem~2.1 of \cite{koike2022high}, we have
  \begin{align*}
  &\sup _{(x_1,x_2)^\top \in \mathbb{R}^2}|\mathbb{P}(Q_1\leq x_1; Q_2\leq x_2)-\mathbb{P}(G_1 \leq x_1; G_2\leq x_2)| \\
  &= C\left(\delta_0({\bs{Q}})^{\frac{1}{3}}+\delta_1(\bs{Q})^{\frac{1}{3}}+\max _{1 \leq k \leq 2}  \left(\mathcal{M}\left(f_k\right)\right)^{1/2}\right)\left(1+\frac{1}{\min\{\sigma_{\hd,q},\sigma_{\hd,l}\}}\right),
  \end{align*}
  where $C$ is a universal constant that does not depend on $n$ and
  \begin{align*}
    \delta_0(\bs{Q}) :=& \|\cov(\bs{Q})-\cov(\bs{G})\|_{\infty},\\
    \delta_1(\bs{Q}):= & \bigg(\left|\kappa_4\left(Q_1\right)\right|+ \sum_{i} \operatorname{Inf}_i\left(f_1\right)^2\bigg)^{1/2}+\bigg(\left|\kappa_4\left(Q_2\right)\right|+ \sum_{i} \operatorname{Inf}_i\left(f_2\right)^2\bigg)^{1/2}\\
    &   + \left\|f_1\right\|_{\ell_2}\bigg(\left|\kappa_4\left(Q_2\right)\right|+ \sum_{i} \operatorname{Inf}_i\left(f_2\right)^2\bigg)^{1 / 4}.
    \end{align*}
    Now, we set 
    \[
   \delta_n =  C\left(\delta_0({\bs{Q}})^{\frac{1}{3}}+\delta_1(\bs{Q})^{\frac{1}{3}}+\max _{1 \leq k \leq 2}  \left(\mathcal{M}\left(f_k\right)\right)^{1/2}\right).
    \]
    To conclude the proof, it suffices to show that $\delta_0(\bs{Q})$, $\delta_1(\bs{Q})$, $\max _{1 \leq k \leq 2}\mathcal{M}\left(f_k\right)$ are all of order $o(1)$.
    
  {\rev By Proposition \ref{prop:var-of-l-q-term} and \ref{prop:order-of-sigma-l-q}}, we have that
   \begin{align*}
        &\cov(Q_1,Q_2)=\cov(G_1,G_2)=0,\\
        &\var(Q_1)-\sigma_{\hd,l}^2 =\frac{n-1}{n}\sigma_{\hd,l}^2-\sigma_{\hd,l}^2 = -\frac{1}{n}\sigma_{\hd,l}^2 = o(1),\\
        &\var(Q_2)-\sigma_{\hd,q}^2 =\frac{n-1}{n}\sigma_{\hd,q}^2-\sigma_{\hd,q}^2 = -\frac{1}{n}\sigma_{\hd,q}^2 = o(1).
   \end{align*}
  These estimates give that $\delta_0(\bs{Q})= o(1)$.  
   We then consider $\max_{k \in \{1, 2\}}\mathcal{M}\left(f_k\right)$. We have that
   %Write $\delta_{ij} := I(i=j)$, we see that 
   \[
   \max_i |s_i(z)|< 2\max_i |H_{ii}(Y_i(z)-\bar{Y}(z))| < 2\max_i |Y_i(z)-\bar{Y}(z)| = o(n^{1/2}),
   \]
   which, combined with Assumption~\ref{assumption:positive-limit-of-assigned-proportion} and \ref{assumption:lindeberg-type-condition-p-o(n)}, yields that $\max_i g_i^2 = o(n)$ and $\max_i y_i^2 = o(n)$.
   As a consequence, we obtain that
    \begin{align*}
      \mathcal{M}(f_1) &= \max_{i \in [n]}\operatorname{Inf}_i(f_1)=\max_{i \in [n]} f_1(i)^2 = \max_i g_i^2 /n = o(1),\\
        \mathcal{M}(f_2) &= \max_{i \in [n]}\operatorname{Inf}_i(f_2) = \max_i\sum_{j} f_2(i,j)^2(1-\delta_{ij}) = \max_i\sum_j H_{ij}^2(y_i+y_j)^2(1-\delta_{ij})/(4n)\\
        &\leq \max_i y_i^2\sum_j H_{ij}^2(1-\delta_{ij})/n = O\Big(\max_i y_i^2 /n\Big) = o(1).
    \end{align*}
    
    Finally, we estimate $\delta_1(\bs{Q})$. We first focus on terms relating to $Q_1$ and $f_1$. We see that 
    \begin{align*}
      &\sum_i\operatorname{Inf}_i(f_1)^2 = \sum_i f_1(i)^4 = \sum_i g_i^4/n^2 = \max_i g_i^2 / n \cdot \sum_{i} g_i^2 / n = o\Big(\sum_{i} g_i^2 / n\Big).
  \end{align*}
  Then, using 
  \[
  \frac{1}{n} \sum_{i} g_i^2 = O\bigg(\max_z \frac{1}{n} \sum_{i} (e_i(z)^2 + s_i^2(z)) \bigg) = O\bigg(\max_z \frac{1}{n} \sum_{i} (Y_i - \bar{Y}(z))^2\bigg) = O(1),
  \]
  we get that $\sum_i\operatorname{Inf}_i(f_1)^2 = o(1)$ and 
  \begin{align}\label{eq:fll2}
      &\|f_1\|_{\ell_2}=\bigg\{\sum_{i} f_1\left(i\right)^2\bigg\}^{1/2} = \bigg(\sum_i g_i^2/n\bigg)^{1/2} = O(1).
  \end{align}
    On the other hand, for $\kappa_4\left(Q_1\right)$,  we have that %, by independence of $W_i$'s and $\E W_i=0$, 
    \begin{align*}
    &\E Q_1^4 = \sum_i f_1(i)^4 \E W_i^4 + 3\sum_{i\ne j} f_1(i)^2 f_1(j)^2 \E \left(W_i^2 W_j^2\right),\\
        & 3\left(\E Q_1^2\right)^2 = 3\bigg(\sum_i f_1(i)^2\E W_i^2\bigg)^2 = 3 \sum_i f_1(i)^4 \left(\E W_i^2\right)^2 + 3 \sum_{i\ne j} f_1(i)^2 f_1(j)^2  \E W_i^2\cdot  \E W_j^2 ,
    \end{align*}
  which yield that 
  \begin{align*}
       \kappa_4\left(Q_1\right) &= \sum_i f_1(i)^4 \left[\E W_i^4-3\left(\E W_i^2\right)^2\right] =\kappa_4\left(W_1\right) \sum_i f_1(i)^4 
       = \kappa_4\left(W_1\right) \sum_i g_i^4/n^2\\
       &\le \kappa_4\left(W_1\right)\left(\max_i g_i^2\right) \sum_i g_i^2/n^2 = o(1).
  \end{align*}

  Next, we focus on terms relating to $Q_2$ and $f_2$. For $\operatorname{Inf}_i(f_2)^2$, using that $(y_i + y_j)^2 \leq 2(y_i^2 + y_j^2)$, we get 
  \begin{align*}
    \sum_i \operatorname{Inf}_i(f_2)^2  = O\bigg(\sum_i\bigg\{\sum_j H_{ij}^2y_i^2(1-\delta_{ij})/n\bigg\}^2+\sum_i\bigg\{\sum_j H_{ij}^2y_j^2(1-\delta_{ij})/n\bigg\}^2 \bigg).
  \end{align*}
  Expanding the above two terms, we get
  \begin{align*}
    \sum_i\bigg\{\sum_j H_{ij}^2 y_j^2(1-\delta_{ij})/n\bigg\}^2 &= \sumtwo \h{1}{2}^4 \yi{1}^4/n^2+\sumi{1}{3} \h{1}{2}^2\h{2}{3}^2\yi{1}^2\yi{3}^2/n^2 =: M_{11}+ M_{12},\\
    \sum_i\bigg\{\sum_j H_{ij}^2 y_i^2(1-\delta_{ij})/n\bigg\}^2 &= \sumtwo \h{1}{2}^4 \yi{2}^4/n^2+\sumi{1}{3} \h{1}{2}^2\h{2}{3}^2\yi{2}^4/n^2 =: M_{13}+M_{14}.
  \end{align*}
%To bound the above two terms, %define $\diag(\bs{y})$ to be the diagonal matrix of $y_i$'s, by 
First, we use Lemma~\ref{lem:bound-of-the-trace} and Lemma~\ref{lem:norm-of-two-matrix} to get that 
\begin{align*}
      M_{11} =M_{13} &= \sumtwo \h{1}{2}^4 \yi{1}^4/n^2 =\tr\left(\operatorname{diag}(\bs{y})^2 \diag^-(\bs{Q})\diag^-(\bs{Q})\operatorname{diag}(\bs{y})^2\right)/n^2  \\
      &\leq\sum_{i}y_i^4/n^2 <\left(\max_i y_i^2\right)\sum_{i}y_i^2/n^2 =o(1).
\end{align*}
For $M_{12}$, by repeatedly applying $\sum_{j: j \neq i}H_{ij}^2 \le H_{ii} \le 1$, we get
\begin{align*}
    M_{12} = & \sumi{1}{3} \h{1}{2}^2\h{2}{3}^2\yi{1}^2\yi{3}^2/n^2 \leq \left(\max_i y_i^2 \right)\sumi{1}{3} \h{1}{2}^2\h{2}{3}^2\yi{1}^2/n^2 \\
    &\leq \left(\max_i y_i^2 \right)\sumtwo \h{1}{2}^2\yi{1}^2/n^2 
\leq \left(\max_i y_i^2 \right)\sum_i y_i^2/n^2 =o(1),
\end{align*}
With a similar argument, we can bound $M_{14}$ as
\begin{align*}
      M_{14}=  \sumi{1}{3} \h{1}{2}^2\h{2}{3}^2\yi{2}^4/n^2 &\leq  \sum_i y_i^4/n^2 = o(1).
\end{align*}
Putting together, we see that
\[
\sum_i \operatorname{Inf}_i(f_2)^2 = o(1).
\]

We next show that $\kappa_4\left(Q_2\right) = o(1)$. For ease of presentation, we abbreviate $f_2(i,j)$ as $f_{ij}$. Since $f_{ij}=f_{ji}$, we see from some basic combinatorics that
\begin{align*}
     \E Q_2^4 &= \E\bigg(\sumtwo \fii{1}{2} W_{i_1} W_{i_2}\bigg)^4 \\
    &= C_1\sumtwo \fii{1}{2}^4 \E W_{i_1}^4 W_{i_2}^4 +  C_2\sumi{1}{3}\fii{1}{2}^2\fii{2}{3}\fii{3}{1} \E W_{i_1}^3 W_{i_2}^3 W_{i_3}^2 \\
   &\quad + C_3\sumi{1}{3}\fii{1}{2}^2\fii{2}{3}^2 \E W_{i_1}^2 W_{i_2}^4 W_{i_3}^2 +  C_4\sumi{1}{4} \fii{1}{2}\fii{2}{3}\fii{3}{4}\fii{4}{1}  \E W_{i_1}^2 W_{i_2}^2 W_{i_3}^2 W_{i_4}^2 \\
  & \quad +  12 \sumi{1}{4} \fii{1}{2}^2 \fii{3}{4}^2  \E W_{i_1}^2 W_{i_2}^2 W_{i_3}^2 W_{i_4}^2,
\end{align*}
where $C_1, \ldots, C_4$ are universal constant that do not depend on $n$. 
% As we will show later, we are only interested in the exact value of $C_5$. \textcolor{red}{I don't quite understand the following comment of $C_5$, perhaps we can discuss this.} To derive this, we enumerate the number of terms of the form $f_{ij}^2 f_{kl}^2$, for $i\ne j\ne k\ne l$ in the expansion of $(\sum_{i\ne j} f_{ij})^4$. This is equivalent to selecting $4$ indices from $\{(i,j):i\ne j, i\in [n],j\in [n]\}$.
% Fix the first index  as $(i,j)$, $i\ne j$. We can pair it with one of the $3$ remaining indices, say, without loss of generality, the second index. Fix the third index as $(k,l)$. Because $f_{ij}=f_{ji}$. Then the second index can be $(i,j)$ or $(j, i)$ and the fourth index can be $(k,l)$ or $(l,k)$, each with $2$ choices. Then for fixed $(i,j,k,l)$, there are $ 3\times 2\times 2=12$ terms, which yields that $C_5 =12$.
On the other hand, we can calculate that
\begin{align*}
    &3\left(\E Q_2^2\right)^2 = 3\bigg\{\E \bigg(\sumtwo \fii{1}{2} W_{i_1}W_{i_2} \bigg)^2\bigg\}^2 = 3\bigg(2\sumtwo \fii{1}{2}^2 \E W_{i_1}^2W_{i_2}^2\bigg)^2\\
    =&  C_6\sumtwo \fii{1}{2}^4 \left(\E W_{i_1}^2 W_{i_2}^2\right)^2 + C_7\sumi{1}{3}\fii{1}{2}^2\fii{2}{3}^2 \left(\E W_{i_1}^2 W_{i_2}^2\right)^2 + 12\sumi{1}{4} \fii{1}{2}^2 \fii{3}{4}^2  \left(\E W_{i_1}^2 W_{i_2}^2\right)^2,
\end{align*}
where $C_6$ and $C_7$ are universal constants that do not depend on $n$. Using the cancellation of the term $\sumi{1}{4} \fii{1}{2}^2 \fii{3}{4}^2$, we obtain that 
\begin{align*}
  \kappa_4\left(Q_2\right)=O\left(|M_{21}|+|M_{22}|+|M_{23}|+|M_{24}|\right),
\end{align*}
where
\begin{align*}
  & M_{21} = \sumtwo \fii{1}{2}^4,\quad M_{22} = \sumi{1}{3}\fii{1}{2}^2\fii{2}{3}\fii{3}{1},\\ & M_{23} = \sumi{1}{3}\fii{1}{2}^2\fii{2}{3}^2,\quad M_{24} = \sumi{1}{4} \fii{1}{2}\fii{2}{3}\fii{3}{4}\fii{4}{1}.
\end{align*}
We handle these terms one by one. 

The term $M_{21}$ can be written as a summation of terms of the form
\begin{align*}
  \sumtwo \h{1}{2}^4 \yi{1}^{m_1} \yi{2}^{m_2}/n^2, \quad (m_1,m_2)\in \mathbb N^2, \ m_1+m_2=4.
\end{align*}
(We adopt the convention that $0\in \mathbb{N}$.) 
By Lemmas~\ref{lem:bound-of-the-trace} and \ref{lem:norm-of-two-matrix}, for any $(m_1,m_2)\in \mathbb N^2$ with $m_1 + m_2 = 4$,
\begin{align}
\label{eq:eq1}
  \sumtwo \h{1}{2}^4 \yi{1}^{m_1} \yi{2}^{m_2}/n^2 &= \operatorname{tr}\left(\operatorname{diag}(\bs{y})^{m_1}\diag^-(\bs{Q})\operatorname{diag}(\bs{y})^{m_2}\diag^-(\bs{Q})\right)/n^2\\
  &\leq \sum_{i} y_i^4/n^2 = o(1),\nonumber
\end{align}
which implies that $|M_{21}|=o(1)$.

By Cauchy-Schwarz inequality, we have that $|M_{22}|\leq M_{23}$. The term $M_{23}$ can be written as a summation of terms of the form
\begin{align*}
  \sumi{1}{3} \h{1}{2}^2\h{2}{3}^2 \yi{1}^{m_1} \yi{2}^{m_2}\yi{3}^{m_3}/n^2, \quad (m_1,m_2,m_3)\in \mathbb N^3, \ m_1+m_2+m_3=4 .
\end{align*}
We can find $(m_1^{(1)},m_2^{(1)},m_3^{(1)})\in \mathbb N^3$ and $(m_1^{(2)},m_2^{(2)},m_3^{(2)})\in \mathbb N^3$ such that $m_i=m_i^{(1)}+m_i^{(2)}$, $i=1,2,3,$ and 
\[
m_1^{(1)}+m_2^{(1)}+m_3^{(1)} = 2,\quad m_1^{(2)}+m_2^{(2)}+m_3^{(2)} = 2.
\]
Then, applying the Cauchy-Schwartz inequality, we can obtain that
\begin{align*}
 &\bigg|\sumi{1}{3} \h{1}{2}^2\h{2}{3}^2 \yi{1}^{m_1} \yi{2}^{m_2}\yi{3}^{m_3}\bigg| \leq \\
 &\bigg(\sumi{1}{3} \h{1}{2}^2\h{2}{3}^2 \yi{1}^{2m_1^{(1)}} \yi{2}^{2m_2^{(1)}}\yi{3}^{2m_3^{(1)}}\bigg)^{1/2}\bigg(\sumi{1}{3} \h{1}{2}^2\h{2}{3}^2 \yi{1}^{2m_1^{(2)}} \yi{2}^{2m_2^{(2)}}\yi{3}^{2m_3^{(2)}}\bigg)^{1/2}. 
\end{align*}
Mimicking the above proof for $M_{12}$, by repeatedly applying $\sum_{j: j \neq i}H_{ij}^2 \le H_{ii} \le 1$ and the condition $\max_i y_i^2 = o(n)$, we get that
\begin{align*}
   \frac1{n^2}\sumi{1}{3} \h{1}{2}^2\h{2}{3}^2 \yi{1}^{2m_1} \yi{2}^{2m_2}\yi{3}^{2m_3} = o(1),\ \ \forall (m_1,m_2,m_3)\in \mathbb{N}^3, m_1+m_2+m_3=2  .
\end{align*}
As a consequence, it implies that  
\begin{align}
\label{eq:eq2}
    \frac1{n^2} \sumi{1}{3} \h{1}{2}^2\h{2}{3}^2 \yi{1}^{m_1} \yi{2}^{m_2}\yi{3}^{m_3} = o(1),\  \forall (m_1,m_2,m_3)\in \mathbb{N}^3, m_1+m_2+m_3=4, 
\end{align}
so we have $|M_{23}|=o(1)$.

Finally, $M_{24}$ can be written as a summation of terms of the form
\begin{align}
  \sumi{1}{4} \h{1}{2}\h{2}{3}\h{3}{4}\h{4}{1} \yi{1}^{m_1} \yi{2}^{m_2}\yi{3}^{m_3}\yi{4}^{m_4}/n^2, 
\end{align}
for $(m_1,m_2,m_3,m_4)\in \mathbb{N}^4$ with $m_1+m_2+m_3+m_4=4$. 
% To understand the order of 
% $$\sumi{1}{4} \h{1}{2}\h{2}{3}\h{3}{4}\h{4}{1} \yi{1}^{m_1} \yi{2}^{m_2}\yi{3}^{m_3}\yi{4}^{m_4}/n^2,$$ 
To bound this term, we estimate an intermediate quantity $$\sum_{i_1\ne i_2, i_2\ne i_3, i_3\ne i_4, i_4\ne i_1} \h{1}{2}\h{2}{3}\h{3}{4}\h{4}{1} \yi{1}^{m_1} \yi{2}^{m_2}\yi{3}^{m_3}\yi{4}^{m_4}/n^2,$$ and define 
\[
\Delta:= \bigg(\sum_{i_1\ne i_2, i_2\ne i_3, i_3\ne i_4, i_4\ne i_1}-\sumi{1}{4}\bigg) \h{1}{2}\h{2}{3}\h{3}{4}\h{4}{1} \yi{1}^{m_1} \yi{2}^{m_2}\yi{3}^{m_3}\yi{4}^{m_4}/n^2.
\]
We observe that
\begin{align*}
    &\operatorname{tr}\left[\prod_{v=1}^4 \left(\operatorname{diag}(\bs{y})^{m_v}\diag^-(\bs{H})\right)\right]/n^2 \\
    &= \sum_{i_1\ne i_2, i_2\ne i_3, i_3\ne i_4, i_4\ne i_1} \h{1}{2}\h{2}{3}\h{3}{4}\h{4}{1} \yi{1}^{m_1} \yi{2}^{m_2}\yi{3}^{m_3}\yi{4}^{m_4}/n^2,
\end{align*}
and $\Delta$ can be written as a summation of terms of the forms
\[
 \sumtwo \h{1}{2}^4 \yi{1}^{m_1^\prime} \yi{2}^{m_2^\prime}/n^2, \quad (m_1^\prime,m_2^\prime)\in \mathbb N^2,\ m_1^\prime+m_2^\prime=4 ,
\]
and
\[
  \sumi{1}{3} \h{1}{2}^2\h{2}{3}^2 \yi{1}^{m_1^\prime} \yi{2}^{m_2^\prime}\yi{3}^{m_3^\prime}/n^2, \quad (m_1^\prime,m_2^\prime,m_3^\prime)\in \mathbb N^3,\ m_1^\prime+m_2^\prime+m_3^\prime=4 .
\]
By Lemmas~\ref{lem:bound-of-the-trace} and \ref{lem:norm-of-two-matrix}, we have
\begin{align*}
  \operatorname{tr}\left[\prod_{v=1}^4 \left(\operatorname{diag}(\bs{y})^{m_v}\diag^-(\bs{H})\right)\right]/n^2 = O\bigg(\sum_{i} y_i^4/n^2\bigg) = o(1).
\end{align*}
On the other hand, by \eqref{eq:eq1} and \eqref{eq:eq2}, we have $\Delta = o(1)$. Putting together, we have that for all $ (m_1,m_2,m_3,m_4)\in \mathbb{N}^4$ with $m_1+m_2+m_3+m_4=4$,
\begin{align}
\label{eq:eq3}
     \sumi{1}{4} \h{1}{2}\h{2}{3}\h{3}{4}\h{4}{1} \yi{1}^{m_1} \yi{2}^{m_2}\yi{3}^{m_3}\yi{4}^{m_4}/n^2=o(1),
\end{align}
which yields that $|M_{24}|= o(1)$. 
Combining all the above estimates, we conclude that $\kappa_4\left(Q_2\right) = o(1).$

In light of \eqref{eq:fll2} and our bounds on $\kappa_4(Q_k)$ and $\sum_i \operatorname{Inf}_i(f_k)^2$, $k \in \{1, 2\}$, there is $\delta_1(\bs{Q}) = o(1)$, which concludes the proof.
\end{proof}

The following proposition shows the marginal convergence of $Q_1$ and $Q_2$.
\begin{proposition}
    \label{prop:marginal-weak-converge}
    Assume Assumptions~\ref{assumption:positive-limit-of-assigned-proportion}--\ref{assumption:lindeberg-type-condition-p-o(n)} holds. We have that
    \begin{itemize}
        \item[(i)] if $\liminf \sigma_{\hd,l}^2>0$, then $Q_1/\sigma_{\hd,l}\stackrel{\textnormal{d}}{\to} \mathcal{N}(0,1)$;
        \item[(ii)] if $\liminf \sigma_{\hd,q}^2>0$, then $Q_2/\sigma_{\hd,q}\stackrel{\textnormal{d}}{\to} \mathcal{N}(0,1)$.
    \end{itemize}
\end{proposition}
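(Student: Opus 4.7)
My plan is to prove parts (i) and (ii) separately using univariate central limit theorems, since the joint bound in \Cref{prop:negligible-Kolmogorov-distance} carries a factor $1/\min\{\sigma_{\hd,l},\sigma_{\hd,q}\}$ and therefore cannot be invoked when only one of the two marginal variances is bounded below. Part (i) reduces to the classical Lindeberg--Feller CLT for sums of independent random variables, while part (ii) reduces to a univariate central limit theorem for degree-two homogeneous sums; in both cases the heavy lifting of bounding moments and influence functions has already been carried out in the proof of \Cref{prop:negligible-Kolmogorov-distance}.

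For part (i), I would rewrite
\[
Q_1 = \frac{1}{\sqrt{n}}\sum_{i=1}^n (T_i - r_1)\, y_i, \qquad y_i := \frac{e_i(1)+s_i(1)}{r_1} + \frac{e_i(0)+s_i(0)}{r_0},
\]
and observe that $\bar{y}=0$ since $\sum_i e_i(z)=0$ by the normal equations of the population OLS fit and $\sum_i s_i(z)=0$ by the centering in the definition of $s_i(z)$. \Cref{prop:var-of-l-q-term} identifies $r_1 r_0 \sum_i y_i^2/n = \tfrac{n-1}{n}\sigma_{\hd,l}^2$, which is bounded away from zero under the hypothesis of (i). The Lindeberg-type tail condition $\max_i y_i^2 = o(n)$ follows from \Cref{assumption:lindeberg-type-condition-p-o(n)} together with the elementary bound $|s_i(z)| \le 2\max_j|Y_j(z)-\bar{Y}(z)|$. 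Applying \Cref{lem:asymptotic-normality-linear-term} to this choice of $y_i$ then yields $Q_1/\sigma_{\hd,l}\asyeq \mathcal{N}(0,1)$.

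For part (ii), I would invoke a univariate central limit theorem for quadratic homogeneous sums of independent variables, such as \cite{de1987central} or the $d=1$ specialization of \cite{koike2022high}, which asserts that $Q_2/\sigma_{\hd,q}\asyeq \mathcal{N}(0,1)$ as soon as $\var(Q_2)/\sigma_{\hd,q}^2\to 1$, $\kappa_4(Q_2)/\sigma_{\hd,q}^4\to 0$, and $\mathcal{M}(f_2)/\sigma_{\hd,q}^2\to 0$. The first condition is immediate from \Cref{prop:var-of-l-q-term}. The latter two follow by combining the hypothesis $\liminf\sigma_{\hd,q}^2>0$ with the estimates $\kappa_4(Q_2)=o(1)$ and $\mathcal{M}(f_2)=o(1)$ that were already established inside the proof of \Cref{prop:negligible-Kolmogorov-distance}. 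The only potential obstacle is locating a clean reference for the univariate quadratic CLT to quote, but no new analytic estimates are required beyond those already contained in the preceding proofs.
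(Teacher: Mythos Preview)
Your proposal is correct and, for part (ii), essentially identical to the paper's proof: both reduce to the de Jong fourth-moment CLT for homogeneous sums (the paper cites \cite{de1990central}), feeding in the estimates $\kappa_4(Q_2)=o(1)$ and $\mathcal{M}(f_2)=o(1)$ already obtained inside the proof of \Cref{prop:negligible-Kolmogorov-distance} together with $\var(Q_2)=\sigma_{\hd,q}^2+o(1)$ from \Cref{prop:var-of-l-q-term}.

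For part (i) there is a small stylistic difference worth flagging. The paper treats $k=1$ and $k=2$ uniformly, applying Theorem~1 of \cite{de1990central} to both and verifying the same two conditions $\E Q_k^4/(\E Q_k^2)^2\to 3$ and $\mathcal{M}(f_k)/\var(Q_k)\to 0$; for $k=1$ these follow from $\kappa_4(Q_1)=o(1)$ and $\mathcal{M}(f_1)=o(1)$, both already established. You instead invoke the more elementary \Cref{lem:asymptotic-normality-linear-term} (a direct Berry--Esseen bound), which bypasses the fourth-moment machinery entirely for the linear part. Your route is slightly more self-contained for (i); the paper's is more unified across both parts. No new estimates are needed either way.
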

\begin{proof}[Proof of Proposition~\ref{prop:marginal-weak-converge}]
    By Theorem~1 of \cite{de1990central}, we have 
    $$Q_k/\var(Q_k)^{1/2} \stackrel{\operatorname{d}}{
    \to} \mathcal{N}(0,1),\quad k\in\{1,2\}, $$ 
    provided that the following two conditions hold: (i) $\E Q_k^4/(\E Q_k^2)^2 \to 3 $; (ii) $\mathcal{M}(f_k)/\var(Q_k)  \rightarrow 0.$ Moreover, by Proposition~\ref{prop:var-of-l-q-term}, we have 
    \[
    \var(Q_1)  = \sigma_{\hd,l}^2+o(1),\quad \var(Q_2)  = \sigma_{\hd,q}^2 +o(1).
    \]

    From the proof of Proposition~\ref{prop:negligible-Kolmogorov-distance}, we have seen that under Assumptions~\ref{assumption:positive-limit-of-assigned-proportion}--\ref{assumption:lindeberg-type-condition-p-o(n)}, $\kappa_4(Q_k)\rightarrow 0$ and $\mathcal{M}(f_k)\rightarrow 0$ for $k\in\{1,2\}$.
    If $\liminf \sigma_{\hd,l}^2 >0$ and $k=1$, $\kappa_4(Q_1)\rightarrow 0$ and $\mathcal{M}(f_1)\rightarrow 0$ imply conditions (i) and (ii); if $\liminf \sigma_{\hd,q}^2 >0$ and $k=2$, then $\kappa_4(Q_2)\rightarrow 0$ and $\mathcal{M}(f_2)\rightarrow 0$ imply conditions (i) and (ii). Thus, we conclude the proof.
\end{proof}
\begin{proof}[Proof of Theorem~\ref{theorem:CLT-alpha-greater-than-0}]
    Without loss of generality, we assume $\liminf\sigma^2_{\hd,l} > 0$. We split the entire sequence into two subsequences. The first subsequence is such that all $\sigma_{\hd, q}$'s are larger than $\delta_n^{1/6}$, the second is such that all $\sigma_{\hd, q}$'s are smaller than $\delta_n^{1/6}$. 
    
    For the first subsequence, we have
     \[
     1+\frac{1}{\min\{\sigma_{\hd,q},\sigma_{\hd,l}\}} = O\left(1+\sigma_{\hd,q}^{-1}+\sigma_{\hd,l}^{-1}\right) =  O(1+1+\delta_n^{-1/6}) = O(\delta_n^{-1/6}).
     \]
     which yields that
         \begin{align*}
        \sup _{(x_1,x_2)^\top \in \mathbb{R}^2}|\mathbb{P}(Q_1\leq x_1; Q_2\leq x_2)-\mathbb{P}(G_1 \leq x_1; G_2\leq x_2)| \le O(\delta_n \delta_n^{-1/6}) = O(\delta_n^{5/6}).
    \end{align*}
We first show that given any $x\in \mathbb R$,
        \begin{align*}
        \mathbb{P}\left(\frac{Q_1+Q_2}{\sigma_{\hd}}\leq x\right) \le \mathbb{P}\left(\frac{G_1+G_2}{\sigma_{\hd}}\leq x\right) + o(1).
    \end{align*}
Let $\lceil\cdot\rceil$ and $\lfloor\cdot\rfloor$ be the ceiling and floor functions, respectively.  We decompose the left-hand side as
    \begin{align*}
    & \mathbb{P}\left(\frac{Q_1+Q_2}{\sigma_{\hd}}\leq x\right) \\
    &\le \sum_{t = \lfloor - \delta_n^{-2/3}\rfloor}^{\lceil\delta_n^{-2/3}\rceil} \mathbb{P}\left(\frac{Q_1+Q_2}{\sigma_{\hd}}\leq x, (t - 1) \cdot \delta_n^{1/2} \le Q_1 \le t \cdot \delta_n^{1/2}\right) + \Prob\left(|Q_1| \ge \delta_n^{-1/6}\right) \\
    & \le \sum_{t = \lfloor - \delta_n^{-2/3}\rfloor}^{\lceil\delta_n^{-2/3}\rceil} \mathbb{P}\left(Q_2\leq \sigma_{\hd} x - (t - 1) \cdot \delta_n^{1/2}, (t - 1) \cdot \delta_n^{1/2} \le Q_1 \le t \cdot \delta_n^{1/2}\right)   + \Prob\left(|Q_1| \ge \delta_n^{-1/6}\right) \\
    & \le \sum_{t = \lfloor - \delta_n^{-2/3}\rfloor}^{\lceil\delta_n^{-2/3}\rceil} \mathbb{P}\left(G_2\leq \sigma_{\hd} x - (t - 1) \cdot \delta_n^{1/2}, (t - 1) \cdot \delta_n^{1/2} \le G_1 \le t \cdot \delta_n^{1/2}\right) \\
    &\quad + \Prob\left(|Q_1| \ge \delta_n^{-1/6}\right) + O\left(\delta_n^{-2/3}\delta_n^{5/6}\right) \\
    & \le \sum_{t = \lfloor - \delta_n^{-2/3}\rfloor}^{\lceil\delta_n^{-2/3}\rceil} \mathbb{P}\left(\frac{G_1 + G_2}{\sigma_{\hd} } \leq x, (t - 1) \cdot \delta_n^{1/2} \le G_1 \le t \cdot \delta_n^{1/2}\right) \\
    & \quad + \sum_{t = \lfloor - \delta_n^{-2/3}\rfloor}^{\lceil\delta_n^{-2/3}\rceil} \mathbb{P}\left(\sigma_{\hd} x - t \cdot \delta_n^{1/2} \le G_2 \le \sigma_{\hd} x - (t - 1) \cdot \delta_n^{1/2}, (t - 1) \cdot \delta_n^{1/2} \le G_1 \le t \cdot \delta_n^{1/2}\right) \\
    & \quad + \Prob\left(|Q_1| \ge \delta_n^{-1/6}\right) + O\left(\delta_n^{1/6}\right).
    \end{align*}
   The second term on the right-hand side is of order 
    \[
    \delta_n^{5/6} \cdot \delta_n^{-2/3} = \delta_n^{1/6} = o(1),
    \]
    since, by $\liminf \sigma_{\hd,l} >0 $ and $\sigma_{\hd,q} > \delta_n^{1/6}$, we have that 
    \begin{align*}
        &\mathbb{P}\left(\sigma_{\hd} x - t \cdot \delta_n^{1/2} \le G_2 \le \sigma_{\hd} x - (t - 1) \cdot \delta_n^{1/2}, (t - 1) \cdot \delta_n^{1/2} \le G_1 \le t \cdot \delta_n^{1/2}\right) \\
        &= 
O\left(\frac{\delta_n^{1/2}}{\sigma_{\hd,q}}\frac{\delta_n^{1/2}}{\sigma_{\hd,l}} \right) = O\left(\delta_n^{5/6}\right).
    \end{align*}
    %Hence,     Seeing that $\sigma_{\hd,l}^2 = O(1)$ and 
    By ~\Cref{prop:var-of-l-q-term,prop:order-of-sigma-l-q}, we have 
    \[
    \var(Q_1)/\delta_{n}^{-1/3} = \frac{n-1}{n}\sigma_{\hd,l}^2\delta_n^{1/3} =O(\delta_n^{1/3}) ,
    \]
    which, by Chebyshev's inequality, implies that $\Prob(|Q_1| \ge \delta_n^{-1/6})$ and $\Prob(|G_1| \ge \delta_n^{-1/6})$ are both negligible. As a consequence, we have
    \begin{align*}
    \mathbb{P}\left(\frac{Q_1+Q_2}{\sigma_{\hd}}\leq x\right) & \le \sum_{t = \lfloor - \delta_n^{-2/3}\rfloor}^{\lceil\delta_n^{-2/3}\rceil} \mathbb{P}\left(\frac{G_1 + G_2}{\sigma_{\hd} } \leq x, (t - 1) \cdot \delta_n^{1/2} \le G_1 \le t \cdot \delta_n^{1/2}\right) + o(1) \\
    & \le \Prob\left(\frac{G_1 + G_2}{\sigma_{\hd} } \leq x\right) + \Prob(|G_1| \ge \delta_n^{-1/6}) + o(1)\\
    & \le \Prob\left(\frac{G_1 + G_2}{\sigma_{\hd} } \leq x\right) + o(1).
    \end{align*}
    For the lower bound, we apply similar arguments as above to get that 
    \begin{align*}
    & \mathbb{P}\left(\frac{Q_1+Q_2}{\sigma_{\hd}}\leq x\right) \ge \sum_{t = \lfloor - \delta_n^{-2/3}\rfloor}^{\lceil\delta_n^{-2/3}\rceil} \mathbb{P}\left(\frac{Q_1+Q_2}{\sigma_{\hd}}\leq x, (t - 1) \cdot \delta_n^{1/2} \le Q_1 \le t \cdot \delta_n^{1/2}\right) \\
    & \ge \sum_{t = \lfloor - \delta_n^{-2/3}\rfloor}^{\lceil\delta_n^{-2/3}\rceil} \mathbb{P}\left(Q_2\leq \sigma_{\hd} x -t \cdot \delta_n^{1/2}, (t - 1) \cdot \delta_n^{1/2} \le Q_1 \le t \cdot \delta_n^{1/2}\right)\\
    & \ge \sum_{t = \lfloor - \delta_n^{-2/3}\rfloor}^{\lceil\delta_n^{-2/3}\rceil} \mathbb{P}\left(G_2\leq \sigma_{\hd} x - t \cdot \delta_n^{1/2}, (t - 1) \cdot \delta_n^{1/2} \le G_1 \le t \cdot \delta_n^{1/2}\right) - O(\delta_n^{1/6}) \\
    & \ge \sum_{t = \lfloor - \delta_n^{-2/3}\rfloor}^{\lceil\delta_n^{-2/3}\rceil} \mathbb{P}\left(\frac{G_1 + G_2}{\sigma_{\hd} } \leq x, (t - 1) \cdot \delta_n^{1/2} \le G_1 \le t \cdot \delta_n^{1/2}\right) \\
    & \qquad - \sum_{t = \lfloor - \delta_n^{-2/3}\rfloor}^{\lceil\delta_n^{-2/3}\rceil} \mathbb{P}\left(\sigma_{\hd} x - t \cdot \delta_n^{1/2} \le G_2 \le \sigma_{\hd} x - (t - 1) \cdot \delta_n^{1/2}, (t - 1) \cdot \delta_n^{1/2} \le G_1 \le t \cdot \delta_n^{1/2}\right) \\
    & \qquad - O(\delta_n^{1/6}) \\
    & \ge \sum_{t = \lfloor - \delta_n^{-2/3}\rfloor}^{\lceil\delta_n^{-2/3}\rceil} \mathbb{P}\left(\frac{G_1 + G_2}{\sigma_{\hd} } \leq x, (t - 1) \cdot \delta_n^{1/2} \le G_1 \le t \cdot \delta_n^{1/2}\right) + o(1)\\
    & \ge \Prob\left(\frac{G_1 + G_2}{\sigma_{\hd} } \leq x\right) - \Prob(|G_1| \ge \delta_n^{-1/6}) + o(1) \ge \Prob\left(\frac{G_1 + G_2}{\sigma_{\hd} } \leq x\right) + o(1).
    \end{align*}
    Putting together the upper and lower bounds, we get that the first subsequence satisfies
    \[
    \sup_{x\in\mathbb{R}}\left| \mathbb{P}\left(\frac{Q_1+Q_2}{\sigma_{\hd}}\leq x\right)-\Prob\left(\frac{G_1 + G_2}{\sigma_{\hd} } \leq x\right)\right| = o(1).
    \]

    We now consider the second subsequence where $\sigma_{\hd, q}$'s are all smaller than $\delta_n^{1/6}$ which implies that $\sigma_\hd/\sigma_{\hd, l} = 1 + o(1)$ in this sequence. As a consequence, we have $Q_2 / \sigma_\hd = o_\Prob(1)$, which means that 
    \[
    \frac{Q_1 + Q_2}{\sigma_\hd} = \frac{Q_1}{\sigma_\hd} + o_\Prob(1).
    \]
By Proposition~\ref{prop:marginal-weak-converge},  we have $Q_1/\sigma_{\hd, l}  \stackrel{\operatorname{d}}{\to} \mathcal{N}(0,1)$. 
%Combined with  $\sigma_\hd/\sigma_{\hd, l} = 1 + o(1)$,  we have by 
Together with Slutsky's theorem, it implies that $(Q_1 + Q_2)/{\sigma_\hd}\stackrel{\operatorname{d}}{\to}\mathcal{N}(0,1)$.

In sum, we have that for each $x\in \mathbb R$, 
\[
\left| \mathbb{P}\left(\frac{Q_1+Q_2}{\sigma_{\hd}}\leq x\right)-\Prob\left(\frac{G_1 + G_2}{\sigma_{\hd} } \leq x\right)\right| = o(1)
\]
for both subsequences, showing that this estimate indeed holds for the whole sequence. This gives that
\[
\frac{Q_1 + Q_2}{\sigma_\hd} \stackrel{\operatorname{d}}{\to} \mathcal{N}(0,1),
\]
and the conclusion then follows from Proposition~\ref{prop:decompose-taudb-with-Ti}.
\end{proof}
\section{Inference}
\label{sec:inference-supp}
In this section, we study the validity of the proposed inference procedure. It includes the proofs for \Cref{thm:inference} and \Cref{cor:inference}. The comment of \eqref{eq:rewrite-sigma-hd} follows from the following proposition.
\begin{proposition}
\label{prop:rewrite-sigma-hd-l}
We have 
\begin{align*}
       \sigma_{\hd,l}^2 = (r_1r_0) S_{\bs{B}, \;r_1^{-1} Y(1) + r_0^{-1} Y(0)}^2.
\end{align*}
\end{proposition}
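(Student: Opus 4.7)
The plan is to first rewrite $\sigma_{\hd,l}^2$ as a single scaled variance of the pooled linear representation, then identify that pooled representation as $\bs{M}$ applied to the centered outcome vector, and finally invoke Lemma~\ref{lem:S2a-S2AAb} with $\bs{B}=\bs{M}^\top\bs{M}$.

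Step 1 (collapsing the three-term expression). Applying Lemma~\ref{lem:variance-crt-2-forms} with $a_i=e_i(1)+s_i(1)$ and $b_i=e_i(0)+s_i(0)$ and recalling the definitions $\tau_{e,i}=e_i(1)-e_i(0)$, $\tau_{s,i}=s_i(1)-s_i(0)$, I get
\begin{align*}
\sigma_{\hd,l}^2
&= r_1^{-1}S^2_{e(1)+s(1)}+r_0^{-1}S^2_{e(0)+s(0)}-S^2_{\tau_e+\tau_s}\\
&= (r_1r_0)\,S^2_{r_1^{-1}(e(1)+s(1))+r_0^{-1}(e(0)+s(0))}.
\end{align*}

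Step 2 (identifying $\bs{M}$ acting on centered outcomes). For fixed $z\in\{0,1\}$, let $\bs{y}(z)$ denote the centered outcome vector with entries $Y_i(z)-\bar Y(z)$. Since $\bs{y}(z)$ has mean zero, $(\bs{I}-\tfrac{1}{n}\bs{1}\bs{1}^\top)\bs{y}(z)=\bs{y}(z)$. By definition, the $i$th entry of $\bs{H}\bs{y}(z)$ is $d_i(z)=\sum_j H_{ij}(Y_j(z)-\bar Y(z))$, so $Y_i(z)-\bar Y(z)-d_i(z)=e_i(z)$ by~\eqref{eq:resid}. Moreover, the $i$th entry of $(\bs{I}-\tfrac{1}{n}\bs{1}\bs{1}^\top)\diag\{\bs{H}\}\bs{y}(z)$ equals $H_{ii}(Y_i(z)-\bar Y(z))-\tfrac{1}{n}\sum_{j}H_{jj}(Y_j(z)-\bar Y(z))=s_i(z)$. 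Adding these three contributions per the definition of $\bs{M}$ in~\eqref{def:BB} gives $(\bs{M}\bs{y}(z))_i=e_i(z)+s_i(z)$. Note that $e_i(z)+s_i(z)$ is automatically centered (both $e(z)$ and $s(z)$ sum to zero by construction), which is consistent with the fact that $\bs{M}\bs{y}(z)$ lies in the range of $\bs{I}-\tfrac{1}{n}\bs{1}\bs{1}^\top$.

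Step 3 (linearity and application of Lemma~\ref{lem:S2a-S2AAb}). Letting $\bs{c}$ denote the centered version of $r_1^{-1}Y(1)+r_0^{-1}Y(0)$, i.e.\ $\bs{c}=r_1^{-1}\bs{y}(1)+r_0^{-1}\bs{y}(0)$, Step 2 combined with linearity yields
\[
\bs{M}\bs{c}=r_1^{-1}(\bs{e}(1)+\bs{s}(1))+r_0^{-1}(\bs{e}(0)+\bs{s}(0)),
\]
and the vector on the right is centered. Applying Lemma~\ref{lem:S2a-S2AAb} with $\bs{A}=\bs{M}$ and $\bs{b}=r_1^{-1}Y(1)+r_0^{-1}Y(0)$ (using that scaled variance is invariant under additive shifts of $\bs{b}$) then gives
\[
S^2_{r_1^{-1}(e(1)+s(1))+r_0^{-1}(e(0)+s(0))}
= S^2_{\bs{M}^\top\bs{M},\,r_1^{-1}Y(1)+r_0^{-1}Y(0)}
= S^2_{\bs{B},\,r_1^{-1}Y(1)+r_0^{-1}Y(0)},
\]
and multiplying by $r_1r_0$ yields the claim.

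The whole argument is essentially linear algebra; the only subtlety I anticipate is Step 2, where one must carefully track the centering in the $\diag\{\bs{H}\}$ term so that the $\frac{1}{n}\bs{1}\bs{1}^\top$ factor produces exactly the mean-subtraction that defines $s_i(z)$. Once this identity $(\bs{M}\bs{y}(z))_i=e_i(z)+s_i(z)$ is in hand, the rest follows immediately from the two lemmas already proved.
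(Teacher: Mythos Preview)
Your proof is correct and follows essentially the same route as the paper: collapse $\sigma_{\hd,l}^2$ via Lemma~\ref{lem:variance-crt-2-forms}, verify that $\bs{M}$ applied to the centered outcome vector produces $e_i(z)+s_i(z)$, and then invoke Lemma~\ref{lem:S2a-S2AAb} with $\bs{B}=\bs{M}^\top\bs{M}$. Your Step~2 is slightly more explicit than the paper's corresponding computation, but the argument is the same.
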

\begin{proof}[Proof of \Cref{prop:rewrite-sigma-hd-l}]
    Using \Cref{lem:variance-crt-2-forms} with $a_i = e_i(1)+s_i(1)$ and $b_i = e_i(0)+s_i(0)$, we get
    \begin{align*}
        \sigma_{\hd,l}^2 = (r_1r_0) S^2_{r_1^{-1}e(1)+ r_1^{-1}s(1) + r_0^{-1} e(0)+ r_0^{-1}s(0)}.
    \end{align*}
    Denote $\bs{P}= \bs{I}-\frac{1}{n}\bs{1}\bs{1}^\top$.
    Observe that
    \begin{align*}
        (e_1(z),\ldots,e_n(z))^\top &= (\bs{I}-\bs{H})(Y_1(z)-\bar{Y}(z),\ldots,Y_n(z)-\bar{Y}(z))^\top\\
        &=(\bs{P}-\bs{H})(Y_1(z)-\bar{Y}(z),\ldots,Y_n(z)-\bar{Y}(z))^\top,
    \end{align*}
    and
    \begin{align*}
          (s_1(z),\ldots,s_n(z))^\top &= \bs{P}\diag(\bs{H})(Y_1(z)-\bar{Y}(z),\ldots,Y_n(z)-\bar{Y}(z))^\top.
    \end{align*}
   Then, applying Lemma~\ref{lem:S2a-S2AAb} with $a_i = r_1^{-1}e_i(1)+ r_1^{-1}s_i(1) + r_0^{-1} e_i(0)+ r_0^{-1}s_i(0)$, $b_i = r_1^{-1}Y_i(1)+r_0^{-1}Y_i(0)$ and  $\bs{M} = (\bs{P}-\bs{H})+\bs{P}\diag(\bs{H})$, and noticing that $\sum_i a_i = 0$, we obtain that 
   \begin{align*}
       S^2_{r_1^{-1}e(1)+ r_1^{-1}s(1) + r_0^{-1} e(0)+ r_0^{-1}s(0)} = S^2_{\bs{M}^\top \bs{M}, r_1^{-1}Y(1)+r_0^{-1}Y(0)}.
   \end{align*}
    The conclusion then follows by the definition of $\bs{B}$ in \eqref{def:BB}.
\end{proof}
 \Cref{thm:inference} and \Cref{cor:inference} follow from the following Lemmas \ref{lem:order-of-cov-zizj-zkzl}--\ref{lem:sample-covariance-consistency}.
\begin{lemma}
    \label{lem:order-of-cov-zizj-zkzl}
    Under Assumption~\ref{assumption:positive-limit-of-assigned-proportion}, we have
    \[
    \cov(Z_iZ_j,Z_kZ_l) = O(n^{-1}).
    \]
\end{lemma}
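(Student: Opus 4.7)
\textbf{Proof plan for Lemma~\ref{lem:order-of-cov-zizj-zkzl}.}

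The plan is to compute the covariance directly from the joint law of $(Z_1,\dots,Z_n)$ under complete randomization. Because the assignment is uniform on the set of binary vectors with exactly $n_1$ ones, for any distinct indices $i_1,\dots,i_m$,
\[
\E[Z_{i_1}\cdots Z_{i_m}]=\frac{\binom{n-m}{n_1-m}}{\binom{n}{n_1}}=\frac{n_1(n_1-1)\cdots(n_1-m+1)}{n(n-1)\cdots(n-m+1)}.
\]
Writing this as $r_1^m\prod_{k=0}^{m-1}\frac{1-k/n_1}{1-k/n}$ and invoking Assumption~\ref{assumption:positive-limit-of-assigned-proportion} (so that $r_1$, hence $n_1$, is bounded away from $0$), one obtains a uniform expansion $\E[Z_{i_1}\cdots Z_{i_m}]=r_1^m+O(n^{-1})$ for each fixed $m\in\{2,3,4\}$.

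With this identity in hand, the claim for the case of interest (in the paper this lemma is applied to variances of quadratic forms $\sum_{[i,j]}Z_iZ_j c_{ij}$, so the dominant contribution is from quadruples with $|\{i,j,k,l\}|=4$) is immediate:
\[
\cov(Z_iZ_j,Z_kZ_l)=\E[Z_iZ_jZ_kZ_l]-\E[Z_iZ_j]\,\E[Z_kZ_l]=\bigl(r_1^4+O(n^{-1})\bigr)-\bigl(r_1^2+O(n^{-1})\bigr)^2=O(n^{-1}).
\]
The remaining sub-cases (overlap patterns among $i,j,k,l$) are handled by using $Z_i^2=Z_i$ to reduce $\E[Z_iZ_jZ_kZ_l]$ to a moment over distinct indices and then applying the same expansion; since each $O(n^{-1})$ constant depends only on the (bounded) limit of $r_1$ and on the fixed $m\le 4$, the cancellation is uniform in the indices. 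The only mild obstacle is the bookkeeping in the case analysis; no subtle probabilistic argument is needed beyond the hypergeometric factorial-moment identity and Assumption~\ref{assumption:positive-limit-of-assigned-proportion}.
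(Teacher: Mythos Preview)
Your treatment of the fully distinct case $|\{i,j,k,l\}|=4$ is correct and coincides with the paper's proof: both compute the hypergeometric factorial moments $\E[Z_{i_1}\cdots Z_{i_m}]=\prod_{k=0}^{m-1}\frac{n_1-k}{n-k}$, expand as $r_1^m+O(n^{-1})$, and observe that the leading $r_1^4$ terms cancel in the covariance.

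However, your claim that the overlap sub-cases are handled by the same cancellation is false. If, say, $i=k$ with $i,j,l$ distinct, then using $Z_i^2=Z_i$ gives $\E[Z_iZ_jZ_kZ_l]=\E[Z_iZ_jZ_l]=r_1^3+O(n^{-1})$, while $\E[Z_iZ_j]\,\E[Z_kZ_l]=r_1^4+O(n^{-1})$; hence $\cov(Z_iZ_j,Z_iZ_l)=r_1^3r_0+O(n^{-1})$, which is of order $1$, not $O(n^{-1})$. The reduction to a lower factorial moment changes the power of $r_1$ on one side but not the other, so no cancellation occurs. The paper's own proof in fact only treats the case of four distinct indices, and where the lemma is applied (in bounding the variance of $\sum_{[i,j]}Z_iZ_j c_{ij}$) the overlap terms $\cov(Z_1Z_2,Z_1Z_3)$ and $\var(Z_1Z_2)$ are only asserted to be $O(1)$, which is all that is needed there. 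So your argument is fine once you drop the incorrect extension to overlapping index patterns and read the lemma as stated for mutually distinct $i,j,k,l$.
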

\begin{proof}[Proof of Lemma~\ref{lem:order-of-cov-zizj-zkzl}]
Observe that
    \begin{align*}
        \cov(Z_iZ_j,Z_kZ_l) = \E Z_iZ_jZ_kZ_l-\E(Z_iZ_j)\E (Z_kZ_l). 
    \end{align*} 
First, we have
 \begin{align*}
    &\E Z_iZ_jZ_kZ_l =  \Pr(Z_i=1,Z_j=1,Z_k=1, Z_l=1) \\
    &= \Pr( Z_l=1) \Pr(Z_k=1 | Z_l=1) \Pr(Z_j=1|Z_k=1, Z_l=1) \Pr(Z_i=1|Z_j=1,Z_k=1, Z_l=1) \\
    &=\frac{n_1}{n}\frac{n_1-1}{n-1}\frac{n_1-2}{n-2}\frac{n_1-3}{n-3}.
 \end{align*}   
 On the other hand, we have
 \[
 \E(Z_iZ_j)\E (Z_kZ_l) = \left(\frac{n_1}{n}\frac{n_1-1}{n-1}\right)^2.
 \]
In sum, under Assumption~\ref{assumption:positive-limit-of-assigned-proportion}, we have that
 \begin{align*}
       \cov(Z_iZ_j,Z_kZ_l) &= \frac{n_1}{n}\frac{n_1-1}{n-1}\frac{n_1-2}{n-2}\frac{n_1-3}{n-3} - \left(\frac{n_1}{n}\frac{n_1-1}{n-1}\right)^2\\
 &= \frac{n_1^4}{n(n-1)(n-2)(n-3)}-\frac{n_1^4}{n^2(n-1)^2} + O(n^{-1})\\
 &= n_1^4\left(\frac{1}{n(n-1)(n-2)(n-3)}-\frac{1}{n^2(n-1)^2}\right) + O(n^{-1})\\
 &= n_1^4 O(n^{-5}) + O(n^{-1}) = O(n^{-1}) .
 \end{align*}
\end{proof}
Let $g_{(1)}^2\geq \ldots \geq g_{(n)}^2$ and  $y_{(1)}^2\geq \ldots \geq y_{(n)}^2$ be the ordered sequence of $\{g_i\}_{i=1}^n$ and $\{y_i\}_{i=1}^n$, respectively.
\begin{lemma}
\label{lem:quadra-term-1}
Assume Assumption~\ref{assumption:positive-limit-of-assigned-proportion} holds and $\sum_i y_i^2 = O(n)$, $\sum_i g_i^2 = O(n)$, $\max_i g_i^2 = o(n)$, $\max_i y_i^2 = o(n)$. For any symmetric matrix $\bs{D}$ with diagonal entries being $0$ and $\|\bs{D}\|_2 < C$,  we have
\begin{align}
\label{eq:sample-analogy-consistency-zizj}
    \frac{1}{n}\sum_{[i,j]} D_{ij} y_i g_j Z_i Z_j = \frac{1}{n}\sum_{[i,j]} D_{ij} y_i g_j r_1^2 + \op(1), 
\end{align}
and
\begin{align}
\label{eq:sample-analogy-consistency-1-zi1-zj}
   \frac{1}{n}\sum_{[i,j]} D_{ij} y_i g_j (1-Z_i) (1-Z_j) = \frac{1}{n}\sum_{[i,j]} D_{ij} y_i g_j r_0^2 + \op(1).
\end{align}
\end{lemma}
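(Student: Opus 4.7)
The plan is to show that $U := \frac{1}{n}\sum_{[i,j]} D_{ij} y_i g_j Z_i Z_j$ converges in probability to $r_1^2 \cdot \frac{1}{n} \sum_{[i,j]} D_{ij} y_i g_j$ by a standard moment argument: I will prove (i) $\E[U]$ differs from the target by $o(1)$, and (ii) $\var(U) = o(1)$, then invoke Chebyshev's inequality. The second identity follows from the first by applying the same argument with $Z_i$ replaced by $1 - Z_i$ (whose marginals are $r_0$), so I only treat \eqref{eq:sample-analogy-consistency-zizj}.

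For the mean, $\E[Z_iZ_j] = n_1(n_1-1)/[n(n-1)] = r_1^2 + O(n^{-1})$ when $i \neq j$, so $\E[U] - \frac{r_1^2}{n}\sum_{[i,j]} D_{ij} y_i g_j = O(n^{-2})\sum_{[i,j]} D_{ij}y_ig_j$. Since $D_{ii} = 0$, the inner sum equals $y^\top D g$; by the operator-norm bound $|y^\top D g| \le \|D\|_2 \|y\|_2 \|g\|_2 = O(n)$, giving a mean error of $O(n^{-1}) = o(1)$.

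For the variance, I expand $n^2\var(U) = \sum_{[i,j],[k,l]} D_{ij}D_{kl}y_ig_jy_kg_l \cov(Z_iZ_j,Z_kZ_l)$ and partition by the overlap pattern of $\{i,j\}$ with $\{k,l\}$. (a) Full overlap $\{i,j\}=\{k,l\}$: $\cov = a_2(1-a_2) = O(1)$; after using $|y_iy_jg_ig_j| \le \tfrac12(y_i^2g_j^2+y_j^2g_i^2)$, the contribution is bounded by $O(1)(\max_j g_j^2)\sum_{ij} D_{ij}^2 y_i^2 \le o(n)\cdot \|D\|_2^2\sum_i y_i^2 = o(n^2)$. (b) Exactly one shared index, e.g. $i=k$: $\cov(Z_iZ_j,Z_iZ_l) = a_3 - a_2^2 = O(1)$, but the contribution equals $O(1)\sum_i y_i^2 \big(\sum_{j} D_{ij}g_j\big)^2 \le O(1)(\max_i y_i^2)\|Dg\|_2^2 \le o(n)\cdot\|D\|_2^2\|g\|_2^2 = o(n^2)$; the other single-overlap sub-cases ($i=l$, $j=k$, $j=l$) are identical by symmetry. (c) All four indices distinct: by Lemma~\ref{lem:order-of-cov-zizj-zkzl}, $\cov = O(n^{-1})$, and the signed quadruple sum equals $(y^\top D g)^2$ minus overlap corrections already bounded by cases (a)-(b), giving $O(n^2)$; multiplied by $O(n^{-1})$ it contributes $O(n)$. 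Summing, $n^2\var(U) = o(n^2)$, hence $\var(U)=o(1)$.

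The main obstacle is case (b), where the covariance does not decay and the naive Cauchy-Schwarz bound $\sum|D_{ij}||y_ig_j| = O(n^{3/2})$ would yield an $O(n)$ variance. The key trick is to group the sum so that the inner structure $\sum_j D_{ij}g_j = (Dg)_i$ appears, then bound $\|Dg\|_2 \le \|D\|_2\|g\|_2 = O(\sqrt n)$ via the spectral norm, pulling out the factor $\max_i y_i^2 = o(n)$ to extract the needed $o(n^2)$. The full-overlap case (a) uses the symmetric trick with $\max_j g_j^2 = o(n)$. Once these bounds are in place, Chebyshev's inequality closes the argument, and replacing $(Z_i, r_1)$ with $(1-Z_i, r_0)$ gives \eqref{eq:sample-analogy-consistency-1-zi1-zj}.
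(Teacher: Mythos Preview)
Your argument is correct and mirrors the paper's proof: both compute the mean, then bound the variance of $\sum_{[i,j]} D_{ij}y_ig_jZ_iZ_j$ by partitioning the covariance sum according to index overlap, using $\max_i y_i^2,\max_i g_i^2=o(n)$ together with the spectral-norm bound on $\bs D$ for the overlap terms and Lemma~\ref{lem:order-of-cov-zizj-zkzl} for the four-distinct case. One small looseness: the cross single-overlap sub-cases $i=l$ and $j=k$ are not literally ``identical by symmetry'' to $i=k$---they produce mixed sums of the form $\sum_i y_ig_i(\bs Dy)_i(\bs Dg)_i$ which the paper (its term $M_4$) dispatches by one Cauchy--Schwarz step back to your two ``pure'' cases, so the omission is cosmetic rather than substantive.
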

\begin{proof}[Proof of Lemma ~\ref{lem:quadra-term-1}]
We only prove \eqref{eq:sample-analogy-consistency-zizj}, and \eqref{eq:sample-analogy-consistency-1-zi1-zj} follows immediately by replacing $Z_i$ with $1-Z_i$.

For $i\ne j$, using $\E Z_i Z_j = r_1 \frac{n_1 - 1}{n - 1}$, we get 
\[
\E \frac{1}{n}\sum_{[i,j]} D_{ij} y_i g_j Z_i Z_j = \frac{1}{n}\sum_{[i,j]} D_{ij} y_i g_j r_1 \frac{n_1-1}{n-1} = \left(1+O(n^{-1})\right)\frac{1}{n}\sum_{[i,j]} D_{ij} y_i g_j r_1^2.
\]
On the other hand, we have 
\begin{equation}\label{eq:gyd}
\bigg|\frac{1}{n}\sum_{[i,j]} D_{ij} y_i g_j\bigg| \leq \bigg(\sum_i y_i^2/n\bigg)^{1/2}\bigg(\sum_i g_i^2/n\bigg)^{1/2} \|\bs{D}\|_2 = O(1),
\end{equation}
which implies that
\[
\E \frac{1}{n}\sum_{[i,j]} D_{ij} y_i g_j Z_i Z_j = \frac{1}{n}\sum_{[i,j]} D_{ij} y_i g_j r_1^2 + o(1).
\]
Thus, to conclude the proof using Chebyshev's inequality, it suffices to show that
\begin{align}
\label{eq:bounded-variance-1}
    \var\bigg(\sum_{[i,j]} y_i g_jD_{ij}Z_iZ_j\bigg) = o(n^2).
\end{align}
Through direct calculation, we get that 
      \begin{align*}
            &\var\bigg(\sum_{[i,j]} y_i g_jD_{ij}Z_iZ_j\bigg)=\var(Z_1 Z_2) \sumtwo \left(C_1 \yi{1}^2 \gi{2}^2 \D{1}{2}^2+ C_2\yi{1}\yi{2} \gi{1}\gi{2} \D{1}{2}^2\right)  \\
            &+ \cov(Z_1 Z_2,Z_1 Z_3) \sumi{1}{3}\left(C_3\D{1}{2}\D{2}{3}\gi{2}^2\yi{1}\yi{3}+C_4\D{1}{2}\D{2}{3}\gi{2}\yi{2}\yi{1}\gi{3}+C_5\D{1}{2}\D{2}{3}\yi{2}^2\gi{1}\gi{3}\right)\\
            & +\cov(Z_1 Z_2,Z_3 Z_4) \sumi{1}{4} C_6\D{1}{2}\D{3}{4}\yi{1}\gi{2}\yi{3}\gi{4}\\
            &=: \var(Z_1 Z_2) \left(C_1 M_{1}+ C_2 M_{2}\right)+\cov(Z_1 Z_2,Z_1 Z_3) \left(C_3 M_{3} + C_4 M_{4} + C_5 M_{5}\right) \\
            &\qquad+ \cov(Z_1 Z_2,Z_3 Z_4)  C_6 M_{6},
    \end{align*}
    where, $C_i$, $i=1,\ldots, 6$ are universal constants that do not depend on $n$. 
    By Assumption~\ref{assumption:positive-limit-of-assigned-proportion} and Lemma~\ref{lem:order-of-cov-zizj-zkzl}, we have
    \[
    \var(Z_1 Z_2)=O(1),\quad \cov(Z_1 Z_2,Z_1 Z_3) = O(1),\quad  \cov(Z_1 Z_2,Z_3 Z_4) = O(n^{-1}).
    \]
  It remains to estimate the order of $M_i$, $i=1,\ldots,6$.   

By Lemma~\ref{lem:bound-of-the-trace}, we have
    \[
    M_{1} = \tr\left(\diag(\bs{y})^2 \bs{D} \diag(\bs{g})^2 \bs{D}\right)\leq C^2 \sum_i y_{(i)}^2 g_{(i)}^2  \leq C^2 \left(\max_i g_i^2\right) \sum_i y_{i}^2  = o(n^2).
    \]
Applying the Cauchy-Schwarz inequality, we also get $ | M_{2} | \leq M_{1} = o(n^2).$ 

For $M_{3}$, let $y_{D,i}$ and $g_{D,i}$ be the $i$-th element of $\bs{D}(y_1,\ldots,y_n)^\top$  and $\bs{D}(g_1,\ldots,g_n)^\top$. We see that $\sum_i y_{D,i}^2 = O(n)$ and  $\sum_i g_{D,i}^2 = O(n)$ since $\|D\|_2 < C$. 
   By repeatedly applying $    \sum_{j\in [n]\backslash i}D_{ij} y_{j} = y_{D,i}$ and $ \sum_{j\in [n]\backslash i}D_{ij} g_{j} = g_{D,i}$,
    we obtain that 
    \begin{align*}
        M_{3}&=\sumi{1}{3}\D{1}{2}\D{2}{3}\gi{2}^2\yi{1}\yi{3} = \sumtwo\D{1}{2}\gi{2}^2\yi{1}\yiD{2}- \sumtwo\D{1}{2}\D{2}{1}\gi{2}^2\yi{1}\yi{1}\\  &=\sum_{i_1}\gi{1}^2\yiD{1}^2- \sumtwo\D{1}{2}^2\gi{2}^2\yi{1}^2=:M_{31}-M_{32}.
    \end{align*}
For $M_{31}$, it holds that
    \[
    M_{31}\leq \left(\max_i g_i^2\right)  \sum_i y_{D,i}^2  = o(n^2).
    \]
For $M_{32}$, we see that $M_{32} = M_{1}=o(n^2)$. Hence, there is $M_3 = o(n^2)$.

For $M_4$ and $M_5$, we decompose them as 
\begin{align*}
  M_{4}&=\sumtwo\D{1}{2}\gi{2}\giD{2}\yi{1}\yi{2} - \sumtwo\D{1}{2}\D{2}{1}\gi{2}\gi{1}\yi{1}\yi{2} \\
  &=\sum_{i_1}\gi{1}\giD{1}\yiD{1}\yi{1} - \sumtwo\D{1}{2}^2\gi{2}\gi{1}\yi{1}\yi{2},\\
  M_{5}&=\sumtwo\D{1}{2}\yi{2}^2\gi{1}\giD{2} - \sumtwo \D{1}{2}\D{2}{1}\yi{2}^2\gi{1}\gi{1} \\
  &=\sum_{i_1}\yi{1}^2\giD{1}^2 -  \sumtwo \D{1}{2}^2\yi{2}^2\gi{1}^2.
\end{align*}
Using similar arguments as in the analysis of $M_3$, we can show that $M_5 = o(n^2)$. 
For $M_4$, by Cauchy-Schwarz inequality, we have
\begin{align*}
    &\bigg(\sum_{i_1}\gi{1}\giD{1}\yiD{1}\yi{1}\bigg)^2 \leq \bigg(\sum_{i_1}\gi{1}^2\yiD{1}^2\bigg)\bigg(\sum_{i_1}\giD{1}^2\yi{1}^2\bigg),\\
    &\bigg(\sumtwo\D{1}{2}^2\gi{2}\gi{1}\yi{1}\yi{2}\bigg)^2 \leq  \bigg(\sumtwo \D{1}{2}^2\yi{2}^2\gi{1}^2\bigg)^2,
\end{align*}
which, combined with the arguments in the analysis of $M_3$, yields that $M_4 = o(n^2)$.

Finally, for  $M_6$, we see that
\begin{align*}
    M_6 - \bigg(\sumtwo \D{1}{2}\yi{1}\gi{2}\bigg)^2 = O\left( |M_1| + |M_2| + |M_3| + |M_4| + |M_5|\right) = o(n^2),
\end{align*}
and we have shown in \eqref{eq:gyd} that
\[
\frac{1}{n}\sum_{[i,j]} D_{ij} y_i g_j=  O(1).
\]
Thus, we have $M_6 = O(n^2)$.

Putting together the above estimates,  we obtain that 
\begin{align*}
    \var\bigg(\sum_{[i,j]} y_i g_jD_{ij}Z_iZ_j\bigg) = O(1)o(n^2)+O(1)o(n^2)+O(n^{-1})O(n^2) = o(n^2),
\end{align*}
which concludes the proof.
\end{proof}

\begin{lemma}
\label{lem:linear-term-1}
Assume Assumption~\ref{assumption:positive-limit-of-assigned-proportion} holds and $\sum_i y_i^2 = O(n)$, $\sum_i g_i^2 = O(n)$, $\max_i g_i^2 = o(n)$. For any sequence $\{a_i\}_{i=1}^n$ with $\max_i |a_i| <C$, we have that
\[
\frac{1}{n}\sum_{i} a_i y_i g_i Z_i = \frac{1}{n}\sum_{i} a_i y_i g_i r_1 + \op(1).
\]
\end{lemma}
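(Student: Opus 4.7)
The plan is to proceed by a first-moment calculation plus a Chebyshev-type bound on the variance, exactly in the spirit of Lemma~\ref{lem:quadra-term-1} but much simpler because only a linear statistic is involved.

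First, I would compute the expectation. Since $Z_i$ is the indicator of a completely randomized experiment, $\E Z_i = n_1/n = r_1$, so
\[
\E\Big(\frac{1}{n}\sum_i a_i y_i g_i Z_i\Big) = \frac{r_1}{n}\sum_i a_i y_i g_i,
\]
which is exactly the deterministic quantity appearing on the right-hand side of the claim. Thus it suffices to show that the centered version is $\op(1)$.

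For the variance, I would apply Lemma~\ref{lem:variance-of-swr-mean} to the population $w_i := a_i y_i g_i$. This gives
\[
\var\Big(\sum_i w_i Z_i\Big) \;=\; \frac{n_1 n_0}{n(n-1)}\sum_i (w_i - \bar w)^2 \;\le\; \frac{n_1 n_0}{n(n-1)}\sum_i w_i^2.
\]
Using $\max_i |a_i| < C$ and bounding $g_i^2 \le \max_j g_j^2$, we get
\[
\sum_i w_i^2 \;\le\; C^2 \Big(\max_j g_j^2\Big) \sum_i y_i^2 \;=\; o(n)\cdot O(n) \;=\; o(n^2),
\]
where the last equality uses $\max_j g_j^2 = o(n)$ and $\sum_i y_i^2 = O(n)$. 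Combined with $n_1 n_0/(n(n-1)) = O(1)$ via Assumption~\ref{assumption:positive-limit-of-assigned-proportion}, this yields $\var(\sum_i w_i Z_i) = o(n^2)$, hence
\[
\var\Big(\frac{1}{n}\sum_i a_i y_i g_i Z_i\Big) = o(1).
\]

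Finally, Chebyshev's inequality gives $\frac{1}{n}\sum_i a_i y_i g_i Z_i - \E\big(\frac{1}{n}\sum_i a_i y_i g_i Z_i\big) = \op(1)$, which combined with the expectation calculation concludes the proof. There is really no hard step here: the only thing worth noting is the asymmetric use of the assumptions — we exploit $\max_i g_i^2 = o(n)$ on the $g$-side and only $\sum_i y_i^2 = O(n)$ on the $y$-side, so no tail condition on $y_i$ is needed, mirroring the structure of Lemma~\ref{lem:quadra-term-1}.
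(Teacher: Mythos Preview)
Your proof is correct and follows the same overall strategy as the paper (expectation plus variance bound plus Chebyshev), but you take a cleaner shortcut at the variance step. The paper expands the variance directly as
\[
\var\Big(\sum_i a_i y_i g_i Z_i\Big)=\var(Z_1)\sum_i a_i^2 y_i^2 g_i^2 + \cov(Z_1,Z_2)\sum_{[i,j]} a_i a_j y_i y_j g_i g_j,
\]
bounds the diagonal part as $o(n^2)$ exactly as you do, and then has to show the off-diagonal sum is $O(n^2)$ via Cauchy--Schwarz (using $\sum_i g_i^2=O(n)$) together with $\cov(Z_1,Z_2)=O(n^{-1})$. By invoking Lemma~\ref{lem:variance-of-swr-mean} instead, you sidestep that off-diagonal analysis entirely: the exact variance formula already has the centering $(w_i-\bar w)^2$ built in, so only the bound $\sum_i w_i^2=o(n^2)$ is needed. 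As a bonus, your argument never actually uses the hypothesis $\sum_i g_i^2=O(n)$, whereas the paper's route does.
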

\begin{proof}[Proof of Lemma~\ref{lem:linear-term-1}]
       It suffices to show that
\begin{align}
\label{eq:bounded-variance-2}
    \var\bigg(\sum_{i} a_i y_i g_i Z_i\bigg) = o(n^2).
\end{align}
Direct calculations give that  
\begin{align*}
    \var\bigg(\sum_{i} a_i y_i g_i Z_i\bigg) &= \var(Z_1) \sum_i a_i^2 y_i^2 g_i^2 +\cov(Z_1,Z_2)\sum_{[i,j]} a_i a_j y_i y_j g_i g_j\\
    &=: \var(Z_1) M_1 + \cov(Z_1,Z_2) M_2.
\end{align*}
Under Assumption~\ref{assumption:positive-limit-of-assigned-proportion}, we have that
\[
\var(Z_1) = O(1),\quad  \cov(Z_1,Z_2) = O(n^{-1}).
\]
It remains to estimate the order of $M_1$ and $M_2$. 

For $M_1$, by $\max_i |a_i| < C$, $\max_i g_i^2 = o(n)$ and $\sum_i y_i^2 = O(n)$, we have
\[
\sum_i a_i^2 y_i^2 g_i^2 \leq C^2 \left(\max_i g_i^2\right) \sum_i y_i^2 = o(n^2).
\]
For $M_2$, we have that
\[
M_2 = \sum_{[i,j]} a_i a_j y_i y_j g_i g_j \le \bigg(\sum_i a_i y_i g_i\bigg)^2.
\]
By Cauchy-Schwarz inequality, there is
\[
\bigg(\sum_i a_i y_i g_i\bigg)^2 \leq \max_i a_i^2 \bigg(\sum_i y_i^2\bigg)\bigg(\sum_i g_i^2\bigg) =  O(n^2),
\]
which implies that $M_2 = O(n^2)$. Thus, we have 
\[
\var\bigg(\sum_{i} a_i y_i g_i Z_i\bigg) = O(1)o(n^2)+o(n^{-1})O(n^2) = o(n^2),
\]
which concludes the proof.
\end{proof}

% \textcolor{red}{I got until here.}

\begin{lemma}
\label{lem:sample-covariance-consistency}
Assume Assumptions~\ref{assumption:positive-limit-of-assigned-proportion}-\ref{assumption:lindeberg-type-condition-p-o(n)} hold. For any symmetric matrix $\bs{D}$ with $\|\diag^-(\bs{D})\|_2 < C$ and $\|\diag(\bs{D})\|_2 < C$,   we have that for $z\in\{0,1\}$,
\[
s_{\diag^- (\bs{D}),Y(z)}^2= S_{\diag^-(\bs{D}),Y(z)}^2 + \op(1), \quad s_{\diag (\bs{D}),Y(z)}^2= S_{\diag(\bs{D}),Y(z)}^2 + \op(1),
\]
\[
s_{\diag^- (\bs{D}),Y(1),Y(0)}= S_{\diag^-(\bs{D}),Y(1),Y(0)} + \op(1).
\]
\end{lemma}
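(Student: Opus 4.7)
}

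The plan is to reduce each sample covariance to a quadratic/linear form in the treatment indicators $Z_i$ centered at $\bar{Y}(z)$ (rather than at $\bar{Y}_z$), and then invoke Lemmas \ref{lem:quadra-term-1} and \ref{lem:linear-term-1}. As a preliminary step, I will use the identity $Y_i - \bar{Y}_z = (Y_i(z) - \bar{Y}(z)) + (\bar{Y}(z) - \bar{Y}_z)$ in every expression, together with Lemma \ref{lem:order-sample-mean} which gives $\bar{Y}(z) - \bar{Y}_z = \Op(n^{-1/2})$ under Assumptions \ref{assumption:positive-limit-of-assigned-proportion}--\ref{assumption:2-moment-of-finite-population-for-y}. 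Each cross-product or square of this displacement will then produce negligible lower-order terms: e.g.\ $(\bar{Y}(z)-\bar{Y}_z)^2$ times the sample average of $D_{ii}$ or $D_{ij}$ is $\Op(n^{-1})\cdot O(1)=\op(1)$, and the mixed term will be $\Op(n^{-1/2})\cdot\Op(1)=\op(1)$ after bounding the remaining sample mean via Lemma \ref{lem:linear-term-1}/\ref{lem:quadra-term-1} with one of the inputs set to the constant sequence $1$.

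Having reduced to sums centered at $\bar{Y}(z)$, the three claims are handled in turn. For $s^2_{\diag(\bs{D}),Y(z)}$, the sum is $\frac{1}{n_z}\sum_i Z_i^{(z)} D_{ii}(Y_i(z)-\bar{Y}(z))^2$ (where $Z_i^{(1)}=Z_i$, $Z_i^{(0)}=1-Z_i$), so Lemma \ref{lem:linear-term-1} applies with $a_i=D_{ii}$, $y_i=g_i=Y_i(z)-\bar{Y}(z)$; the conditions $\sum_i y_i^2=O(n)$ and $\max_i y_i^2=o(n)$ follow from Assumptions \ref{assumption:2-moment-of-finite-population-for-y} and \ref{assumption:lindeberg-type-condition-p-o(n)}, and $\max_i|D_{ii}|\leq\|\diag(\bs{D})\|_2<C$. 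For $s^2_{\diag^-(\bs{D}),Y(z)}$, the sum becomes $\frac{1}{r_zn_z}\sum_{i\neq j}Z_i^{(z)}Z_j^{(z)}D_{ij}(Y_i(z)-\bar{Y}(z))(Y_j(z)-\bar{Y}(z))$, and Lemma \ref{lem:quadra-term-1} applies with $\bs{D}\leftarrow\diag^-(\bs{D})$ and $y_i=g_i=Y_i(z)-\bar{Y}(z)$; the norm condition $\|\diag^-(\bs{D})\|_2<C$ is part of the hypotheses.

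The final cross-arm term $s_{\diag^-(\bs{D}),Y(1),Y(0)}$ is where a small additional manipulation is needed, since $Z_i(1-Z_j)$ is neither of the forms $Z_iZ_j$ nor $(1-Z_i)(1-Z_j)$ directly covered by Lemma \ref{lem:quadra-term-1}. I will split
\[
Z_i(1-Z_j)=Z_i - Z_iZ_j,
\]
so that the sum decomposes into a genuine quadratic piece, handled by Lemma \ref{lem:quadra-term-1} with $y_i=Y_i(1)-\bar{Y}(1)$, $g_j=Y_j(0)-\bar{Y}(0)$, plus a linear-in-$Z$ piece $\sum_i Z_i(Y_i(1)-\bar{Y}(1))\,b_{D,i}$ where $b_{D,i}:=\sum_{j\neq i}D_{ij}(Y_j(0)-\bar{Y}(0))$. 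For the linear piece I will compute the mean exactly (it contributes the correct population limit after combining with the quadratic piece and using $\E[Z_i(1-Z_j)]=r_1r_0+O(n^{-1})$), and bound the variance directly via Lemma \ref{lem:variance-of-swr-mean}: since $\sum_i b_{D,i}^2\leq\|\diag^-(\bs{D})\|_2^2\sum_j(Y_j(0)-\bar{Y}(0))^2=O(n)$, Cauchy--Schwarz and Assumption \ref{assumption:lindeberg-type-condition-p-o(n)} give $\sum_i(Y_i(1)-\bar{Y}(1))^2b_{D,i}^2\leq\max_i(Y_i(1)-\bar{Y}(1))^2\cdot O(n)=o(n^2)$, so the linear piece is $\op(1)$ after scaling by $(nr_1r_0)^{-1}$ and subtracting its mean. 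This splitting is the main (very mild) obstacle; everything else is a direct appeal to the two preceding lemmas combined with the $\bar{Y}_z\mapsto\bar{Y}(z)$ reduction described above.
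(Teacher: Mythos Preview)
Your proposal is correct and follows essentially the same route as the paper's proof: reduce $\bar{Y}_z$ to $\bar{Y}(z)$ using Lemma~\ref{lem:order-sample-mean}, then apply Lemma~\ref{lem:quadra-term-1} to the off-diagonal sums, Lemma~\ref{lem:linear-term-1} to the diagonal sums, and handle the cross-arm term via the split $Z_i(1-Z_j)=Z_i-Z_iZ_j$. The only cosmetic difference is that for the linear-in-$Z_i$ piece after the split, the paper invokes Lemma~\ref{lem:linear-term-1} (with $a_i=1$, $y_i=b_{D,i}$, $g_i=Y_i(1)-\bar{Y}(1)$), whereas you bound the variance directly via Lemma~\ref{lem:variance-of-swr-mean}; both rely on the same estimate $\sum_i b_{D,i}^2=O(n)$ and $\max_i(Y_i(1)-\bar{Y}(1))^2=o(n)$ and lead to the same conclusion.
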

\begin{proof}[Proof of Lemma~\ref{lem:sample-covariance-consistency}]
    We can write that
    \begin{align*}
        s_{\diag^- (\bs{D}),Y(z)}^2 &= \frac{1}{n r_z^2}\sum_{i\ne j:Z_i=z,Z_j=z} D_{ij}(Y_i(z)-\bar{Y}_z)(Y_j(z)-\bar{Y}_z) \\
        &= M_1+M_2+M_3,
    \end{align*}
    where
    \begin{align*}
        M_1 &= \frac{1}{n r_z^2}\sum_{i\ne j:Z_i=z,Z_j=z} D_{ij}(Y_i(z)-\bar{Y}(z))(Y_j(z)-\bar{Y}(z)),\\
        M_2 &= 2(\bar{Y}(z)-\bar{Y}_z) \frac{1}{n r_z^2}\sum_{i\ne j:Z_i=z,Z_j=z} D_{ij}(Y_i(z)-\bar{Y}(z)),\\
        M_3 &= 2(\bar{Y}(z)-\bar{Y}_z)^2 \frac{1}{n r_z^2}\sum_{i\ne j:Z_i=z,Z_j=z} D_{ij}.
    \end{align*}
    Applying Lemma~\ref{lem:quadra-term-1} with $f_i=g_i = Y_i(z)-\bar{Y}(z)$, we get
    \[
    M_1 = \frac{1}{n}\sum_{i\ne j} D_{ij}(Y_i(z)-\bar{Y}(z))(Y_j(z)-\bar{Y}(z))+\op(1).
    \]
    Applying Lemma~\ref{lem:order-sample-mean} and Lemma~\ref{lem:quadra-term-1} with $f_i = Y_i(z)-\bar{Y}(z)$ and $g_i= 1$, we get
     \[
     M_2 = 2(\bar{Y}(z)-\bar{Y}_z) \Op(1) = \op(1).
     \]
     Applying Lemma~\ref{lem:order-sample-mean} and Lemma~\ref{lem:quadra-term-1} with $f_i=g_i = 1$,  we get
     \[
     M_2 = (\bar{Y}(z)-\bar{Y}_z)^2 \Op(1) = \op(1).
     \]     
     These results together imply that
     \begin{align*}
               s_{\diag^- (\bs{D}),Y(z)}^2& = \frac{1}{n}\sum_{i\ne j} D_{ij}(Y_i(z)-\bar{Y}(z))(Y_j(z)-\bar{Y}(z)) + \op(1) \\
               &=\left(1+O(n^{-1})\right) S_{\diag^- (\bs{D}),Y(z)}^2 + \op(1) =  S_{\diag^- (\bs{D}),Y(z)}^2 + \op(1).
     \end{align*}
     
     For $s_{\diag^- (\bs{D}),Y(1),Y(0)}$, we have
     \[
     s_{\diag^- (\bs{D}),Y(1),Y(0)} = \frac{1}{n r_1r_0}\sum_{i\ne j} D_{ij}(Y_i(1)-\bar{Y}_1)(Y_j(0)-\bar{Y}_0)Z_i (1-Z_j) = M_4+M_5,
     \]
     where
     \begin{align*}
              M_4 &=  -\frac{1}{n r_1r_0}\sum_{i\ne j} D_{ij}(Y_i(1)-\bar{Y}_1)(Y_j(0)-\bar{Y}_0)Z_i Z_j,\\
              M_5 &=  \frac{1}{n r_1r_0}\sum_{i\ne j} D_{ij}(Y_i(1)-\bar{Y}_1)(Y_j(0)-\bar{Y}_0)Z_i. 
     \end{align*}   
     Similarly, applying Lemma~\ref{lem:quadra-term-1}, we get that
     \[
     M_4 = -\frac{r_1}{n r_0}\sum_{i\ne j} D_{ij}(Y_i(1)-\bar{Y}_1)(Y_j(0)-\bar{Y}_0)+\op(1).
     \]
     The term $M_5$ is decomposed as $M_5 = M_{51}+M_{52}+M_{53}+M_{54}$, where 
\begin{align*}
    M_{51} &=  \frac{1}{n r_1r_0}\sum_{i\ne j} D_{ij}(Y_i(1)-\bar{Y}(1))(Y_j(0)-\bar{Y}(0))Z_i,\\
     M_{52} &=  (\bar{Y}(1)-\bar{Y}_1)\frac{1}{n r_1r_0}\sum_{i\ne j} D_{ij}(Y_j(0)-\bar{Y}(0))Z_i,\\
       M_{53} &=  (\bar{Y}(0)-\bar{Y}_0)\frac{1}{n r_1r_0}\sum_{i\ne j} D_{ij}(Y_i(1)-\bar{Y}(1))Z_i,\\
    M_{54} &=  (\bar{Y}(0)-\bar{Y}_0)(\bar{Y}(1)-\bar{Y}_1)\frac{1}{n r_1r_0}\sum_{i\ne j} D_{ij}Z_i.   
\end{align*}
%We now estimate these terms one by one. 
 Applying Lemma~\ref{lem:linear-term-1} with   $y_i = \sum_{j\in [n]\backslash i} D_{ij}( Y_j(0)-\bar{Y}(0))$, $g_i = Y_i(1)-\bar{Y}(1)$, and $a_i=1$, we get
 \begin{align*}
     M_{51} &= \frac{1}{n r_1r_0}\sum_{i\ne j} D_{ij}(Y_i(1)-\bar{Y}(1))(Y_j(0)-\bar{Y}(0))Z_i \\
     &= \frac{1}{n r_0}\sum_{i\ne j} D_{ij}(Y_i(1)-\bar{Y}(1))(Y_j(0)-\bar{Y}(0))+\op(1).
 \end{align*}
 Applying Lemma~\ref{lem:linear-term-1} with   $y_i = \sum_{j\in [n]\backslash i} D_{ij}( Y_j(0)-\bar{Y}(0))$, $g_i = 1$, and $a_i=1$, we get
 \[
 M_{52} = (\bar{Y}(1)-\bar{Y}_1)\Op(1) = \op(1).
 \]
  Applying Lemma~\ref{lem:linear-term-1} with   $y_i = \sum_{j\in [n]\backslash i} D_{ij}$, $g_i = ( Y_i(1)-\bar{Y}(1))$, and $a_i=1$, we get
 \[
 M_{53} = (\bar{Y}(0)-\bar{Y}_0)\Op(1) = \op(1).
 \]
 Applying Lemma~\ref{lem:linear-term-1} with   $y_i = \sum_{j\in [n]\backslash i} D_{ij}$, $g_i = 1$, and $a_i=1$, we get
 \[
 M_{53} = (\bar{Y}(0)-\bar{Y}_0)(\bar{Y}(1)-\bar{Y}_1)\Op(1) = \op(1).
 \]
 These results together imply that
\begin{align*}
 s_{\diag^- (\bs{D}),Y(1),Y(0)} &= \frac{1}{n}\sum_{i\ne j} D_{ij}(Y_i(1)-\bar{Y}_1)(Y_j(0)-\bar{Y}_0) + \op(1) \\
 &=  S_{\diag^- (\bs{D}),Y(1),Y(0)} + \op(1).
 \end{align*}
 
 Finally, for $s^2_{\diag(\bs{D}), Y(z)}$, we have
\begin{align*}
    s^2_{\diag(\bs{D}), Y(z)} = \frac{1}{n_z}\sum_{i:Z_i=z} D_{ii} (Y_i(z)-\bar{Y}_z)^2 = M_{6}+M_{7}+M_{8},
\end{align*}
where
\begin{align*} 
& M_6 =  \frac{1}{n_z}\sum_{i:Z_i=z} D_{ii} (Y_i(z)-\bar{Y}(z))^2,\\
&M_7 =2(\bar{Y}(z)-\bar{Y}_z) \frac{1}{n_z}\sum_{i:Z_i=z} D_{ii} (Y_i(z)-\bar{Y}(z)),
\\
&M_8   =(\bar{Y}(z)-\bar{Y}_z)^2 \frac{1}{n_z}\sum_{i:Z_i=z} D_{ii}.
\end{align*} 
 Applying Lemma~\ref{lem:linear-term-1} with  $y_i = g_i = Y_i(z)-\bar{Y}(z)$ and $a_i=D_{ii}$, we get  
 \[
 M_{6} = \frac{1}{n}\sum_{i:Z_i=z} D_{ii} (Y_i(z)-\bar{Y}(z))^2+ \op(1).
 \]
Applying Lemma~\ref{lem:order-sample-mean} and Lemma~\ref{lem:linear-term-1} with $y_i=1$, $ g_i = Y_i(z)-\bar{Y}(z)$, and $a_i=D_{ii}$,  we get  
 \[
 M_{7} = 2(\bar{Y}(z)-\bar{Y}_z) \Op(1) = \op(1).
 \]
 Applying Lemma~\ref{lem:order-sample-mean} and Lemma~\ref{lem:linear-term-1} with $y_i=1$, $ g_i =1$, and $a_i=D_{ii}$,  we get  
 \[
 M_{8} = (\bar{Y}(z)-\bar{Y}_z)^2 \Op(1) = \op(1).
 \]
 These results together imply that
\[
 s^2_{\diag(\bs{D}), Y(z)} = \frac{1}{n}\sum_{i} D_{ii} (Y_i(z)-\bar{Y}_z)^2+\op(1) =  S^2_{\diag(\bs{D}), Y(z)} +\op(1).
\]
To sum up, we have concluded the proof.
\end{proof}
Now, we are ready to prove \Cref{thm:inference}. The proof also includes the technical details of the comment of \eqref{eq:sigmadecomp}.
\begin{proof}[Proof of Theorem~\ref{thm:inference}] 
% \textcolor{red}{should be 4.1?}
Recall that we denote $\bs{P}= \bs{I} - \frac{1}{n}\bs{1}\bs{1}^\top$. By Lemma~\ref{lem:norm-of-two-matrix} and the fact $0\le H_{ii}\le 1$, there is
\[
\|\diag(\bs{\star})\|_2 = O(1), \quad \|\diag^-(\bs{\star})\|_2 = O(1), \quad \star \in \{\bs{H},\bs{Q}, \bs{P}\}.
\]
We then expand $\bs{B}$ as
\begin{align*}
    \bs{B} &= \left(\bs{P}  - \bs{H} + \bs{P} \diag(\bs{H})\right)^\top \left(\bs{P}  - \bs{H} + \bs{P} \diag(\bs{H})\right)\\
    &= \bs{P}-\bs{H}+ \diag(\bs{H})\bs{P} \diag(\bs{H}) + (\bs{P}-\bs{H}) \diag(\bs{H}) +  \diag(\bs{H}) (\bs{P}-\bs{H}).
\end{align*}
Therefore, we have
\begin{align*}
    \diag(\bs{B})&=\diag(\bs{P})-\diag(\bs{H})+ \diag(\bs{H})\diag(\bs{P}) \diag(\bs{H})   \\
    &+ (\diag(\bs{P})-\diag(\bs{H})) \diag(\bs{H}) +\diag(\bs{H}) (\diag(\bs{P})-\diag(\bs{H})),\\
        \diag^-(\bs{B})&=\diag^-(\bs{P})-\diag^-(\bs{H})+ \diag(\bs{H})\diag^-(\bs{P}) \diag(\bs{H})  \\
    &+ (\diag^-(\bs{P})-\diag^-(\bs{H})) \diag(\bs{H}) + \diag(\bs{H}) (\diag^-(\bs{P})-\diag^-(\bs{H})).
\end{align*}
By sub-additivity and sub-multiplicativity of the $l_2$-norm, we have
\[
\|\diag(\bs{B})\|_2  = O(1), \quad \|\diag^-(\bs{B})\|_2  = O(1).
\]
Therefore, for $\star \in \{\bs{H},\bs{Q},\bs{B},\bs{P}\}$, under Assumptions~\ref{assumption:positive-limit-of-assigned-proportion}-\ref{assumption:lindeberg-type-condition-p-o(n)}, we can derive that
\begin{align}
&s^2_{\diag(\star),Y(z)} = S^2_{\diag(\star),Y(z)}+\op(1)\quad s^2_{\diag^-(\star),Y(z)} = S^2_{\diag^-(\star),Y(z)}+\op(1),\nonumber \\
\label{eq:consistency-of-sample-correlation}
&    s_{\diag^-(\star),Y(1),Y(0)} = S_{\diag^-(\star),Y(1),Y(0)} + \op(1).
\end{align}
by using Lemma~\ref{lem:sample-covariance-consistency}.
Thus, we have proved the consistency of those empirical estimators of covariances.

Next, we prove that
\begin{align}
        \cI_{3} = &\sum_{z \in \{0, 1\}}  \left(S^2_{\diag(\bs{B}), Y(z)} - S^2_{\diag(\bs{Q}), Y(z)}- S^2_{\diag^-(\bs{H}),Y(z)}\right)  \nonumber \\ 
    & + 2S_{\diag^-(\bs{H}),Y(1),Y(0)} - S^2_{\diag(\bs{H}),Y(1)-Y(0)} - S^2_{e(1)-e(0)} + O\left(n^{-1}\right).\label{eq:cI3}
\end{align}
Some direct calculations give that
\[
B_{ii} = 1-\frac{1}{n}+ \left(1-\frac{2}{n}\right)H_{ii}- \left(1+\frac{1}{n}\right)H_{ii}^2,
\]
which implies that $B_{ii}-Q_{ii} = 1+ O(n^{-1})$.

Applying the equation 
\[
2S_{\diag(\bs{D}), Y(1),Y(0)} = \sum_{z\in\{0,1\}} S_{\diag(\bs{D}), Y(z)}^2-S_{\diag(\bs{D}), Y(1)-Y(0)}^2,
\]
with $\bs{D}\in \{\bs{B}, \bs{Q}, \bs{H}\}$ and the equation 
\[
S_{Y(1)-Y(0)}^2 = S_{\bs{H},Y(1)-Y(0)}^2+S_{e(1)-e(0)}^2,
\]
we obtain that
\begin{align*}
   \cI_3 &= 2S_{\diag(\bs{B}), Y(1),Y(0)}-2S_{\diag(\bs{Q}), Y(1),Y(0)}  \\
&=\sum_{z\in\{0,1\}} \left(S_{\diag(\bs{B}), Y(z)}^2-S_{\diag(\bs{Q}), Y(z)}^2\right)-S_{\diag(\bs{B}), Y(1)-Y(0)}^2+S_{\diag(\bs{Q}), Y(1)-Y(0)}^2\\
&=\sum_{z\in\{0,1\}} \left(S_{\diag(\bs{B}), Y(z)}^2-S_{\diag(\bs{Q}), Y(z)}^2\right)-S_{Y(1)-Y(0)}^2+O\left(n^{-1}\right) \\
&=\sum_{z\in\{0,1\}} \left(S_{\diag(\bs{B}), Y(z)}^2-S_{\diag(\bs{Q}), Y(z)}^2\right)-S_{\bs{H},Y(1)-Y(0)}^2-S_{e(1)-e(0)}^2+O\left(n^{-1}\right)\\
 &= \sum_{z \in \{0, 1\}} \left(S^2_{\diag\{\bs{B}\}, Y(z)} - S^2_{\diag\{\bs{Q}\}, Y(z)}- S^2_{\diag^-\{\bs{H}\},Y(z)}\right)   \\ 
    &\qquad+ 2S_{\diag^-\{\bs{H}\},Y(1),Y(0)} - S^2_{\diag\{\bs{H}\},Y(1)-Y(0)} - S^2_{e(1)-e(0)}+O\left(n^{-1}\right).
\end{align*}
We replace all terms in the formula of $\sigma_\hd^2$ with their empirical estimators, except for the term
\[
-S^2_{\diag\{\bs{H}\},Y(1)-Y(0)} - S^2_{e(1)-e(0)},
\]
which constitutes the bias of $\hat{\sigma}^2_\hd$.

Using \eqref{eq:consistency-of-sample-correlation}, we get that under Assumptions \ref{assumption:positive-limit-of-assigned-proportion}-\ref{assumption:lindeberg-type-condition-p-o(n)},
\begin{align}
    \label{eq:inference-p/n-alpha}
    \hat{\sigma}^2_\hd = {\sigma}^2_\hd +S^2_{\diag\{\bs{H}\},Y(1)-Y(0)} + S^2_{e(1)-e(0)} + \op(1).
\end{align}

{\revone Since $\liminf_{n\rightarrow \infty} \sigma_{\hd}^2>0$ under \Cref{assumption:Positive-variance-limit-alpha-positive}, we can choose $
	\epsilon_n := -\frac{\hat{\sigma}^2_{\hd}-({\sigma}_{\hd}^2 +  S_{e(1)-e(0)}^2 + S^2_{\diag\{\bs{H}\},Y(1)-Y(0)})}{\sigma^2_{\hd}} = \op(1)$,
	such that
\begin{align*}
    & \frac{\hat{\sigma}^2_{\hd}}{\sigma^2_{\hd}} = 1 + \frac{  S_{e(1)-e(0)}^2 + S^2_{\diag\{\bs{H}\},Y(1)-Y(0)}}{\sigma^2_{\hd}} + \frac{\hat{\sigma}^2_{\hd}-({\sigma}_{\hd}^2 +  S_{e(1)-e(0)}^2 + S^2_{\diag\{\bs{H}\},Y(1)-Y(0)})}{\sigma^2_{\hd}} \\
    \geq & 1 +  \frac{\hat{\sigma}^2_{\hd}-({\sigma}_{\hd}^2 +  S_{e(1)-e(0)}^2 + S^2_{\diag\{\bs{H}\},Y(1)-Y(0)})}{\sigma^2_{\hd}}  = 1 - \epsilon_n.
\end{align*}}

It remains to show that if we are under Assumptions \ref{assumption:positive-limit-of-assigned-proportion}-\ref{assumption:lindeberg-type-condition-p-o(n)} and, in addition, Assumption~\ref{assumption:p-o(n)-case}, then
\begin{align}
\label{eq:inference-p-o(n)}
    {\sigma}^2_\hd +S^2_{\diag\{\bs{H}\},Y(1)-Y(0)} + S^2_{e(1)-e(0)} = {\sigma}^2_\adj + S^2_{e(1)-e(0)} + o(1).
\end{align}
{\revone Note that the above equality, together with \Cref{assumption:Positive-variance-limit-p-o(n)} and~\eqref{eq:inference-p/n-alpha}, immediately yields that $\hat{\sigma}^2_{\hd}/\sigma^2_{\adj} \geq 1 -\epsilon_n$, where $\epsilon_n := -\frac{\hat{\sigma}^2_{\hd}-({\sigma}_{\adj}^2 +  S_{e(1)-e(0)}^2)}{\sigma^2_{\adj}} = \op(1)$.}

We now prove~\eqref{eq:inference-p-o(n)}. Comparing the left-hand and right-hand sides of \eqref{eq:inference-p-o(n)} with the formulas of $\hat{\sigma}^2_{\hd}$ and $\sigma^2_{\adj}$, we find that it suffices to prove that  
\[
S^2_{\diag\{\bs{H}\},Y(1)-Y(0)} =o(1),\quad S^2_{s(z)} = o(1), \quad S_{\bs{Q}, \;r_1^{-2} Y(1) - r_0^{-2} Y(0)}^2 = o(1).
\]
Under Assumption~\ref{assumption:p-o(n)-case}, using $\sum_i H_{ii} = p$, we obtain that 
\begin{align*}
    S^2_{\diag\{\bs{H}\},Y(1)-Y(0)}  &\leq \frac{2}{n-1}\sum_i H_{ii} (Y_i(1)-\bar{Y}(1))^2+\frac{2}{n-1}\sum_i H_{ii} (Y_i(0)-\bar{Y}(0))^2\\
    & \leq  \frac{2}{n-1}\sum_{i=1}^p (Y_{(i)}(1)-\bar{Y}(1))^2+\frac{2}{n-1}\sum_{i=1}^p (Y_{(i)}(0)-\bar{Y}(0))^2 =o(1).
\end{align*}
On the other hand, using $\sum_i H_{ii}^2 \le \sum_i H_{ii} = p$, we obtain that
\begin{align*}
    S^2_{s(z)}  \leq  \frac{1}{n-1} \sum_i H_{ii}^2 (Y_i(z)-\bar{Y}(z))^2 \leq \frac{1}{n-1}\sum_{i=1}^p (Y_{(i)}(z)-\bar{Y}(z))^2 = o(1).
\end{align*}
Finally $S_{\bs{Q}, \;r_1^{-2} Y(1) - r_0^{-2} Y(0)}^2 = o(1)$ follows from the above analysis and the proof of Theoerm~\ref{theorem:CLT-p=o(n)}. 

Putting together, the conclusion then follows.
\end{proof}
\begin{proof}[Proof of Corollary~\ref{cor:inference}]
  In the proof of Theorem~\ref{thm:inference}, we have derived that (recall \eqref{eq:cI3})
  \[
  \cI_3 = \sum_{z\in\{0,1\}} \left(S_{\diag(\bs{B}), Y(z)}^2-S_{\diag(\bs{Q}), Y(z)}^2\right)-S_{Y(1)-Y(0)}^2+O\left(n^{-1}\right).
  \]
 Then, the conclusion follows by replacing all terms with their empirical estimators except for $S_{Y(1)-Y(0)}^2$ and by using a proof similar to that of Theorem~\ref{thm:inference}.
\end{proof}

{\rev
\section{Comparisons with $\hat{\tau}_{\CMA}$ and $\hat{\tau}_{\CMO}$}\label{sec:exact-unbiased}

\subsection{Comparison with $\hat{\tau}_{\CMA}$}

For ease of presentation, with a slight abuse of notation throughout this section we write $\bs{X}_i := \bs{X}_i - \bar{\bs{X}}$. From \citet[Theorem~4.2]{chang2021exact}, we have that $\hat{\tau}_{\CMA}$ can be expressed as
\begin{equation}\label{eq:cmadecomp}
	\begin{aligned}
		\hat{\tau}_{\CMA} = &\hat{\tau}_{\unadj} - \frac{(n_1 - 1) n}{n_1 (n - 1)} \cdot \frac{1}{n_1} \sum_{i=1}^n Z_i \hat{\bs\beta}_1^\top \bs{X}_i + \frac{(n_0 - 1) n}{n_0 (n - 1)} \cdot \frac{1}{n_0} \sum_{i=1}^n (1 - Z_i) \hat{\bs\beta}_0^\top \bs{X}_i +\\
		& \frac{n_0}{n_1(n_1-1)} \sum_{i:Z_i=1} H_{ii}(Y_i(1)-\bar{Y}_1) - \frac{n_1}{n_0(n_0-1)} \sum_{i:Z_i=0} H_{ii}(Y_i(0)-\bar{Y}_0) - \Delta_{\CMA,1} + \Delta_{\CMA,0},
	\end{aligned}
\end{equation}
where by writing $\bar{\bs{X}}_z := \frac{1}{n_z} \sum_{i: Z_i = z} \bs{X}_i$, we may express $\Delta_{\CMA,z}$ for $z \in \{0, 1\}$ as
\begin{align*}
 &\Delta_{\CMA,z}= \frac{C_{\CMA,z}}{n_z} \cdot \frac{n}{n - 1} \sum_{i:Z_i=z} (\bs{X}_i-\bar{\bs{X}}_z)^\top \bs{S}_{\bs{X}}^{-2} (\bs{X}_i-\bar{\bs{X}}_z)(Y_i(z)-\bar{Y}_z); \\
&C_{\CMA,z}=\frac n{n_z^3}(\frac{n_z}n-\frac{3n_z(n_z-1)}{n(n-1)}+\frac{2n_z(n_z-1)(n_z-2)}{n(n-1)(n-2)})\frac{n(n-1)(n-2)}{(n_z-1)(n_z-2)n_z}\frac{n_z^3}{n^3}.
\end{align*}

\begin{proposition}
\label{prop:equivalence-of-CMA}
    Under Assumptions~\ref{assumption:positive-limit-of-assigned-proportion}--\ref{assumption:2-moment-of-finite-population-for-y} and the first half of \Cref{assumption:lindeberg-type-condition-p-o(n)}, we have $|\htaudb-\hat{\tau}_{\CMA}| = \Op(n^{-1})$.
\end{proposition}
\begin{proof}
   Recall the definition of $\htaudb$, we have the first five terms of $\hat{\tau}_{\CMA}$ match $\htaudb$, except that there are small differences in their scalings. Hence, there exists a large constant $C$, such that
\begin{align}
    |\htaudb - \hat{\tau}_{\CMA}|& \leq \frac{C}{n}|\hat{\bs{\beta}}_1^\top \bar{\bs{X}}_1| + \frac{C}{n} |\hat{\bs{\beta}}_0^\top \bar{\bs{X}}_0| + \frac{C}{n^2} \Big|\sum_{i:Z_i=1} H_{ii}(Y_i(1)-\bar{Y}_1)\Big| \nonumber\\
    & + \frac{C}{n^2} \Big|\sum_{i:Z_i=0} H_{ii}(Y_i(0)-\bar{Y}_0)\Big| +  |\Delta_{\CMA,1}| + |\Delta_{\CMA,0}|. \nonumber
\end{align}
To conclude the proof, it suffices to show that for $z\in \{0,1\}$
\begin{align}
\label{eq:equivalence-of-estimators-1}
    &\frac{1}{n}\sum_{i:Z_i=z} H_{ii}(Y_i(z)-\bar{Y}_z) = \Op(1),\quad \bar{\bs{X}}_z^\top\hat{\bs{\beta}}_z = \Op(1);\\
    \label{eq:equivalence-of-estimators-2}
    &|\Delta_{\CMA,z}| = \Op(n^{-1}).
\end{align}
  We now prove the first result of \eqref{eq:equivalence-of-estimators-1}. Below we just prove the case $z = 1$; and $z = 0$ follows from an analogous argument. We have
    \begin{align*}
        \frac{1}{n}\sum_{i:Z_i=1} H_{ii}(Y_i(1)-\bar{Y}_1) = \frac{1}{n}\sum_{i:Z_i=1} H_{ii}(Y_i(1)-\bar{Y}(1)) + \Big(\frac{1}{n}\sum_{i:Z_i=1} H_{ii}\Big)(\bar{Y}(1)-\bar{Y}_1).
    \end{align*}
Apparently \Cref{lem:order-sample-mean} yields $\bar{Y}(1) - \bar{Y}_1 = \Op(1 / \sqrt{n})$; applying \Cref{lem:linear-term-1} with $a_i=H_{ii}$, $y_i = 1$, $g_i=1$ and $a_i=H_{ii}$, $y_i = (Y_i(1)-\bar{Y}(1))$, $g_i=1$ respectively, we have
    \begin{align*}
            &\frac{1}{n}\sum_{i:Z_i=1} H_{ii} = \frac{r_1}{n}\sum_{i=1}^n H_{ii} + \op(1) = \frac{r_1 p }{n} + \op(1) = \Op(1);\\
        &\frac{1}{n}\sum_{i:Z_i=1} H_{ii}(Y_i(1)-\bar{Y}(1)) = \frac{r_1}{n}\sum_{i=1}^n H_{ii}(Y_i(1)-\bar{Y}(1)) + \op(1) = \Op(1);
    \end{align*}
    where the last equality follows from
    \[
    \frac{1}{n}\sum_{i=1}^n H_{ii}(Y_i(1)-\bar{Y}(1)) \leq  \sqrt{\Big\{\frac{1}{n}\sum_{i=1}^n H_{ii}^2\Big\} \Big\{\frac{1}{n}\sum_{i=1}^n (Y_i(1)-\bar{Y}(1))^2 \Big\}} = O(1).
    \]
    Putting together proves the first result of~\eqref{eq:equivalence-of-estimators-1}.

On the other hand, we have show  in the proof of \Cref{proposition:asymptotic-equivalance-htauadj} that
\begin{align*}
    \bar{\bs{X}}_1^\top\hat{\bs{\beta}}_1 = \frac{n-1}{(n_1-1)n_1}\bs{Z}^\top \bs{A}(1) \bs{Z} +\frac{n-1}{(n_1-1)n_1}\bs{Z}^\top \bs{H} \bs{Z} (\bar{Y}(1)-\bar{Y}_1),
\end{align*}
and we have shown in the proof of \Cref{lem:ZHZ-order-A(z)-order}, that
    \begin{align*}
      \bs{Z}^\top \bs{A}(z) \bs{Z} &= \sum_{[i,j]} r_1^2 H_{ij} (Y_j(z)-\bar{Y}(z))+  \sum_{i} r_1 H_{ii} (Y_i(z)-\bar{Y}(z)) + \op(n)\\
      &= \sum_{i} r_1r_0 H_{ii} (Y_i(z)-\bar{Y}(z)) + \op(n) =  \Op(n);\\
    \bs{Z}^\top \bs{H} \bs{Z} &= \sum_{[i,j]} r_1^2 H_{ij} +  \sum_{i} r_1 H_{ii}  + \op(n) = \sum_{i} r_1r_0 H_{ii} + \op(n) = \Op(n).
    \end{align*}
It follows that $\bar{\bs{X}}_1^\top\hat{\bs{\beta}}_1 = \Op(n^{-1}\cdot n) + \Op(n^{-1}\cdot n\cdot n^{-1/2}) = \Op(1)$ and similarly $\bar{\bs{X}}_0^\top\hat{\bs{\beta}}_0 = \Op(1)$.

Finally, we show the magnitude of $\Delta_{\CMA,1}$. The proof for $\Delta_{\CMA,0}$ is similar. We first decompose $\Delta_{\CMA,1}$ as
\[
\Delta_{\CMA,1} = \frac{n^2 C_{\CMA,1}}{n_1}\Big\{ M_1 + (\bar{Y}(1)-\bar{Y}_1)M_2\Big\},
\]
where 
\begin{align*}
	M_1 := & \frac{1}{n (n - 1)}\sum_{i:Z_i=1} (\bs{X}_i-\bar{\bs{X}}_1)^\top \bs{S}_{\bs{X}}^{-2} (\bs{X}_i-\bar{\bs{X}}_1)(Y_i(1)-\bar{Y}(1)); \\
	M_2 := & \frac{1}{n (n - 1)}\sum_{i:Z_i=1} (\bs{X}_i-\bar{\bs{X}}_1)^\top \bs{S}_{\bs{X}}^{-2} (\bs{X}_i-\bar{\bs{X}}_1).
\end{align*}
Our only remaining job then is to show that both $M_1$ and $M_2$ are of order $\Op(1)$. We first focus on $M_1$. 
Using that $(n - 1)^{-1} \bs{X}_i^\top \bs{S}_{\bs{X}}^{-2} \bs{X}_j = H_{ij}$ and $(n - 1)^{-1}\bs{X}_i^\top \bs{S}_{\bs{X}}^{-2} \bs{X}_j (Y_j(1)-\bar{Y}(1)) = A_{ij}(1)$, we can rewrite $M_1$ as
\begin{align*}
    M_1 = \frac{1}{n}\sum_{i:Z_i=1} H_{ii}(Y_i(1)-\bar{Y}(1)) -\frac{2\bs{Z}^\top\bs{A}(1)\bs{Z}}{n_1 n} +  \Big(\frac{\bs{Z}^\top\bs{H}\bs{Z}}{n_1^2}\Big)\frac{1}{n}\sum_{i:Z_i=1}(Y_i(1)-\bar{Y}(1)). 
\end{align*}
    It then follows from the same analysis as above that
\[
M_1 = \Op(1) + \Op(n^{-1}) + \Op(n^{-1})\Op(n^{-1/2}) = \Op(1).
\]

Similarly, we can prove $M_2 = \Op(1)$; putting together we prove the desired result.
\end{proof}

% {\revone 
% 	\begin{remark}[Comparison with the theoretical analysis in~\citet{chang2024exact}]
% 		\label{rmk:cma}
% 		As discussed in~\Cref{sec:framework}, to prove that $\hat{\tau}_{\CMA} \ -\bar\tau$ achieves the same asymptotic representation as the right hand side of~\eqref{eq:regrep}, \citet{chang2024exact} requires $p = o(n)$, which is the same as the dimensionality regime considered in~\Cref{theorem:CLT-p=o(n)}. A simple inspection of the proof of \Cref{theorem:CLT-p=o(n)} reveals that we only need~Assumptions~\ref{assumption:positive-limit-of-assigned-proportion}--\ref{assumption:2-moment-of-finite-population-for-y},~\ref{assumption:p-o(n)-case} and the first half of Assumption~\ref{assumption:lindeberg-type-condition-p-o(n)} to prove that $\htaudb - \bar{\tau}$ follows the representation on the right hand side of~\eqref{eq:regrep}. In other words, no assumptions on covariate vectors $\bs{X}_i$'s are required, except that $\bs{S}^2_{\bX}$ must be invertible. This is in contrast to~\citet{chang2024exact}, which requires, e.g., the maximum leverage score $\kappa = o(1)$ (of course,~\citet{chang2024exact} imposes fewer assumptions on $Y_i(z)$'s, so our conditions are not weaker than theirs). This, together with~\Cref{prop:equivalence-of-CMA}, also reveals that $\hat{\tau}_{\CMA} -\bar\tau$ can achieve the same representation as the right hand side of~\eqref{eq:regrep} without any assumptions on covariate matrix other than the invertibility of $\bs{S}^2_{\bX}$. We believe this new theoretical insight on $\hat{\tau}_{\CMA}$ is of independent interest.
% \end{remark}}

\subsection{Comparison with $\hat{\tau}_{\CMO}$}

For ease of presentation, with a slight abuse of notation throughout this section we write $\bs{X}_i := \bs{X}_i - \bar\bX$. we now define $\bs{W}_i := (1, \bs{X}_i^\top)^\top$ and $\bs{S}_{\bs{W}}^2 := \frac{1}{n - 1} \sum_{i=1}^n \bs{W}_i \bs{W}_i^\top$. From \citet[Section~3.3]{chiang2023regression}, we have that $\hat{\tau}_{\CMO}$ can be expressed as
\begin{align*}
	\hat{\tau}_{\CMO} =& \hat{\tau}_{\unadj} - \sum_{i=1}^n \frac{Z_i-r_1}{(n - 1)r_1}\bs{W}_i^\top\bs{S}_{\bs{W}}^{-2}\Big(\frac{1}{n_1}\sum_{j:Z_j=1}\bs{W}_jY_j\Big) -  \sum_{i=1}^n \frac{Z_i-r_1}{(n - 1)r_0}\bs{W}_i^\top\bs{S}_{\bs{W}}^{-2}\Big(\frac{1}{n_0}\sum_{j:Z_j=0}\bs{W}_jY_j \Big) \\
	& +
	\frac{n_0}{n_1^2(n - 1)} \sum_{i:Z_i=1} \bs{W}_i^\top \bs{S}^{-2}_{\bs{W}} \bs{W}_i Y_i  - \frac{n_1}{n_0^2(n - 1)} \sum_{i:Z_i=0} \bs{W}_i^\top \bs{S}^{-2}_{\bs{W}} \bs{W}_i Y_i \\
	& -  \frac{n_0}{n_1^2(n-1)^2}\sum_{[i,j]} \bs{W}_i^\top \bs{S}_{\bs{W}}^{-2}Z_j \bs{W}_j Y_j +  \frac{n_1}{n_0^2(n-1)^2}\sum_{[i,j]} \bs{W}_i^\top \bs{S}_{\bs{W}}^{-2}(1-Z_j) \bs{W}_j Y_j.
\end{align*}

\begin{proposition}
	\label{prop:equivalence-of-CMO}
	Under \Cref{assumption:positive-limit-of-assigned-proportion}--\ref{assumption:lindeberg-type-condition-p-o(n)} and $\max_{z\in\{0,1\}} |\bar{Y}(z)|= O(1)$, we have
	\begin{align*}
		\hat{\tau}_{\CMO}-\htaudb = \Delta_{\CMO} + \op(n^{-1/2})
	\end{align*}
	where  
	\[
	\Delta_{\CMO} = \frac{1}{n }\sum_{i=1}^n (Z_i-r_1)H_{ii} \Big(\frac{\bar{Y}(1)}{r_1}+\frac{\bar{Y}(0)}{r_0}\Big)  -\frac{1}{n}\sum_{[i,j]} (Z_i-r_1)(Z_j-r_1) H_{ij}\Big(\frac{\bar{Y}(1)}{r_1^2}-\frac{\bar{Y}(0)}{r_0^2}\Big).
	\]
	Moreover, we have $\Delta_{\CMO} = \op(n^{-1/2})$ when $p = o(n)$; and $\Delta_{\CMO} = \Op(n^{-1/2})$ when $p \asymp n$.
\end{proposition}

\begin{proof}
	Using that
	$$(n-1)\bs{S}_{\bs{W}}^{2} = \begin{pmatrix}
		n & \\
		& (n-1)\bs{S}_{\bs{X}}^2
	\end{pmatrix},$$ 
	we have, for any $1\leq i,j\leq n$, 
	\begin{equation}\label{eq:bsw}
		(n - 1)^{-1}\bs{W}_{i}^\top \bs{S}_{\bs{W}}^{-2} \bs{W}_{j} = H_{ij} + n^{-1}.
	\end{equation}
	With above, we can express the second and third terms in the decomposition of $\hat{\tau}_{\CMO}$ as
	\begin{align*}
		& -\sum_{i=1}^n\sum_{j:Z_j=1} \frac{Z_i-r_1}{r_1}(H_{ij}+n^{-1})\frac{1}{n_1}Y_j -  \sum_{i=1}^n \sum_{j:Z_j=0} \frac{Z_i-r_1}{r_0}(H_{ij}+n^{-1})\frac{1}{n_0}Y_j\\
		=& -\sum_{i=1}^n\sum_{j:Z_j=1} \frac{Z_i-r_1}{r_1}H_{ij}\frac{1}{n_1}Y_j -  \sum_{i=1}^n \sum_{j:Z_j=0} \frac{Z_i-r_1}{r_0}H_{ij}\frac{1}{n_0}Y_j.
	\end{align*}

Now using that $\sum_{i=1}^n H_{ij} = 0$, we can further rewrite the above expression as
	\begin{align*}
		& -\sum_{i=1}^n\sum_{j:Z_j=1} \frac{Z_i}{r_1}H_{ij}\frac{1}{n_1}Y_j +  \sum_{i=1}^n \sum_{j:Z_j=0} \frac{1 - Z_i}{r_0}H_{ij}\frac{1}{n_0}Y_j\nonumber\\
		& = - \underset{=: M_1}{\underbrace{\sum_{i=1}^n\sum_{j=1}^n \frac{Z_iZ_j}{r_1}H_{ij}\frac{1}{n_1}(Y_j - \bar{Y}_1)}} + \underset{=: M_0}{\underbrace{\sum_{i=1}^n \sum_{j=1}^n \frac{(1 - Z_i) (1 - Z_j)}{r_0}H_{ij}\frac{1}{n_0}(Y_j - \bar{Y}_0)}} + \Delta_{\CMO,1},
	\end{align*}
where
\[
\Delta_{\CMO,1} = - \sum_{i=1}^n\sum_{j=1}^n \frac{Z_iZ_j}{r_1 n_1 }H_{ij}\bar{Y}_1 + \sum_{i=1}^n \sum_{j=1}^n \frac{(1-Z_i)(1-Z_j)}{r_0n_0}H_{ij}\bar{Y}_0.
\]

We now focus on $M_1$, expanding $H_{ij}$, we have
\begin{equation*}
		M_1 = \frac{1}{r_1 (n - 1)}\sum_{i=1}^n Z_i \bs{X}_i^\top \bs{S}_{\bs{X}}^{-2} \left(\frac{1}{n_1}\sum_{j=1}^n Z_j \bs{X}_j(Y_j-\bar{Y}_1)\right) = \frac{n}{n - 1}\bar{\bs{X}}_1^\top \bs{S}_{\bs{X}}^{-2} \left(\frac{1}{n_1}\sum_{j=1}^n Z_j \bs{X}_j(Y_j-\bar{Y}_1)\right),
	\end{equation*}
which matches the second term in the decomposition of $\hat{\tau}_{\CMA}$ in~\eqref{eq:cmadecomp} (this follows directly from the definition of $\hat{\bs\beta}_1$ in~\eqref{eq:newbeta}). Analogously, we have that $M_0$ matches the third term in the decomposition.

Using again~\eqref{eq:bsw} and $\sum_{i=1}^n H_{ij} = 0$, we can express the last four terms of $\hat{\tau}_{\CMO}$ as
	\begin{align}
		&\frac{n_0}{n_1^2} \sum_{i:Z_i=1} (H_{ii}+n^{-1}) Y_i  - \frac{n_1}{n_0^2} \sum_{i:Z_i=0} (H_{ii}+n^{-1}) Y_i - \nonumber \\
		& \frac{n_0}{n_1^2(n-1)}\sum_{[i,j]} (H_{ij}+n^{-1})Z_j  Y_j + \frac{n_1}{n_0^2(n-1)}\sum_{[i,j]} (H_{ij}+n^{-1})(1-Z_j)  Y_j \nonumber\\
		=& \frac{n_0}{n_1^2} \sum_{i:Z_i=1} (H_{ii}+n^{-1}) Y_i  - \frac{n_1}{n_0^2} \sum_{i:Z_i=0} (H_{ii}+n^{-1}) Y_i - \nonumber\\
		&\frac{n_0}{n_1^2(n-1)}\sum_{j:Z_j=1} \big(-H_{jj}+\frac{n-1}{n}\big)  Y_j + \frac{n_1}{n_0^2(n-1)}\sum_{j:Z_j=0} \big(-H_{jj}+\frac{n-1}{n}\big)  Y_j \nonumber\\
		=& \frac{n_0}{n_1^2} \sum_{i:Z_i=1} H_{ii} Y_i  - \frac{n_1}{n_0^2} \sum_{i:Z_i=0} H_{ii} Y_i +\frac{n_0}{n_1^2(n-1)}\sum_{i:Z_i=1} H_{ii}  Y_i - \frac{n_1}{n_0^2(n-1)}\sum_{i:Z_i=0} H_{ii}  Y_i \nonumber\\
		= & \frac{n}{n-1}\Big( \frac{n_0}{n_1^2} \sum_{i:Z_i=1} H_{ii} Y_i  - \frac{n_1}{n_0^2} \sum_{i:Z_i=0} H_{ii} Y_i\Big) \nonumber\\
		= & \underset{=: M_2}{\underbrace{\frac{n}{n-1}\Big\{ \frac{n_0}{n_1^2} \sum_{i:Z_i=1} H_{ii} (Y_i-\bar{Y}_1)  - \frac{n_1}{n_0^2} \sum_{i:Z_i=0} H_{ii} (Y_i-\bar{Y}_0)\Big\}}}+ \Delta_{\CMO,2}, \nonumber 
%		\label{eq:equi-decompose-of-tau-CMO-last-4}
	\end{align}
	where
	\[
	\Delta_{\CMO,2} = \frac{n}{n-1}\Big( \frac{n_0}{n_1^2} \sum_{i:Z_i=1} H_{ii} \bar{Y}_1  - \frac{n_1}{n_0^2} \sum_{i:Z_i=0} H_{ii} \bar{Y}_0\Big).
	\]
	
	Notice that the term $M_2$ defined above match the fourth and fifth terms of $\hat{\tau}_{\CMA}$, except that there are small differences in their scalings. As a consequence, there exists a constant $C$, such that
	\begin{align*}
		|\hat{\tau}_{\CMO}-\hat{\tau}_{\CMA}-(\Delta_{\CMO,1} + \Delta_{\CMO,2})| \leq &   \frac{C}{n^2} \Big|\sum_{i:Z_i=1} H_{ii}(Y_i(1)-\bar{Y}_1)\Big| + \frac{C}{n^2} \Big|\sum_{i:Z_i=0} H_{ii}(Y_i(0)-\bar{Y}_0)\Big| + \\
		& |\Delta_{\CMA,1}| + |\Delta_{\CMA,0}|. \nonumber
	\end{align*}
	
	In light of the results in~\Cref{prop:equivalence-of-CMA}, to get the first result, we need to show that
	\begin{align}\label{eq:comdecom}
		\Delta_{\CMO,1} + \Delta_{\CMO,2} = & \frac{1}{n }\sum_{i=1}^n (Z_i-r_1)H_{ii} \Big(\frac{\bar{Y}(1)}{r_1}+\frac{\bar{Y}(0)}{r_0}\Big) \\
		&  - \frac{1}{n}\sum_{[i,j]} (Z_i-r_1)(Z_j-r_1) H_{ij}\Big(\frac{\bar{Y}(1)}{r_1^2}-\frac{\bar{Y}(0)}{r_0^2}\Big) + \op(n^{-1/2}) \nonumber.
	\end{align}
	To achieve this goal, we first decompose $	\Delta_{\CMO,1} + \Delta_{\CMO,2} = M_3 + M_4$,
	where
	\begin{align*}
		M_3 &:=  \Big\{  \frac{nn_0}{(n-1)n_1^2} \sum_{i=1}^n Z_i H_{ii}  - \sum_{i=1}^n\sum_{j=1}^n \frac{Z_iZ_j}{n_1 r_1}H_{ij}\Big\}\bar{Y}_1,\\
		M_4 &:=  \Big\{  - \frac{nn_1}{(n-1)n_0^2} \sum_{i=1}^n (1-Z_i) H_{ii}   + \sum_{i=1}^n \sum_{j=1}^n \frac{(1-Z_i)(1-Z_j)}{r_0n_0}H_{ij}\Big\}\bar{Y}_0.
	\end{align*}
For $M_3$, first we have
\begin{equation}
	\frac{n}{n-1} \frac{n_0}{n_1^2} \sum_{i=1}^n Z_i H_{ii} = \frac{nn_0}{(n-1)n_1^2} \sum_{i=1}^n Z_i \Big(H_{ii}-\frac{p}{n}\Big) +  \frac{p n_0}{(n-1) n_1}.\label{eq:decompose-of-M3-term1-CMO}
\end{equation}
	Using \eqref{eq:decompose-Z-H-Z}, we further have
	\begin{align}
		&\sum_{i=1}^n\sum_{j=1}^n \frac{Z_iZ_j}{n_1 r_1}H_{ij} = \frac{\bs{Z}^\top \tilde{\bs{H}}\bs{Z}}{nr_1^2} +  \frac{p n_0}{(n-1)n_1} \label{eq:decompose-of-M3-term2-CMO},
	\end{align}
	where we may rewrite $\bs{Z}^\top \tilde{\bs{H}}\bs{Z}$ as
	\begin{align*}
		\bs{Z}^\top \tilde{\bs{H}}\bs{Z} =& (\bs{Z}-r_1\bs{1})^\top \tilde{\bs{H}}(\bs{Z}-r_1\bs{1}) =  \sum_{i=1}^n (Z_i-r_1)^2 \Big(H_{ii}-\frac{p}{n}\Big) +\\
		&\sum_{[i,j]} (Z_i-r_1)(Z_j-r_1) (H_{ij}+\frac{p}{n(n-1)}).
	\end{align*}
Using $(Z_i-r_1)^2 = (1-2r_1)(Z_i - r_1) + r_1 r_0$, we get 
\begin{align*}
	\sum_{i=1}^n (Z_i-r_1)^2 \Big(H_{ii}-\frac{p}{n}\Big) & = \sum_{i=1}^n (1-2r_1)(Z_i - r_1) \Big(H_{ii}-\frac{p}{n}\Big) + \sum_{i=1}^n r_1 r_0 \Big(H_{ii}-\frac{p}{n}\Big) \\
	& = \sum_{i=1}^n (1-2r_1)(Z_i - r_1) \Big(H_{ii}-\frac{p}{n}\Big).
\end{align*}
	In light of the above, and using moreover $\sum_{[i,j]} (Z_i-r_1)(Z_j-r_1) = -\sum_{i=1}^n (Z_i-r_1)^2 = O(n)$, we have
	\begin{equation}\label{eq:decompose-Z-tldH-Z}
			\bs{Z}^\top \tilde{\bs{H}}\bs{Z} = \sum_{i=1}^n (1-2r_1)(Z_i-r_1)\Big(H_{ii}-\frac{p}{n}\Big) + \sum_{[i,j]} (Z_i-r_1)(Z_j-r_1) H_{ij} + O(1).
	\end{equation}
	
	Substituting \eqref{eq:decompose-Z-tldH-Z} into \eqref{eq:decompose-of-M3-term2-CMO} and using \eqref{eq:decompose-of-M3-term1-CMO}--\eqref{eq:decompose-of-M3-term2-CMO}, we have
	\begin{equation}\label{eq:decompose-of-M3-term3-CMO}
		\begin{aligned}
			&   \frac{nn_0}{(n-1)n_1^2} \sum_{i=1}^n Z_i H_{ii}  - \sum_{i=1}^n\sum_{j=1}^n \frac{Z_iZ_j}{n_1 r_1}H_{ij} = \frac{nn_0}{(n-1)n_1^2} \sum_{i=1}^n Z_i \Big(H_{ii}-\frac{p}{n}\Big) - \frac{\bs{Z}^\top \tilde{\bs{H}}\bs{Z}}{nr_1^2} \\
			& = (\frac{1}{n r_1}+O(n^{-2}))\sum_{i=1}^n (Z_i-r_1)\Big(H_{ii}-\frac{p}{n}\Big) - \frac{1}{nr_1^2}\sum_{[i,j]} (Z_i-r_1)(Z_j-r_1) H_{ij} + \Op(n^{-1})\\
			&= \frac{1}{n r_1}\sum_{i=1}^n (Z_i-r_1)H_{ii} - \frac{1}{nr_1^2}\sum_{[i,j]} (Z_i-r_1)(Z_j-r_1) H_{ij} + \Op(n^{-1}).
		\end{aligned}
	\end{equation}
	
	On the other hand, using \Cref{lem:ZHZ-order-A(z)-order}, and again~\eqref{eq:decompose-of-M3-term1-CMO}--\eqref{eq:decompose-of-M3-term2-CMO}, we have
	\begin{equation}\label{eq:decompose-of-M3-term4-CMO}
		\frac{nn_0}{(n-1)n_1^2} \sum_{i=1}^n Z_i H_{ii}  - \sum_{i=1}^n\sum_{j=1}^n \frac{Z_iZ_j}{n_1 r_1}H_{ij} = \Op(n^{-1/2})+\op(1) = \op(1).
	\end{equation}
	
	In light of~\eqref{eq:decompose-of-M3-term3-CMO},~\eqref{eq:decompose-of-M3-term4-CMO}, $\bar{Y}_1-\bar{Y}(1) = \Op(n^{-1/2})$, and the assumption that $\max_{z \in \{0,1\}} |\bar{Y}(z)| = O(1)$, we have that
	\begin{align*}
		M_3 &=  \Big(\frac{1}{n r_1}\sum_{i=1}^n (Z_i-r_1)H_{ii} - \frac{1}{nr_1^2}\sum_{[i,j]} (Z_i-r_1)(Z_j-r_1) H_{ij} + \Op(n^{-1})\Big)( \bar{Y}(1)+\Op(n^{-1/2})) \\
		&= \frac{1}{n r_1}\sum_{i=1}^n (Z_i-r_1)H_{ii} \bar{Y}(1) - \frac{1}{nr_1^2}\sum_{[i,j]} (Z_i-r_1)(Z_j-r_1) H_{ij}\bar{Y}(1) + \op(n^{-1/2}).
	\end{align*}
	Similarly, we have
	\begin{align*}
		M_4 = \frac{1}{n r_0}\sum_{i=1}^n (Z_i-r_1)H_{ii} \bar{Y}(0) + \frac{1}{nr_0^2}\sum_{[i,j]} (Z_i-r_1)(Z_j-r_1) H_{ij}\bar{Y}(0) + \op(n^{-1/2}).
	\end{align*}
	\eqref{eq:comdecom} follows immediately.
	
	It remains to show the order of $\Delta_{\CMO}$. Let $(T_i)_{i=1}^n$ be the vector of independent Bernoulli random variables obtained by Hajek's coupling. Mimicking the proof of \Cref{prop:decompose-taudb-with-Ti}, we have
	\begin{align*}
		&\frac{1}{n }\sum_{i=1}^n (Z_i-r_1)H_{ii} = \frac{1}{n}\sum_{i=1}^n (T_i-r_1)(H_{ii}-p / n) + \op(n^{-1/2}),\\
		&\frac{1}{n}\sum_{[i,j]} (Z_i-r_1)(Z_j-r_1) H_{ij} =  \frac{1}{n}\sum_{[i,j]} (T_i-r_1)(T_j-r_1) H_{ij}+ \op(n^{-1/2}),
	\end{align*}
	where
	\begin{align*}
		&\var\Big(\frac{1}{n}\sum_{[i,j]} (T_i-r_1)(T_j-r_1) H_{ij}\Big) = \frac{1}{n^2}\sum_{[i,j]} 2(r_1r_0)^2 H_{ij}^2 = \frac{2(r_1r_0)^2}{n^2}\sum_{i} (H_{ii}-H_{ii}^2)  = O(pn^{-2}),\\
		&\var\Big(\frac{1}{n}\sum_{i=1}^n (T_i-r_1)(H_{ii}-p / n)\Big) = \frac{r_1r_0}{n^2}\sum_{i}(H_{ii}-p / n)^2 = O(pn^{-2}).
	\end{align*}

Using above, and the assumption that $\max_{z \in \{0,1\}} |\bar{Y}(z)| = O(1)$, we obtain the desired order of $\Delta_{\CMO}$ under both the $p = o(n)$ and $p \asymp n$ regime.
\end{proof}}

\section{Justification of assumptions}
\label{sec:justification-of-assumption}
In this section, we prove Propositions \ref{proposition:justify-sum-of-max-p-o(n)}--\ref{proposition:compare-the-variance}, which provide some justifications of our assumptions.
For the proof of \Cref{proposition:justify-sum-of-max-p-o(n)}, we will use the classical Bernstein inequality.
\begin{lemma}[Bernstein inequality]
\label{lem:Bernstein's inequality}
Let  $X_1, \ldots, X_n$ be independent centered random variables. Suppose that ${\displaystyle |X_{i}|\leq M}$ almost surely for all $i$. Then, for all $t>0$, we have that
    \[
    \mathbb{P}\left(\sum_{i=1}^n X_i \geq t\right) \leq \exp \left(-\frac{\frac{1}{2} t^2}{\sum_{i=1}^n \mathbb{E}\left[X_i^2\right]+\frac{1}{3} M t}\right).
    \]
\end{lemma}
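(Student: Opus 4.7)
The plan is to apply the standard Chernoff--Cram\'er method. For any $\lambda>0$, Markov's inequality applied to $e^{\lambda \sum_i X_i}$ yields
\[
\mathbb{P}\Bigl(\sum_{i=1}^n X_i \geq t\Bigr) \leq e^{-\lambda t} \prod_{i=1}^n \mathbb{E}[e^{\lambda X_i}],
\]
so the task reduces to bounding each moment generating factor and then optimizing in $\lambda$.

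First I would estimate $\mathbb{E}[e^{\lambda X_i}]$ for each $i$. Writing $\sigma_i^2 := \mathbb{E}[X_i^2]$ and using $\mathbb{E}[X_i]=0$ together with the almost sure bound $|X_i|\le M$, the key observation is $|X_i|^k \le M^{k-2} X_i^2$ for every $k\ge 2$. Combined with the Taylor expansion of the exponential this gives
\[
\mathbb{E}[e^{\lambda X_i}] \;\leq\; 1 + \frac{\sigma_i^2}{M^2}\bigl(e^{\lambda M}-1-\lambda M\bigr) \;\leq\; \exp\!\Bigl(\frac{\sigma_i^2}{M^2}(e^{\lambda M}-1-\lambda M)\Bigr),
\]
where the last step uses $1+x\le e^x$. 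Taking the product over $i$ and setting $\sigma^2:=\sum_i \sigma_i^2$, the Chernoff bound becomes
\[
\mathbb{P}\Bigl(\sum_{i=1}^n X_i \geq t\Bigr) \;\leq\; \exp\!\Bigl(-\lambda t + \frac{\sigma^2}{M^2}(e^{\lambda M}-1-\lambda M)\Bigr).
\]

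The second step is to massage the right-hand side into a form amenable to explicit optimization. The main technical ingredient here is the elementary inequality
\[
e^u - 1 - u \;\leq\; \frac{u^2}{2(1-u/3)}, \qquad 0\le u < 3,
\]
which can be verified by comparing the series expansions of the two sides term by term (each coefficient of $u^k$ on the left, namely $1/k!$, is dominated by the corresponding coefficient $(1/2)(1/3)^{k-2}$ on the right for $k\ge 2$). Applying this with $u=\lambda M$ under the restriction $0<\lambda<3/M$ yields the cleaner bound
\[
\mathbb{P}\Bigl(\sum_{i=1}^n X_i \geq t\Bigr) \;\leq\; \exp\!\Bigl(-\lambda t + \frac{\lambda^2 \sigma^2}{2(1-\lambda M/3)}\Bigr).
\]

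Finally, I would make the explicit choice $\lambda^* := t/(\sigma^2+Mt/3)$, which lies in $(0,3/M)$. A direct computation shows $1-\lambda^* M/3 = \sigma^2/(\sigma^2+Mt/3)$, so that $\lambda^{*2}\sigma^2/(2(1-\lambda^* M/3)) = t^2/\bigl(2(\sigma^2+Mt/3)\bigr)$ and $\lambda^* t = t^2/(\sigma^2+Mt/3)$; the exponent collapses to $-t^2/\bigl(2(\sigma^2+Mt/3)\bigr)$, which is exactly the stated bound. There is no real obstacle: the only non-trivial ingredient is the scalar inequality for $e^u-1-u$, and once it is in place the rest is algebra.
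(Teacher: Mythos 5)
Your proof is correct: it is the standard Chernoff--Cram\'er argument, with the moment bound $\mathbb{E}[e^{\lambda X_i}]\le \exp\bigl(\sigma_i^2 M^{-2}(e^{\lambda M}-1-\lambda M)\bigr)$, the scalar inequality $e^u-1-u\le u^2/(2(1-u/3))$ (whose coefficient-wise verification $2\cdot 3^{k-2}\le k!$ checks out), and the optimizer $\lambda^*=t/(\sigma^2+Mt/3)$, whose algebra reproduces exactly the stated exponent. The paper itself offers no proof of this lemma, citing it as the classical Bernstein inequality, so there is nothing to contrast; your write-up is the textbook derivation and is complete, apart from the trivial degenerate case $\sum_i\mathbb{E}[X_i^2]=0$ (where every $X_i=0$ almost surely and the bound holds vacuously, even though $\lambda^*$ then sits at the endpoint $3/M$).
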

% The following lemma is a direct application of the Theorem in \cite{GUT199249}.
% \begin{lemma}
% \label{lem:weark-lln}
%   If an array of random variables $\{(U_{ni},1 \leq i\leq n), U_{ni},1\leq i\leq n,\text{are independent}, n\geq 1 \}$ satisfies
%   \[
%   \frac{1}{n} \sum_{i=1}^n \E |U_{ni}| I\{|U_{ni}|>M\} \rightarrow 0,\quad \text{as}~ M\rightarrow \infty
%   \]
%   uniformly in $n$, then we have 
%   \[
%   \frac{1}{n}\sum_{i=1}^n (U_{ni}-\E U_{ni}) \rightarrow 0,
%   \]
%   in $L^1$ and hence in probability as $n\rightarrow \infty$.
% \end{lemma}

\begin{proof}[Proof of Proposition~\ref{proposition:justify-sum-of-max-p-o(n)}]
Fix $z\in \{0,1\}$. For ease of presentation, we denote  $Y_i(z)-\E Y_i(z)$  by $U_i$ and $Y_{(i)}(z)-\E Y_i(z)$ by $U_{(i)}$. By definition,
     $(U_{(1)}-\bar{U})^2\geq (U_{(2)}-\bar{U})^2\geq \ldots \geq (U_{(n)}-\bar{U})^2$. We further define $U_{<1>}^2\geq\ldots\geq U_{<n>}^2$ as the ordered sequence of $\{U_i^2\}_{i=1}^n$.
Then, we have that 
     \[
     \sum_{i=1}^p (Y_{(i)}(z)-\bar{Y}(z))^2 = \sum_{i=1}^p (U_{(i)}-\bar{U})^2 \leq 2p\bar{U}^2 + 2\sum_{i=1}^p U_{(i)}^2 \leq 2p\bar{U}^2 + 2\sum_{i=1}^p U_{<i>}^2.
     \]
     Since $\E \bar{U} = 0$ and $\var(\bar{U}) = \var(U_1)/n = O(n^{-1})$, by Chebyshev's inequality, we have that
     \[
     \mathbb{P}(p\bar{U}^2\geq c_{n1}) \leq \frac{\var(\bar{U})p}{c_{n1}} = \var(U_1) \frac{p}{n c_{n1}}.
     \]
     %Seeing that $p=o(n)$, we choose 
     Thus, choosing $c_{n1} = (p/n)^{1/2} = o(1)$, we get that with probability $1-(p/n)^{1/2} = 1-o(1)$,
     \[
     p\bar{U}^2 < c_{n1}.
     \]
     It remains to show that there exists $c_{n2}\rightarrow 0$ such that
    \[
    \mathbb{P}\left(\sum_{i=1}^p U_{<i>}^2\geq c_{n2}\right) = o(1) .
    \]
    %There are two cases: (i) $\limsup_{n} p < M<\infty$ and (ii) $\limsup_{n} p =\infty$.
    Note $\sum_{i=1}^p U_{<i>}^2$ is increasing in $p$, so in the following proof, we assume that $p\to \infty$ without loss of generality. 
    
    %holds for the second case, then it holds for the first one, we only consider the second case.
    
    Now, we consider the following two cases for the distribution of $U_1^2$:
    \begin{itemize}
        \item[(1)] $U_1^2$ is bounded almost surely, i.e., there exists an $M>0$ such that 
        \[
        \mathbb{P}(U_1^2 \geq M) =0.
        \]
   \item[(2)] $U_1^2$ is unbounded, i.e., for any $M>0$, we have
        \[
        \mathbb{P}(U_1^2 \geq M) >0.
        \]
    \end{itemize}
    In case (1), we have that almost surely,
    \[
    \frac{1}{n}\sum_{i=1}^p U_{<i>}^2<pM/n=o(1),
    \]
    in which case we can choose $c_{n2} = pM/n$.

   On the other hand, suppose case (2) holds. Then, we define the upper quantiles of $U_1$ as
    \[
    Q_{a}: = \sup \{M\in \mathbb{R}\mathrel{|}\mathbb{P}(U_1^2 \geq M)\geq a\},\quad a > 0.
    \]
    By definition, $\mathbb{P}(U_1^2 \geq Q_{a}) \geq a$
    and $Q_{a} \rightarrow \infty$ as $a\rightarrow 0$.
    For any $c_{n2}>0$ and $\alpha=p/n$, we have 
    \begin{align*}
            &~\mathbb{P}\left( \sum_{i=1}^p U_{<i>}^2/n  \geq c_{n2}\right) \\
            \leq&~ \mathbb{P}\left(\sum_i I(U_i^2\geq Q_{2\alpha}) < p\right)+ \mathbb{P}\left(  \sum_{i=1}^p U_{<i>}^2/n  \geq c_{n2},\sum_i I(U_i^2\geq Q_{2\alpha}) \geq p\right) \\
            \leq&~ \mathbb{P}\left(\sum_i I(U_i^2\geq Q_{2\alpha}) < p\right) + \mathbb{P}\left( \sum_{i=1}^n U_{i}^2 I(U_i^2\geq Q_{2\alpha}) /n \geq c_{n2} \right)\\      
            =:&~ \mathbb{P}(\mathcal{E}_1)+\mathbb{P}(\mathcal{E}_2) .
    \end{align*}
    We next deal with the events $\mathcal{E}_1$ and $\mathcal{E}_2$, respectively.
     
 For $\mathcal{E}_1$, let $e := \mathbb{P}(U_i^2 \geq Q_{2\alpha})\geq 2\alpha$. Then, we apply Bernstein's inequality with  $X_i = e- I(U_i^2 \geq Q_{2\alpha})$, $t = ne/2$, $\E X_i^2 <e$, and $ |X_i| < 2$ 
 %recall that $p = \alpha n$. Applying Bernstein's inequality with  $X_i = e- I(U_i^2 \geq Q_{2\alpha})$, with $e := \mathbb{P}(U_i^2 \geq Q_{2\alpha})\geq 2\alpha$, $t = ne/2$, $\E X_i^2 <e$, $ |X_i| < 2$,  
 to get that 
 $$\sum_i I(U_i^2 \geq Q_{2\alpha})\geq ne/2 \geq n\alpha =p $$ 
 holds with probability at least
     \begin{align*}
             %1-\exp\left(-\frac{1}{8}(ne)^2/\left\{ ne+1/3 ne\right\}\right) = 
             1-\exp\left(-\frac{3}{32}ne\right)\ge 1-\exp\left(-\frac{3p}{16}\right) = 1-o(1).
     \end{align*}
This implies that $\mathbb{P}(\mathcal{E}_1) = o(1)$.

For $\mathcal{E}_2$, using Markov's inequality, we get that
\begin{align*}
        \mathbb{P}\left(\sum_{i=1}^n U_{i}^2 I(U_i^2\geq Q_{2\alpha}) /n \geq c_{n2}\right)\leq \frac{1}{c_{n2}}\E U_{i}^2 I(U_i^2\geq Q_{2\alpha}) .
\end{align*}
  Since $Q_{2\alpha}\rightarrow \infty$ as $\alpha \rightarrow 0$ and $\E U_{i}^2<\infty$, we have $ \E U_{i}^2 I(U_i^2\geq Q_{2\alpha}) \to 0$. Thus, we can choose $c_{n2} = \left[\E U_{i}^2 I(U_i^2\geq Q_{2\alpha})\right]^{1/2} = o(1)$ such that $\mathbb{P}(\mathcal{E}_2) = c_{n2}\to 0$.

  In sum, we have proved that with probability $1-o(1)$,  
  \[
   \sum_{i=1}^p (Y_{(i)}(z)-\bar{Y}(z))^2  < 2c_{n1}+2c_{n2} = o(1).
  \]
  Hence, the conclusion follows.
    % Because as $M\rightarrow \infty$,
    % \[
    % \sup_n 1/n\sum_{i=1}^n \E U_{i}^2 I(U_i^2\geq Q_{2\alpha})I(U_i^2>M) =  \E U_{1}^2 I(U_1^2>Q_{2\alpha})I(U_1^2>M) \rightarrow 0.
    % \]
    % Applying Lemma~\ref{lem:weark-lln} with $U_{ni} = U_i^2I(U_i^2\geq Q_{2\alpha})$, we have
    % \[
    % \sum_{i=1}^n \left(U_{i}^2 I(U_i^2\geq Q_{2\alpha})-\E U_{i}^2 I(U_i^2\geq Q_{2\alpha})\right) /n = \op(1).
    % \]
    % Noting that as $n\rightarrow \infty$, $\alpha\rightarrow 0$, we have $Q_{2\alpha}\rightarrow \infty$ which implies that by Dominated Convergence Theorem
    % \[
    % \E U_{i}^2 I(U_i^2\geq Q_{2\alpha}) = o(1).
    % \]
    % As a consequence
    % \[
    % \sum_{i=1}^n U_{i}^2 I(U_i^2\geq Q_{2\alpha}) /n =\op(1),
    % \]
    % which implies that $\mathbb{P}(\mathcal{E}_2) = o(1)$.
    % Hence, this concludes the proof.
\end{proof}

\begin{proof}[Proof for Corollary~\ref{corollary:upper-lower-bound-for-sigma-alpha>0}]
For simplicity of notations,  we denote by $\tilde{Y}_i(z) := Y_i(z)-\bar{Y}(z)$.
    By definition and \Cref{prop:rewrite-sigma-hd-l}, we have
    \begin{align}
        \sigma^2_{\hd,l} &= \frac{1}{(n-1)(r_1r_0)}\sum_i \left(r_0s_i(1)+r_0e_i(1)+r_1s_i(0)+r_1 e_i(0)\right)^2,\label{sigmahdl}\\
       \sigma^2_{\hd,q} & = \frac{(r_1r_0)^2}{n-1} \sum_{[i,j]}H_{ij}^2\left( \frac{\tilde{Y}_i(1)}{r_1^2}-\frac{\tilde{Y}_i(0)}{r_0^2}\right)\left(\frac{\tilde{Y}_j(1)}{r_1^2}-\frac{\tilde{Y}_j(0)}{r_0^2}\right)\nonumber\\
       &+\frac{(r_1r_0)^2}{n-1}\sum_i (H_{ii}-H_{ii}^2)\left( \frac{\tilde{Y}_i(1)}{r_1^2}-\frac{\tilde{Y}_i(0)}{r_0^2}\right)^2  .\label{sigmahdq}
    \end{align}
Note $\sigma^2_{\hd,l} \geq 0 $ and $\sigma^2_{\hd,q} \geq 0 $ by~\Cref{prop:var-of-l-q-term}. We first prove that under Assumption~\ref{assumption:maximum-leverage-score-close-to-alpha},
    \begin{align}
    \label{eq:eq4}
            \sum_i \left(s_i(z)-\alpha\tilde{Y}_i(z)\right)^2 = o(n).
    \end{align}
    Applying the inequality
    $
    \sum_i (a_i-\bar{a})^2 \leq \sum_i a_i^2 
    $
    with $a_i = (H_{ii}-\alpha)\tilde{Y}_i(z)$,
    we get that 
\begin{align*}
    \sum_i \left(s_i(z)-\alpha\tilde{Y}_i(z)\right)^2 = \sum_i (a_i-\bar{a})^2 \leq  \sum_i a_i^2 = \sum_i (H_{ii}-\alpha)^2\tilde{Y}_i(z)^2,
\end{align*}
where the right-hand side is bounded by
\[
\sum_i (H_{ii}-\alpha)^2\tilde{Y}_i(z)^2 \le  \max_i |H_{ii}-\alpha|^2\cdot \sum_i \tilde{Y}_i(z)^2 = o(n)
\]
when $\max_i |H_{ii}-\alpha| = o(1)$ and $\sum_i \tilde{Y}_i(z)^2 = O(n)$, or bounded by 
\begin{align}
    \sum_i (H_{ii}-\alpha)^2\tilde{Y}_i(z)^2 &\le \bigg(\sum_i|\tilde{Y}_i(z)|^{2+\eta}\bigg)^{\frac{2}{2+\eta}}\bigg(\sum_i|H_{ii}-\alpha|^{\frac{2(2+\eta)}{\eta}} \bigg)^{\frac{\eta}{2+\eta}} \nonumber\\
    &< \bigg(\sum_i|\tilde{Y}_i(z)|^{2+\eta}\bigg)^{\frac{2}{2+\eta}}\bigg(\sum_i|H_{ii}-\alpha|^2 \bigg)^{\frac{\eta}{2+\eta}}=o(n)\label{ineq_case2}
\end{align}
when $\sum_i|H_{ii}-\alpha|^2  = o(n)$ and $\sum_i|\tilde{Y}_i(z)|^{2+\eta} = O(n)$. In the derivation of \eqref{ineq_case2}, the first inequality uses H{\"o}lder's inequality, while the second inequality is due to $\max_i |H_{ii}-\alpha| < 1$. 
In either case, we have proved \eqref{eq:eq4}, which implies that replacing $s_i(z)$ with $\alpha\tilde{Y}_i(z)$ in the formula of $\sigma^2_{\hd,l}$ leads to a negligible difference, i.e.,
\begin{align*}
    \sigma^2_{\hd,l} = \frac{1}{(n-1)(r_1r_0)}\sum_i \left\{r_0\alpha\tilde{Y}_i(1)+r_0e_i(1)+r_1\alpha \tilde{Y}_i(0)+r_1 e_i(0)\right\}^2 + o(1).
\end{align*}

Recall the definition of $d_i(z)$ at the first page of this supplement, we have $d_i(z) = \tilde{Y}_i(z)-e_i(z)$. 
Notice that $\sum_i e_i(z)d_i(z') = 0$ for $z,z'\in \{0,1\}$ and  
\begin{align*}
    &R^2 \sum_i \left(r_0\tilde{Y}_i(1)+r_1 \tilde{Y}_i(0)\right)^2 = \sum_i \left(r_0d_i(1)+r_1 d_i(0)\right)^2,\\
     &(1-R^2) \sum_i \left(r_0\tilde{Y}_i(1)+r_1 \tilde{Y}_i(0)\right)^2 = \sum_i \left(r_0e_i(1)+r_1 e_i(0)\right)^2,\\
     &\sigma^2_\cre = \frac{1}{(n-1)(r_1r_0)}\sum_i \left(r_0\tilde{Y}_i(1)+r_1 \tilde{Y}_i(0)\right)^2.
\end{align*}
With these identities, we derive that
\begin{align*}
    \sigma^2_{\hd,l} %&= \frac{1}{(n-1)(r_1r_0)}\sum_i \left\{r_0\alpha\tilde{Y}_i(1)+r_0e_i(1)+r_1\alpha \tilde{Y}_i(0)+r_1 e_i(0)\right\}^2 + o(1)\\
        &=\frac{1}{(n-1)(r_1r_0)}\sum_i \left(r_0(1+\alpha)e_i(1)+r_1(1+\alpha)e_i(0)+r_0\alpha d_i(1)+r_1\alpha d_i(0)\right)^2+o(1)\\
    &=\frac{1}{(n-1)(r_1r_0)}\left[(1+\alpha)^2\sum_i \left(r_0e_i(1)+r_1 e_i(0)\right)^2 + \alpha^2 \sum_i \left(r_0  d_i(1)+r_1 d_i(0)\right)^2 \right]+o(1)\\
    %&=\left\{(1+\alpha)^2-(1+2\alpha)R^2\right\} \frac{1}{(n-1)(r_1r_0)}\sum_i \left(r_0\tilde{Y}_i(1)+r_1 \tilde{Y}_i(0)\right)^2+o(1)\\
    &=\left[(1+\alpha)^2-(1+2\alpha)R^2\right] \sigma^2_\cre +o(1).
\end{align*}
Therefore, we have  
\begin{align}
\label{eq:eq6}
    \sigma^2_\hd \geq \sigma_{\hd,l}^2 =  \left[(1+\alpha)^2-(1+2\alpha)R^2\right] \sigma^2_\cre +o(1),
\end{align}
which gives the lower bound on $\sigma^2_\hd$.

For the upper bounds on $\sigma^2_\hd$ and $\sigma^2_{\hd,q}$, applying the Cauchy-Schwarz inequality and the identity $\sum_j H_{ij}^2=({\mathbf H^2})_{ii}=H_{ii}$, we get that  
\begin{align*}
    \sum_{[i,j]}H_{ij}^2\left( \frac{\tilde{Y}_i(1)}{r_1^2}-\frac{\tilde{Y}_i(0)}{r_0^2}\right)\left(\frac{\tilde{Y}_j(1)}{r_1^2}-\frac{\tilde{Y}_j(0)}{r_0^2}\right) &\leq \sum_{[i,j]}H_{ij}^2\left( \frac{\tilde{Y}_i(1)}{r_1^2}-\frac{\tilde{Y}_i(0)}{r_0^2}\right)^2\\
    &= \sum_i (H_{ii}-H_{ii}^2)\left( \frac{\tilde{Y}_i(1)}{r_1^2}-\frac{\tilde{Y}_i(0)}{r_0^2}\right)^2.
\end{align*}
Plugging it into \eqref{sigmahdq} yields that
\begin{equation}\label{sigmahdq2}
    \sigma^2_{\hd,q}  \leq  \frac{2(r_1r_0)^2}{n-1}\sum_i (H_{ii}-H_{ii}^2)\left( \frac{\tilde{Y}_i(1)}{r_1^2}-\frac{\tilde{Y}_i(0)}{r_0^2}\right)^2. 
\end{equation}
Now, the upper bounds on $\sigma^2_\hd$ and $\sigma^2_{\hd,q}$ follows immediately from the estimates
\begin{align*}
   \sum_i (H_{ii}-\alpha)\left( \frac{\tilde{Y}_i(1)}{r_1^2}-\frac{\tilde{Y}_i(0)}{r_0^2}\right)^2 = o(n),\quad \sum_i (H_{ii}^2-\alpha^2)\left( \frac{\tilde{Y}_i(1)}{r_1^2}-\frac{\tilde{Y}_i(0)}{r_0^2}\right)^2 = o(n).
\end{align*}
It suffices to prove that under Assumption~\ref{assumption:maximum-leverage-score-close-to-alpha},
\begin{align}
\label{eq:eq5}
    M_1:=\sum_i |H_{ii}-\alpha|\tilde{Y}_i(z)^2 = o(n),\quad M_2:=\sum_i |H_{ii}^2-\alpha^2| \tilde{Y}_i(z)^2 = o(n).
\end{align}
for $z\in\{0,1\}$.

When $\max_i|H_{ii}-\alpha| = o(1)$ and $\sum_i \tilde{Y}_i(z)^2 = O(n)$, $M_1$ is bounded as 
\[
\max_i|H_{ii}-\alpha| \cdot \sum_i \tilde{Y}_i(z)^2=o(n).
\]
When $\sum_i|H_{ii}-\alpha|^2  = o(n)$ and $\sum_i|\tilde{Y}_i(z)|^{2+\eta} = O(n)$, $M_1$ is bounded as 
\begin{align*}
   &~ \bigg(\sum_i|\tilde{Y}_i(z)|^{2+\eta^\prime}\bigg)^{\frac{2}{2+\eta^\prime}}\bigg(\sum_i|H_{ii}-\alpha|^{\frac{2+\eta^\prime}{\eta^\prime}} \bigg)^{\frac{\eta^\prime}{2+\eta^\prime}} \\
   \le &~\bigg(\sum_i|\tilde{Y}_i(z)|^{2+\eta^\prime}\bigg)^{\frac{2}{2+\eta^\prime}}\bigg(\sum_i|H_{ii}-\alpha|^2 \bigg)^{\frac{\eta^\prime}{2+\eta^\prime}} \\
   \le &~n^{\frac{2}{2+\eta^\prime}}\left(\sum_i|\tilde{Y}_i(z)|^{2+\eta}/n\right)^{\frac{2}{2+\eta}} \bigg(\sum_i|H_{ii}-\alpha|^2 \bigg)^{\frac{\eta^\prime}{2+\eta^\prime}} =o(n),
\end{align*}
by using H{\"o}lder's inequality in the first two steps, where $\eta^\prime \in (0,\eta)$ is chosen to be a small constant such that $(2+\eta^\prime)/\eta^\prime >2$. % and $\eta^\prime \leq \eta$.
% When $\sum_i|\tilde{Y}_i(z)|^{2+\eta} = O(n)$ and $\sum_i (H_{ii}-\alpha)^2 = o(n)$, by 
% \[
% \left(\sum_i|\tilde{Y}_i(z)|^{2+\eta}/n\right)^{\frac{1}{2+\eta}} > \left(\sum_i|\tilde{Y}_i(z)|^{2+\eta^\prime}/n\right)^{\frac{1}{2+\eta^\prime}},
% \]
% we have $\sum_i|\tilde{Y}_i(z)|^{2+\eta^\prime} = O(n)$, and thus $M_1$ is bounded by $O\left(n^{\frac{2}{2+\eta^\prime}}\right)o\left(n^{\frac{\eta^\prime}{2+\eta^\prime}}\right) = o(n)$.
% As a consequence, 
To sum up, under Assumption~\ref{assumption:maximum-leverage-score-close-to-alpha}, we have $M_1 = o(n)$. 
The bound on $M_2$ then follows easily:
%Using the derivation from $M_1$, we have
\[
M_2 \leq \max_i|H_{ii}+\alpha| \cdot \sum_i |H_{ii}-\alpha|\tilde{Y}_i(z)^2 \le 2M_1 = o(n).
\]

Finally, using \eqref{sigmahdq2} and \eqref{eq:eq5}, we obtain
\begin{align*}
    \sigma^2_{\hd,q}  &\leq  \frac{2(r_1r_0)^2}{n-1}\sum_i (\alpha-\alpha^2)\left( \frac{\tilde{Y}_i(1)}{r_1^2}-\frac{\tilde{Y}_i(0)}{r_0^2}\right)^2+o(1) \\
    &=  2(r_1r_0)^2 \alpha(1-\alpha) S_{r_1^{-2} Y(1) - r_0^{-2} Y(0)}^2+o(1).
\end{align*}
Together with \eqref{eq:eq6}, it concludes the proof.
% implies that
% \[
% \sigma_\hd^2  = \sigma_{\hd,l}^2 + \sigma_{\hd,q}^2<  \left\{(1+\alpha)^2-(1+2\alpha)R^2\right\} \sigma_\cre^2 + 2(r_1r_0)^2 \alpha(1-\alpha) S_{r_1^{-2} Y(1) - r_0^{-2} Y(0)}^2 + o(1).
% \]
% and the conclusion follows.
\end{proof}

For the proof of  Proposition~\ref{proposition:compare-the-variance}, we need to use the following lemma, which is an i.i.d. version of Theorem~$2$ in \cite{whittle1960bounds}.

\begin{lemma}
    \label{lem:bound-for-the-moment}
    Let $\xi = (\xi_1,\ldots,\xi_n)$ be a random vector with centered i.i.d. entries. Let $\bs{A}$ be an arbitrary deterministic matrix. For any $s\geq2$, there exists a constant $C(s)$ depending on $s$ such that
    \begin{align*}
           \E\left|\bs{\xi}^\top \bs{A} \bs{\xi}-\E(\bs{\xi}^\top \bs{A} \bs{\xi})\right|^{s} \leq C(s) \left(\E |\bs{\xi}_1|^{2s}\right)^{1/2} \Big(\sum_{i,j} |A_{ij}|^2\Big)^{s/2}.
    \end{align*}   
\end{lemma}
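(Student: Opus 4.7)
The plan is to invoke Theorem~2 of \cite{whittle1960bounds} directly---which gives moment bounds for multilinear forms of degree $k$ in independent random variables in terms of the Frobenius-type coefficient norm and a maximum univariate moment---and then specialize to the i.i.d.\ degree-$2$ case so that the moment factor collapses into $(\E|\xi_1|^{2s})^{1/2}$. Before invoking Whittle's bound I would symmetrize: $\bs{\xi}^\top\bs{A}\bs{\xi}$ is a scalar and equals $\bs{\xi}^\top(\bs{A}+\bs{A}^\top)\bs{\xi}/2$, and $\|(\bs{A}+\bs{A}^\top)/2\|_F\le \|\bs{A}\|_F$, so one may assume $\bs{A}$ is symmetric.

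Next I would split
\[
\bs{\xi}^\top\bs{A}\bs{\xi} - \E\bs{\xi}^\top\bs{A}\bs{\xi}
= \underbrace{\sum_i A_{ii}(\xi_i^2-\E\xi_i^2)}_{D_n}
+ \underbrace{\sum_{i\ne j} A_{ij}\xi_i\xi_j}_{O_n},
\]
where $\E[\xi_i\xi_j]=0$ for $i\ne j$ by centering and independence, so $O_n$ is already mean-zero. For $D_n$ I would apply Rosenthal's inequality to the sum of independent mean-zero variables $\{A_{ii}(\xi_i^2-\E\xi_i^2)\}_i$ at exponent $s\ge 2$, and bound the resulting two terms using $\sum_i A_{ii}^2\le\|\bs{A}\|_F^2$, the $\ell^s$--$\ell^2$ comparison $\sum_i|A_{ii}|^s\le(\sum_i A_{ii}^2)^{s/2}$ (valid for $s\ge 2$), and Jensen/H\"older comparisons between the moments of $\xi_1^2$ and $|\xi_1|^{2s}$. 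For $O_n$ I would apply the decoupling inequality of de la Pe\~na and Montgomery-Smith, reducing control to $\E|\sum_{i\ne j}A_{ij}\xi_i\xi_j'|^s$ with an independent copy $\bs{\xi}'$; conditioning on $\bs{\xi}$, one application of Rosenthal in $\bs{\xi}'$ converts this to an expression in the linear forms $\sum_{i\ne j}A_{ij}\xi_i$, whose $s$-th moments are themselves controlled by yet another Rosenthal, yielding again a bound of the form $C(s)(\E|\xi_1|^{2s})^{1/2}\|\bs{A}\|_F^s$.

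The main obstacle will be the moment-tracking after decoupling: the nested $L^s$-in-$L^s$ structure naturally produces a moment factor such as $(\E|\xi_1|^s)^2$ rather than the cleaner $(\E|\xi_1|^{2s})^{1/2}$, and closing the gap requires the careful induction in Whittle's proof (which for $k=2$ reduces to a short computation). In practice the most efficient route is simply to quote Whittle's Theorem~2, observe that under the i.i.d.\ assumption his common univariate moment factor $\max_i\|\xi_i\|_{L^{2s}}$ reduces to $(\E|\xi_1|^{2s})^{1/(2s)}$ raised to the appropriate power, and absorb the remaining numerical constants into $C(s)$.
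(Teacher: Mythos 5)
Your final route---simply quoting Whittle's Theorem~2 and specializing the moment factor to the i.i.d.\ case---is exactly what the paper does: it offers no proof beyond the remark that the lemma is ``an i.i.d.\ version of Theorem~2 of Whittle (1960),'' so your symmetrization remark and direct citation match the intended argument. The preliminary diagonal/off-diagonal split with Rosenthal and decoupling is an optional self-contained alternative, and you correctly flag that it is not needed once Whittle's bound is invoked.
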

The next lemma follows from a simple calculation.
\begin{lemma}
\label{lem:mean-of-quadra}
Let $\xi = (\xi_1,\ldots,\xi_n)$ be a random vector with centered i.i.d. entries. Let $\bs{A}$ be an arbitrary deterministic matrix. Then, we have
    \[
    \E(\bs{\xi}^\top \bs{A} \bs{\xi}) = \tr(A)\E\xi_1^2.
    \]
\end{lemma}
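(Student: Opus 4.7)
The plan is to expand the quadratic form coordinate-wise and then use linearity of expectation together with the independence and centering of the entries of $\bs{\xi}$. Specifically, I would first write
\[
\bs{\xi}^\top \bs{A} \bs{\xi} \;=\; \sum_{i=1}^{n}\sum_{j=1}^{n} A_{ij}\,\xi_i \xi_j,
\]
and then take expectations inside the finite double sum (which is allowed since each summand has finite first moment under the implicit assumption that $\E \xi_1^2 < \infty$, as otherwise the conclusion is vacuous).

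Next, I would compute $\E[\xi_i \xi_j]$ for each pair $(i,j)$. The key observation is that for $i \ne j$, independence gives $\E[\xi_i \xi_j] = (\E \xi_i)(\E \xi_j) = 0$ by the centering hypothesis, while for $i = j$ the i.i.d.\ assumption gives $\E[\xi_i^2] = \E[\xi_1^2]$. Substituting these two facts, every off-diagonal term in the double sum vanishes, leaving
\[
\E(\bs{\xi}^\top \bs{A} \bs{\xi}) \;=\; \sum_{i=1}^{n} A_{ii}\, \E[\xi_1^2] \;=\; \tr(\bs{A})\,\E \xi_1^2,
\]
which is exactly the claimed identity.

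There is essentially no obstacle here; the result is a one-line calculation, and the only care needed is to note that $\bs{A}$ need not be symmetric but only the diagonal entries contribute, so the identity uses the ordinary trace rather than any symmetrization. No concentration, moment bounds, or cumulant computations are required for this particular lemma, in contrast to Lemma \ref{lem:bound-for-the-moment} stated just above.
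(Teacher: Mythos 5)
Your proof is correct and matches the paper's argument exactly: expand the quadratic form, kill the off-diagonal terms via independence and centering, and identify the diagonal contribution as $\tr(\bs{A})\,\E\xi_1^2$. Nothing further is needed.
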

\begin{proof}[Proof of \Cref{lem:mean-of-quadra}]
By the mean zero and i.i.d.~conditions for the entries of $\xi$, we have
    \begin{align*}
        \E(\bs{\xi}^\top \bs{A} \bs{\xi}) =  \sum_{[i,j]} \E(\xi_i \xi_j A_{ij})+\sum_{i} \E(\xi_i^2 A_{ii}) = 0+ (\E\xi_1^2) \sum_{i} \E A_{ii} = \tr(A)\E\xi_1^2.
    \end{align*}
    This concludes the proof.
\end{proof}

\begin{proof}[Proof of Proposition~\ref{proposition:compare-the-variance}]
     We observe that
     \begin{equation}\label{eq:Stau-S}
         S^2_{\tau}-S^2_{e(1)-e(0)}-S^2_{\diag\{\bs{H}\},Y(1)-Y(0)} = S^2_{\bs{H},\tau} - S^2_{\diag\{\bs{H}\},Y(1)-Y(0)}.  
     \end{equation}
     %We deal with $S^2_{\bs{H},\tau}$ and $S^2_{\diag\{\bs{H}\},Y(1)-Y(0)}$ one by one.
Through a direct calculation, we can write $S^2_{\bs{H},\tau}$ as 
\begin{align*}
     S^2_{\bs{H},\tau} &=  \frac{1}{n-1}\Big\{ \left(\bs{\varepsilon}(1)-\bs{\varepsilon}(0)\right)^\top\bs{H}\left(\bs{\varepsilon}(1)-\bs{\varepsilon}(0)\right)    +\left(\bs{\beta}_1 - \bs{\beta}_0\right)^\top \bs{X}^\top  \bs{P}\bs{X} \left(\bs{\beta}_1 - \bs{\beta}_0\right) \\
    &\qquad\qquad + 2\left(\bs{\varepsilon}(1)-\bs{\varepsilon}(0)\right)^\top \bs{P}\bs{X} \left(\bs{\beta}_1 - \bs{\beta}_0\right)\\
   &=: M_1+M_2+2M_3,
\end{align*}
% \begin{align*}
%      S^2_{\bs{H},\tau} &=  \frac{1}{n-1}\Big\{ \left(\bs{\varepsilon}(1)-\bs{\varepsilon}(0)\right)^\top\bs{H}\left(\bs{\varepsilon}(1)-\bs{\varepsilon}(0)\right)  \\
%     &\quad   +\sum_i\left(\bs{X}_i^\top \bs{\beta}_1 -\bs{X}_i^\top \bs{\beta}_0-\bar{\bs{X}}^\top \bs{\beta}_1 + \bar{\bs{X}}^\top \bs{\beta}_0\right)^2 \\
%     &\quad + 2\sum_i (\varepsilon_i(1)-\varepsilon_i(0))\left(\bs{X}_i^\top \bs{\beta}_1 -\bs{X}_i^\top \bs{\beta}_0-\bar{\bs{X}}^\top \bs{\beta}_1 + \bar{\bs{X}}^\top \bs{\beta}_0\right)\Big\}\\
%    &=: M_1+M_2+2M_3,
% \end{align*}
where we denote $\bs{\varepsilon}(z)= (\varepsilon_1(z),\ldots, \varepsilon_n(z))^\top$, $\bs{P}= \bs{I} - \frac{1}{n}\bs{1}\bs{1}^\top$, and $\bs{X} = (\bs{X}_1,\ldots, \bs{X}_n)^\top$.
We next estimate the terms $M_i$, $i=1,2,3$, one by one. 

For $M_1$, applying Lemmas~\ref{lem:bound-for-the-moment} and \ref{lem:mean-of-quadra} with $s=2$, $\bs{A} = \bs{H}/(n-1)$, and $\xi_i = \varepsilon_i(1)-\varepsilon_i(0)$, and using the independence between $\bs{H}$ and $\bs{\varepsilon}(z)$, we obtain that 
     \begin{align*}
         \E (M_1|\bs{H})=\E  M_1  &= \frac{\tr(\bs{H})}{n-1}  \var(\varepsilon_1(1)-\varepsilon_1(0)) = \frac{p}{n-1}\left(\sigma_{\varepsilon(1)}^2+\sigma_{\varepsilon(0)}^2\right) \\
         &= \alpha \left(\sigma_{\varepsilon(1)}^2+\sigma_{\varepsilon(0)}^2\right) + O(n^{-1}),
     \end{align*}
     with $\sigma_{\varepsilon(z)}^2$, $z\in \{0,1\}$, denoting the variance of $\varepsilon_1(z)$, 
     and that
\begin{align*}
    \E\big(|M_1-\E(M_1|\bs{H})|^2~\big |\bs{H}\big)&\leq C(2) \left(\E\xi_1^4 \right)^{1/2} \frac{1}{(n-1)^2}\sum_{i,j} H_{ij}^2\\
    &= C(2) \left(\E\xi_1^4 \right)^{1/2}\frac{p}{(n-1)^2} = O(n^{-1}).
\end{align*}
Thus, by choosing $c_{n1} = \left[C(2) (\E\xi_1^4 )^{1/2}{p}(n-1)^{-2}\right]^{1/3}=o(1)$, we have
\begin{align*}
\mathbb{P}\big(|M_1-\E M_1|\geq c_{n1}\big |\bs{H}\big) &= \mathbb{P}\big(|M_1-\E(M_1|\bs{H})|\geq c_{n1}\big |\bs{H}\big) \\
&\leq  c_{n1}^{-2}\E\big[|M_1-\E(M_1|\bs{H})|^2\big | \bs{H} \big] \leq c_{n1}.
\end{align*}
Then, using the law of total expectation, we obtain that
\[
\mathbb{P}\big(|M_1-\E M_1|\geq c_{n1}\big) \leq c_{n1} = o(1).
\]
 
 For $M_2$, 
 applying Lemma~\ref{lem:bound-for-the-moment} with $s=2$, $\bs{A} = \bs{P}/(n-1)$, and $\xi_i = \bs{X}_i^\top (\bs{\beta}_1 - \bs{\beta}_0)$, we obtain that 
     \[
     \E M_2 = \frac{\tr(\bs{P})}{n-1} \E \big|\bs{X}_1^\top (\bs{\beta}_1 -  \bs{\beta}_0)\big|^2= \|\mathbf O^\top (\bs{\beta}_1-\bs{\beta}_0)\|_2^2. 
     \]
     Notice that due to the condition $\E|\bs{X}_1^\top\bs{\beta}_z|^{4}  < C$, $z\in \{0,1\}$, we have 
     \begin{equation}\label{eq:l22}
         \|\mathbf O^\top \bs{\beta}_z\|_2^2 = \E|\bs{X}_1^\top\bs{\beta}_z|^{2} \le \left(\E|\bs{X}_1^\top\bs{\beta}_z|^{4}\right)^{1/2}\le C^{1/2} .
     \end{equation}
    Next, applying Lemma~\ref{lem:mean-of-quadra}, we obtain that 
     \begin{align*}
         \E|M_2-\E M_2|^2 &\leq C(2)\left(\E\xi_1^4\right)^{1/2}\frac{1}{(n-1)^2}\sum_{i,j} P_{ij}^2  \\
         &= C(2)\left(\E\xi_1^4\right)^{1/2} (n-1)^{-1} = O(n^{-1}).
     \end{align*}
  Hence, by choosing $c_{n2} = \left[C(2)(\E\xi_1^4)^{1/2} (n-1)^{-1}\right]^{1/3}=o(1)$, we have
\[
\mathbb{P}\big(|M_2-\E M_2|\geq c_{n2}\big) \leq c_{n2} = o(1).
\]

For $M_3$, we observe that $\E M_3 = 0$ due to the independence between $\bs{X}$ and $\varepsilon(z)$. Denoting $\bs{\xi} = (\xi_1,\ldots,\xi_n)$ with $\xi_i = \bs{X}_i^\top (\bs{\beta}_1 -\bs{\beta}_0)$, %and using  Using the independence between $\bs{X}_i$ and $\varepsilon_i(z)$, and 
 we obtain that
\begin{align*}
    \E M_3^2 &= \E \left[\frac{1}{n-1} (\varepsilon(1)-\varepsilon(0))\bs{P}\bs{\xi}\right]^2=(n-1)^{-2} \left(\sigma_{\varepsilon(1)}^2+\sigma_{\varepsilon(0)}^2\right) \E (\bs{\xi}^\top \bs{P}\bs{\xi})\\
    &=(n-1)^{-1} \left(\sigma_{\varepsilon(1)}^2+\sigma_{\varepsilon(0)}^2\right) \E M_2 = O(n^{-1}).
\end{align*}
where we used $\bs{P}^2 = \bs{P}$ and \eqref{eq:l22}. 
Then, we choose $c_{n3} = (\E M_3^2)^{1/3} = o(1)$ such that
\[
\mathbb{P}(|M_3|\geq c_{n3}) < c_{n3}^{-2}\E M_3^2=c_{n3} = o(1).
\]
     
     To sum up, we have shown that with probability $1-o(1)$,  
     \begin{equation}
     \label{eq:eq13}
     \left|S^2_{\bs{H},\tau}-\E S^2_{\bs{H},\tau}\right| \le c_{n1}+c_{n2}+2c_{n3} = o(1),
     \end{equation}
     where
     \begin{align*}
              \E S^2_{\bs{H},\tau} =  \alpha \left(\sigma_{\varepsilon(1)}^2+\sigma_{\varepsilon(0)}^2\right) + \|\mathbf O^\top (\bs{\beta}_1-\bs{\beta}_0)\|_2^2+O(n^{-1}). 
     \end{align*}
     % which implies that with probability $1-o(1)$
     % \begin{equation}
     % \label{eq:eq13}
     %       \big|S^2_{\bs{H},\tau}-\alpha(\sigma_{\varepsilon(1)}^2+\sigma_{\varepsilon(0)}^2) - \|\mathbf O^\top (\bs{\beta}_1-\bs{\beta}_0)\|_2^2\big| < o(1).
     % \end{equation}

Next, we handle $S^2_{\diag\{\bs{H}\},Y(1)-Y(0)}$. It is easy to see that
    \begin{align}\label{eq:cn40}
           |S^2_{\diag\{\bs{H}\},Y(1)-Y(0)}- \alpha S^2_{Y(1)-Y(0)}| < \max_i |H_{ii}-\alpha| \cdot S^2_{Y(1)-Y(0)}.
    \end{align}
 By Proposition~\ref{proposition:4+eta-th-moment}, we have that with probability $1-o(1)$, 
    \begin{align}\label{eq:cn4}
    \max_i |H_{ii}-\alpha| < c_{n4}:=n^{-\delta},
    \end{align}
     for some constant $ \delta \in (0, \frac{\eta}{8 + 2 \eta})$. 
For $S^2_{Y(1)-Y(0)}$, applying Lemmas~\ref{lem:bound-for-the-moment} and \ref{lem:mean-of-quadra} with $s=2$, $\bs{A}=\bs{P}/(n-1)$ and $\xi_i = (Y_i(1)-Y_i(0))-(\mu_1-\mu_0)$, we obtain that
         \[
    \E S^2_{Y(1)-Y(0)}  = \frac{\tr(\bs{P})}{n-1} \mathbb E\xi_1^2 = \sigma_{\varepsilon(1)}^2+\sigma_{\varepsilon(0)}^2 + \|\mathbf O^\top (\bs{\beta}_1-\bs{\beta}_0)\|_2^2.
     \]
     and
     \begin{align*}
          \E|S^2_{Y(1)-Y(0)}-\E S^2_{Y(1)-Y(0)}|^2 &\leq C(2)\left(\E \xi_1^4\right)^{1/2}\frac{1}{(n-1)^2}\sum_i \sum_j P_{ij}^2\\
           &= C(2)\left(\E \xi_1^4\right)^{1/2} (n-1)^{-2} =  O(n^{-1}).
     \end{align*}
     % applying Chebyshev's inequality, we get that for any 
     % \begin{align*}
     %     \mathbb{P}(|S^2_{Y(1)-Y(0)}-\E S^2_{Y(1)-Y(0)}|\geq c) \leq c^{-2}\E|S^2_{Y(1)-Y(0)}-\E S^2_{Y(1)-Y(0)}|^2.
     % \end{align*}
   Thus, by choosing $c_{n5}=\big[\E|S^2_{Y(1)-Y(0)}-\E S^2_{Y(1)-Y(0)}|^2\big]^{1/3} = o(1)$, we have that 
   \begin{equation}
   \label{eq:eq14}
            \mathbb P \left(|S^2_{Y(1)-Y(0)}-\E S^2_{Y(1)-Y(0)}| \ge c_{n5}\right) \le c_{n5}=o(1).
   \end{equation}
    Plugging \eqref{eq:cn4} and \eqref{eq:eq14} into \eqref{eq:cn40}, we obtain that with probability $1-o(1)$, 
    \begin{equation}
        \label{eq:eq15}
        \begin{split}
        &\left|S^2_{\diag\{\bs{H}\},Y(1)-Y(0)}- \alpha \left(\sigma_{\varepsilon(1)}^2+\sigma_{\varepsilon(0)}^2 + \|\mathbf O^\top (\bs{\beta}_1-\bs{\beta}_0)\|_2^2\right)\right| \\
          < &~c_{n4}\left(\sigma_{\varepsilon(1)}^2+\sigma_{\varepsilon(0)}^2 + \|\mathbf O^\top (\bs{\beta}_1-\bs{\beta}_0)\|_2^2+c_{n5}\right) + \alpha c_{n5}= o(1).
        \end{split}
    \end{equation}
   
    Finally, combining \eqref{eq:Stau-S}, \eqref{eq:eq13} and \eqref{eq:eq15}, we obtain that with probability $1-o(1)$,
    \[
    S^2_{\tau}-S^2_{e(1)-e(0)}-S^2_{\diag\{\bs{H}\},Y(1)-Y(0)} > (1-\alpha)\|\mathbf O^\top (\bs{\beta}_1-\bs{\beta}_0)\|_2^2  + o(1) \geq o(1).
    \]   
    The conclusion then follows.
\end{proof}

Next, we give the proof of \Cref{proposition:4+eta-th-moment}.
\begin{proof}[Proof of \Cref{proposition:4+eta-th-moment}]

For simplicity of notations, we denote  
\begin{equation}\label{eq:mWP}
    \mW:= n^{-1/2}\begin{pmatrix}\bXX_1,\cdots, \bXX_n\end{pmatrix}^\top,\quad  \bP:=\bs{I} - \frac{1}{n}\bs{1}\bs{1}^\top.
\end{equation} 
%$$\bar \mW:=n^{-1/2}\bar{\bs{X}},\quad \mZ:=n^{-1/2}\begin{pmatrix}\bX_1-\bar{\bs{X}},\cdots, \bX_n-\bar{\bs{X}}\end{pmatrix}.$$
Then, we can write the matrix $\bs{H}$ as
\begin{equation}\label{eq:mH} 
\bs{H} = \bP\mW  \big(\mW^\top \bP\mW \big)^{-1} \mW^\top\bP.
\end{equation}

Now, we introduce a truncated matrix $\wt \bXX=(\wt \bXX_1, \ldots, \wt \bXX_n)^\top$ as 
\begin{equation}\label{truncateZ} 
\wt V_{ij}:= \mathbf 1\left( |V_{ij}|\le \varphi_n \log n\right)\cdot V_{ij},\quad \text{with}\quad \varphi_n:=n^{\frac{2}{4+\eta}},
\end{equation}
and denote $\wt\mW:= n^{-1/2}\begin{pmatrix}\wt\bXX_1,\cdots, \wt\bXX_n\end{pmatrix}^\top$. 
Combining the moment bound $\max_j\E |V_{ij}|^{4+\eta} <C$ with Markov's inequality, we obtain from a simple union bound that
\begin{equation}\label{XneX222}
\mathbb \mathbb{P}(\wt \bXX= \bXX) = 1- \Prob\left(\max_{i,j}|V_{ij}| > \varphi_n \log n \right)=1-\OO \left( (\log n)^{-(4+\eta)}\right).
\end{equation}
By definition, we have 
\begin{equation} \label{EwtZ}
\begin{split}
 \E  \wt  V_{ij} &= - \mathbb E \left[ \mathbf 1\left( |V_{ij}|> \varphi_n \log n \right)V_{ij}\right] ,\\  
\E  |\wt  V_{ij}|^2 &= 1 - \mathbb E \left[ \mathbf 1\left( |V_{ij}|> \varphi_n \log n \right)|V_{ij}|^2\right] .
\end{split}
\end{equation}
Using the tail probability expectation formula, we can check that
\begin{align*}
&  \mathbb E \left| \mathbf 1\left( |V_{ij}|> \varphi_n\log n \right)V_{ij}\right| = \int_0^\infty \Prob\left( \left| \mathbf 1\left(  |V_{ij}|> \varphi_n\log n \right)V_{ij}\right| > s\right)\dd s \\
& = \int_0^{\varphi_n\log n}\Prob\left( |V_{ij}|> \varphi_n\log n \right)\dd s +\int_{\varphi_n\log n}^\infty \Prob\left(|V_{ij}| > s\right)\dd s  \\
& \lesssim \int_0^{\varphi_n\log n}\left(\varphi_n\log n \right)^{-(4+\eta)}\dd s +\int_{\varphi_n\log n}^\infty s^{-(4+\eta)}\dd s \lesssim \left(\varphi_n\log n\right)^{-(3+\eta)}.
\end{align*}
Here, for simplicity of notations, given two quantities $a_n,b_n$ depending on $n$, we have used $a_n\lesssim b_n$ to mean that $|a_n|=\OO(|b_n|)$. Similarly, we have
\begin{align*}
&  \mathbb E \left| \mathbf 1\left( |V_{ij}|> \varphi_n \log n \right)V_{ij}\right|^2  =  2\int_0^\infty s \Prob\left( \left| \mathbf 1\left( |V_{ij}|>\varphi_n\log n \right)V_{ij}\right| > s\right)\dd s  \\
&=  2\int_0^{\varphi_n\log n} s \Prob\left( |V_{ij}|> \varphi_n\log n \right)\dd s +2\int_{\varphi_n\log n}^\infty s\Prob\left(|V_{ij}| > s\right)\dd s  \\
 & \lesssim \int_0^{\varphi_n\log n}s\left(\varphi_n\log n \right)^{-(4+\eta)}\dd s +\int_{\varphi_n\log n}^\infty s^{-(3+\eta)}\dd s \lesssim \left(\varphi_n\log n\right)^{-(2+\eta)}.
\end{align*}
%where in the third step we used the finite $(4+\eta)$-th moment condition \eqref{conditionA2} for $Z_{ij}$ and Markov's inequality.
From the above two estimates, we can derive that
\begin{align}\label{meanshif}
&|\mathbb E  \wt V_{ij}| \le n^{-3/2}, \quad  \mathbb E |\wt V_{ij}|^2 =1+ \OO(n^{-1}),\\
&\E  \|\wt \bXX-\bXX\|_F^2 = \sum_{i,j} \mathbb E \left| \mathbf 1\left( |V_{ij}|> \varphi_n \log n \right)V_{ij}\right|^2 \lesssim n^{\frac{4}{4+\eta}}(\log n)^{-(2+\eta)}. 
\end{align}
As a consequence, we get that 
\begin{equation}\label{eq:op_norm}
\|\mathbb E\wt \mW\|_F \le n^{-1/2}\Big(\sum_{i,j}|\mathbb E  \wt V_{ij}|^2\Big)^{1/2} \le n^{-1},  \quad \mathbb P \left( \|\wt \mW - \mW\|_F \ge n^{-\frac{\eta}{8+2\eta}}\right) =o(1). 
\end{equation}

Let $\bD$ be a $p\times p$ diagonal matrix with entries $D_{ii}=\var (\wt  V_{1i})$, $i\in [p]$. By \eqref{meanshif}, we have 
\begin{equation}\label{eq_Dii1}
   { \max_{i\in [n]}} |D_{ii}-1|=\OO(n^{-1}). 
\end{equation}
Now, we introduce the matrices $\bW:=(\wt\mW-\E\wt\mW)\bD^{-1/2}$ and 
%$$ \bs{H}' = \bP(\mW')^\top \big[\mW' \bP (\mW')^\top\big]^{-1} \mW'\bP.$$
\begin{equation}\label{eq:mH1}
\bs{H}' = \bP\bW \big(\bW^\top \bP \bW\big)^{-1} \bW^\top\bP.
\end{equation}
%Denote $\bW:= \bD^{-1/2}\mW'$, 
By definition and \eqref{eq_Dii1}, the entries of $\bW$ are independent random variables satisfying 
\begin{equation}\label{eq:centerW}
    \E \cal W_{ij}=0,\quad \E |\cal W_{ij}|^2=n^{-1},\quad |\cal W_{ij}|\le \frac{2\varphi_n\log n}{n^{1/2}},\quad i \in [n],\  j \in [p].
\end{equation}
% the matrix $\bs{H}'$ can be rewritten as 
% $$\bs{H}' = \bP\bW^\top \big(\bW \bP \bW^\top\big)^{-1} \bW\bP.$$
Moreover, from \eqref{eq:op_norm}, we see that   
\begin{equation}\label{eq:op_norm2}
\mathbb P \left( \| \bW\bD^{1/2} - \mW\|_F \ge 2n^{-\frac{\eta}{8+2\eta}}\right) =o(1). 
\end{equation}
On the other hand, it is well-known that the empirical spectral distribution of $\mW \bP\mW^\top$ satisfies the famous Marchenko-Pastur (MP) law \citep{MP}, and their eigenvalues are all inside the support of the MP law, $[(1-\sqrt{\alpha})^2, (1+\sqrt{\alpha})^2],$ with high probability \citep{No_outside}. In particular, the following estimate is a direct consequence of the results in \cite{No_outside}: for any small constant $0<c<(1-\sqrt{\alpha})^2$, 
\begin{equation}\label{op rough2}  
\begin{split}
&\mathbb P \left(   (1-\sqrt{\alpha})^2 - c \le \lambda_{\min} (\mW^\top \bP \mW) \le \lambda_{\max} (\mW^\top \bP \mW) \le (1+\sqrt{\alpha})^2 + c\right) \\
&=1-o(1),
\end{split}
\end{equation}
where $\lambda_{\min}$ and $\lambda_{\max}$ denote the minimum and maximum eigenvalues, respectively. With \eqref{eq_Dii1}, \eqref{eq:op_norm2} and \eqref{op rough2}, we obtain the following two estimates: there exists a constant $C_1>0$ (depending on $\limsup \alpha$) such that 
\begin{equation}\label{op rough3}  
\mathbb P \left(   C_1^{-1} \le \lambda_{\min} (\bW^\top \bP \bW) \le \lambda_{\max} (\bW^\top \bP \bW) \le C_1\right) =1-o(1),
\end{equation}
and
\begin{equation}\label{eq:op_norm3}  
\mathbb P \left(  \|\bs{H}'-\bs{H}\|_2\ge C_1 n^{-\frac{\eta}{8+2\eta}}\right)\to 0.
\end{equation}
% Next, applying Lemma \ref{lem:Bernstein's inequality} with the condition \eqref{eq:centerW}, we obtain that for any $i=1,\ldots, p$ and fixed $C>0$, there exists a constant $c_2>0$ such that 
% \[
%     \mathbb{P}\bigg(\sum_{j=1}^n \cal W_{i,j} \geq \log n\bigg) \leq \exp \left(-c_2(\log n)^2\right).
%     \] 
% Taking a union bound, we obtain that 
% $$ \mathbb{P}\bigg(\max_{1\le i\le p}\sum_{j=1}^n \cal W_{i,j} \geq \log n\bigg) \leq \exp \left(-\frac 1 2 c_2(\log n)^2\right).$$
% As a consequence, it implies that 
% \begin{equation}\label{eq:op_norm4}  
% \mathbb P \left(  \|\bs{1}_n^\top\bW\|_2\ge \sqrt{p}\log n\right)\to 0.
% \end{equation}
% Using the inequality $\|\bs{1}_n\bs{1}_n^\top\bW\|\le \|\bs{1}_n^\top\bW\|_2$, we derive that 
% \begin{equation}\label{eq:op_norm4}  
% \mathbb P \left(   {n}^{-1}\|\bs{1}_n\bs{1}_n^\top\bW\|\ge C_1 n^{-\frac{\eta}{8+2\eta}}\right)\to 0.
% \end{equation}
Since $|H_{ii}'-H_{ii}|\le \|\bs{H}'-\bs{H}\|_2$, to conclude the proof, it suffices to show that
\begin{equation}\label{eq:limH'}
    \mathbb{P}\left(\max_{i\in [n]}|H_{ii}'-\alpha|> n^{-\delta}\right) \to 0
\end{equation}
for any constant $0<\delta < \frac{\eta}{8+2\eta}$.

Let $\e_n=n^{-1/2}$. By \eqref{op rough3}, there exists a constant $C_2>0$ such that
\begin{equation}\label{eq:H'-H}
\mathbb P \left( \|\bs{H}' -\bs{H}_{\e}\|_2\ge C_2\e_n\right)={ o(1)},
\end{equation}
where $\bs{H}_{\e}$ is defined as\footnote{{\revone From here and below, we use ``$A^{-1}$'' and ``$\frac{1}{A}$'' interchangeably to denote the inverse of a matrix $A$. The latter notation is a common convention in random matrix theory, as it helps streamline the presentation of lengthy formulas involving matrix inverses (see e.g.~\citet{erdHos2017dynamical}).}}
\begin{equation}\label{eq:H'-H2}
\bs{H}_{\e}:=\bP\bW\frac{1}{(\bW^\top\bP)(\bP\bW) - \ii \e_n\bs{I}}\bW^\top\bP.
\end{equation}
Observe the following matrix identity
\begin{equation}\label{eq:H'-H3}
\bs{H}_{\e}= 1+\frac{\ii \e_n}{\bP\bW \bW^\top\bP - \ii \e_n \bs{I}}.
\end{equation}
%$$= \bP\bW^\top\frac{1}{\bW\bP\bW^\top - \ii \e_n\bs{I}}\bW\bP,$$
% with which we can write that
% $$ (H_\e)_{ii} =1 + \lim_{\e\downarrow 0} \left(\frac{\ii \e}{\bP\bW^\top \bW \bP - \ii \e  \bs{I}}\right)_{ii}.$$
% Since $n^{-1/2}\bs{1}_n$ is a unit eigenvector of $\bP$ with zero eigenvalue, we have that 
% $$  \frac{\ii \e}{\bP\bW^\top \bW \bP - \ii \e \bs{I}} = - \frac{1}{n}\bs{1}\bs{1}^\top + \bP\frac{\ii \e}{\bW^\top \bW  - \ii \e \bs{I}} \bP. $$
% Thus, we need to estimate
% $$ H_{ii}' = 1-\frac{1}{n} + \lim_{\e\downarrow 0}\bv^\top \frac{\ii \e}{\bW^\top \bW  - \ii \e \bs{I}}\bv,$$
% where $ \bv:=\be_i- n^{-1}\bs{1}_n$ with $\be_i$ being the standard unit vector along the $i$-th coordinate axis. 
Now, to conclude \eqref{eq:limH'}, it suffices to prove that 
\begin{equation}\label{eq:limvWv}
    \Prob\left[\left|\left( \frac{\ii \e_n}{\bP\bW \bW^\top \bP - \ii \e_n \bs{I}}\right)_{ii} + 1-\alpha\right| \ge n^{-\delta}\right] \le n^{-C}
\end{equation}
for any constant $0<\delta < \frac{\eta}{8+2\eta}$ and large constant $C>1$. Then, taking a simple union bound, we obtain that 
\begin{equation}\label{eq:limvWv2}
    \Prob\left[\max_{i\in [n]}\left|\left( \frac{\ii \e_n}{\bP\bW \bW^\top \bP - \ii \e_n \bs{I}}\right)_{ii} + 1-\alpha\right| \ge n^{-\delta}\right] \le n^{-(C-1)},
\end{equation}
which concludes \eqref{eq:limH'}.

For the proof of \eqref{eq:limvWv}, we will adopt Theorem 11.2 of \cite{Anisotropic}. More precisely, under the conditions on $\bW$ in \eqref{eq:centerW}, the following estimate was proved in Theorem 11.2 of \cite{Anisotropic}: for any deterministic unit vectors $\mathbf u,\mathbf v\in \mathbb R^n$, 
\begin{equation}\label{eq:locallaw}
  \Prob \left( \left| \mathbf u^\top \left(\frac{\ii \e_n}{\bP\bW \bW^\top \bP - \ii \e_n\bs{I}}\right)\mathbf v - \mathbf u^\top \left( {m(\ii \e_n)\bP - \frac{1}{n}\bs{1}_n\bs{1}_n^\top}\right)\mathbf v\right| \ge \frac{\varphi_n}{n^{\frac 1 2-c}}\right) \le n^{-C}  
\end{equation}
holds for any small constant $c>0$ and large constant $C>1$.
Here, $m(z)$ is the unique analytic function in a neighborhood around the origin that satisfies the equation
$$ -m(z) + \frac{p}{n}\frac{m(z)}{m(z)+z}=1. $$
In particular, from this equation, we can solve that that 
\begin{equation}\label{eq:solvem}
   | m(\ii \e_n)+(1-\alpha)|\le C_3\e_n
   \end{equation}
for a constant $C_3>0$. Plugging \eqref{eq:solvem} into \eqref{eq:locallaw} and taking $\mathbf u=\mathbf v=\bs{e}_{i,n}$, the $i$-th canonical basis unit vector, we obtain the estimate \eqref{eq:limvWv} since $\left(n^{-1}\bs{1}_n\bs{1}_n^\top\right)_{ii}=n^{-1}$. This concludes the proof.
\end{proof}

\iffalse
For the proof of \eqref{eq:limvWv}, we will adopt Theorem C.2 of \cite{AOIT_2020}. More precisely, under the conditions on $\bW$ in \eqref{eq:centerW}, the following estimate was proved in Theorem C.2 of \cite{AOIT_2020} (by taking $a=0$, $n_1=0$ and $n_2=n$ there): for any small constant $\delta>0$ and large constant $D>0$, 
\begin{equation}\label{eq:locallaw}
  \Prob \bigg(\sup_{z\in \mathbb C, |z|\le (\log n)^{-1}} \left| \bv^\top \frac{z}{\bW^\top \bW  - z\bs{I}}\bv + m(z)\bv^\top \bv\right| \ge \frac{\varphi_n}{n^{1/2-\delta}}\bigg) \le 1-n^{-D},  
\end{equation}
where $m(z)$ is the unique analytic function on $\{|z|\le (\log n)^{-1}\}$ satisfying the equation
$$ m(z) + \frac{p}{n}\frac{m(z)}{m(z)-z}=1. $$
In particular, from this equation, we see that $m(\ii \e)-(1-\alpha)\to 0$ as $\e\to 0$. Furthermore, note that there is $\bv^\top \bv=1-n^{-1}$. Hence, from \eqref{eq:locallaw}, we obtain that 
$$\Prob \left(\left|\lim_{\e\downarrow 0}\bv^\top \frac{\ii \e}{\bW^\top \bW  - \ii \e \bs{I}}\bv + \frac{n-p}{n}\right| \ge n^{-\delta}\right) \le 1-n^{-D},  $$
as long as $\delta$ is chosen sufficiently small depending on $\eta$. This implies \eqref{eq:limvWv} and hence concludes the proof. 
\fi

\begin{remark}
    An estimate of the form \eqref{eq:locallaw} is often called a \emph{local law} of $(\bP\bW \bW^\top\bP  - z\bs{I})^{-1}$, the Green's function of $\bP\bW \bW^\top\bP$. Such local laws of sample covariance matrices were also established in many other papers under different settings, see e.g., \cite{BMP_AOP,isotropic,BPZ_AOS,DY,VESD} (we remark that this list is far from being comprehensive). The setting in Theorem 11.2 of \cite{Anisotropic} is closest to our current one, but there is a minor difference that $|\cal W_{ij}|$ is of order $O(n^{-1/2+\e})$ in \cite{Anisotropic}. However, using the argument in \cite{DY}, it is rather straightforward to extend Theorem 11.2 of \cite{Anisotropic} to our setting with $|\cal W_{ij}|\le \frac{\varphi_n\log n}{n^{1/2}}$ in \eqref{eq:centerW} and conclude \eqref{eq:locallaw}. We omit the details here.
    
    %However, those local laws have avoided the regime of $z$ around the origin, because the matrix $\bW^\top \bW$ is singular when $n>p$. While we believe the methods in these references can be extended to the regime around $z=0$ without difficulty, we have chosen to use directly the results of \cite{AOIT_2020}. 
\end{remark}

% Proof of: We need a stronger estimate than \eqref{eq:op_norm2}: 
% \[
%     \mathbb{P}\left( \frac{1}{n} \sum_{i=1}^n|H_{i,i}-\alpha|  > c n^{-1/2}\right) \to 0
% \]

{\rev 
Finally, we give the proof of \Cref{prop:justify-assumption-3}.

\begin{proof}[Proof of \Cref{prop:justify-assumption-3}]
For ease of notation, we write $Y_i(z)$ as $Y_i$ and $e_i(z)$ as $e_i$. By definition, we can write that
\[
e_i = Y_i-\bar{Y}-\bs{e}_{i,n}^\top\bs{H}\bs{Y}.
%\bs{e}_{i,n}^\top\bs{P}\bs{V}\Big(\bs{V}^\top\bs{P}\bs{V}\Big)^{-1}\bs{V}^\top\bs{P}\bs{Y}.
\]
Let $R_i = Y_i - \E Y_i$ and $\bs{R} = (R_1,\ldots,R_n)$. We can rewrite the above expression as
\[
e_i = R_i-\bar{R}-f_i,\]
where $f_i$ is defined as 
\[
f_i:= \bs{e}_{i,n}^\top\bs{H}\bs{R}=\bs{e}_{i,n}^\top\bP\mW\Big(\mW^\top\bP\mW\Big)^{-1}\mW^\top\bP\bs{R}.
\]
Here, we adopt the notations in \eqref{eq:mWP} and \eqref{eq:mH}. Since $\E |R_i|^{2+\eta} < C$, by Markov's inequality, we have that
\[
\Prob(\max_{i \in [n]} |R_i| > n^{\frac{2}{4+\eta}}) \leq n\Prob( |R_1| > n^{\frac{2}{4+\eta}}) \leq n\left(\E|R_1|^{2+\eta}\right) n^{-\frac{4+2\eta}{4+\eta}} = O(n^{-\frac{\eta}{4+\eta}}) . 
\]
In other words, with probability $1-o(1)$, 
\begin{align}
\label{eq:order-of-max-Ri-1}
    \max_{i \in [n]} |R_i| \leq n^{\frac{2}{4+\eta}}.
\end{align}
On the other hand, by Chebyshev's inequality, we have that 
\begin{align}
\label{eq:order-of-bar-R-1}
    |\bar{R}| \le n^{-(1/2-c)}
\end{align}
with probability $1-o(n^{-2c})$ for any constant $c>0$. It remains to control the size of $f_i$'s. 
% Let
% \begin{align*}
%         f_i  =& \bs{e}_{i,n}\bs{P}\bs{V}\Big(\bs{V}^\top\bs{P}\bs{V}\Big)^{-1}\bs{V}^\top\bs{P}\bs{R}.
% \end{align*}
This is done for the parts (i) and (ii) of \Cref{prop:justify-assumption-3} separately. %First, the proof of part (i) is simple with \Cref{proposition:4+eta-th-moment}.

\medskip
\noindent   \textbf{Proof of \Cref{prop:justify-assumption-3} (i)}:
     %Noting that $\bs{H} = \bs{P}\bs{V}\Big(\bs{V}^\top\bs{P}\bs{V}\Big)^{-1}\bs{V}^\top\bs{P}$ and $f_i  = \sum_{i=1}^n H_{ij}R_j$, we have
     First, we notice that
     \begin{align*}
         |f_i| \leq (\sum_{i=1}^n H_{ij}^2)^{1/2}\|\bs{R}\|_2 = H_{ii}^{1/2}\|\bs{R}\|_2.
     \end{align*}
By the law of large numbers, we have that with probability $1-o(1)$,  
\begin{align}
\label{eq:order-of-R-2-norm}
   %\Big| \|\bs{R}\|_2^2- n\E R_1^2 \Big| < \frac{n}{2}\E R_1^2 \Rightarrow 
   \|\bs{R}\|_2^2 \le \frac{3n}{2}\E R_1^2.
\end{align}
On the other hand, by \Cref{proposition:4+eta-th-moment}, we have that with probability $1-o(1)$,
% \begin{align*}
%     \max_{i\in [n]} \Big|H_{ii}-\frac{p}{n}\Big| < n^{-\delta}
% \end{align*}
% for some $\delta>0$. It follows that, with probability $1-o(1)$
\begin{align}
\label{eq:order-of-max-Hii}
    \max_{i\in [n]} |H_{ii}| \leq \frac{p}{n} + n^{-\delta}.
\end{align}
Combining equations \eqref{eq:order-of-R-2-norm} and \eqref{eq:order-of-max-Hii}, we obtain that with probability $1-o(1)$, 
\begin{align}
\label{eq:order-of-max-fi}
    \max_{i\in [n]} \Big|\frac{f_i}{\sqrt{n}}\Big| \leq C \Big\{\Big(\frac{p}{n}\Big)^{1/2} + n^{-\delta/2}\Big\}
\end{align}
\Cref{prop:justify-assumption-3} (i) then follows from \eqref{eq:order-of-max-Ri-1}, \eqref{eq:order-of-bar-R-1}, and \eqref{eq:order-of-max-fi}.

\medskip
\noindent   \textbf{Proof of \Cref{prop:justify-assumption-3} (ii)}:
%Denote $\mW:= n^{-1/2}\begin{pmatrix}\bXX_1,\cdots, \bXX_n\end{pmatrix}^\top.$ And we have $\bs{H} = \bP\bs{W} \big(\bs{W}^\top \bP \bs{W}\big)^{-1} \bs{W}^\top\bP$.
Recall the truncated matrices $\wt \bXX$ and $\wt\mW$ defined around \eqref{truncateZ} and the matrices $\bW$ and $\bs{H}'$ defined around \eqref{eq:mH1}. 
\iffalse
\begin{equation*}
\wt V_{ij}:= \mathbf 1\left( |V_{ij}|\le \varphi_n \log n\right)\cdot V_{ij},\quad \text{with}\quad \varphi_n:=n^{\frac{2}{4+\eta}}.
\end{equation*}
Recall that $\wt\mW:= n^{-1/2}\begin{pmatrix}\wt\bXX_1,\cdots, \wt\bXX_n\end{pmatrix}^\top$. 
$\bD$ is a $p\times p$ diagonal matrix with entries $D_{ii}=\var (\wt  V_{1i})$, $i\in [p]$ and $\bs{U}:=(\wt\mW-\E\wt\mW)\bD^{-1/2}$. $\bs{H}'$ is the truncated hat matrix:
$$\bs{H}' = \bP\bs{U} \big(\bs{U}^\top \bP \bs{U}\big)^{-1} \bs{U}^\top\bP.$$ We have shown that
\begin{align*}
    |\mathbb E  \wt V_{ij}| \le n^{-3/2},\quad \max_{i \in [n]} |D_{ii}-1|=O(n^{-1}).
\end{align*}
Therefore, we have 
\begin{align*}
    \E|U_{1j}|^{4+\eta} &= D_{jj}^{-\frac{4+\eta}{2}}n^{-\frac{4+\eta}{2}}\E |\tilde{V}_{1j}-\E\tilde{V}_{1j}|^{4+\eta} \leq D_{jj}^{-\frac{4+\eta}{2}} n^{-\frac{4+\eta}{2}}2^{3+\eta} \Big(\E |\tilde{V}_{1j}|^{4+\eta}+|\E\tilde{V}_{1j}|^{4+\eta}\Big) \\
    & \leq  D_{jj}^{-\frac{4+\eta}{2}} n^{-\frac{4+\eta}{2}} 2^{3+\eta}\Big(\E |V_{1j}|^{4+\eta}+n^{-\frac{12+3\eta}{2}}\Big)
\end{align*}
and $\max_{j\in [p]} \E|U_{1j}|^{4+\eta} < C n^{-\frac{4+\eta}{2}}$ for some large constant $C$.
We have also shown that
\begin{equation*}
\mathbb P \left(  \|\bs{H}'-\bs{H}\|_2\ge C_1 n^{-\frac{\eta}{8+2\eta}}\right)\to 0.
\end{equation*}
\fi
Let 
$f_i' := \bs{e}_{i,n}^\top\bs{H}'\bs{R}$. 
%$$ \bs{P}\bs{U}\Big(\bs{U}^\top\bs{P}\bs{U}\Big)^{-1}\bs{U}^\top\bs{P}\bs{R}.$$ 
We have
\[
\max_{i \in [n]} |f_i| \leq \max_{i \in [n]} |{f}_i'| + \max_{i \in [n]} |f_i-{f}_i'|.
\]
By \eqref{eq:op_norm3} and \eqref{eq:order-of-R-2-norm}, we have that with probability $1-o(1)$, 
\begin{align}
\label{eq:order-of-fi-fi-prime}
    \max_{i \in [n]} |f_i-{f}_i'| \le \|\bs{R}\|_2 \|\bs{H}'-\bs{H}\|_2 \le C n^{1/2} n^{-\frac{\eta}{8+2\eta}} = Cn^{\frac{2}{4+\eta}}
\end{align}
for a large constant $C>0$. It remains to bound $\max_{i \in [n]} |{f}_i'|$. 

We denote $\bW=(\bW_1,\ldots, \bW_n)^\top.$ Recall that the entries of $\bW$ are independent random variables satisfying \eqref{eq:centerW}. Moreover, by  \eqref{eq_Dii1}, we have that
\begin{align}
    \E|\cW_{ij}|^{4+\eta} &= D_{jj}^{-\frac{4+\eta}{2}}n^{-\frac{4+\eta}{2}}\E |\tilde{V}_{ij}-\E\tilde{V}_{ij}|^{4+\eta} \nonumber\\
    &\leq 2n^{-\frac{4+\eta}{2}}\cdot 2^{3+\eta} \Big(\E |\tilde{V}_{ij}|^{4+\eta}+|\E\tilde{V}_{ij}|^{4+\eta}\Big) \nonumber\\
    & \leq  2^{4+\eta} n^{-\frac{4+\eta}{2}} \Big(\E |V_{ij}|^{4+\eta}+n^{-\frac{3(4+\eta)}{2}}\Big) \le C n^{-\frac{4+\eta}{2}}\label{eq:momentUU}
\end{align}
%and $\max_{j\in [p]} \E|U_{1j}|^{4+\eta} < C n^{-\frac{4+\eta}{2}}$ 
for some large constant $C>0$. Now, we express $f_i'$ as
\begin{align*}
    f_i' =&  (\bW_i-\bar{\bW})^\top \big(\bW^\top\bP\bW\big)^{-1} \sum_j (\bW_j-\bar{\bW}) R_j\\
        =& (\bW_i-\bar{\bW})^\top \big(\bW^\top\bP\bW\big)^{-1}  \Big(\sum_j \bW_jR_j-n\bar{\bW}\bar{R}\Big).
\end{align*}
% \begin{align*}
%     f_i' =&  (\bs{U}_i-\bar{\bs{U}})^\top \big(\bs{U}^\top\bs{P}\bs{U}\big)^{-1} \sum_j (\bs{U}_j-\bar{\bs{U}}) R_j\\
%         =& (\bs{U}_i-\bar{\bs{U}})^\top\big(\bs{U}^\top\bs{P}\bs{U}\big)^{-1} \Big(\sum_j \bs{U}_jR_j-n\bar{\bs{U}}\bar{R}\Big).
% \end{align*}
It follows that
\begin{align*}
   \max_{i\in [n]} |f_i'| \leq \Delta_1 + \Delta_2  + \max_{i \in [n]} |{g}_i|,
\end{align*}
where $\Delta_1$, $\Delta_2$, and $g_i$ are defined as 
\begin{align*}
    &\Delta_1 := \max_{i \in [n]}\Big|f_i'-\bW_i^\top\big(\bW^\top\bP\bW\big)^{-1} \sum_j \bW_jR_j\Big|,\\
    &\Delta_2 := \max_{i \in [n]}\Big|\bW_i^\top\big\{\big(\bW^\top\bP\bW\big)^{-1}- \big(\bW^\top\bW\big)^{-1}\big\} \sum_j \bW_jR_j\Big|, \\
    & {g}_i := \bW_i^\top\big(\bW^\top\bW\big)^{-1} \sum_j \bW_jR_j.
\end{align*}
To conclude the proof, it suffices to prove the following estimates: with probability $1-o(1)$,
\begin{align}
\label{eq:order-of-Delta-1-Delta-2-max-gi}
    \Delta_1  \leq C n^{1/4},\quad \Delta_2 \leq C n^{1/4},
   \quad \max_{i \in [n]} |{g}_i| \leq C n^{1/4}.
\end{align}
\Cref{prop:justify-assumption-3} (ii) then follows immediately from \eqref{eq:order-of-max-Ri-1}, \eqref{eq:order-of-bar-R-1}, \eqref{eq:order-of-fi-fi-prime} and \eqref{eq:order-of-Delta-1-Delta-2-max-gi}.

To show \eqref{eq:order-of-Delta-1-Delta-2-max-gi}, we will use the following two lemmas, whose proofs are postponed until we complete the proof of \Cref{prop:justify-assumption-3} (ii).
\begin{lemma}
\label{lem:orders-of-some-quantities}
    Under the assumptions of \Cref{prop:justify-assumption-3} (ii), we have that with probability $1- O(n^{-\eta/4})$,
\begin{equation}\label{eq:Wil2}
\max_{i \in [n]}\|\bW_i\|_2^2 \leq  2,
\end{equation}
and that with probability $1-O(n^{-(1+\eta/4)})$, 
\begin{equation}\label{eq:Rl2}
\|\bs{R}\|_2 \leq Cn^{1/2},\quad |\bar{R}| \leq n^{-1/4},
\end{equation}
for some large constant $C>0$. 
\end{lemma}

\begin{lemma}
\label{lem:bound-for-eigenvalues-of-some-random-matrices}
Under the assumptions of \Cref{prop:justify-assumption-3} (ii), there exists a large constant $C_1>1$ such that for any constant $C>1$,
%with probability $1-o(n^{-C_1})$, where $C_1>1$, there exists a large constant $C_2$, such that
\begin{align}
  &C_1^{-1}\leq\lambda_{\min} (\bW^\top\bP\bW) \leq \lambda_{\max} (\bW^\top\bP\bW) \leq C_1, \label{op WPW}\\ %\quad \lambda_{\max} (\bW\bW^\top) \leq C_2\\
  &C_1^{-1}\leq\lambda_{\min} (\bW^\top\bW) \leq \lambda_{\max} (\bW^\top\bW) \leq C_1,\label{op WW}
\end{align}
with probability $1-O(n^{-C})$. (Note the probability bound here is stronger than that in \eqref{op rough3}.)
\end{lemma}

By the above two lemmas, there exists a constant $C>1$ such that with probability $1-O(n^{-(1+\eta/4)})$,
\begin{align}
\label{eq:order-of-bar-U-22}
    &\|\bar{\bW}\|_2^2 = n^{-2}\bs{1}^\top\bW\bW^\top\bs{1} \le n^{-1} \lambda_{\max}(\bW\bW^\top) \le   Cn^{-1},\\
    &\Big\|\sum_j \bW_jR_j\Big\|_2^2 = \bs{R}^\top \bW\bW^\top \bs{R} \le  \lambda_{\max}(\bW\bW^\top) \|\bs{R}\|_2^2 \le Cn\label{eq:order-of-sum-U-R}.
\end{align}
Now, combining Lemmas \ref{lem:orders-of-some-quantities} and \ref{lem:bound-for-eigenvalues-of-some-random-matrices} with equations \eqref{eq:order-of-bar-U-22} and \eqref{eq:order-of-sum-U-R}, we can bound that with probability $1-o(1)$, 
\begin{align}
    \Delta_1 & \le  \max_{i \in [n]}\|\bW_i\|_2\big\|\big(\bW^\top\bP\bW\big)^{-1}\big\|_2  n \|\bar{\bW}\|_2|\bar{R}| + \|\bar{\bW}\|_2\big\|\big(\bW^\top\bP\bW\big)^{-1}\big\|_2\big\|\sum_j \bW_jR_j\big\|_2\nonumber\\
    &\quad +  \|\bar{\bW}\|_2\big\|\big(\bW^\top\bP\bW\big)^{-1}\big\|_2 n\|\bar{\bW}\|_2|\bar{R}|\nonumber\\
    & = O(n \cdot n^{-1/2}\cdot n^{-1/4}+ n^{-1/2}\cdot n^{1/2}+n^{-1/2}\cdot n \cdot n^{-1/2}\cdot n^{-1/4}) = O(n^{1/4}) .\nonumber
\end{align}
This concludes the first estimate in \eqref{eq:order-of-Delta-1-Delta-2-max-gi}.

For the term $\Delta_2$, noticing $\bW^\top\bP\bW = \bW^\top\bW-n\bar{\bW}\bar{\bW}^\top$ and using the Sherman–Morrison formula, we obtain that
\begin{align*}
    &\bW_i^\top\Big\{\big(\bW^\top\bP\bW\big)^{-1}- \big(\bW^\top\bW\big)^{-1}\Big\} \sum_j \bW_jR_j \\
    =& \frac{n\bW_i^\top (\bW^\top\bW )^{-1}\bar{\bW} \bar{\bW}^\top (\bW^\top\bW )^{-1}\sum_j \bW_jR_j}{1-n\bar{\bW}^\top (\bW^\top\bW )^{-1}\bar{\bW}}.
\end{align*}
Thus, we can bound $\Delta_2$ by 
\begin{align}\label{eq:boundDelta2}
    \Delta_2 \leq \frac{n\max_{i \in [n]}|\bW_i^\top(\bW^\top\bW)^{-1}\bar{\bW}||\bar{\bW}^\top(\bW^\top\bW)^{-1}\sum_j \bW_jR_j|}{|1-n\bar{\bW}^\top(\bW^\top\bW)^{-1}\bar{\bW}|}.
\end{align}
To control the three factors on the RHS, we need the following two lemmas, whose proofs are postponed until we complete the proof of \Cref{prop:justify-assumption-3} (ii). 

\begin{lemma}
\label{lem:order-of-1-n-barU-UU-barU}
Under the assumptions of \Cref{prop:justify-assumption-3} (ii), there exists a constant $c>0$ such that for any constant $C>1$,
 \begin{align}
 \big|1-n\bar{\bW}^\top\big(\bW^\top\bW\big)^{-1}\bar{\bW}\big|\ge c \label{eq:H71}
\end{align}
with probability $1-O(n^{-C})$.
\end{lemma}

\begin{lemma}
\label{lem:order-of-maximum-of-independent-product}
Under the assumptions of \Cref{prop:justify-assumption-3} (ii), with probability $1-o(1)$, 
 \begin{align}
    & \max_{i \in [n]}|\bW_i^\top(\bW_{(-i)}^\top\bW_{(-i)})^{-1}\sum_{j\ne i} \bW_jR_j|\le n^{1/4},\label{eq:H81}\\
    & \max_{i \in [n]}|\bW_i^\top(\bW_{(-i)}^\top\bW_{(-i)})^{-1}\sum_{j\ne i} \bW_j|\le n^{1/4} ,\label{eq:H82}
\end{align}
where $\bW_{(-i)}$ is the $(n-1)\times p$ matrix obtained by removing the $i$-th row from $\bW$. 
\end{lemma}

We now bound \eqref{eq:boundDelta2}. First, we have that
\begin{align}
n\max_{i \in [n]}\big|\bW_i^\top\big(\bW^\top\bW\big)^{-1}\bar{\bW}\big| & \leq \max_{i \in [n]}\big|\bW_i^\top\big(\bW^\top\bW\big)^{-1}\bW_i\big| \nonumber\\
&+ \max_{i \in [n]}\big|\bW_i^\top\big(\bW^\top\bW\big)^{-1}\sum_{j\ne i}\bW_j\big|.\label{eq:WWWW}
\end{align}
By Lemmas \ref{lem:orders-of-some-quantities} and \ref{lem:bound-for-eigenvalues-of-some-random-matrices}, the first term on the RHS satisfies that with probability $1-o(1)$,
\begin{align}
\label{eq:order-of-max-Ui-U-U-Ui}
\max_{i \in [n]}\big|\bW_i^\top\big(\bW^\top\bW\big)^{-1}\bW_i\big|  \leq \max_{i\in [n]}\|\bW_i\|_2^2 \big\|\big(\bW^\top\bW\big)^{-1}\big\|_2 \le 2 C_1  .
\end{align}
For the second term on the RHS of \eqref{eq:WWWW}, we use the Sherman–Morrison formula to write it as
\begin{align*}
    \bW_i^\top(\bW^\top\bW)^{-1}\sum_{j\ne i} \bW_j &=  \frac{\bW_i^\top(\bW_{(-i)}^\top\bW_{(-i)})^{-1}\sum_{j\ne i} \bW_j}{1+\bW_i^\top(\bW_{(-i)}^\top\bW_{(-i)})^{-1}\bW_i^\top}.
\end{align*}
Then, using \Cref{lem:order-of-maximum-of-independent-product}, we conclude from this equation that with probability $1-o(1)$,
\begin{align}\label{eq:bddW}
\max_{i \in [n]}\big|\bW_i^\top\big(\bW^\top\bW\big)^{-1}\sum_{j\ne i}\bW_j\big| \leq \max_{i \in [n]}\big|\bW_i^\top(\bW_{(-i)}^\top\bW_{(-i)})^{-1}\sum_{j\ne i} \bW_j\big| \leq n^{1/4}.
\end{align}
%for some large constant $C$.
Plugging \eqref{eq:order-of-max-Ui-U-U-Ui} and \eqref{eq:bddW} into \eqref{eq:WWWW}, we obtain that with probability $1-o(1)$,
\begin{align}
    \label{eq:order-of-quantities-Delta2-1}
n\max_{i \in [n]}\big|\bW_i^\top\big(\bW^\top\bW\big)^{-1}\bar{\bW}\big| = O(n^{1/4}). 
\end{align}
Next, by \Cref{lem:bound-for-eigenvalues-of-some-random-matrices} and equations \eqref{eq:order-of-bar-U-22} and \eqref{eq:order-of-sum-U-R}, we have that with probability $1-o(1)$,
\begin{align}
 \label{eq:order-of-quantities-Delta2-2}   \big|\bar{\bW}^\top(\bW^\top\bW)^{-1}\sum_j \bW_jR_j\big| \leq \|\bar{\bW}\|_2 \big\|\big(\bW^\top\bW\big)^{-1}\big\|_{2}\big\|\sum_j \bW_jR_j\big\|_2 =O(1).
\end{align}
Applying \eqref{eq:order-of-quantities-Delta2-1}, \eqref{eq:order-of-quantities-Delta2-2}, and \eqref{eq:H71} to \eqref{eq:boundDelta2}, we conclude the second estimate in \eqref{eq:order-of-Delta-1-Delta-2-max-gi}.
% with probability $1-o(1)$, 
% \begin{equation}\label{Delta2}
% \Delta_2 = O(n^{1/4}).
% \end{equation}

Finally, we bound $\max_{i\in[n]}|g_i|$:
\begin{equation}\label{eq:bddgi}
\max_{i \in [n]} |g_i| \leq \max_{i \in [n]}|\bW_i^\top(\bW^\top\bW)^{-1}\sum_{j\ne i} \bW_jR_j| + \max_{i \in [n]}|\bW_i^\top(\bW^\top\bW)^{-1} \bW_iR_i|.
\end{equation}
For the first term on the RHS, a similar argument as in \eqref{eq:bddW} based on the Sherman–Morrison formula and \Cref{lem:order-of-maximum-of-independent-product} yields that 
\begin{align}\label{eq:bddWR}
\max_{i \in [n]}\big|\bW_i^\top\big(\bW^\top\bW\big)^{-1}\sum_{j\ne i}\bW_jR_j\big| \leq n^{1/4}
\end{align}
with probability $1-o(1)$. For the second term on the RHS of \eqref{eq:bddgi}, using \eqref{eq:order-of-max-Ui-U-U-Ui}, we get that
\begin{align}\label{eq:bddWR2}
\max_{i \in [n]}|\bW_i^\top(\bW^\top\bW)^{-1} \bW_iR_i| \leq 2C_2 \max_{i \in [n]} |R_i|
\end{align}
with probability $1-o(1)$. Since $\E|R_i|^{4+\eta} =O(1)$ by assumption, using the Markov's inequality and a union bound argument, we get that with probability $1-O(n^{-\eta/4})$,
\begin{align}
\label{eq:order-of-max-Ri-2}
    \max_{i \in [n]} |R_i| < n^{1/4}.
\end{align}
Plugging \eqref{eq:order-of-max-Ri-2} into \eqref{eq:bddWR2}, we get that with probability $1-o(1)$,
\[
\max_{i \in [n]}|\bW_i^\top(\bW^\top\bW)^{-1} \bW_iR_i| =O(n^{1/4}).
\]
%for some large constant $C$. 
% On the other hand, we see that by Sherman–Morrison formula and  \Cref{lem:order-of-maximum-of-independent-product}
% \begin{align*}
%     \max_{i \in [n]}|\bs{U}_i^\top(\bs{U}^\top\bs{U})^{-1}\sum_{j\ne i} \bs{U}_jR_j| \leq \max_{i \in [n]}|\bs{U}_i^\top(\bs{U}_{(-i)}^\top\bs{U}_{(-i)})^{-1}\sum_{j\ne i} \bs{U}_jR_j| \leq  n^{1/4}
% \end{align*}
% As a consequence, we have with probability $1-o(1)$,
% \[
% \max_{i \in [n]} |g_i| \leq  Cn^{1/4}.
% \]
Together with \eqref{eq:bddWR}, it concludes the last estimate in \eqref{eq:order-of-Delta-1-Delta-2-max-gi}.
\end{proof}

For the proof of \Cref{lem:orders-of-some-quantities}, we will use the following lemma, which is stated as \cite[Theorem~$2$]{whittle1960bounds}.
\begin{lemma}
    \label{lem:bound-for-the-moment-of-independent-sum}
    Let $\bs{\xi} = (\xi_1,\ldots,\xi_q)$ be a random vector with independent entries. Let  $\bs{\nu} = (\nu_1,\ldots,\nu_q)$ be an arbitrary deterministic vector.  If $\E |\xi_j|^s$, $j\in [q]$, exists for some $s\geq 2$, there exists a constant $C(s)$ depending only on $s$ such that
    \begin{align*}
           \E|\bs{\nu}^\top\bs{\xi}|^{s} &\leq C(s) \Big\{\sum_{j=1}^q \nu_j^2 (\E |\xi_j|^s)^{2/s}\Big\}^{s/2}.
    \end{align*}
\end{lemma}
%\Cref{lem:bound-for-the-moment-of-independent-sum} follows directly from Theorem~$2$ in \cite{whittle1960bounds}.

\begin{proof}[Proof of \Cref{lem:orders-of-some-quantities}]
We first write that 
\begin{align*}
    \max_{i \in [n]} \|\bW_i\|_2^2 \le \frac{p}{n} + \max_{i \in [n]}\Big| \sum_{j\in [p]} (\cW_{ij}^2-n^{-1})\Big|. 
%    \Big|\|\bW_i\|_2^2-\frac{p}{n}\Big|.
\end{align*}
Using \eqref{eq:momentUU}, we obtain that 
\[
\max_{j \in [p]}\E \left[|\cW_{ij}^2-n^{-1}|^{(4+\eta)/2} \right]\leq \max_{j \in [p]} 2^{(4+\eta)/2-1}(\E |\cW_{ij}|^{4+\eta}+n^{-(4+\eta)/2}) =O(n^{-(4+\eta)/2}).
\]
%for some large constant $C$. 
Now, applying \Cref{lem:bound-for-the-moment-of-independent-sum} with $\bs{\xi} = (\cW_{i1}^2-n^{-1},\ldots,\cW_{ip}^2-n^{-1})$ and $s=(4+\eta)/2$, we obtain that 
%and noting that $\max_{j\in [p]}\E|\xi_{j}|^{(4+\eta)/2} = Cn^{-(4+\eta)/2} $, we have
\begin{align*}
    \E \Big| \sum_{j\in [p]} (\cW_{ij}^2-n^{-1})\Big|^{(4+\eta)/2} \lesssim  p^{1+\eta/4} \max_{j \in [p]} \E |\xi_j|^{(4+\eta)/2} \lesssim n^{-(1+\eta/4)}.
\end{align*}
By Markov's inequality and a union bound, we have that with probability $1-O(n^{-\eta/4})$,
\[
\max_{i\in [n]}\Big| \sum_{j\in [p]} (\cW_{ij}^2-n^{-1})\Big| \le 1,
\]
which yields that
% Taking a union bound, we conclude that 
% As a consequence, with probability $1- O(\Big(\frac{p}{n}\Big)^{1+\eta/4} n^{-\eta/4}) = 1 - O(n^{-\eta/4}) = 1 - o(1)$
\[
\max_{i \in [n]}\|\bW_i\|_2^2 \le  1+\frac{p}{n} \le 2. 
\]
This concludes the estimate \eqref{eq:Wil2}. 
With a similar argument based on \Cref{lem:bound-for-the-moment-of-independent-sum}, we get that 
     \begin{align*}
  \E \Big|\sum_i R_i^2 - n\E R_1^2\Big|^{(4+\eta)/2} &\lesssim n^{1+\eta/4}\max_{i\in[n]}\E\left|R_i^2-\E R_i^2\right|^{(4+\eta)/2}  \lesssim n^{1+\eta/4}.
  % \\
  % &\leq C 2^{2+\eta/2}\E|R_i|^{4+\eta} n^{-(1+\eta/4)} = O(n^{-(1+\eta/4)}).
    \end{align*}
Applying Markov's inequality, we conclude that 
   \[
   \Prob\Big(\Big|\sum_i R_i^2  - n\E R_1^2\Big|\ge n\Big)=O(n^{-(1+\eta/4)}).
   \]
 This gives the first estimate in \eqref{eq:Rl2}.   
%   Therefore, with probability $1-O(n^{-(1+\eta/4)})$ we have $|\sum_i R_i^2/n| \leq  3/2\E R_1^2 = O(1)$.
 For the second estimate, using \Cref{lem:bound-for-the-moment-of-independent-sum} again with $\bs{\xi} = (R_1,\ldots,R_n)$ and $s=4+\eta $, we obtain that 
   \[
   \E|\bar{R}|^{4+\eta} \lesssim n^{-(4+\eta)/2} \max_{i\in[n]}\E |R_i|^{4+\eta} \lesssim n^{-(4+\eta)/2}.
   \]
   Applying Markov's inequality, we conclude the second estimate in \eqref{eq:Rl2}.      
   % We have with probability $1-O(n^{-(1+\eta/4)})$, we have
   % \[
   % |\bar{R}| < n^{-1/4}
   % \]
\end{proof}

\begin{proof}[Proof of \Cref{lem:bound-for-eigenvalues-of-some-random-matrices}] 
When $\eta$ is arbitrarily large, the estimates \eqref{op WPW} and \eqref{op WW} are immediate consequences of Theorem 11.3 and Theorem 3.12 of \cite{Anisotropic}, respectively. 
The estimate \eqref{op WW} for the general case with an arbitrary (small) constant $\eta>0$ follows from Lemma 3.11 of \cite{DY}.
The upper bound in \eqref{op WPW} is a trivial consequence of that in \eqref{op WW}. For the lower bound, using the Sherman–Morrison formula
$$  \big(\bW^\top\bP\bW\big)^{-1} = \big(\bW^\top\bW\big)^{-1} + \frac{n(\bW^\top\bW )^{-1}\bar{\bW} \bar{\bW}^\top (\bW^\top\bW )^{-1}}{1-n\bar{\bW}^\top (\bW^\top\bW )^{-1}\bar{\bW}},$$
we obtain that 
\begin{align*}
   & \big\|\big(\bW^\top\bP\bW\big)^{-1}\big\|_2\le \big\|\big(\bW^\top\bW\big)^{-1}\big\|_2 \left( 1+\frac{n\|\bar\bW\|_2^2\|(\bW^\top\bW)^{-1}\|_2}{|1-n\bar{\bW}^\top\big(\bW^\top\bW\big)^{-1}\bar{\bW}|}\right).
\end{align*} 
Then, with \Cref{lem:order-of-1-n-barU-UU-barU} and equations \eqref{op WW} and \eqref{eq:order-of-bar-U-22}, we conclude that 
$$\lambda_{\min} (\bW^\top\bP\bW) = \big\|\big(\bW^\top\bP\bW\big)^{-1}\big\|_2 =O(1)$$
with probability $1-O(n^{-C})$ for any constant $C>1$. This gives the lower bound in \eqref{op WPW}.
\end{proof}
\begin{proof}[Proof of \Cref{lem:order-of-1-n-barU-UU-barU}]
Let $\e_n=n^{-1/2}$ and define 
$$ \bcH:=\bW (\bW^\top \bW)^{-1}\bW^\top,\quad \bcH_{\e}:=\bW\frac{1}{\bW^\top \bW - \ii \e_n\bs{I}}\bW^\top.$$
By \eqref{op WW}, there exists a constant $C_2>0$ such that
\begin{equation}\label{eq:H'-H3.5}
\mathbb P \left( \|\bcH -\bcH_{\e}\|_2\ge C_2\e_n\right)=O(n^{-C}) 
\end{equation}
for any constant $C>1$. Similar to \eqref{eq:H'-H3}, we have the following matrix identity
\begin{equation}\label{eq:H'-H4}
\bcH_{\e}= 1+\frac{\ii \e_n}{\bW \bW^\top - \ii \e_n \bs{I}}.
\end{equation}
A similar local law as in \eqref{eq:locallaw} has been established in Theorem 3.6 of \cite{Anisotropic}:
for any deterministic unit vectors $\mathbf u,\mathbf v\in \mathbb R^n$, 
\begin{equation}\label{eq:locallaw2}
  \Prob \left( \left| \mathbf u^\top \left(\frac{\ii \e_n}{\bW \bW^\top - \ii \e_n\bs{I}}\right)\mathbf v - m(\ii \e_n)  \mathbf u^\top \mathbf v\right| \ge \frac{\varphi_n}{n^{\frac 1 2-c}}\right) \le n^{-C}  
\end{equation}
holds for any small constant $c>0$ and large constant $C>1$. Taking $\mathbf u=\mathbf v= \mathbf 1_n/\sqrt{n}$ in the above estimate and using \eqref{eq:solvem} and \eqref{eq:H'-H4}, we obtain that 
$$\left| n^{-1}\mathbf 1_n^\top \bcH_{\e}\mathbf 1_n -\alpha\right| \le \frac{\varphi_n}{n^{\frac 1 2-c}} + C_3\e_n$$
with probability $1-O(n^{-C})$. Together with \eqref{eq:H'-H3.5}, it implies that 
$$ \left|n\bar{\bW}^\top\big(\bW^\top\bW\big)^{-1}\bar{\bW} -\alpha\right|=\left|n^{-1}\mathbf 1_n^\top \bcH\mathbf 1_n -\alpha\right|\le \frac{\varphi_n}{n^{\frac 1 2-c}} + (C_2+C_3)\e_n$$
with probability $1-O(n^{-C})$. This concludes the proof since $\alpha\le 1-c$ for a constant $c>0$ by the given assumption. 
\end{proof}

\begin{proof}[Proof of \Cref{lem:order-of-maximum-of-independent-product}]
By \Cref{lem:bound-for-eigenvalues-of-some-random-matrices} (note $\bW_{(-i)}$ satisfies the same assumptions as $\bW$ with $n$ replaced by $n-1$) and equation \eqref{eq:order-of-sum-U-R}, there exists a constant $C>0$ such that
\[
\Big\|(\bW_{(-i)}^\top\bW_{(-i)})^{-1}\sum_{j\ne i} \bW_jR_j\Big\|_2\leq \big\|\big(\bW_{(-i)}^\top\bW_{(-i)}\big)^{-1}\big\|_2\Big\|\sum_{j\ne i} \bW_jR_j\Big\|_2 \le Cn^{1/2}
\]
with probability $1-O(n^{-(1+\eta/4)})$. Therefore, we have that
\begin{align}\label{eq:bddPn1/4}
    &\Prob\Big(\big|\bW_i^\top \bs{\nu}_i\big|>n^{1/4}\Big) \le \Prob\Big(\big|\bW_i^\top\bs{\nu}_i\big|>n^{1/4} \Big| \|\bs{\nu}_i\|_2 \leq C n^{1/2} \Big)+ O(n^{-(1+\eta/4)}),
    % \\
    % &~+\Prob\big( \big\|(\bW_{(-i)}^\top\bW_{(-i)})^{-1}\sum_{j\ne i} \bW_jR_j\big\|_2 > C n^{1/2}\big). 
\end{align}
where we denote $\bs{\nu}_i: = (\bW_{(-i)}^\top\bW_{(-i)})^{-1}\sum_{j\ne i} \bW_jR_j$. Note that $\bs{\nu}_i$ is independent of $\bW_i$. Hence, applying Lemma~\ref{lem:bound-for-the-moment-of-independent-sum} with $\bs{\xi}=(\cW_{i1},\ldots, \cW_{ip})$ and $s=4+\eta$, we obtain that  
 \[
 \E\Big(\big|\bW_i^\top\bs{\nu}_i\big|^{4+\eta} \Big|\bs{\nu}_i\Big) \lesssim  \|\bs{\nu}_i\|_2^{4+\eta} \cdot \max_{j \in [p]}\E|\cW_{ij}|^{4+\eta} .
 \]
By \eqref{eq:momentUU}, we see that conditioning on $\|\bs{\nu}_i\|_2 \leq Cn^{1/2}$,
$$\E\Big(\big|\bW_i^\top\bs{\nu}_i\big|^{4+\eta} \Big|\|\bs{\nu}_i\|_2 \leq C n^{1/2} \Big)\lesssim 1, $$
which, combined with Markov's inequality, implies that 
$$\Prob\Big(\big|\bW_i^\top\bs{\nu}_i\big|>n^{1/4} \Big| \|\bs{\nu}_i\|_2 \leq C n^{1/2} \Big)=O(n^{-(1+\eta/4)}).$$
Plugging it into \eqref{eq:bddPn1/4} and applying a union bound, we obtain that 
$$\Prob\Big(\max_{i\in [n]}\big|\bW_i^\top \bs{\nu}_i\big|>n^{1/4}\Big) =O(n^{-\eta/4}) .$$
%  Using $\max_{j\in[p]} \E |U_{ij}|^{4+\eta} = O(n^{-(4+\eta)/2})$ and Lemma~\ref{lem:bound-for-the-moment-of-independent-sum} with $\bs{\nu} = (\bs{U}_{(-i)}^\top\bs{U}_{(-i)})^{-1}\sum_{j\ne i} \bs{U}_jR_j$, we have conditioning on $\|\bs{\nu}\|_2 \leq C_2 n^{1/2}$, 
%  \[
%  \E\Big\{|\bs{U}_i^\top \bs{\nu}|^{4+\eta}\mid \bs{\nu}\Big\} \leq  \max_{j \in [p]}\E|U_{ij}|^{4+\eta} \cdot \|\bs{\nu}\|_2^{4+\eta}  = C_3,
%  \]
%  for some large constant $C_3$.
% As a consequence, we have conditioning on $\|\bs{\nu}\|_2 \leq C_2 n^{1/2}$
% \begin{align*}
%     \Prob(|\bs{U}_i^\top\bs{\nu}|>n^{1/4} ) \leq \E\Big[\E\Big\{|\bs{U}_i^\top \bs{\nu}|^{4+\eta}\mid \bs{\nu}\Big\}\Big]/n^{(4+\eta)/4} = C_3 n^{-(1+\eta/4)}.
% \end{align*}
% Therefore, there exists a large constant $C_4$ such that
% \[
% \Prob(|\bs{U}_i^\top(\bs{U}_{(-i)}^\top\bs{U}_{(-i)})^{-1}\sum_{j\ne i} \bs{U}_jR_j|>n^{1/4}) \leq C_4 (n^{-(4+\eta)/4} + n^{-C_1})
% \]
%It follows that
% \begin{align*}
%     \Prob(\max_{i \in [n]}|\bs{U}_i^\top(\bs{U}_{(-i)}^\top\bs{U}_{(-i)})^{-1}\sum_{j\ne i} \bs{U}_jR_j|>n^{1/4}) \leq n\Prob(|\bs{U}_1^\top(\bs{U}_{(-1)}^\top\bs{U}_{(-1)})^{-1}\sum_{j\ne 1} \bs{U}_jR_j|>n^{1/4}) = o(1).
% \end{align*}
This concludes \eqref{eq:H81}. The estimate \eqref{eq:H82} can be proved in the same way.
%Similarly, we can prove the remaining half of the Lemma.
\end{proof}}

\section{Additional numerical experiments}\label{sec:addsimu}

In this section, we conduct additional simulation analysis to examine the finite sample performance of the proposed estimator and inference procedure. In the main text, we consider the setup that $\bs{\mathcal{X}}$ and $\bs{\check{\varepsilon}}(z)$ (which is used to generate the independent $t$ residual) have i.i.d. entries from $t$ distribution with $3$ degrees of freedom. Here, we consider $2$ more setups:
\begin{itemize}
    \item $\bs{\mathcal{X}}$ and $\bs{\check{\varepsilon}}(z)$ have i.i.d. entries from Cauchy distribution
    \item $\bs{\mathcal{X}}$ have i.i.d. entries from Cauchy distribution and $\bs{\check{\varepsilon}}(z)$ have i.i.d. entries from $t$ distributions with degrees of freedom $3$. We also modify the model to
    \begin{align*}
    &Y_i(1) = \mu_1 + \sca(\tra(\bs{X}_i^\top\bs{\beta}_1)) + \varepsilon_i(1)/\sqrt{\gamma}, \\
    &Y_i(0) = \mu_0+ \sca(\tra(\bs{X}_i^\top\bs{\beta}_0)) + \varepsilon_i(0)/\sqrt{\gamma},
\end{align*}
where for a finite population $\{a_i\}_{i=1}^n$: 
\[
\tra(a_i) = b_{(\pi(i))},
\]
 $b_{(1)}\leq b_{(2)}\leq\ldots \leq b_{(n)}$ is the ordered sequence of $\{b_i\}_{i=1}^n$ with $b_i$ generated from $t$ distribution with degrees of freedom $3$, and $\pi(i)$ is the rank of $a_i$.
\end{itemize}
For both setups, we consider the same factorial experiments regarding $\gamma$, $\delta$, $\alpha$, and the generating models of $\varepsilon_i(z)$. Note that the first setup represents the most challenging case in which \Cref{assumption:2-moment-of-finite-population-for-y}--\ref{assumption:lindeberg-type-condition-p-o(n)} fail. In the second setup, albeit with extremely heavy-tail covariates, $Y_i(z)$ have bounded $3$rd moment, the \Cref{assumption:2-moment-of-finite-population-for-y}--\ref{assumption:lindeberg-type-condition-p-o(n)} hold.  \Cref{fig:bias-cauchy}--\ref{fig:ci_cauchy} show the results for the first setup. \Cref{fig:bias-cauchy-transy}--\ref{fig:ci_cauchy_transy} show the results for the second setup.

For the first setup, $(\htaudb,\hat{\sigma}_{\hd,\textrm{cb}}^2)$ outperforms its competitors in terms of relative RMSE, relative bias, more reliable inference and shorter confidence intervals in all cases, except under the independent $t$ residual with $\gamma=3$ and $\alpha \leq 0.1$. The performance of $\hat{\tau}_\lin$ can be more catastrophic than in the main text when $\alpha$ is large. For example, the relative RMSE can be as large as $40$. Interestingly, although our asymptotic theory does not apply to these extreme regimes, in most of the cases the relative RMSE and relative confidence interval length produced by our debiased estimator is not too far away from $1$. In other words, $(\htaudb,\hat{\sigma}_{\hd,\textrm{cb}}^2)$ does not give significant harm compared to 
% point estimator and variance estimator constructed 
without covariate adjustment in these extreme setups. This demonstrates the robustness of our method when faced with extreme cases.

For the second setup, $\htaudb$ outperforms other competitors for smaller relative bias and relative RMSE under the worst-case residual. Although our theory only guarantees that our method has a better estimation efficiency and a shorter confidence interval length than the unadjusted method under a high signal-to-noise ratio and light-tailed covariates, it is interesting that we can observe improved efficiency even with heavy-tailed covariates.

We notice that for both setups, when $\gamma=3$, sometimes, $\hat{\tau}_{\adj}$ 
slightly outperforms $\htaudb$ in terms of relative RMSE but with larger bias. Since under the worst-case residual, $\hat{\tau}_{\adj}$ has very large relative RMSE, we still recommend using $\htaudb$ for heavy-tail covariates. 
% In conclusion, we recommend using $(\htaudb,\hat{\sigma}^2_{\hd,\cb})$ for heavy-tailed covariates, under a strong signal.
\begin{figure}[H]
  \centering
\begin{subfigure}[b]{\linewidth}
  \includegraphics[width=\linewidth]{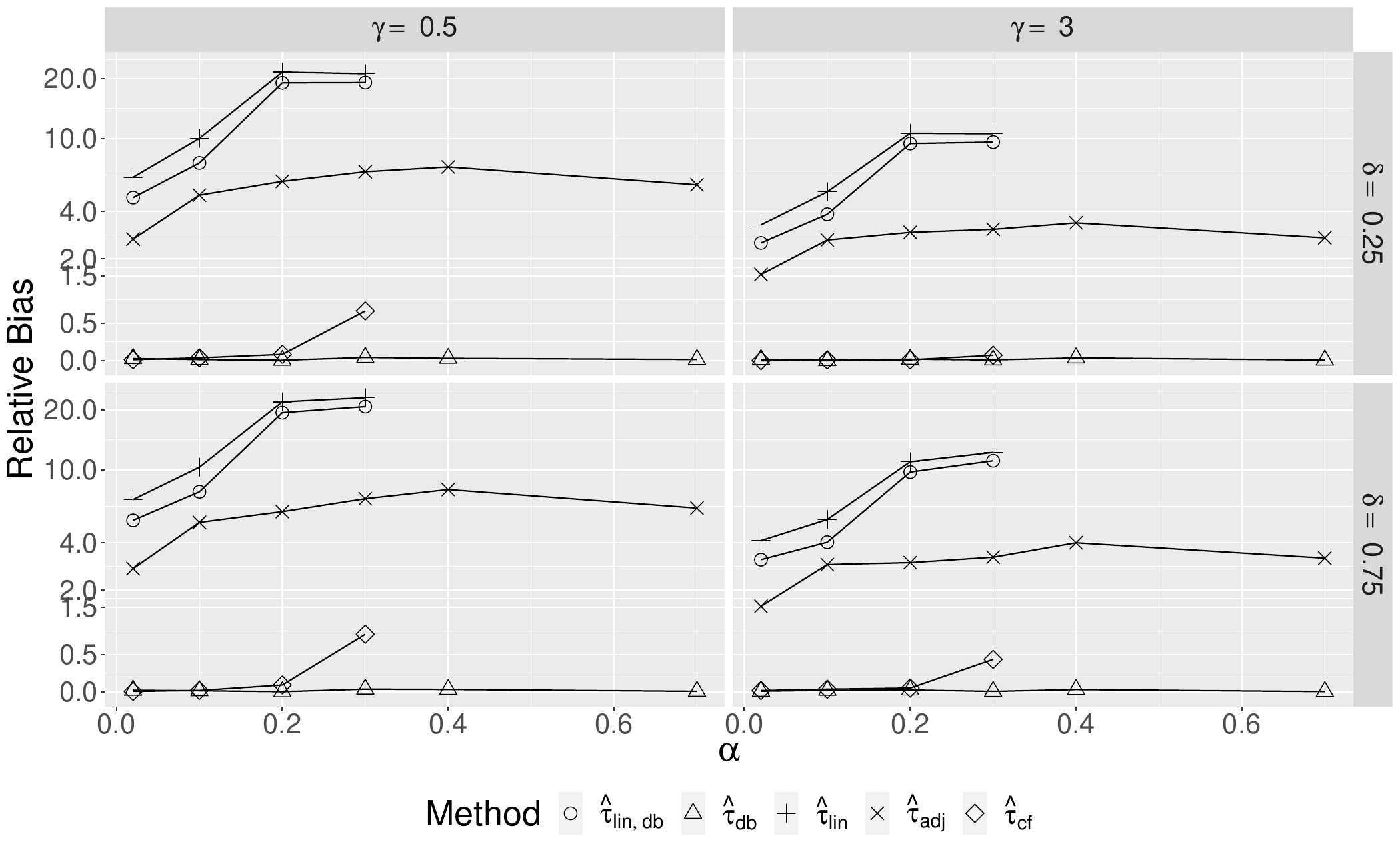} 
  \caption{Worst case residual}     \label{fig:bias_worst_cauchy}
\end{subfigure}
 \begin{subfigure}[b]{\linewidth}
  \centering
  \includegraphics[width=\linewidth]{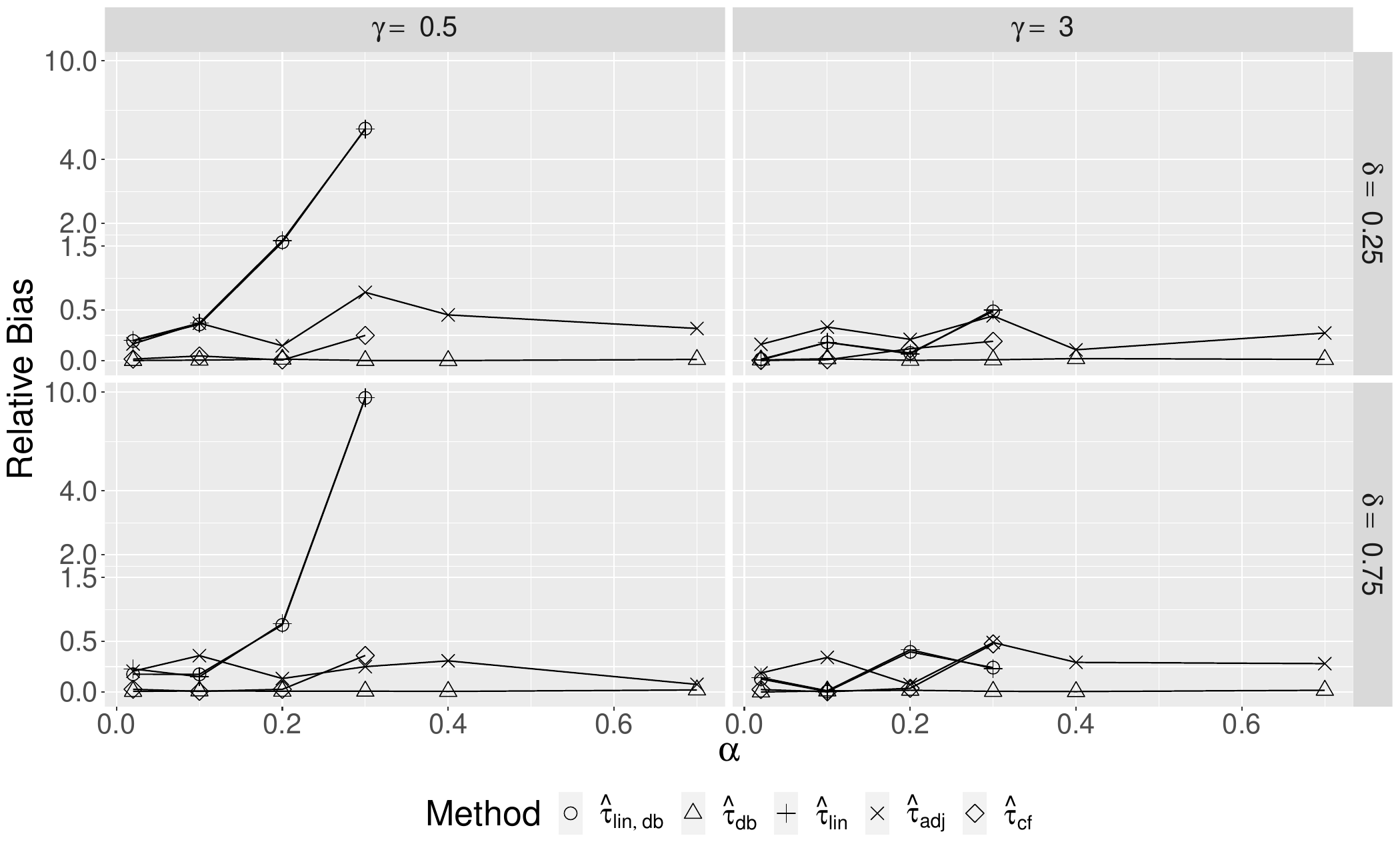}
 \caption{Independent $t$ residual} \label{fig:bias_t_cauchy}
\end{subfigure}

\caption{\emph{The first set up.} Relative bias for different choices of $\gamma$, $\delta$ and $\alpha$ under the worst-case residual and independent $t$ residual. For both figures, we use a transformation of $\log_{10}(1+x)$ for the y-axis to adapt the curve display.} \label{fig:bias-cauchy}

\end{figure}

\begin{figure}[H]
  \centering
\begin{subfigure}[b]{\linewidth}
  \includegraphics[width=\linewidth]{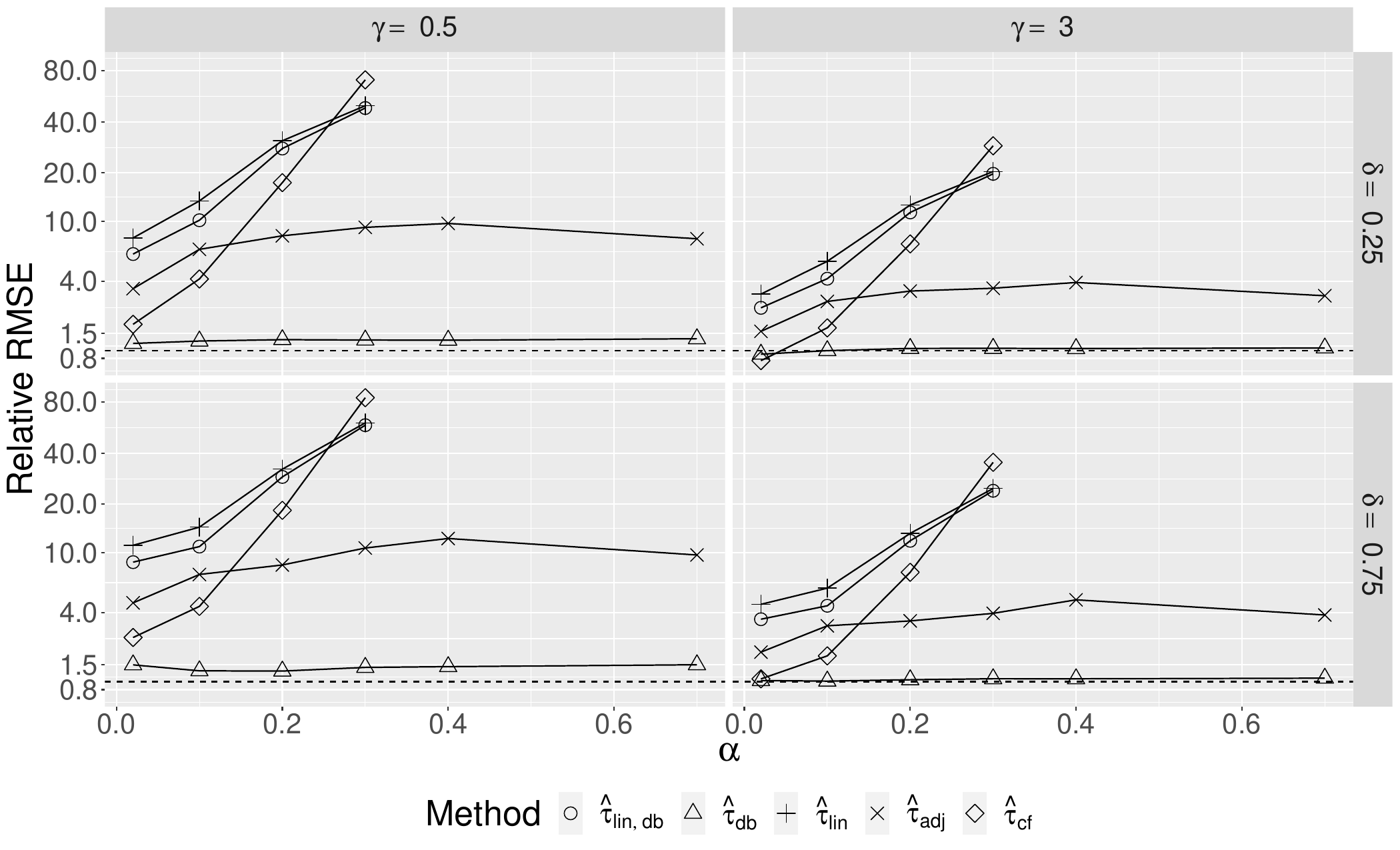}
  \caption{Worst case residual}     \label{fig:rmse_worst_cauchy}
\end{subfigure}
 \begin{subfigure}[b]{\linewidth}
  \centering
  \includegraphics[width=\linewidth]{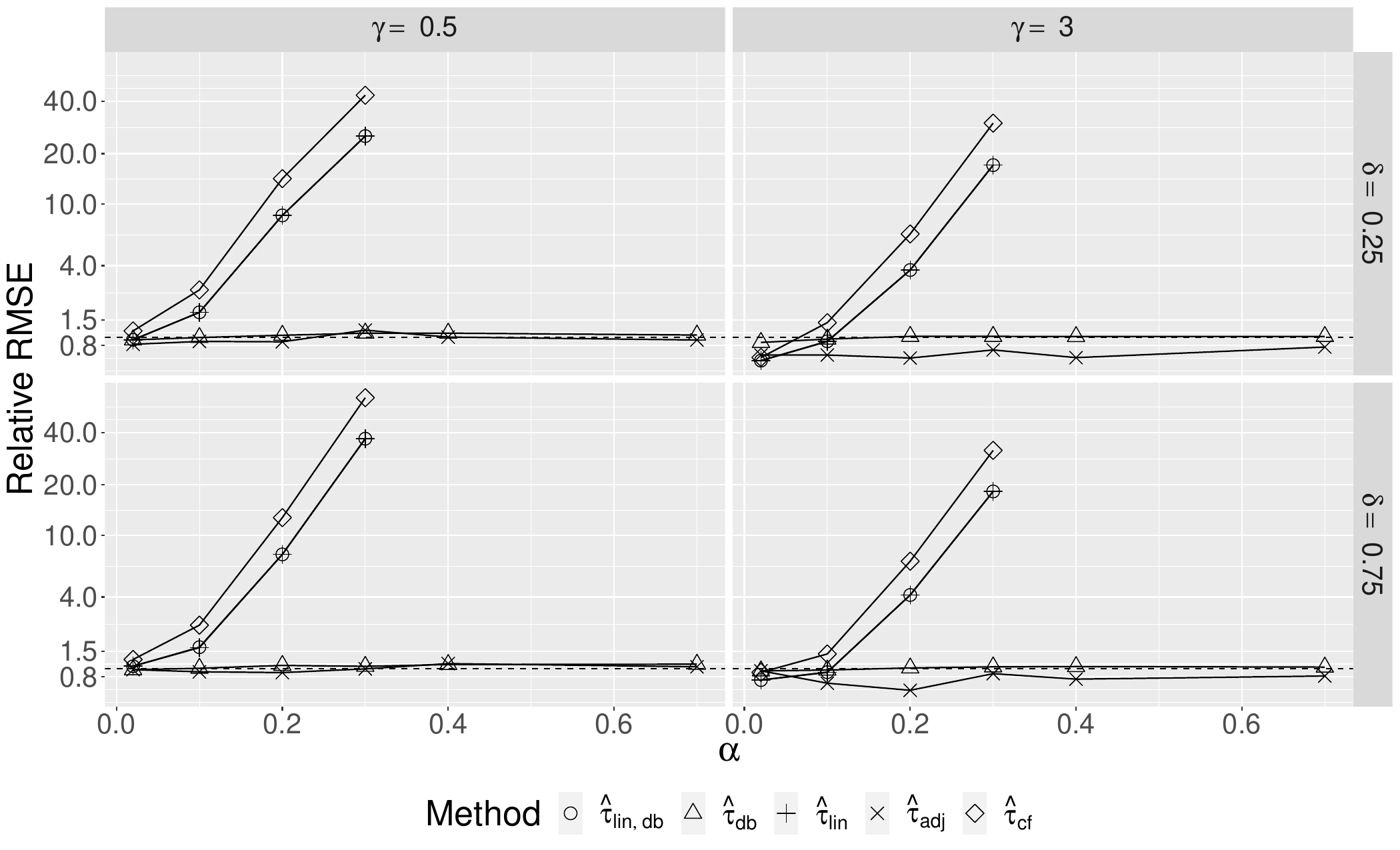}
 \caption{Independent $t$ residual} \label{fig:rmse_t_cauchy}
\end{subfigure}
%  \begin{subfigure}[b]{0.95\textwidth}
%   \centering
%   \includegraphics[width=\textwidth]{figure/relative_bias_q_1_resid_worst.pdf}
%  \caption{  $q=1$} \label{fig:bias_worst_residual_q_1}
% \end{subfigure}

\caption{\emph{The first set up.} Relative RMSEs for different choices of $\gamma$, $\delta$ and $\alpha$ under the worst-case residual and independent $t$ residual.  The dashed lines signify $1$. For both figures, we use a transformation of $\log_{10}(1+x)$ for the y-axis to adapt the curve display.} \label{fig:rmse_cauchy}
\end{figure}

\begin{figure}[H]
  \centering
\begin{subfigure}[b]{\linewidth}
  \includegraphics[width=\linewidth]{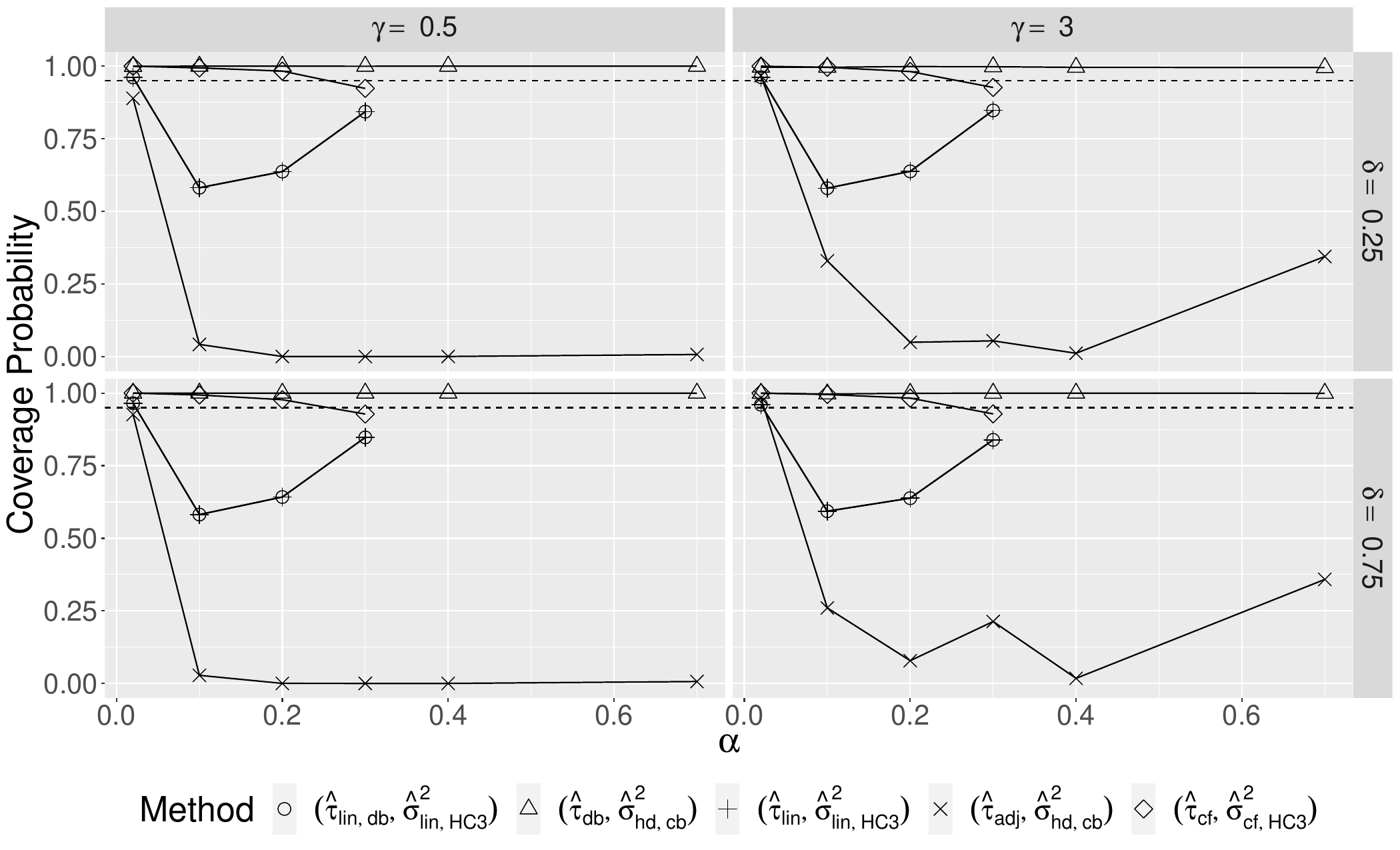}
  \caption{Worst case residual}     \label{fig:cp_worst_cauchy}
\end{subfigure}
 \begin{subfigure}[b]{\linewidth}
  \centering
  \includegraphics[width=\linewidth]{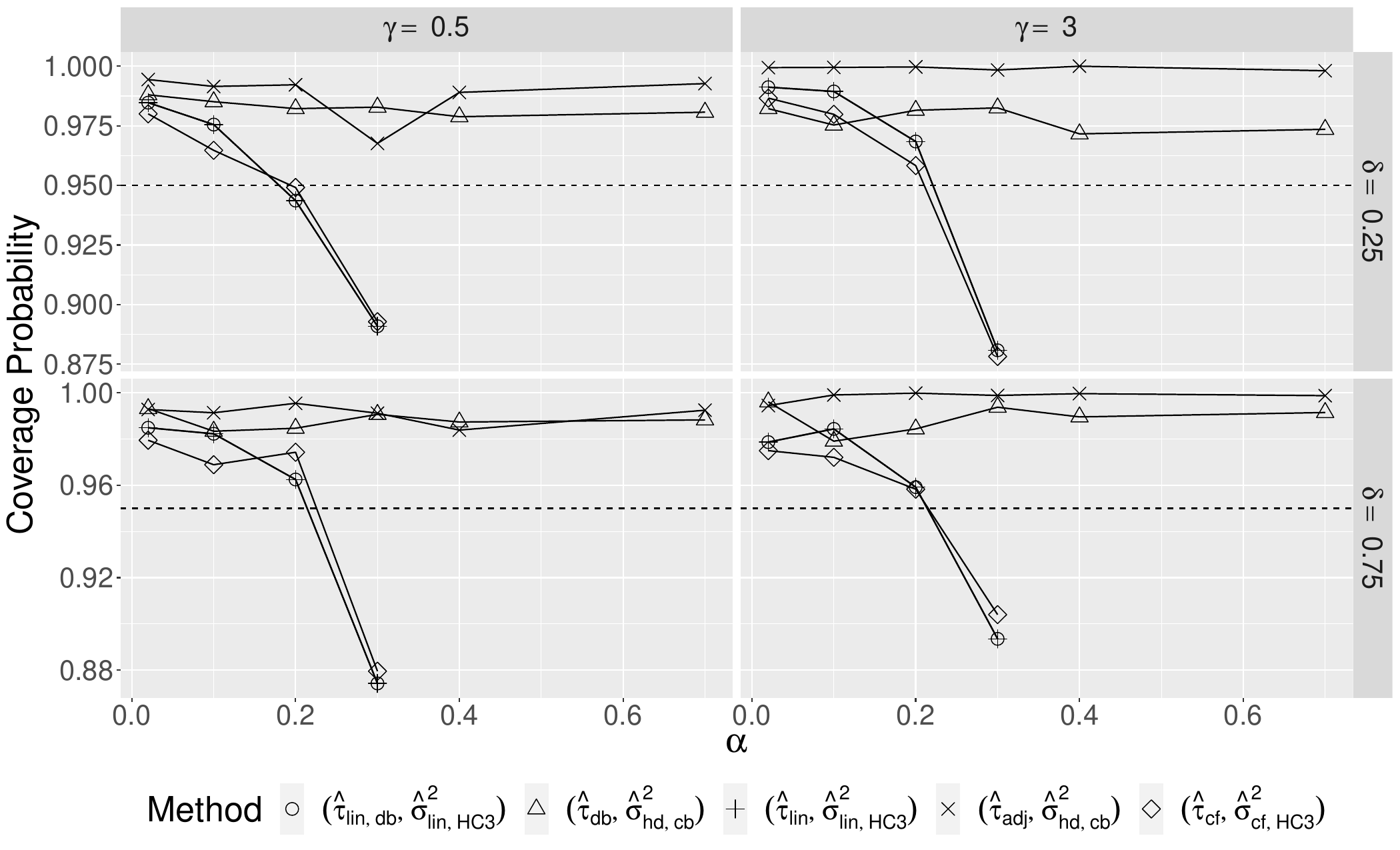}
 \caption{Independent $t$ residual} \label{fig:cp_t_cauchy}
\end{subfigure}
\caption{\emph{The first set up.} Coverage probabilities for different choices of $\gamma$, $\delta$ and $\alpha$ under the worst-case residual and independent $t$ residual.  The dashed lines signify $0.95$.} \label{fig:cp_cauchy}
\end{figure}

\begin{figure}[H]
  \centering
\begin{subfigure}[b]{\linewidth}
  \includegraphics[width=\linewidth]{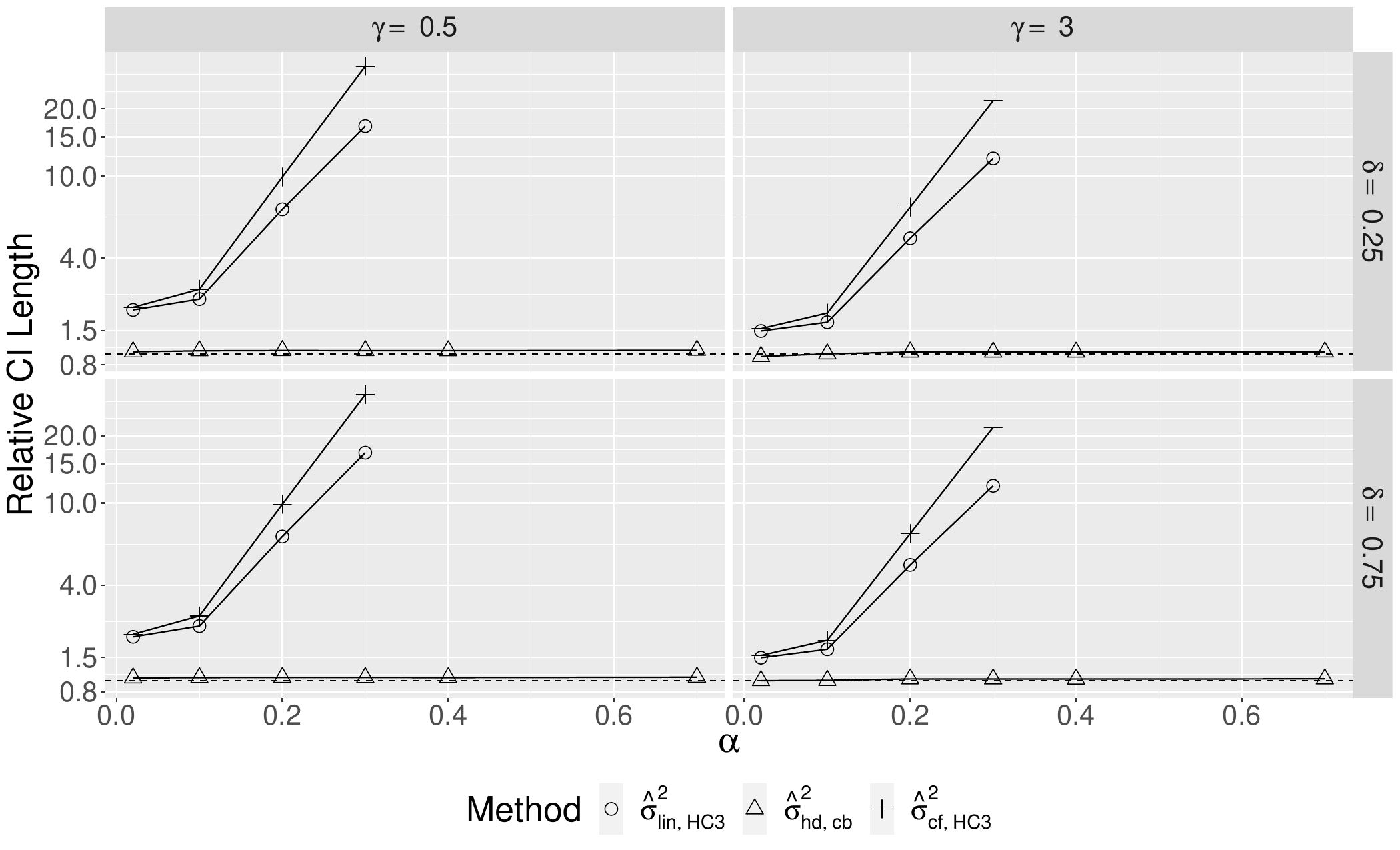}
  \caption{Worst case residual}     \label{fig:ci_worst_cauchy}
\end{subfigure}
 \begin{subfigure}[b]{\linewidth}
  \centering
  \includegraphics[width=\linewidth]{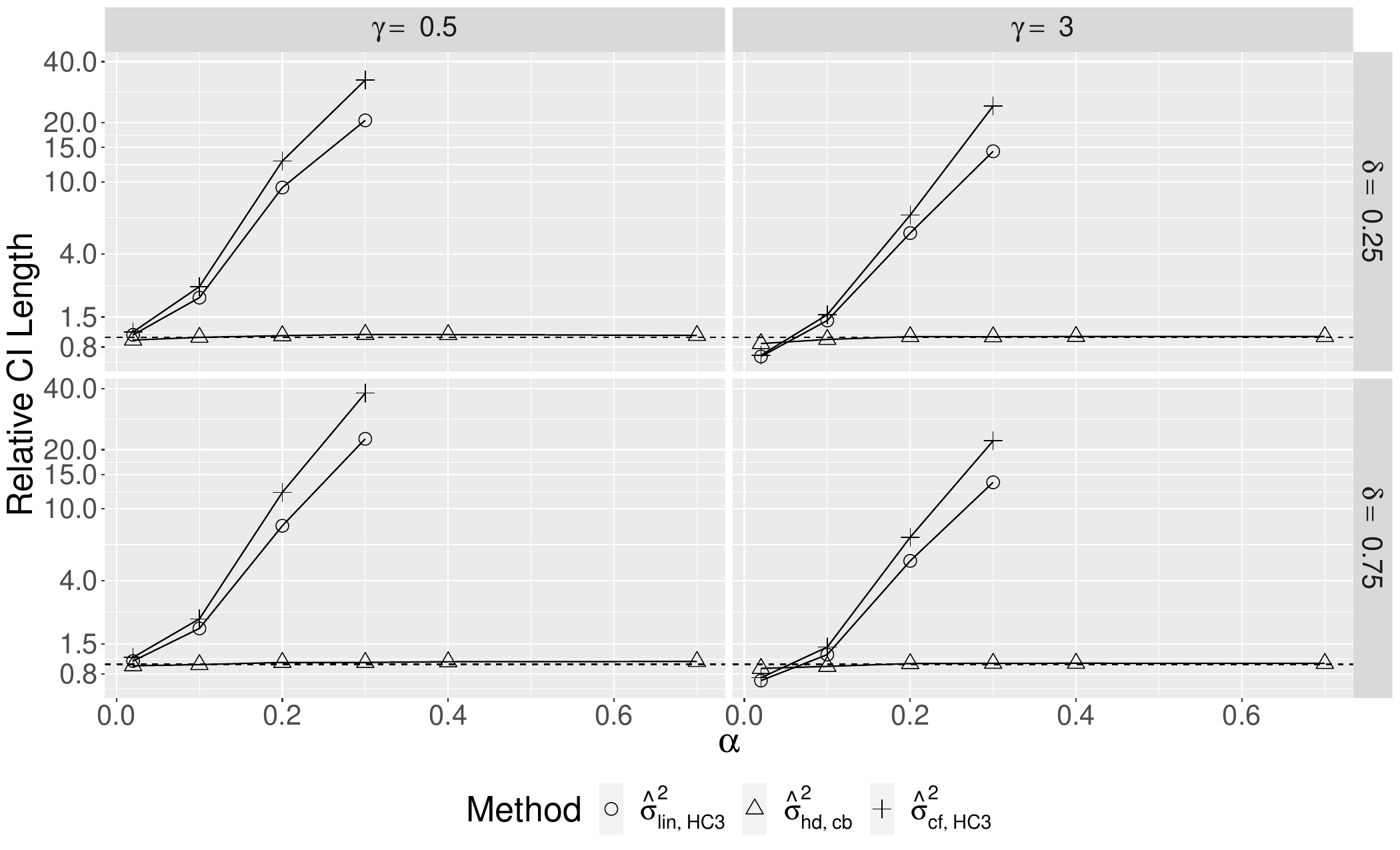}
 \caption{Independent $t$ residual} \label{fig:ci_t_cauchy}
\end{subfigure}
\caption{\emph{The first set up.} Relative confidence interval length for different choices of $\gamma$, $\delta$ and $\alpha$ under the worst-case residual and independent $t$ residual.  The dashed lines signify $1$. For both figures, we use a transformation of $\log_{10}(1+x)$ for the y-axis to adapt the curve display.} \label{fig:ci_cauchy}
\end{figure}

\begin{figure}[H]
  \centering
\begin{subfigure}[b]{\linewidth}
  \includegraphics[width=\linewidth]{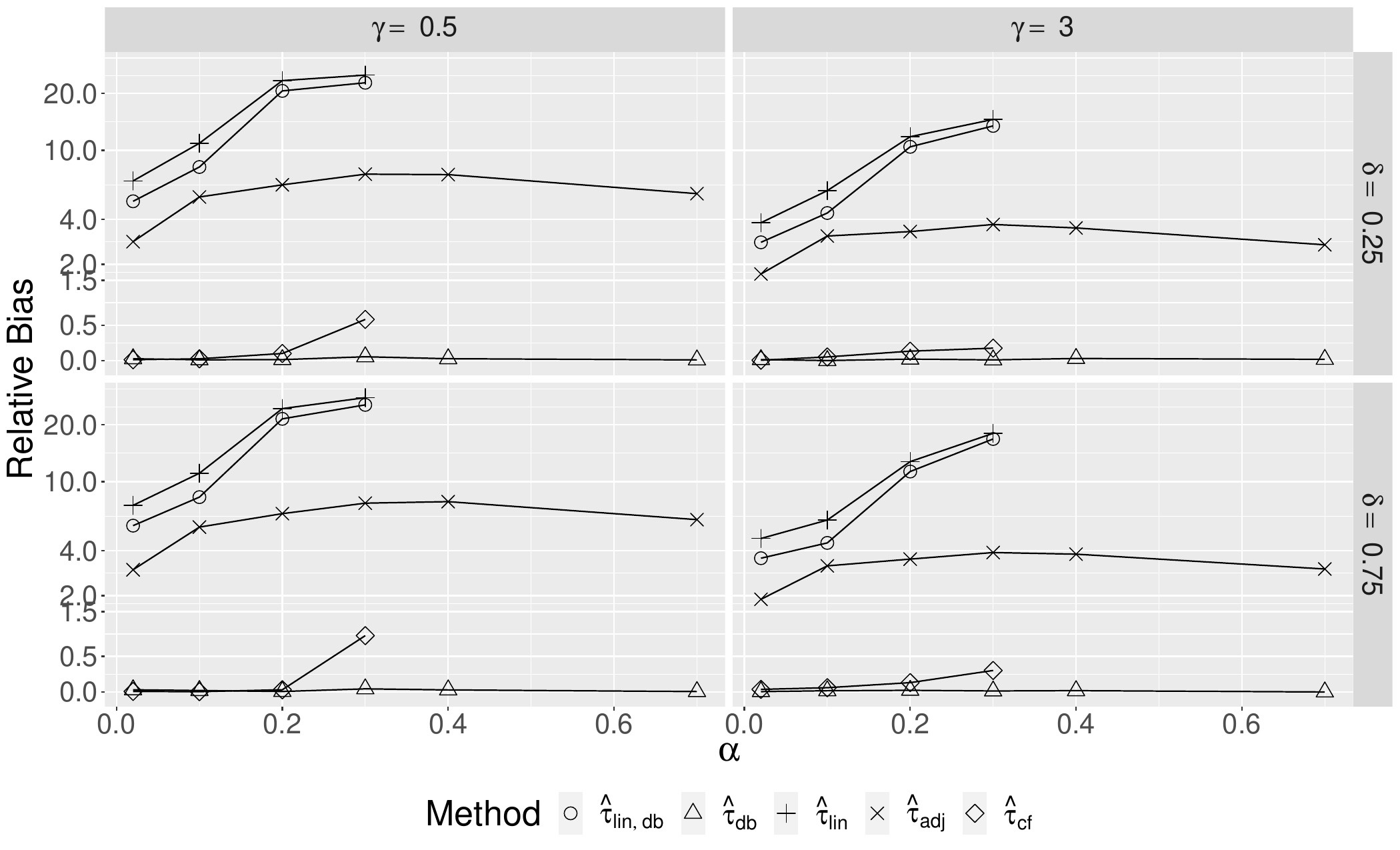}
  \caption{Worst case residual}     \label{fig:bias_worst_cauchy_transy}
\end{subfigure}
 \begin{subfigure}[b]{\linewidth}
  \centering
  \includegraphics[width=\linewidth]{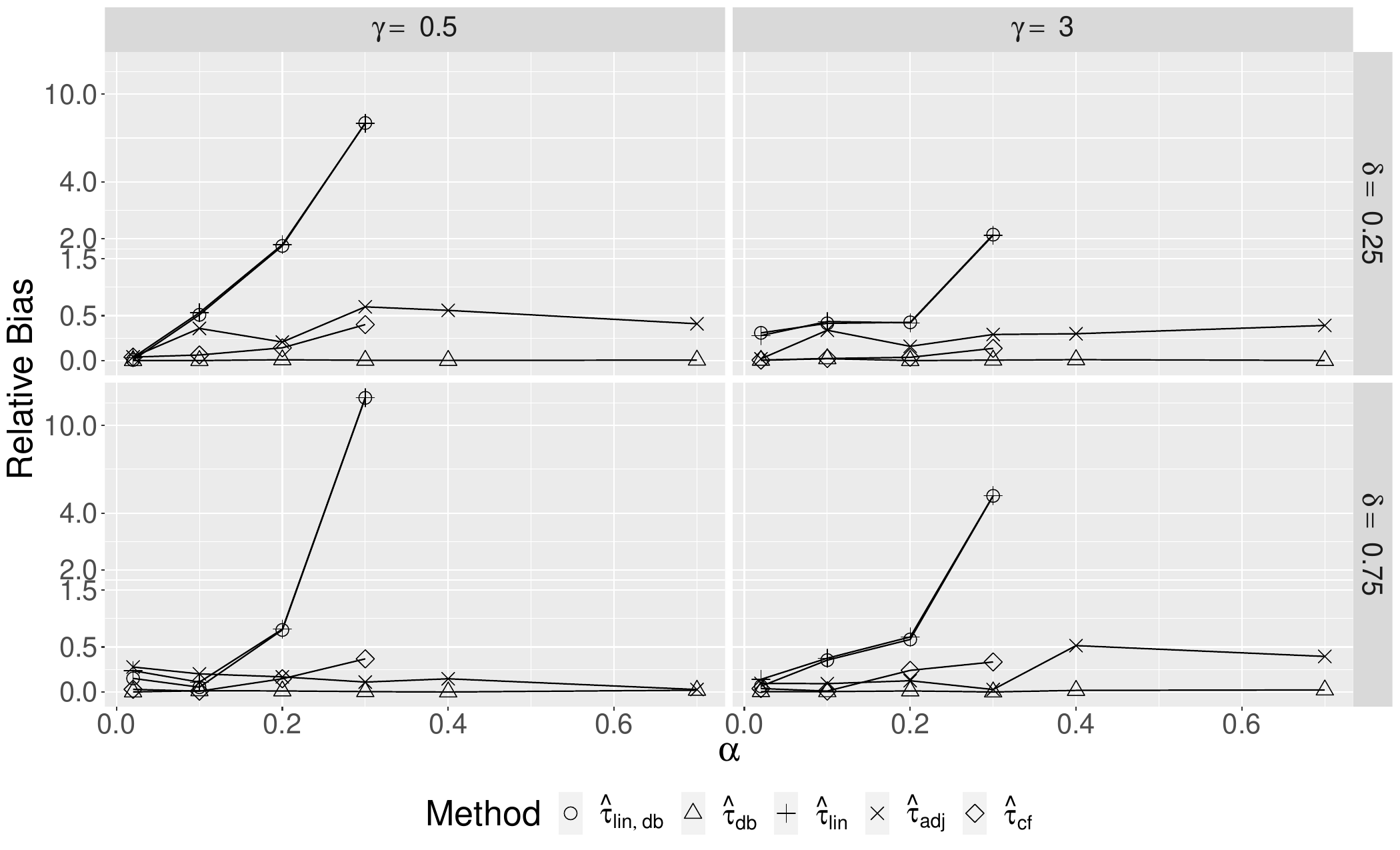}
 \caption{Independent $t$ residual} \label{fig:bias_t_cauchy_transy}
\end{subfigure}
\caption{\emph{The second set up.} Relative bias for different choices of $\gamma$, $\delta$ and $\alpha$ under the worst-case residual and independent $t$ residual.  For both figures, we use a transformation of $\log_{10}(1+x)$ for the y-axis to adapt the curve display.} \label{fig:bias-cauchy-transy}
\end{figure}

\begin{figure}[H]
  \centering
\begin{subfigure}[b]{\linewidth}
  \includegraphics[width=\linewidth]{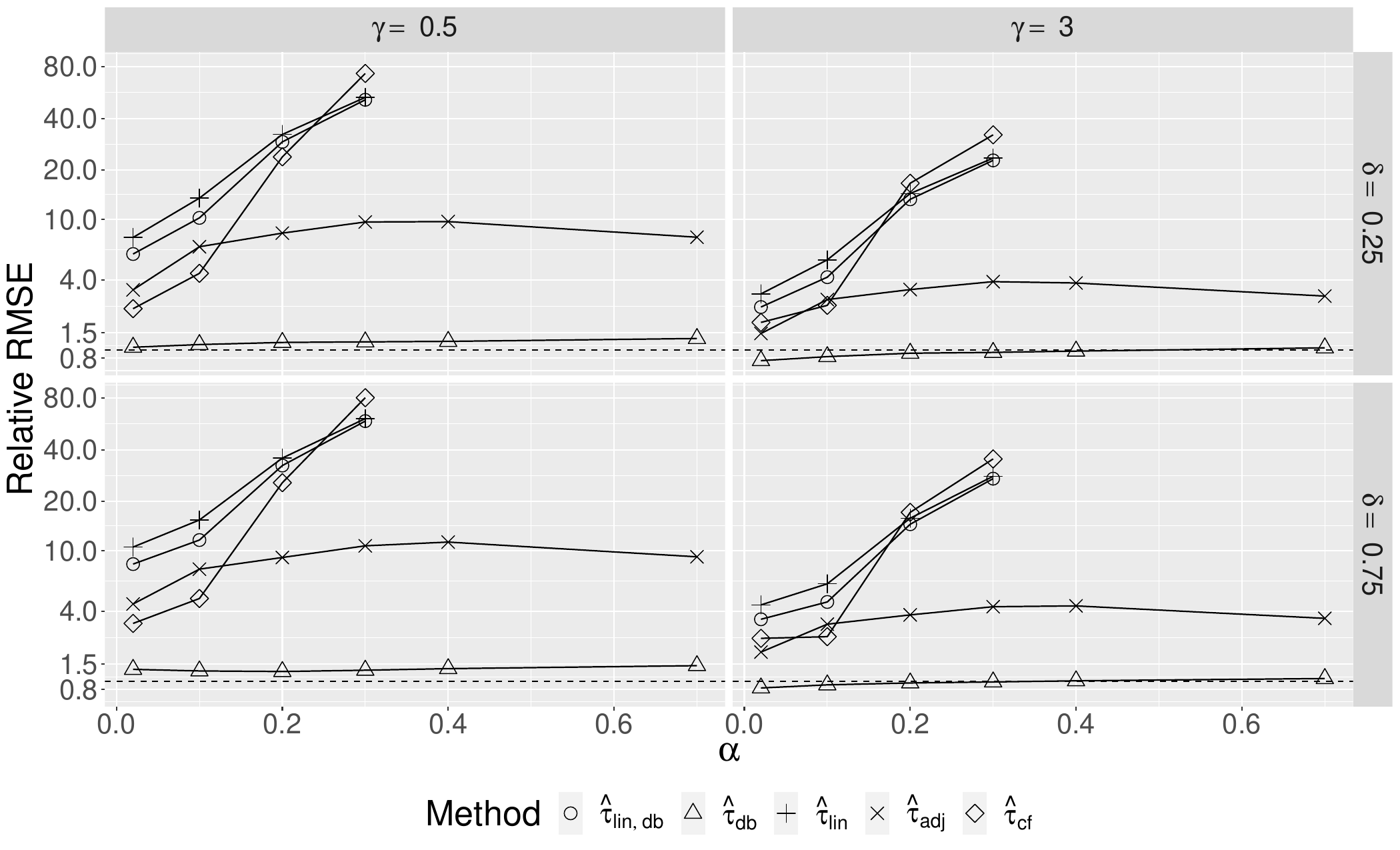}
  \caption{Worst case residual}     \label{fig:rmse_worst_cauchy_transy}
\end{subfigure}
 \begin{subfigure}[b]{\linewidth}
  \centering
  \includegraphics[width=\linewidth]{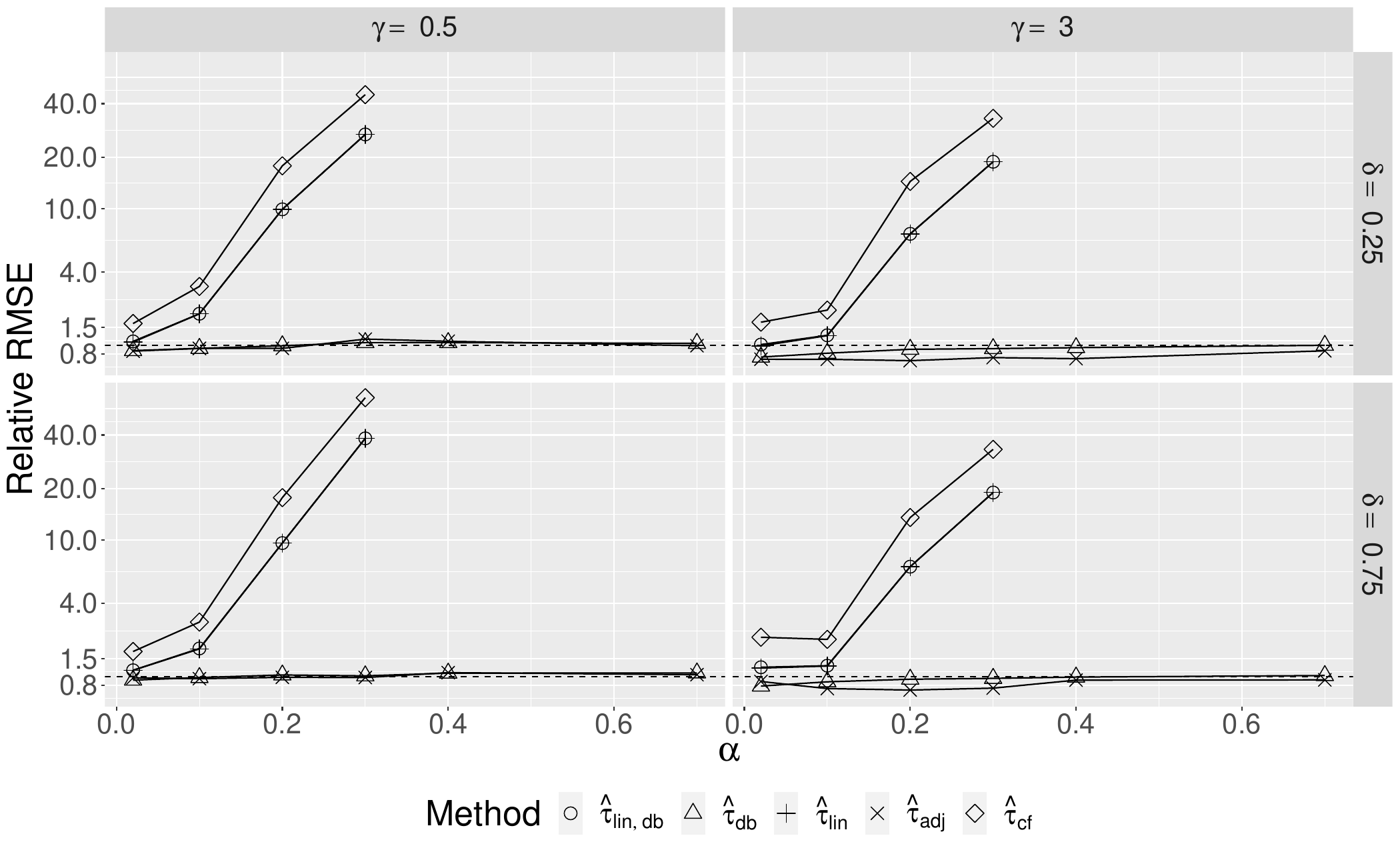}
 \caption{Independent $t$ residual} \label{fig:rmse_t_cauchy_transy}
\end{subfigure}
\caption{\emph{The second set up.} Relative RMSEs for different choices of $\gamma$, $\delta$ and $\alpha$ under the worst-case residual and independent $t$ residual.  The dashed lines signify $1$. For both figures, we use a transformation of $\log_{10}(1+x)$ for the y-axis to adapt the curve display.} \label{fig:rmse_cauchy_transy}
\end{figure}

\begin{figure}[H]
  \centering
\begin{subfigure}[b]{\linewidth}
  \includegraphics[width=\linewidth]{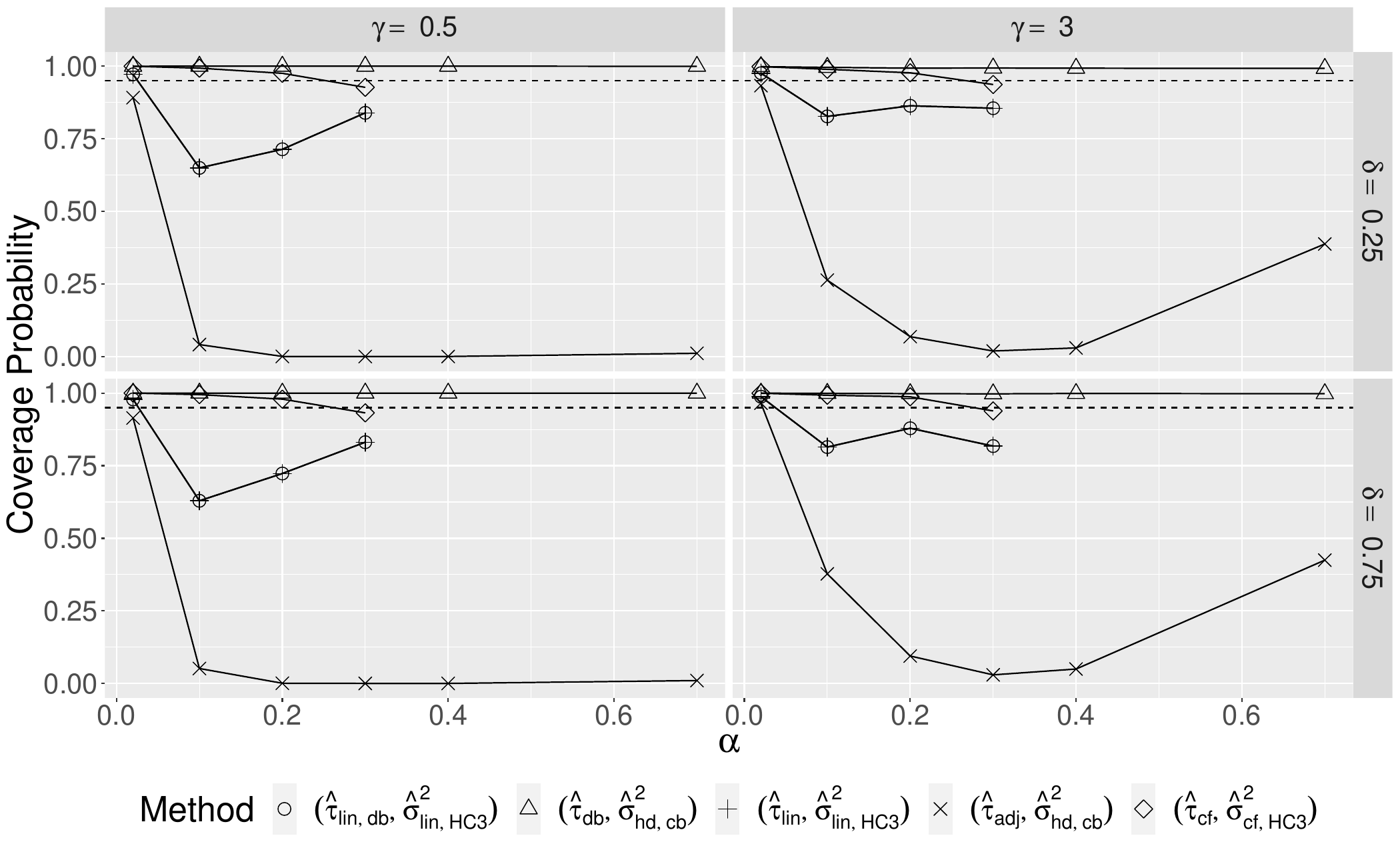}
  \caption{Worst case residual}     \label{fig:cp_worst_cauchy_transy}
\end{subfigure}
 \begin{subfigure}[b]{\linewidth}
  \centering
  \includegraphics[width=\linewidth]{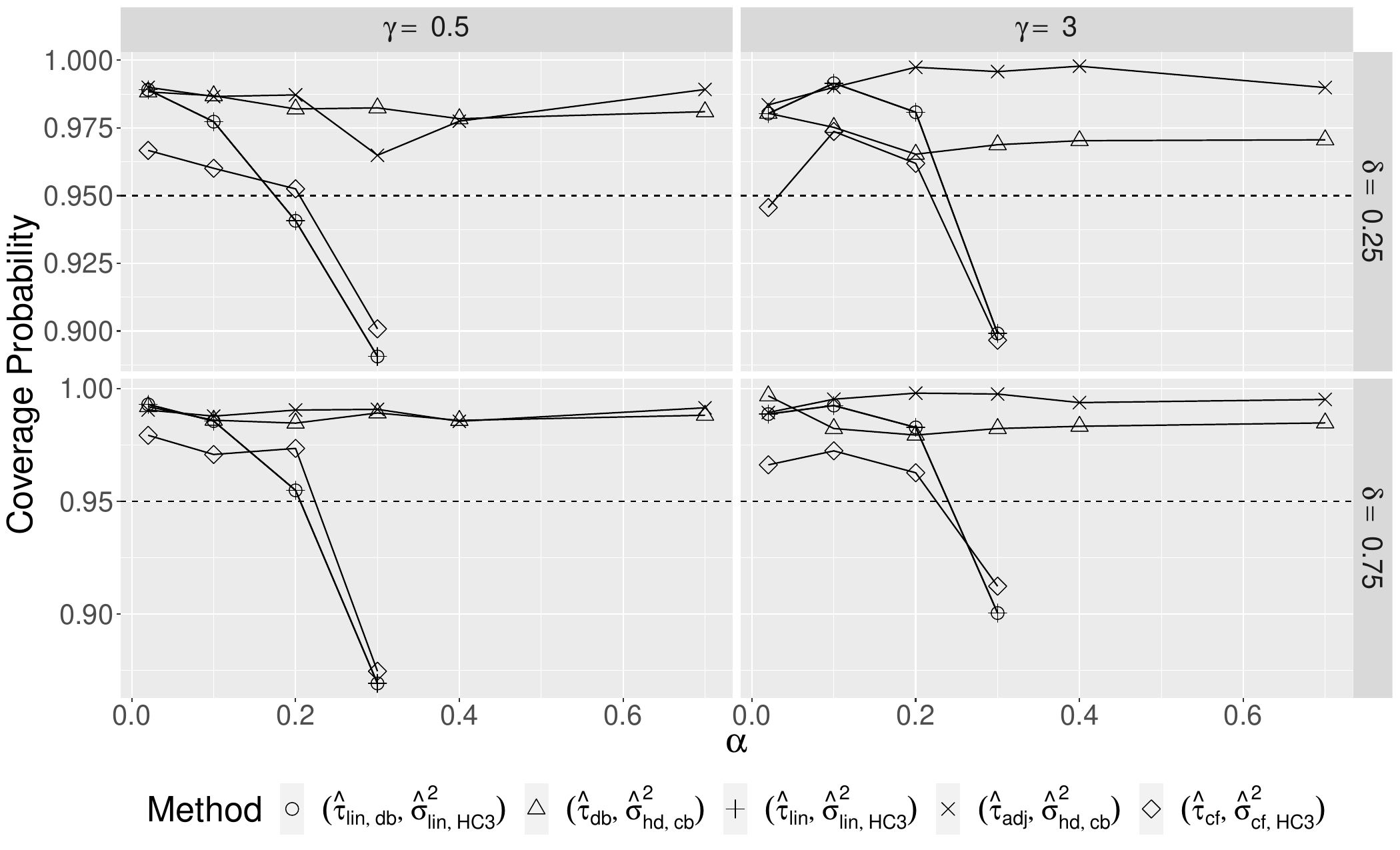}
 \caption{Independent $t$ residual} \label{fig:cp_t_cauchy_transy}
\end{subfigure}
\caption{\emph{The second set up.} Coverage probabilities for different choices of $\gamma$, $\delta$ and $\alpha$ under the worst-case residual and independent $t$ residual.  The dashed lines signify $0.95$.} \label{fig:cp_cauchy_transy}
\end{figure}

\begin{figure}[H]
  \centering
\begin{subfigure}[b]{\linewidth}
  \includegraphics[width=\linewidth]{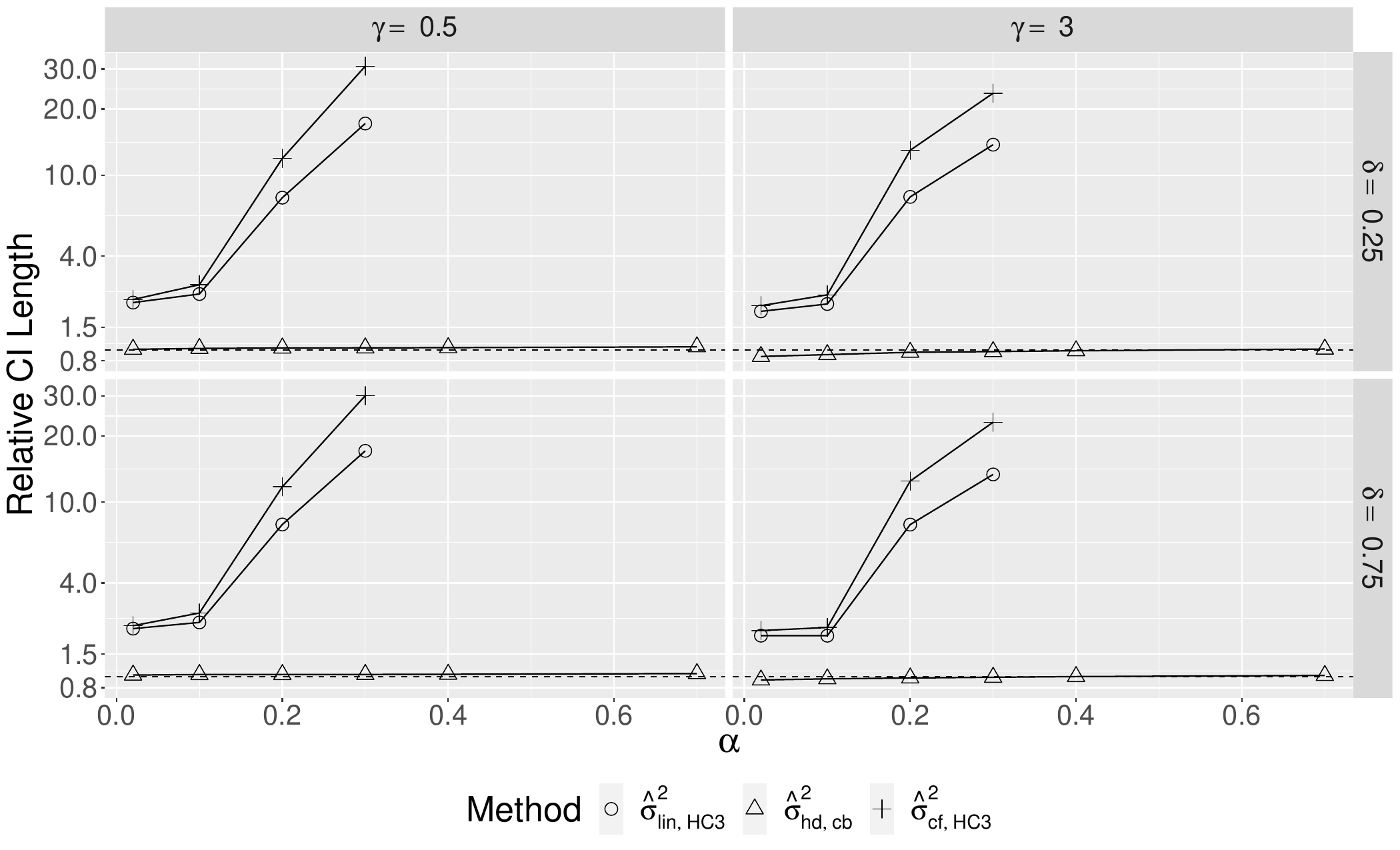}
  \caption{Worst case residual}     \label{fig:ci_worst_cauchy_transy}
\end{subfigure}
 \begin{subfigure}[b]{\linewidth}
  \centering
  \includegraphics[width=\linewidth]{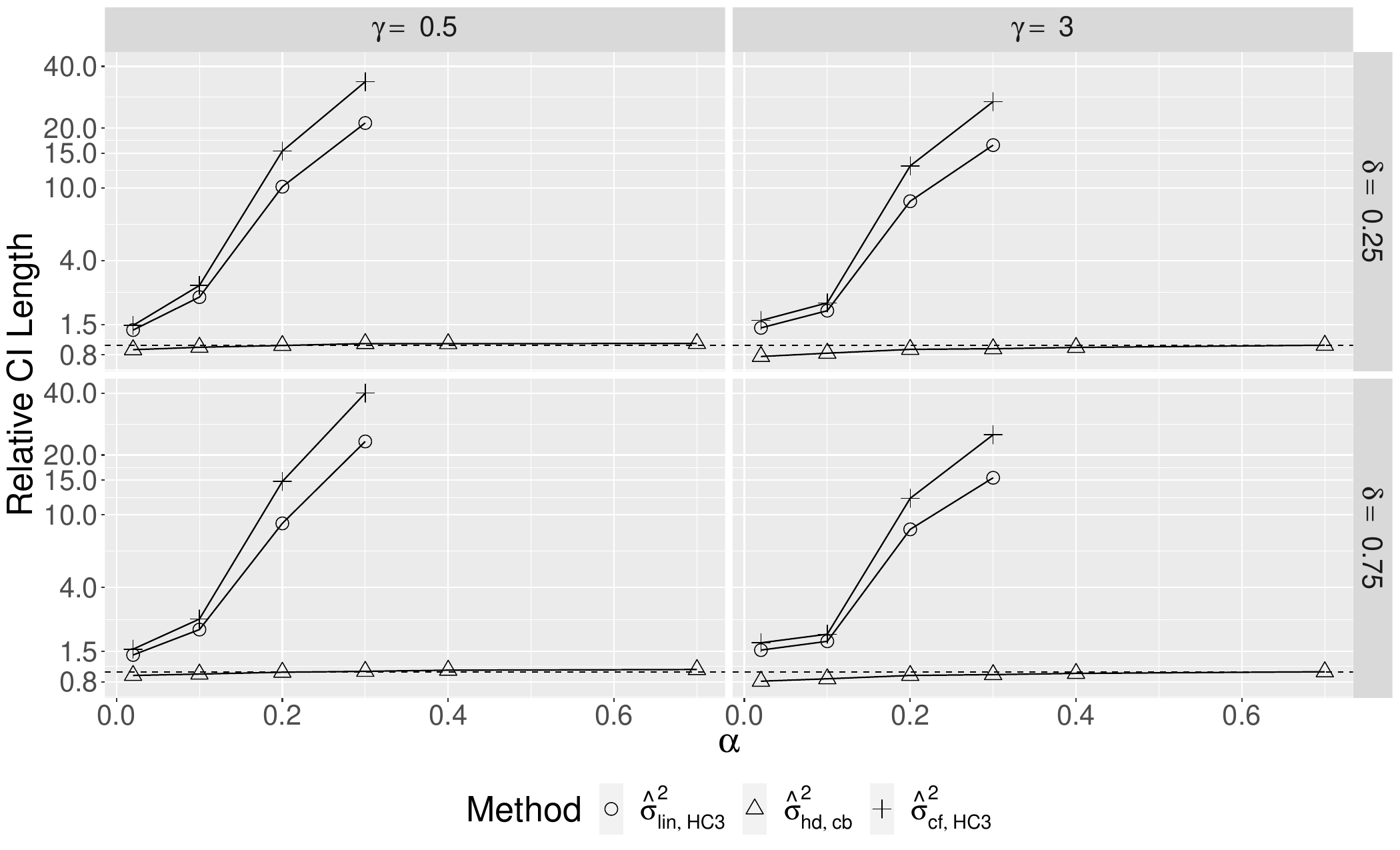}
 \caption{Independent $t$ residual} \label{fig:ci_t_cauchy_transy}
\end{subfigure}
\caption{\emph{The second set up.} Relative confidence interval length for different choices of $\gamma$, $\delta$ and $\alpha$ under the worst-case residual and independent $t$ residual.  The dashed lines signify $1$. For both figures, we use a transformation of $\log_{10}(1+x)$ for the y-axis to adapt the curve display.} \label{fig:ci_cauchy_transy}
\end{figure}

\newpage

	\printglossary[title={Notation table}]

\end{document}